\documentclass{article}

\usepackage{arxiv}

\usepackage[utf8]{inputenc} % allow utf-8 input
\usepackage[T1]{fontenc}    % use 8-bit T1 fonts
\usepackage{hyperref}       % hyperlinks
\usepackage{url}            % simple URL typesetting
\usepackage{booktabs}       % professional-quality 
\usepackage{natbib}
\usepackage{tabularx}
\usepackage{amsfonts}       % blackboard math symbols
\usepackage{nicefrac}       % compact symbols for 1/2, etc.
\usepackage{microtype}      % microtypography
\usepackage{lipsum}
\usepackage{graphicx}
\graphicspath{ {./images/} }

\usepackage{amsmath,amssymb}
\usepackage{tikz}
\usepackage{subcaption}

\usepackage[ruled,vlined,linesnumbered,algo2e]{algorithm2e}

\usepackage[ruled,vlined]{algorithm2e}

\usepackage[utf8]{inputenc} % allow utf-8 input
\usepackage[T1]{fontenc}    % use 8-bit T1 fonts
\usepackage{hyperref}       % hyperlinks
\usepackage{url}            % simple URL typesetting
\usepackage{booktabs}       % professional-quality tables
\usepackage{amsfonts}       % blackboard math symbols
\usepackage{nicefrac}       % compact symbols for 1/2, etc.
\usepackage{microtype}      % microtypography
\usepackage{xcolor}         % colors
\usepackage{mathrsfs}

\newcommand{\R}{\mathbb{R}}

\newcommand{\E}{ \mathbb{E}}
\newcommand{\pto}{\overset{p}{\to}}

\newcommand{\one}{\bold{1}}

\usepackage{amsthm}

\newtheorem{theorem}{Theorem}[section] % Numbered within sections
\newtheorem{lemma}[theorem]{Lemma}     % Numbered along with theorems
\newtheorem{corollary}[theorem]{Corollary} % Numbered along with theorems
\newtheorem{proposition}[theorem]{Proposition} % Numbered along with theorems
\newtheorem{remark}[theorem]{Remark} % Numbered along with theorems
\newtheorem{assumption}{Assumption}
\newtheorem{definition}{Definition}
\usepackage{tikz}
\usetikzlibrary{positioning,shapes,arrows.meta}

\newcommand{\var}{\mathrm{Var}}

\newcommand{\ts}{\mathrm{ts}}
\newcommand{\avar}{\mathrm{aVar}}

\newcommand{\op}{\mathrm{op}}

\newcommand{\rank}{\mathrm{rank}}
\newcommand{\Bern}{\mathrm{Bern}}

\title{Auditing the Auditors: 
Does Community-based Moderation Get It Right?}

\author{
Yeganeh Alimohammadi\textsuperscript{1}\thanks{Marshall School of Business, University of Southern California. \texttt{yalimoha@usc.edu}}\\
University of Southern California
\And
Karissa Huang\textsuperscript{1}\thanks{Department of Statistics, UC Berkeley. \texttt{krhuang@berkeley.edu}}\\
UC Berkeley
\And
Christian Borgs\thanks{Bakar Institute of Digital Materials for the Planet and Department of Electrical Engineering and Computer Sciences, UC Berkeley. \texttt{borgs@berkeley.edu}}\\
UC Berkeley
\And
Jennifer Chayes\thanks{Department of Statistics, Department of Mathematics, School of Information, Department of Electrical Engineering and Computer Sciences, and Bakar Institute of Digital Materials for the Planet, UC Berkeley. \texttt{jchayes@berkeley.edu}}\\
UC Berkeley
}
\begin{document}
\maketitle
\footnotetext[1]{Equal contribution.}

\begin{abstract}
Online social platforms increasingly rely on crowd-sourced systems to label misleading content at scale,  but these systems must both aggregate users' evaluations and decide whose evaluations to trust. 
To address the latter, many platforms audit users  by rewarding agreement with the final aggregate outcome, a design we term consensus-based auditing. 
We analyze the consequences of this design in X’s Community Notes, which in September 2022 adopted consensus-based auditing that ties  users' eligibility for participation to agreement with the eventual platform outcome.
We find evidence of strategic conformity: minority contributors’ evaluations drift toward the majority and their participation share falls on controversial topics, where independent signals matter most.
We formalize this mechanism in a behavioral model in which contributors trade off private beliefs against anticipated penalties for disagreement.  Motivated by these findings, we propose a two-stage auditing and aggregation algorithm that weights contributors by the stability of their past residuals rather than by agreement with the majority. 
The method first accounts for differences across content and contributors, and then measures how predictable each contributor’s evaluations are relative to the latent-factor model. Contributors whose evaluations are consistently informative receive greater influence in aggregation, even when they disagree with the prevailing consensus. 
In the Community Notes data, this approach improves out-of-sample predictive performance while avoiding penalization of disagreement.
\end{abstract}
\keywords{Content moderation $|$ Misinformation $|$ Community Notes $|$ Matrix factorization $|$ Crowdsourcing}

In the face of rapidly increasing online misinformation and harmful content \cite{vicario2016spreading}, online platforms face a fundamental design question: 
\textbf{how can unreliable information be identified and flagged at scale?} Many platforms have turned this question back to their users, creating crowdsourced systems of content moderation that seek to leverage a diverse user base. Notably, X (formerly Twitter)--and, at a more experimental stage, Meta, TikTok, and Bluesky--invite users to evaluate content and add context to potentially unreliable posts 
\cite{xcommunitynotes, meta2025communitynotes, perez2024bluesky, tiktok2025footnotes}.

A central challenge in these systems is that users vary widely in the reliability of their evaluations \cite{amatriain2009ilikeit, bhuiyan2020investigating}.
As a result, platforms confront a second problem: \textbf{how to audit the auditors
themselves?}
In practice, platforms aggregate user content evaluations into an inferred platform outcome, which we refer to as the \emph{consensus}, that determines whether content is downranked or given additional context \cite{CommunityNotesRankingNotes}. The same consensus is then used to evaluate users' reliability as well: users whose historical input to the system aligns with the consensus gain influence, whereas those who diverge are downweighted or lose eligibility to participate in future evaluations %community evaluation 
\cite{CommunityNotesWritingAbility, CommunityNotesWritingAndRatingImpact}. We refer to this design choice as \emph{consensus-based auditing}.

Although consensus-based auditing appears operationally efficient, it implicitly assumes that consensus is a reliable proxy for truth and that disagreement signals low credibility. This coupling of aggregation and auditing can create a feedback effect and alter reporting incentives \cite{horvitz2012incentives, perdomo2020performative}. When agreement with the consensus is rewarded, users have incentives to anticipate the outcome rather than provide independent input \cite{muchnik2013social, lorenz2011social}.
As a result,
disagreement becomes less visible, and minority perspectives may be  withdrawn
before they are ever aggregated \cite{zhao2025spiral,noelle1974spiral}.
 In this paper, we quantify these effects and show, both theoretically and empirically, that consensus-based auditing systematically distorts
contributor behavior and reduces the representation of informative minority viewpoints.

X’s Community Notes provides a concrete setting for studying these design choices.  Launched in January 2021 (initially as \emph{Birdwatch}), Community Notes is a crowdsourced content evaluation system where participating platform users (\emph{contributors}) add short ``notes'' that provide context to help readers assess a post's claims \cite{xcommunitynotes}. Other participating users rate the helpfulness of these notes, and an aggregation algorithm selects which notes are displayed publicly as annotations on the original post, based on their predicted helpfulness across diverse viewpoints.

% The platform launched \textbf{Community Notes} in January 2021 (initially as {Birdwatch}) as a  crowd-sourced content evaluation system
% : contributors write notes that \emph{evaluate} a given post by adding  information intended to help readers assess its claims; other contributors rate the \emph{helpfulness} of those notes; and an aggregation algorithm (e.g., matrix‑factorization) selects notes to display as public annotations on the original post based on their predicted helpfulness across diverse viewpoints.
In September 2022, Community Notes introduced \emph{Rating Impact} and \emph{Writing Impact}, which enforce a consensus-based auditing rule: a user's ability to continue rating and writing content depends on their historical alignment with  the platform's aggregated outcome (the consensus)  \cite{CommunityNotesWritingAbility, CommunityNotesWritingAndRatingImpact, Perez2022Birdwatch}. This policy change offers a natural setting for examining how consensus‑based auditing shapes activity in a crowd-sourced moderation system.

We identify several systematic shifts indicative of reduced minority visibility. First, over time minority contributors increasingly align their ratings with the majority, suggesting strategic anticipation of the platform outcome.
Further, we examine topic‑level participation and find that posts on controversial topics (e.g., politics, international conflict, etc.) receive fewer notes than noncontroversial posts following the adoption of consensus‑based auditing, even though these are exactly the topics where misinformation risk is highest and where an effective policy should encourage more evaluator engagement \cite{west2021misinformation, vosoughi2018spread, van2022misinformation}.

% Third, we demonstrate that the out‑of‑sample predictive performance of the platform’s aggregation algorithm declines after consensus-based auditing is implemented.

To understand these empirical observations, we develop a simple behavioral model in which users choose their ratings to balance their private belief about content quality against a penalty for deviating from the anticipated consensus.  By analyzing the model’s equilibrium, we formally prove that consensus-based penalties amplify conformity, and disproportionately suppress minority contributors.

Finally, we propose an alternative auditing algorithm to address  some of the key shortcomings of the current system. 
The algorithm proceeds in two stages. In the first stage, we estimate content-level and user-level effects, and systematic user–content alignment from the observed ratings, and then compute residuals as the difference between observed and predicted ratings. The resulting residuals isolate the idiosyncratic component of each evaluation.
In the second stage, we estimate contributor reliability from the variance of these first-stage residuals and aggregate current evaluations using inverse-variance weights. Crucially, “consistency” here refers to \emph{stability of residuals conditional on content and user's baseline}, not agreement with the majority.
As a result, a contributor may consistently disagree with the prevailing consensus and still retain influence, provided their evaluations are stable and informative relative to the modeled structure.

This two-stage method is motivated by classical results on weighted least squares under heteroskedasticity.  After removing intrinsic content-level effects and user-specific average effects, residual evaluations can be modeled as conditionally unbiased signals with user-specific variance; in this setting, inverse-variance weighting yields the minimum-variance unbiased aggregation  of signals \cite{Aitken1936least, carroll1982robust, jackson2021finding, resnick2007influence}.
Empirically, we show that our algorithm improves out‑of‑sample predictive performance relative to the deployed algorithm.

% Our method is inspired by the weighted least squares approach in linear regression, which is a classical technique used for obtaining efficient estimators in the presence of heteroskedasticity \cite{Aitken1936least}. Related approaches have been well-documented to prevent strategic manipulation and provide consistent estimates  \cite{carroll1982robust, jackson2021finding, resnick2007influence}. 

Better predictive performance yields harm reduction at scale. Previous work shows that the effectiveness of community annotations is highly time-sensitive. Notes attached earlier yield substantially larger reductions in engagement and diffusion than notes attached later; posts receive about 50\% fewer reposts when notes are attached within 12 hours, compared to less than 10\% reductions when attached after about 48 hours \cite{slaughter2025reduce}. A more predictive auditing rule increases the signal-to-noise of early aggregates, reducing the number of ratings required to reach a confidence helpfulness decision and shortening time-to-attachment.

% \emph{Why predictive performance matters.} Previous work shows that the effectiveness of community annotations is highly time‑sensitive  \cite{slaughter2025reduce, gao2024can, drolsbach2023diffusion, brashier2021timing}: notes attached earlier yield substantially larger reductions in engagement and diffusion than notes attached later (e.g., posts receive about 50\% fewer reposts when notes are attached within 12 hours, compared to less than 10\% reductions when attached after about 48 hours \cite{slaughter2025reduce}). A more predictive auditing rule increases the signal‑to‑noise of early aggregates and reduces the number of ratings required to reach a confident helpfulness decision, thereby shortening time‑to‑attachment and improving harm reduction at scale.

% Predictive performance matters operationally because the impact of community annotations is highly time‑sensitive: community annotations that are attached earlier yield substantially larger reductions in engagement and diffusion than late attached annotations (e.g., posts receive about 50\% fewer reposts when notes are attached within 12 hours, compared to less than 10\% reductions when attached after 48 hours \cite{slaughter2025communitynotes,?}). A more predictive auditing rule increases the signal‑to‑noise of early aggregates and reduces the number of ratings required to reach a confident aggregation decision, thereby shortening time‑to‑attachment and improving harm reduction at scale.

In sum, our work makes three key contributions: (1) We provide empirical evidence that consensus-based auditing systematically alters minority behavior and reduces engagement with controversial topics. (2) We introduce a behavioral model that explains these shifts.   (3) We design and evaluate an alternative auditing and aggregation algorithm that improves predictive performance while preserving participation of minorities.

\subsection*{Related Work}
% A growing set of platforms use crowd-sourced moderation and annotation systems to add context to potentially misleading content. X’s Community Notes is the most prominent deployed example.

The promise of crowd-sourced evaluation is often motivated by the ``wisdom of crowds'': when individual judgments are diverse and independent, their aggregation can outperform individual experts \cite{galton1907vox, surowiecki2005wisdom}. However, when individuals are exposed to others’ opinions, social influence can undermine independence and induce herding \cite{banerjee1992simple, acemoglu2024model, eyster2010naive, lorenz2011social}. Studies show that once early public feedback is observed, later contributors may follow the emerging trend, causing correlated errors and conformity \cite{acemoglu2022fast, dai2018aggregation}. 
Consensus-based
auditing creates an additional channel for such effects:
when contributors are rewarded for matching the eventual
consensus rather than for providing informative signals, they
are encouraged to conform \cite{horvitz2012incentives, miller2005eliciting, kong2016putting, waggoner2014output, prelec2004bayesian}. 
This is particularly problematic in crowd-sourced moderation systems like Community Notes, where viewpoint diversity
can mitigate biases in content labeling  \cite{thebault2023diverse}. Yet empirical
evidence on how consensus-based auditing rules shape contributor behavior in deployed crowd-sourced moderation systems
remains limited. 

To date, 
empirical work on Community Notes has primarily focused on downstream effects of note attachment on user behavior and information diffusion. Studies show that posts with attached notes have lower engagement -- measured in terms of likes, replies, views, and reposts -- with especially strong effects when notes are attached earlier in a post’s lifecycle  \cite{slaughter2025reduce, gao2024can, drolsbach2023diffusion, brashier2021timing,allen2024quantifying}.  Other studies find that notes receiving broad contributor endorsement are perceived as more trustworthy by users than platform-issued misinformation labels \cite{drolsbach2024community}. Partisan asymmetries in content moderation participation have also been documented:  contributors are more likely to
challenge counter-partisan content and to rate co-partisans’ notes as more helpful \cite{allen2022birds, renault2025republicans, kangur2024checks}. Together, this literature establishes that Community Notes can affect engagement, trust, and partisan dynamics, while leaving open how platform incentive and eligibility rules shape contributor participation and coverage across topics.

One difficulty that crowd-sourced moderation systems face is lack of a ground truth for estimated quantities like note helpfulness and user latent factors. There is a large literature that addresses the problem of how to infer contributor reliability in such settings. For example, in the crowd-sourcing and labeling literature, worker--task models estimate worker accuracy and
task difficulty from patterns of agreement and disagreement \cite{dawid1979maximum, raykar2010learning, karger2011iterative, kashima2024trustworthy}. Relatedly, there is a body of literature on designing mechanisms to elicit truthful information without verified labels, including proper scoring rules and peer-prediction methods \cite{Gneiting2007strictly, miller2005eliciting, prelec2004bayesian, witkowski2012peer, shnayder2016informed}. A common lesson from these works is that participants should be evaluated against targets not determined by themselves, reducing incentives to coordinate on a focal consensus \cite{miller2005eliciting, prelec2004bayesian, Gneiting2007strictly, liu2017machine, faltings2022game}. These approaches formalize the idea that reliability should be learned from error structure, not simply from matching a majority vote. 

The Community Notes algorithm is a bridging-based algorithm that aims to surface content rated highly by users who generally hold opposing viewpoints. Methodologically, its implementation is closely related to latent-factor models and collaborative filtering in recommender systems, where one seeks to infer user and item attributes from sparse, noisy ratings \cite{van2010manipulation,amatriain2009ilikeit, karger2011iterative}. Other approaches to bridging have been proposed in \cite{blair2026structure}. A parallel line of work studies the design of aggregation rules that limit influence from noisy or adversarial users \cite{resnick2007influence}. In statistics, classical results on weighted least squares show that inverse-variance weighting yields efficiency gains under heteroskedastic noise \cite{Aitken1936least, carroll1982robust}. 

Our two-stage algorithm combines these principles and ideas from the literature by drawing on the logic of scoring contributors against targets not defined purely by raw agreement with consensus, and using these scores to construct weights in a second-stage aggregation model. This combination links practical auditing in crowd-sourced moderation to established ideas in statistical efficiency and incentive-compatible information elicitation.

\section*{X Community Notes}
In the Community Notes program, users can write short annotations (called notes) that provide context for potentially misleading or disputed content on the platform, and rate notes written by other users as \textit{Helpful}, \textit{Somewhat Helpful}, or \textit{Not Helpful}. These ratings are aggregated using a matrix factorization algorithm that determines which notes are displayed \emph{publicly}
under the corresponding posts, while all remaining notes are kept hidden \cite{CommunityNotesRankingNotes}. As a result, aggregation outcomes directly shape which information is surfaced and which contributors shape public discourse.

\subsection*{Aggregation via Matrix Factorization}
\label{sub_sec:mf}
For each user–note pair $(u,n)$, let $r_{un}\in\{0, 0.5, 1\}$ denote the observed rating, where \textit{Helpful}, \textit{Somewhat Helpful}, and \textit{Not Helpful} responses are mapped to $1$, $0.5$ and $0$, respectively \cite{CommunityNotesRankingNotes}.
The platform assumes that ratings are modeled as
\begin{equation}\label{eq:r_un}
    {r}_{un} = \mu + h_u + i_n + f_u \cdot g_n,
\end{equation}
where $\mu$ is a global intercept, $h_u$ is a {rater intercept} capturing user $u$'s baseline agreeability (the tendency to mark notes as helpful rather than unhelpful, regardless of content), $i_n$ is a {note intercept} capturing the perceived overall helpfulness of note $n$, and $f_u, g_n \in \mathbb{R}$ are latent {rater} and {note factors} whose product represents the ideological alignment between user and note.

The platform estimates these parameters by solving the regularized least-squares problem
\begin{align}\label{eq:matrix_factorization_eq}
    \hat h_u, \hat i_n, \hat g_n,\hat f_u,\hat \mu
    &= \arg\min_{\mu, h_u,i_n,f_u,g_n} 
       \sum_{(u,n) \text{ observed}} (r_{un} - \hat r_{un})^2 \nonumber \\
    &\quad + \lambda_u \sum_u \left(\|h_u\|^2 + \|f_u\|^2\right) \nonumber \\
    &\quad + \lambda_n \sum_n \left(\|i_n\|^2 + \|g_n\|^2\right).%\footnote{This loss function was at the time of writing this paper, it has since changed.}
\end{align}
where $\lambda_u,\lambda_n$ are regularization parameters.

In this formulation, the note intercept $i_n$ is the primary quantity of interest, as it
captures how broadly helpful a note is across the user base and directly determines its
eligibility for public display.
Notes with estimated intercept $\hat i_n\ge 0.4$ are
classified as \emph{Helpful} and shown publicly beneath the corresponding posts, while
notes with $\hat i_n<0.4$ are withheld from public display. Among these, notes with
$\hat i_n<-0.05$ are classified as \emph{Not Helpful} and may generate negative feedback for both the note’s author and raters who marked the note as helpful \cite{CommunityNotesRankingAlgorithm}.\footnote{The
production scoring system has evolved over time; the formulation here reflects the core
structure and thresholds relevant for our analysis. Our empirical replication uses the
corresponding open-source implementation released by platform X, though operational
details such as confidence adjustments or per-topic factorizations may differ.}

\subsection*{Rating Impact}
In September 2022, Community Notes introduced the {Rating Impact} feature, which
links rating aggregation to contributors’ continued participation.
Rating Impact is a user-level score computed based on whether a contributor’s ratings
align with a note’s eventual \emph{Helpful}/\emph{Not Helpful} classification; agreement
increases the score, while disagreement decreases it
\cite{Perez2022BirdwatchVisible, Perez2022Birdwatch}. A user's Rating Impact score determines their ability to write Community Notes; new contributors must reach a minimum Rating Impact threshold before they can write notes. Writers have a separate Writing Impact score, which governs their ability to write notes on the platform; writers lose the ability to submit new notes if at least $3$ of their $5$ most recently written notes have been labeled \emph{Not Helpful} after aggregation \cite{CommunityNotesWritingAbility}. 
% \begin{enumerate}
%     \item \textbf{Writing eligibility.} New contributors must reach a minimum Rating Impact threshold before they can write new notes \cite{CommunityNotesWritingAbility}.
%     \item \textbf{Ongoing writing access.} Writers lose the ability to submit new notes if at least $3$ of their $5$ most recent written notes have been labeled \emph{Not Helpful} after aggregation. 
% \end{enumerate}

While the stated goal of this system is to surface notes from contributors with a track record of accuracy, it may create conformity incentives. To gain or retain writing privileges, contributors may
feel pressure to align both their ratings and their note content with anticipated
majority views, as repeated disagreement risks reduced influence and loss of access.
Therefore, the September 2022 policy introduction, together with our use of October 1, 2022 as a conservative operational cutoff, provides a natural policy discontinuity for our study. 

% A single user may act as both a writer and a rater, but not for the same note.

% \emph{Roadmap.} We first describe the public dataset used in our analyses, then summarize the matrix-factorization model that drives display decisions, and finally define the Rating Impact policy and our replication of latent variables.

\section*{Data}
Our analysis uses the publicly released Community Notes dataset, which contains the complete history of notes and ratings since Community Notes' launch in 2021 \cite{CommunityNotesDownloadData}.\footnote{The initial name for Community Notes was Birdwatch.} Our primary analysis window spans June~1,~2022 through May~31,~2023, covering several
months on both sides of the rollout of Rating Impact. For each rating event, the dataset records the note identifier $n$, rater identifier $u$,
timestamp, selected rating, a summary of the note content, and metadata about the
associated post. The dataset does not include estimates of the latent parameters $(h_u,i_n,f_u,g_n)$, as these quantities are re-estimated by the platform as new data
arrive.

To study how the Rating Impact policy affects contributor behavior, we reconstruct the
latent factors by running the platform’s matrix factorization algorithm on the open-source
ratings data.
Specifically, we recover weekly estimates of $(h_u,i_n,f_u,g_n)$ by applying the
Community Notes aggregation algorithm to cumulative ratings data up to each week.
We implement the publicly released December~31,~2022 version of X’s open-source matrix factorization code \cite{CommunityNotesRankingNotes, commnotes2022code}.\footnote{The code is publicly available on the platform’s GitHub repository.} As a robustness check, we additionally incorporate an improved platform's algorithm from the May~2025 release (see Supplementary Information, Section~2 for details). 
% We implement two publicly released versions of X’s open-source matrix factorization code: the December~31,~2022 release and the May~2025 release \cite{CommunityNotesRankingNotes, commnotes2022code}.\footnote{Both versions are publicly available on the platform’s GitHub repository.}
While both implementations share the same underlying matrix factorization framework described earlier, they differ modestly in operational details such as regularization choices, handling of ties, and per-topic factorizations. We use both versions as a sensitivity check to verify that our results are robust to reasonable variation in the estimation procedure.

 % Section~\ref{sub_sec:mf}
% \begin{itemize}
% \item the note identifier $n$ and the rater identifier $u$,
%     \item the timestamp of the rating event,
% \item
% the rater’s selected category (Helpful, Somewhat Helpful, Not Helpful),
% \item 
% a summary of the note’s content, and
% \item
% metadata about the associated post.
% \end{itemize}

% \subsection*{Deriving Latent Variables for Our Analysis} [should data come before this or after?]
% To study the effects of the Rating Impact policy, we run the platform’s algorithm on their open source rating data to recover weekly estimates of the the latent parameters $(f_n, f_u, i_n, i_u)$ from the rating matrix. We run two versions of the open-source X implementation of matrix factorization: the December 31, 2022 release and the May 2025 release.\footnote{Both versions are publicly available on the platform’s GitHub repository.} The backbone of both versions is the matrix factorization described in the Matrix Factorization section, but implementation details, such as handling of ties, regularization choices, and per-topic splits, differ slightly. We use both versions as a sensitivity check to ensure that our findings are robust to variations in the estimation procedure.

\section*{Empirical Results}

We document three empirical patterns in platform activity around the September 2022 introduction of the Rating Impact system,  using October~1,~2022 as the analysis cutoff date\footnote{October~1 provides a conservative operational
cutoff following the period when the relevant eligibility rules began to take effect in
our dataset.}; we refer to dates before October~1 as {pre-rollout} and dates after as {post-rollout}.

First, we observe that minority-aligned raters shift their evaluations toward the majority. Second, controversial content receives relatively lower engagement post-rollout, potentially contributing to fewer visible annotations. Finally, the platform’s latent-factor model exhibits reduced out-of-sample predictive performance post-rollout. 
% \begin{enumerate}
%     \item minority-aligned raters shift their evaluations toward the majority
%     \item annotation outcomes diverge by controversy, with relatively lower engagement on controversial content, 
%     \item 
%     the platform’s latent-factor model
% exhibits reduced out-of-sample predictive performance post-rollout.
% \end{enumerate}

These observations align with the hypothesis that raters behaved strategically in response to the Rating Impact policy. The following subsections present each pattern in turn.

\subsection*{Minority Behavior Shift}

First, we examine whether the introduction of the Rating Impact system altered how minority raters evaluate notes. Recall that each user and note are assigned a latent factor, $f_u$ and $g_n$, respectively, that represent their ideology\footnote{These factors can theoretically be vectors but in the context of X's algorithm factors are scalar.}. Then, for a pair of users and notes $(u, n)$, $f_u\cdot g_n$ represents their the rater-note alignment, and its magnitude measures the strength of that alignment. 
% As a proxy for behavior shift, we study distribution shifts in rater and note factor and changes in user-note alignment over time in our primary analysis window around the Rating Impact roll-out. We find that for both rater and note factors, there is \textbf{distribution shift toward the majority} post-Rating Impact rollout relative to pre-rollout. We also find that post-rollout, \textbf{alignment is less explanatory of note helpfulness} compared to pre-rollout. 
As proxies for behavioral change, (i) we study shifts in the distributions of rater and note factors, and (ii) changes in the predictive role of user--note alignment for \emph{Helpful}
ratings. 
% We a post-rollout increase in mass concentration near the majority mode for both note and rater factor distribution, and that rater-post alignment becomes substantially less predictive of whether a rating is \emph{Helpful}.

% We find that for both rater and note factors, there is \textbf{distribution shift toward the majority} post-Rating Impact rollout relative to pre-rollout. We also find that post-rollout, \textbf{alignment is less explanatory of note helpfulness} compared to pre-rollout. 

% We document these findings below. 

\subsubsection*{(i) Latent Factor Distribution Shift}

Figure \ref{fig:rater_factor_shift} plots the distribution of rater factors in the pre-rollout and post-rollout periods. Relative to the pre-rollout period, the share of raters in the majority-aligned mode increases from 58.4\% to 63.1\%, while the share in the minority-aligned mode decreases from 41.6\% to 36.2\%. Within the minority group, the distribution also shifts toward zero: the mean absolute factor $|f_u|$ for minority-aligned raters declines from 0.522 (CI: ± 0.177) pre-rollout to 0.413 (CI: ± 0.215) post-rollout. This change reflects both a subset of minority-aligned raters moving closer to the center and a subset crossing over to the opposite alignment.

A parallel pattern is visible in the note factor distribution. For note factors the mean absolute note factor $|g_n|$ changes from 0.451 (CI: ±  0.25) pre-rollout to 0.408 (CI: ± 0.233) post-rollout. This suggests that the notes produced by minority-aligned writers are, on average, positioned closer to the center of the latent-factor spectrum in the later period\footnote{Note factors partly reflect writer behavior but also depend on endogenous note entry and which notes receive sufficient evaluation to be estimated.}.

\begin{figure}[h]
    \centering

    \begin{minipage}{.5\linewidth}
        \centering
        \includegraphics[width=\linewidth]{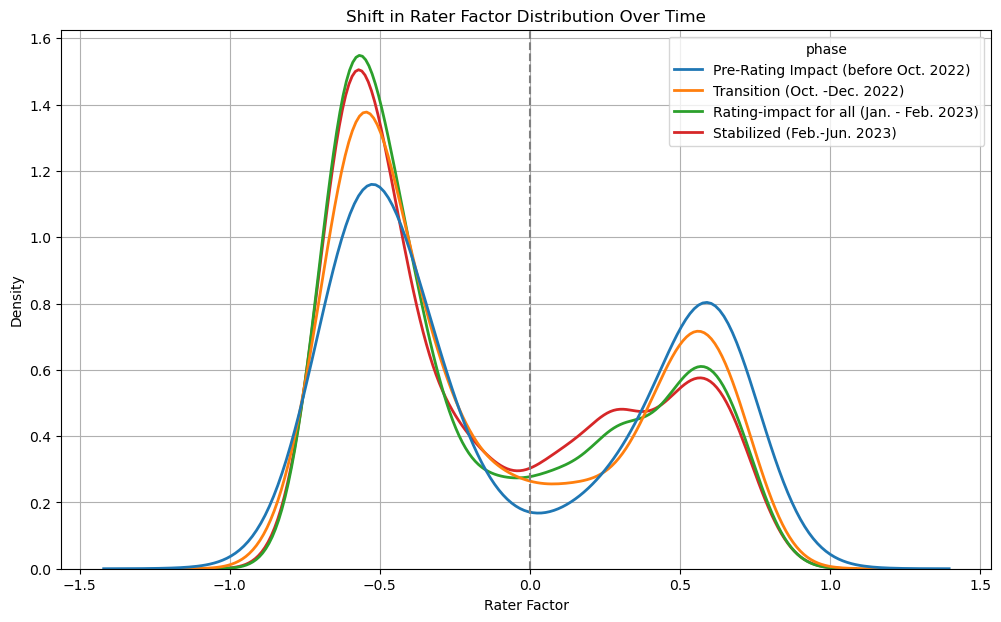}
        \subcaption{Rater factor change}
        \label{fig:rater_factor_shift}
    \end{minipage}

    \vspace{0.75em}

    \begin{minipage}{.5\linewidth}
        \centering
        \includegraphics[width=\linewidth]{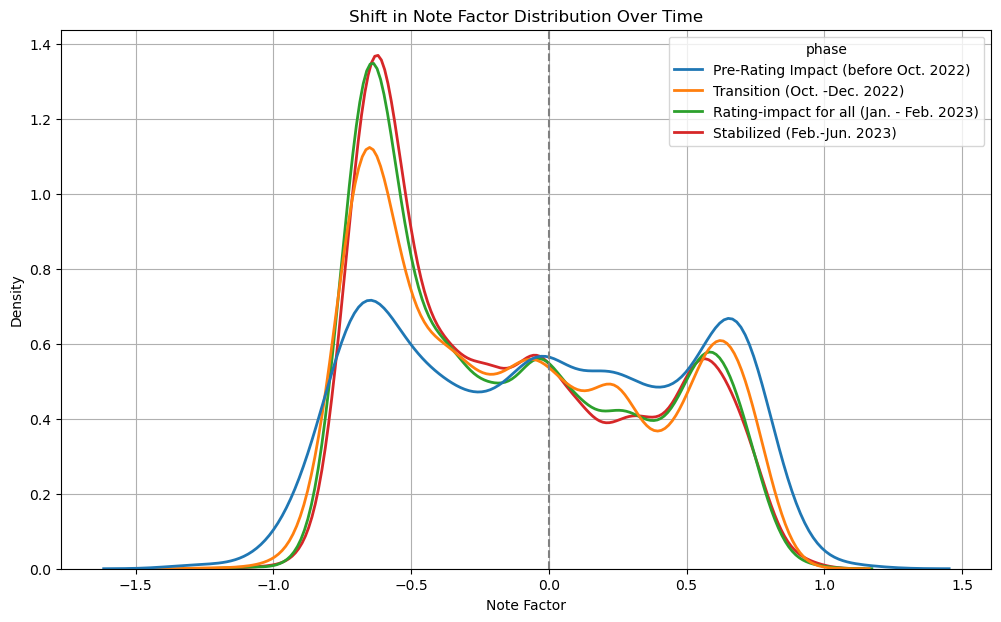}
        \subcaption{Note factor change}
        \label{fig:note_factor_shift}
    \end{minipage}

    \caption{Visualization of rater and note factor distribution shift over time.}
    \label{fig:factor_changes}
\end{figure}

Note that user and note factors are computed relative to all user-note interactions on the platform. 
To distinguish behavioral adaptation from compositional change due to entry of new raters,
we compare factor shifts for two groups.
We call one group ``Early Users"; this is the set of users who have been active raters on the platform since before Oct. 1, 2022. The second group of users are ``New Users"; this is the set of users who joined as new raters on X Community Notes between Oct. 1, 2022 and Jan. 1, 2023. We compute the factor shift for Early Users and New Users, taking the difference between their first latent factor after Jan. 1, 2023 and first latent factor after Oct. 1, 2022, for users where both factors exist. We then run a permutation test on the mean factor shift (10,000 permutations). We find that early users' shifts were more negative than new users ($\Delta = -0.053, p < 0.001$). This is another indicator that the Rating Impact policy influences user behavior, causing users who experienced the policy change to become more strategic in rating. 

% \begin{figure}[!htbp]
%     \centering
%     \includegraphics[width=0.9\linewidth]{commnotes_figures/kde_ci_panels.png}
%     \caption{This plot shows the distribution shift for early users vs. new users during three time periods: Oct. 2022 to Jan. 2023, Jan. 2023 to Apr. 2023, and Apr 2023 to Jun. 2023. We see that in the early user group, the average rater factor shifts from $-0.165$ during the first time period to $-0.251$ during the last time period. In the new users group, the average rater factor shifts from $-0.121$ to $-0.131$, remaining relatively stable over the same periods of time.}
%     \label{fig:note-factor-shift-kde-plot}
% \end{figure}

% \begin{figure}[!htbp]
%     \centering
%     \includegraphics[width=0.9\linewidth]{commnotes_figures/violin_plot.png}
%     \caption{Violin plot showing the difference in distribution between rater factors before and after the rollout of Rating Impact for early users, who joined before the rollout of Rating Impact, versus new users joining after the rollout of Rating Impact. Observe that while the distribution of new users remains relatively stable, the distribution of early users shifts toward the majority, suggesting that early users impacted by the Rating Impact rollout become more strategic in their interactions.}
%     \label{fig:note-factor-shift-violin-plot}
% \end{figure}

\subsubsection*{(ii) User-Note Alignment} To assess whether the predictive role of user–note alignment changed at the Rating Impact rollout, we use Spearman's correlation coefficient as a proxy. 

We use ratings data from Aug. 1, 2022 to Jan. 1, 2023, focusing on a fixed cohort of $1,202$ users who were active on the platform both before and after the October 1, 2022 cutoff. For each period, we compute the Spearman correlation between the rater-note factor dot product and helpfulness ratings, taking the difference (post minus pre) as our test statistic. The correlation declined from $0.792$ before the rollout to $0.525$, yielding an observed difference of $-0.267$. To assess whether this decline is statistically significant, we conduct a permutation test with $1,000$ iterations, randomly reassigning ratings to the pre/post groups while preserving the original group sizes, {which gave a $p$-value of $0.004$}.

% \begin{figure}[H]
%     \centering
%     \includegraphics[width=\linewidth]{commnotes_figures/permutation_spearman.png}
%     \caption{Permutation test test statistic distribution for the difference in Spearman correlation (post minus pre intervention) between rater-note factor dot product and helpfulness ratings among early users. The red dashed line marks the observed difference of $-0.267$. Very few of the $1,000$ permuted differences fall near the observed value, yielding $p = 0.004$, indicating that the decline in predictiveness after October 1, 2022 is highly unlikely to have occurred by chance.}
%     \label{fig:spearman-histogram}
% \end{figure}

\begin{figure}[h]
    \centering
    \includegraphics[width=.7\linewidth]{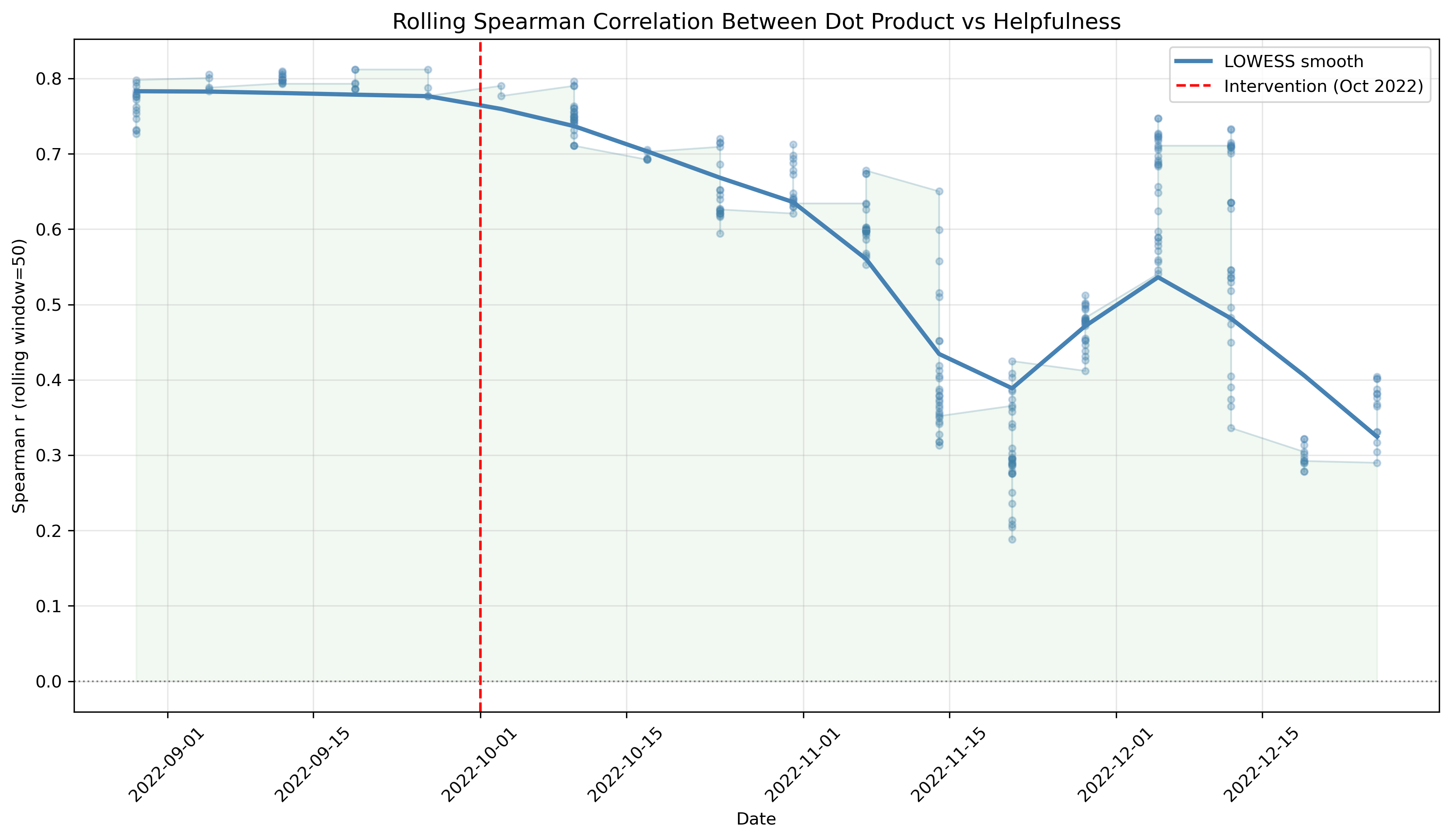}
    \caption{Rolling Spearman correlation between rater-note factor dot product alignment and helpfulness ratings for the cohort of $1,202$ early users, computed over a sliding window of 50 ratings sorted by date. The red dashed line marks the algorithmic change on October 1, 2022. The bold line shows a LOWESS smooth of the rolling correlation. Prior to the intervention, the correlation is relatively stable, whereas after the intervention, the correlation declines steadily. This is an observational indicator that the rollout of Rating Impact weakened the relationship between rater-note alignment and helpfulness ratings among the group of users who were active before the change.}
    \label{fig:spearman-visualization}
\end{figure}

In the Appendix, we conduct sensitivity test these behavioral shifts with various additional metrics. Taken together, these patterns are consistent with the hypothesis that after the Rating Impact rollout, minority–aligned raters adjusted their evaluations toward the anticipated consensus, reducing observable disagreement.

\subsection*{Annotations on Controversial Topics}
Another indication of changing rater behavior is the extent to which they engage with {controversial} vs. {non-controversial} notes; agreement with the majority is far more likely on non-controversial notes, thereby increasing the potential to boost a rater’s Rating Impact score. To study this, we compare annotation patterns for controversial and non-controversial notes before and after rollout of the Rating Impact system. We define controversy at the note level using two complementary approaches: a topic-based classification and a factor-based classification.

\textbf{Topic-based classification.} 
For topic-based classification, we assign each note to a content topic based on its summary text, using retrained version of X’s public topic-assignment pipeline (details in Appendix \ref{sec:empiric}). We then flag as controversial those topics that, in platform discourse and prior literature, are known to be highly polarized (e.g., national politics, public health, geopolitical conflicts). This classification captures domain-level controversy, independent of individual note characteristics. We additionally use Large Language Model (LLM)-based topic assignments as alternative classification procedures (details in Appendix \ref{sec:empiric}).

\textbf{Factor-based classification.}
For factor-based classification, we use the absolute value of the estimated note factor, $|g_n|$, as a continuous measure of ideological alignment. In the Community Notes latent-factor model, notes with factors far from zero are those that receive systematically different evaluations from different groups of raters. We therefore classify a note as controversial if $|g_n|$ lies in the top quantile of the distribution. This approach captures instance-level controversy even within otherwise non-polarized topics.

\vspace{1cm}

We use both definitions in parallel to ensure robustness. The topic-based approach provides interpretability, while the factor-based approach is model-derived and sensitive to within-topic variation. Our empirical results are qualitatively consistent across both definitions. In this section, we report results using the topic-based classification leaving factor-based results to Appendix \ref{sec:empiric}.

Figure \ref{fig:controversial_notes_plot}
presents the weekly share of tweets receiving a first note that ultimately attains Helpful status before and after the rollout date, separately for controversial and non-controversial topics. For controversial topics, the helpful-share increases\footnote{The definition of "helpfulness" was relaxed around the time of the Rating Impact rollout, so the helpfulness for both controversial and non-controversial notes increases \cite{CommunityNotesRankingNotes}.} from 0.061 with Wilson CI [0.044, 0.084] pre-rollout to 0.126 with Wilson CI [0.109, 0.146] post-rollout, a change of 6.5 percentage points. In contrast, non-controversial topics see an increase from 0.062 with Wilson CI [0.027, 0.138] to 0.199 with Wilson CI [0.151, 0.258], a change of 13.7 percentage points. We use a difference-in-differences (DiD) design to estimate the effect of the rollout of the Rating Impact system on note helpfulness for controversial vs. non-controversial notes. Using a symmetric 12 week band around Oct. 1, 2022, we see that the probability a non-controversial note is rated helpful increased by around 9 percentage points (95\% CI [0.015, 0.161])  more than the probability a controversial note is rated Helpful. These shifts are also reflected in the counts during this 12 week band before and after the cutoff date: the proportion of tweets receiving at least one new note in controversial topics decreases ($-3.4 pp$), while the proportion of tweets receiving at least one new note in non-controversial topics rises ($+3.4 pp$).

This trend of reduced engagement on controversial topics is especially present in the minority group. In Appendix \ref{sec:empiric}, we run an additional DiD study on the proportion of ratings assigned to controversial notes for individuals in the minority group vs. non-minority group. We find that, in a 12 week band around the rollout date, the proportion of controversial notes rated by minority users decreases 7.53 percentage points relative to the change observed for non-minority users ($p = 0.0284$).

We emphasize that these comparisons are descriptive; they show a post-rollout divergence in annotation outcomes between controversial and non-controversial topics, but they do not by themselves identify the mechanism. However, the consistent pattern across both classification schemes is in line with the hypothesis that, under the Rating Impact system, content on controversial topics receives fewer helpful ratings and correspondingly fewer surfaced annotations, while non-controversial topics receive more.
% Recall that on Community Notes, notes are rated as "Helpful" or "Unhelpful" based on when their estimated note intercept $\hat i_n$ reaches a threshold, larger than $0.4$ or smaller $0.05 - 0.8\cdot |f_n|$, respectively. 

\begin{figure}[h]
    \centering
    \includegraphics[width=0.5\linewidth]{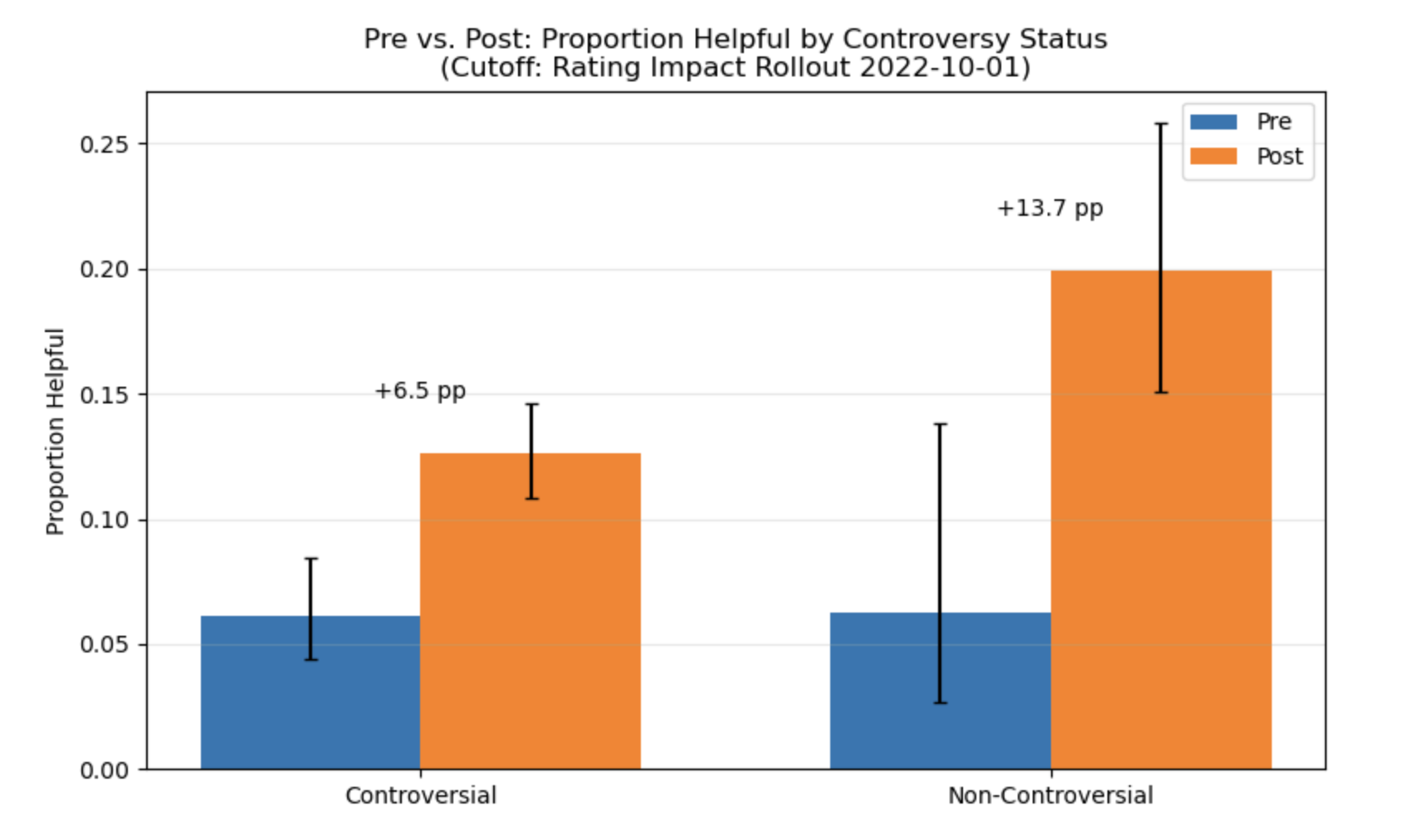}
    \caption{Pre–post change in the share of notes with final status Helpful by controversy category around the Rating Impact rollout (cutoff: 2022-10-01). Bars show the mean proportion in the approximately 20 weeks before (Pre, blue) and after (Post, orange) the cutoff; error bars are 95\% CIs. Text above bars reports the Post–Pre difference in percentage points. The increase is larger for non-controversial notes (+13.7 pp) than for controversial notes (+6.5 pp).}
    \label{fig:controversial_notes_plot}
\end{figure}

\begin{table}[h]
\centering
\begin{tabular}{lcccc}
\toprule
Weeks (Pre, Post) & Estimate & 95\% CI & $p$-value \\
\midrule
(12, 0-12)  & 0.0882 & [0.0152, 0.1612] & 0.0179 \\
(12, 13-26) & 0.1185 & [0.0311, 0.2059] & 0.0079 \\
\bottomrule
\end{tabular}
\captionsetup{justification=centering}
\caption{DiD Estimates. \\ Windowed difference-in-differences estimates of the change in the weekly helpfulness rate difference between non-controversial and controversial content after the October~1,~2022 rollout. The outcome is the difference in weekly proportion of tweets receiving a note that is ultimately rated \textit{Helpful} between tweets with controversial vs. non-controversial notes. The gap increases by 8.8 pp in the first 12 weeks post-rollout and by 11.9 pp in weeks 13--26. Both effects are positive and statistically significant.}
\label{tab:did_results}
\end{table}

\subsection*{Predictive Accuracy}

We study how the predictive performance of the platform’s latent-factor model changes around the rollout of Rating Impact. For each calendar week $t$, we fit the X Community Notes matrix factorization algorithm on cumulative data up to week $t$, and then evaluate two types of prediction error. {The {in-sample} error measures the error for the model predictions on ratings in week $t$. The {one-week-ahead} error measures the error for model predictions on ratings in week $t+1$, restricting to rating  pairs $(u,n)$ where both the rater and note are observed in week $t$ (see Appendix \ref{sec:twostage imp} for implementation details).}

Figure~\ref{fig:predictve_accuracy_over_time} plots the one-week-ahead mean squared error (MSE) over time, with the rollout week marked and Table \ref{tab:pred-accuracy-results} aggregates by period. The post-rollout period shows a higher and more volatile error than the pre-rollout period, with in-sample MSE increasing over $116\%$ and one-week-ahead MSE increasing over $76\%$ between pre- and post- rollout. 

% The reduction in predictive accuracy is consistent with the behavioral changes documented earlier (Section~\ref{4.1}). Both reduce the variation in observed ratings and shift the underlying relationship between raters and notes, making a model fit on earlier data less transferable to later data.

\begin{figure}[!htbp]
    \centering
    \includegraphics[width=0.6\linewidth]{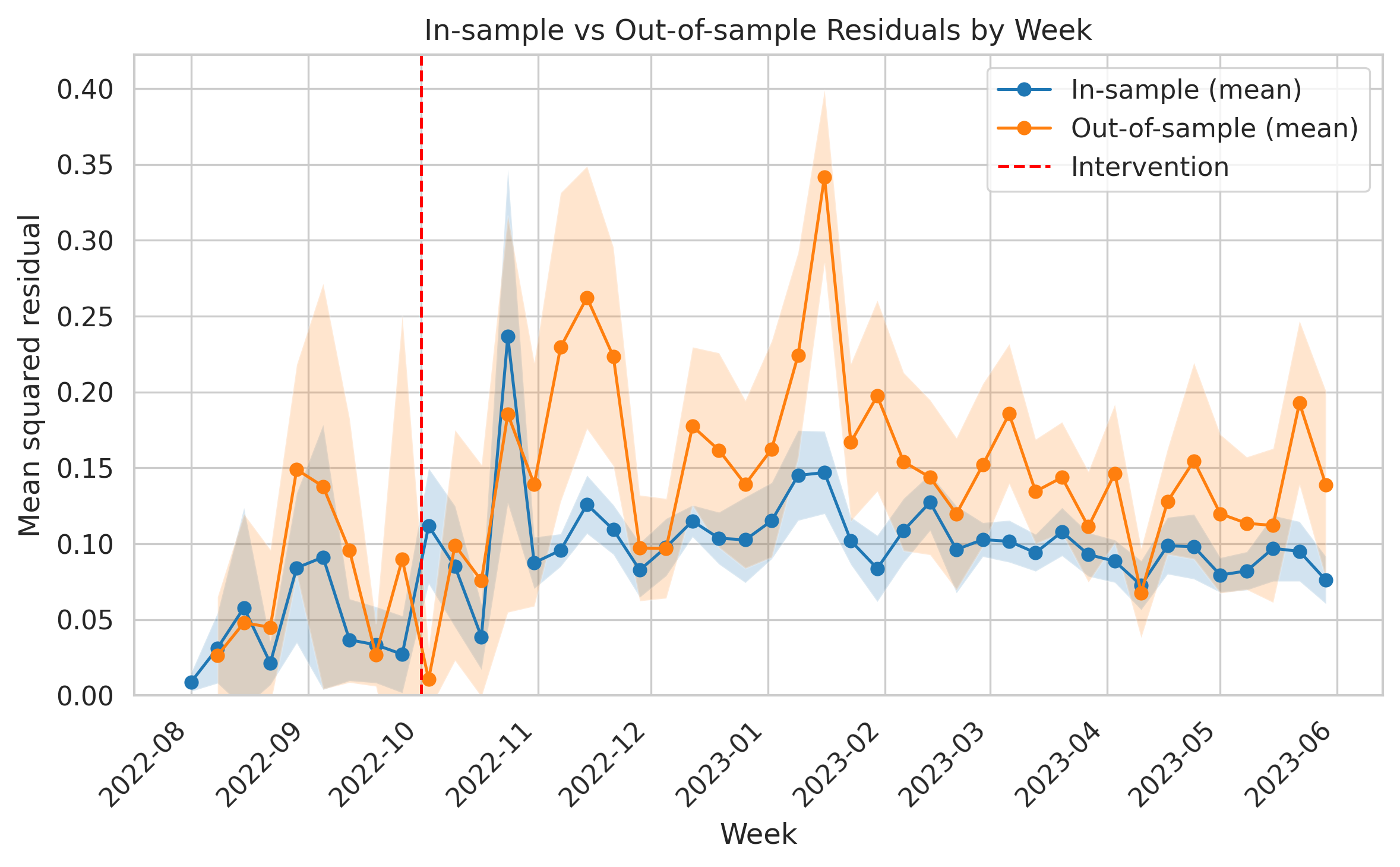}
    \caption{This figure shows the weekly mean squared error (MSE) for in-sample vs. out-of-sample predictions from the matrix factorization model. {The MSE is computed as the squared difference between the observed rating outcomes and the model’s predicted rating outcomes (pre-discretization)}. In-sample errors reflect fit to the same week’s ratings, while out-of-sample errors use factors estimated from week $t$ to predict ratings in week $t+1$. The vertical dashed line marks the Rating Impact analysis date we use.}
    \label{fig:predictve_accuracy_over_time}
\end{figure}

\begin{table}[!htbp]
\centering
\begin{tabular}{cccc}
\hline
 & Pre-Rollout & Post-Rollout & \% Change \\
\hline
In-sample &
\begin{tabular}{@{}c@{}} 0.0488 \\ {\scriptsize (0.0327, 0.0648)} \end{tabular} &
\begin{tabular}{@{}c@{}} 0.1057 \\ {\scriptsize (0.1005, 0.1109)} \end{tabular} &
+116.60\% \\
One-week-ahead &
\begin{tabular}{@{}c@{}} 0.0927 \\ {\scriptsize (0.0600, 0.1254)} \end{tabular} &
\begin{tabular}{@{}c@{}} 0.1634 \\ {\scriptsize (0.1429, 0.1839)} \end{tabular} &
+76.27\% \\
\hline
\end{tabular}
\captionsetup{justification=centering}
\caption{Prediction Accuracy of Matrix Factorization. \\
This table shows the average increase in in-sample MSE and one-week-ahead MSE of the matrix factorization estimates, 
computed as averages over three months pre- and post-rollout. 
Numbers in parentheses indicate 95\% confidence intervals.}
\label{tab:pred-accuracy-results}
\end{table}

\section*{Auditing by Predictive Stability: Two-Stage Weighted Matrix Factorization}
The empirical patterns suggest that the platform’s current auditing may create incentives for conformity that reduce diversity in ratings and coverage of controversial topics.

We therefore propose an alternative rule that separates auditing from agreement.
Following well-established precedent in the literature, we propose an alternative weighting method, which we refer to as weighted matrix factorization, that targets \emph{rater reliability} directly rather than agreement with the final note-status. 
The method consists of the following steps:\\
    \textbf{First stage:} Compute matrix factorization estimates $\hat \mu, \hat h_u, \hat i_n, \hat f_u, \hat g_n$ as in \eqref{eq:matrix_factorization_eq}\\
    
  \noindent  \textbf{Second stage:} 
    \begin{enumerate}
        \item \textbf{Compute residuals:} For each observed user-note pair $(u, n)$, compute the first stage residual:
        \[
            e^{(1)}_{un}
            \;=\;
            r_{un} \;-\;\hat{\mu}\;-\;\hat{h}_u\;-\;\hat{i}_n \;-\;\hat f_u \cdot \hat{g}_n.
        \]
        \item \textbf{Estimate variance:} For each user $u$, estimate the empirical variance of their ratings as 
        \[
            \hat{\sigma}^2_{u}
            \;=\;
            \frac{1}{|N(u)|}\sum_{n\in N(u)} \bigl(e^{(1)}_{un}\bigr)^2
        \]
        where $N(u)$ is the set of notes that user $u$ has rated.
        \item \textbf{Refit intercepts \& factors:} Run a final weighted regression:
        \[
            \arg \min_{\tilde \mu, \tilde h_u, \tilde i_n, \tilde f_u, \tilde g_n}
            \sum_{\substack{(u,n)\\ \text{observed}}}
            \frac{1}{\hat{\sigma}^2_u}
            \left(
            r_{un}-\tilde{\mu}-\tilde{h}_u -\tilde{i}_n-\tilde f_u \cdot \tilde g_n
            \right)^2.
        \]
        This step adjusts the intercepts after incorporating the user variance terms.  In principle, one could also recalculate the $\hat{\sigma}^2_{u}$ and iterate.
    \end{enumerate}

The inverse-variance weight $1 / \hat{\sigma}^2_u$ measures how predictable a rater’s behavior is, given their latent position. Because weights are based on internal consistency rather than agreement with other raters, consistent minority raters keep their influence even when they diverge from the majority. We expect this approach to:

\begin{itemize}
    \item \textbf{Mitigate conformity incentives:} Weights depend on consistency, not consensus alignment.
    \item \textbf{Preserve minority contributions:} Consistent raters from minority viewpoints are not penalized for disagreement.
    \item \textbf{Improve predictive performance:} As in WLS, accounting for heteroskedasticity should reduce mean squared error in prediction.
\end{itemize}

% We present the change in predictive performance in the following subsection.  
% For the other two outcomes—minority rater alignment and coverage of controversial topics—we do not observe the counterfactual in which users do not abstain from rating.  
% To analyze these outcomes, we turn to a user behavior model, which allows us to study how conformity incentives could generate the observed patterns in the absence of direct counterfactual data.

% As noted in the Introduction, our approach follows well-established precedent on using reweighting to achieve consistent estimates while preventing strategic manipulation. 

% In econometrics, \emph{Weighted Least Squares} (WLS) achieves efficiency under heteroskedasticity by weighting each observation inversely to its error variance \cite{Greene2010}. In recommender systems, ``confidence-weighted'' and heteroskedastic matrix factorization models explicitly account for differences in user-level rating noise \citep{Wang2018ConfidenceAware}. In crowdsourcing and truth-inference models, annotator reliability is commonly estimated from residual variation and used to reweight labels \citep{}. Our proposal adapts these ideas to the Community Notes context, replacing consensus alignment with a measure of individual rating consistency.

% We propose an alternate method, based on re-weighting proportional to user variance, to evaluate the intrinsic helpfulness of notes. This method is inspired by the two-stage method proposed \kh{Jackson et. al. \& other sources} for linear regression. 

\subsection*{Empirical Predictive Performance}
To test the performance of our weighted matrix factorization algorithm, we run it on the Community Notes dataset modifying the Community Notes matrix factorization algorithm. We use ratings data from Jan. 1, 2023 to June 1, 2024. {In particular, the ratings data from Jan. 1, 2023-July 1, 2023 serves as a warm start for the matrix factorization algorithm, and we evaluate our method on the ratings data from July 1, 2023-June 1, 2024.} Detailed implementation description is given in the Appendix.

% We run the open source Community Notes matrix factorization algorithm to get out-of-sample, one-week-ahead predictions. Then, we run the two-stage matrix factorization approach, using the residuals from the Community Notes matrix factorization algorithm, to get out-of-sample two-stage estimates.  

In Figure \ref{fig:mf-ts-weekly-mean} and \ref{fig:mf-ts-weekly-median} we compare the mean absolute residual and the median absolute residual on the one-week-ahead predictions from the matrix factorization approach vs. the two-stage approach. Using our proposed two-stage approach, the mean absolute residual is 5.73\% lower on average, and the median absolute residual is 27.99\% lower on average.

% (standard deviation 4.75\%)
% (standard deviation 6.78\%)

\begin{figure}[!htbp]
  \centering

  \begin{subfigure}{0.5\linewidth}
    \centering
    \includegraphics[width=\linewidth]{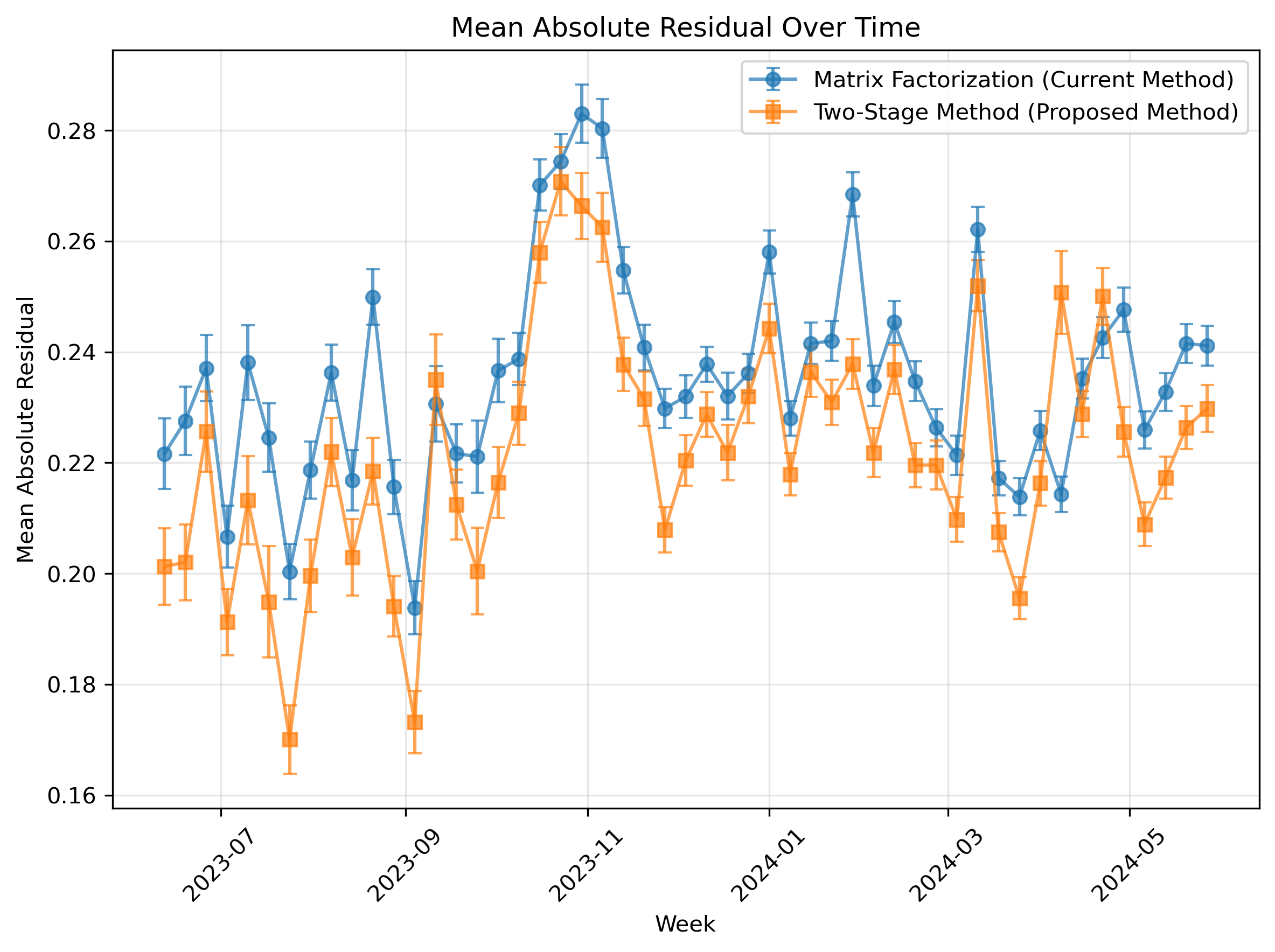}
    \caption{Mean absolute residuals.}
    \label{fig:mf-ts-weekly-mean}
  \end{subfigure}

  \vspace{0.5em}

  \begin{subfigure}{0.5\linewidth}
    \centering
    \includegraphics[width=\linewidth]{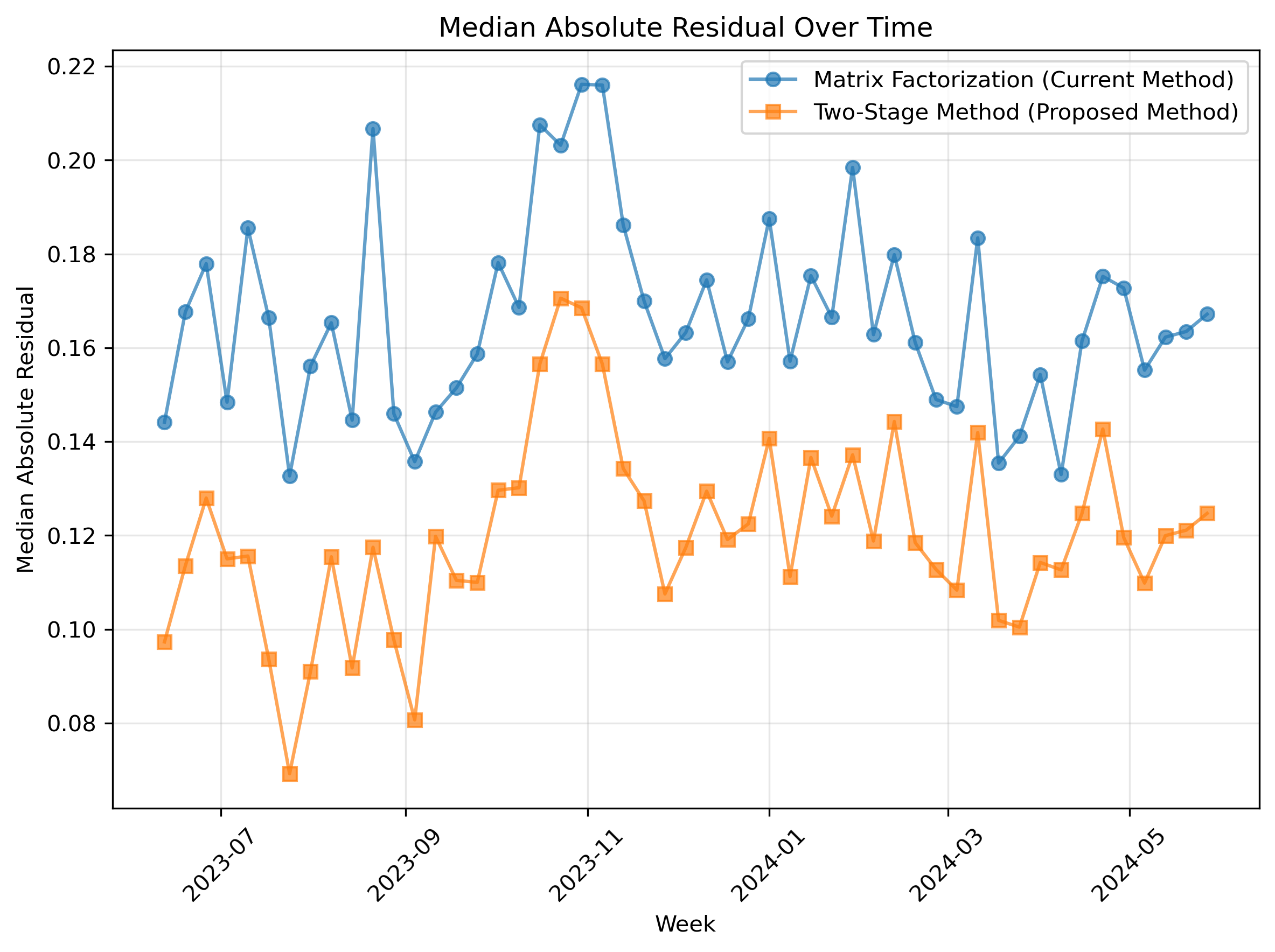}
    \caption{Median absolute residuals.}
    \label{fig:mf-ts-weekly-median}
  \end{subfigure}

  \caption{Weekly out-of-sample (one-week-ahead) predictions for residuals estimated using matrix factorization vs. two-stage approach. Figure \ref{fig:mf-ts-weekly-mean} shows the mean absolute residuals, with error bars, for each weeks' predictions. Figure \ref{fig:mf-ts-weekly-median} shows the median absolute residuals for each weeks' predictions.}
  \label{fig:mf-ts-weekly-in-oos}
\end{figure}

\section*{A Behavioral Model of Strategic Conformity}
What counts as “truth” in content moderation is not straightforward. Whether a note is \textit{helpful} can depend on context, values, and viewpoint, and in many settings there is no exogenous ground truth that the platform can audit against. Yet deployed systems necessarily act as if there is a stable  truth to infer. In particular, both X’s Community Notes model and related designs (including Meta’s) are built around an implicit premise: contributors receive noisy private signals about a latent underlying evaluation (as in Equation~\ref{eq:r_un}), and aggregation can recover that latent helpfulness of a note when signals are diverse and independent.

In this section, we \textbf{take that premise at face value}, and assume that contributors truly do observe latent signals that fit the platform’s modeling assumptions. We then ask \textbf{even if this latent-signal model were correct, would a consensus-based auditing system recover the underlying evaluation once contributors anticipate the platform’s eventual consensus?} 

Our analysis shows that it does not. When contributors are rewarded or penalized based on agreement with anticipated consensus, they have incentives to report strategically rather than reporting their private evaluations (signals) about the content. This systematically biases the platform’s inferred quantities, particularly on controversial content where conformity pressure is strongest.
 Finally, we show how shifting auditing from \textit{agreement with the majority} to \textit{out-of-sample stability of residual behavior} can remove the direct incentive to match the anticipated platform outcome and gives an unbiased estimator for helpfulness of a note.

% Since we do not observe the counterfactual for our weighted matrix factorization method, we turn to a theoretical model of user behavior to study how minority rater alignment and the coverage of controversial topics would change under our method. Heuristically, our model assumes that users observe latent signals on notes according to the matrix factorization model. Then, to report a rating on a note, users maximize a utility function, which trades off truth-telling against conformity to the anticipated consensus. The balance between truth-telling and anticipated consensus in the utility function is governed by a function $\rho(\cdot)$, which is a function of the note controversy. The more controversial the note, the higher weight users place on conforming to the majority in their utility function. We show that if $\rho\equiv 1$, users report truthfully and the matrix factorization estimates for the note helpfulness are consistent. However, any deviation $\rho<1$ may induce systematic bias toward consensus, leading to sign flips among minority users, an underestimation of the minority share, and biased estimates of the note helpfulness. We then show that an augmented version of our two-stage matrix factorization approach induces truthful reporting, which gives consistent estimates of the true note helpfulness, and also leads to predictions that have asymptotically lower variance than the current matrix factorization approach.

\subsection*{Model} We consider a system with $U$ users and $N$ notes where users $u$ rate notes $n$ that appear on their feeds. Following the latent-signal interpretation implicit in matrix factorization \cite{commnotes2022code}, suppose each user, suppose that each user $u$ observes a latent signal $s_{un}$ on a note $n$ with additive noise $\epsilon_{un}$
\[
    r_{un}^\star = s_{un} + \epsilon_{un}, \qquad s_{un} = \mu + h_u + i_n + f_u g_n
\]
where $\E[\epsilon_{un}] = 0, \E[\epsilon_{un}^2] := \sigma_u^2 \in (0,  \infty)$. The errors $\epsilon_{un}$ are independent across users $u$ and i.i.d. for the same $u$.  As in \eqref{eq:r_un}, the main quantity of interest is \(i_n\), which captures the perceived overall helpfulness of note \(n\).
Here, \(\mu\) is a global intercept, and \(h_u\) is a rater intercept capturing user \(u\)'s baseline tendency to rate notes as helpful rather than unhelpful, independent of the note's content. The latent variables \(f_u,g_n\in\mathbb{R}\) are user and note factors, respectively, and their product \(f_u g_n\) captures the extent to which a note is viewed as more or less helpful by users with different latent positions.
% We use $\mu^0, h_u^0, i_n^0, f_u^0, g_n^0$ to denote the true parameter values. We assume that $\mu^0, h_u^0, i_n^0$ are fixed but unknown constants, and that $f_u^0, g_n^0$ are drawn from some distribution with assumptions specified later.

The platform does not observe the latent quantity \(r_{un}^\star\) directly. Instead, it observes reported ratings, which we denote by \(a_{un}\), and fits the matrix factorization model in \eqref{eq:matrix_factorization_eq} to these reports using ridge-regularized least squares. Specifically, the platform computes
\begin{equation}
\label{eq:matrix_factorization_ridge}
\begin{aligned}
    \arg\min_{\mu,h,i,f,g} \;
    &\sum_{(u,n)\in\Omega}
    \bigl(a_{un}-\mu-h_u-i_n-f_u g_n\bigr)^2 \\
    &\quad + \lambda_h \|h\|_2^2
    + \lambda_i \|i\|_2^2
    + \lambda_f \|f\|_2^2
    + \lambda_g \|g\|_2^2 ,
\end{aligned}
\end{equation}
where \(\Omega\) denotes the set of observed ratings.
Let \(\hat\mu,\hat h_u,\hat i_n,\hat f_u,\hat g_n\) denote the resulting estimates. 
% Since this decomposition is not unique without normalization, we impose the standard identifiability constraints
% \begin{equation*}
%     \sum_{u \text{ observed}} h_u = 0 , \quad \sum_{n \text{ observed}} i_n = 0, \quad \frac{1}{U}\sum_{u \text{ observed}} f_u^2 = 1
% \end{equation*}
% \begin{equation*}
%    \sum_{n \text{ observed}} g_n = 0, \quad g_1 = 1,
% \end{equation*}
% which allow us to uniquely identify the intercept and factor terms. 

% $\mu$ is a global intercept, $h_u$ is a {rater intercept} capturing user $u$'s baseline agreeability (the tendency to mark notes as helpful rather than unhelpful, regardless of content),  and $f_u, g_n \in \mathbb{R}$ are latent {rater} and {note factors} whose product represents the ideological alignment between user and note.

\subsubsection*{Modeling user's behavior}

% We capture this by positing that each contributor’s perceived consensus is a noisy signal around a systematic note-level benchmark. 
% Importantly, our results do not
% require all contributors to observe the same benchmark exactly: heterogeneity
% in perceived consensus generally adds noise and can contaminate factor estimates,
% but the key distortion remains---ratings are pulled away from $r^\star_{un}$
% toward an anticipated platform outcome, so the platform no longer observes the
% latent signal even in large samples.

% To formalize the functional form of this note-level benchmark, we use the
% platform’s own low-dimensional aggregation structure. Whether a contributor uses
% a simple heuristic (treating consensus as an average of past ratings) or a more
% sophisticated forecast (approximating the score implied by rerunning the
% platform’s rank-1 matrix factorization on historical data), their forecast
% necessarily averages over other users. Under rank-1 structure, this averaging
% collapses user-side heterogeneity into (i) an intercept shift and (ii) a single
% loading on the note’s latent position. We therefore model the anticipated
% consensus target as an affine function along the note axis, with a loading that
% may vary with controversy.

Consensus-based auditing ties a contributor’s future standing (eligibility or
influence) to whether their ratings agree with the platform’s eventual aggregate
outcome.  At the time of rating, however, that outcome has not yet been realized. Contributors therefore form expectations about it using information that is broadly observable on the platform, including past notes and ratings, visible patterns in prior outcomes, and the aggregation rule itself.

These expectations may differ across contributors, reflecting differences in attention or inference. Still, because they are formed from largely shared information, it is natural to model them as noisy forecasts of a common note-level consensus target.
\begin{definition}
    Let $c_n \in [0, 1]$ be a scalar metric that denotes the controversy of the note $n$. For a note $n$, 
let \(m_n\) denote the anticipated platform consensus, that is, the outcome contributors expect the platform eventually to assign to that note.
\end{definition}

We leave \(m_n\) unrestricted. It may reflect a simple heuristic, such as averaging visible prior ratings, or a more sophisticated forecast of the score implied by the platform's aggregation rule. Contributor-specific differences in attention or inference are captured as additional noise around this target. Specifically, contributor \(u\) observes
\[
    \tilde m_{un} = m_n(c_n) + \epsilon_{un}^m
\]
where $\epsilon^m_{un}$ captures heterogeneity in perceptions across users with $\E[\epsilon_{un}^m \mid f_u, g_n, c_n, \rho_n] =0$ and $\var(\epsilon_{un}^m \mid f_u, g_n, c_n, \rho_n) = \sigma_{m, u}^2(c_n)$.  Allowing $\sigma_{m ,u}^2(\cdot)$
to depend on $c_n$ captures the idea that forecasts of the platform outcome may
be noisier on more controversial notes.

We next model how contributors choose ratings.   A contributor $u$ chooses a latent report $a_{un} \in \mathbb{R}$ on note $n$ by balancing two objectives: (i) reporting their private signal $r^\star_{un}$ against (ii) aligning with what they expect the platform to treat as the consensus, which we denote by $\tilde m_{un}$. 
The first term captures truthful reporting. The second captures the incentive 
created by consensus-based auditing: contributors anticipate that future 
standing on the platform (such as eligibility or influence) depends on whether 
their ratings agree with the platform’s eventual aggregate outcome. Deviating 
from the anticipated outcome therefore carries an expected penalty.

We allow this tradeoff to depend on the controversy level $c_n$. On more controversial notes, the incentive to anticipate the platform's outcome is plausibly stronger, so the weight placed on the contributor's own signal is weaker. We capture this expected downstream consequence of disagreement using a smooth quadratic loss. Similar tensions between private information and social conformity arise in models of social learning and information cascades  \cite{banerjee1992simple,Smith2000ObservationalLearning,Acemoglu2011BayesianLearning,acemoglu2022fast}.
\begin{definition}[User's Utility]
    User $u$'s utility for a report $a \in \R$ on note $n$ is defined to be 
    \begin{equation}\label{eq:user_period_utility}
        U_{un}(a \mid r_{un}^\star, c_n)= -\frac{\rho(c_n)}{2}(a - r_{un}^\star)^2 - \frac{1-\rho(c_n)}{2}(a-\tilde m_{un}(c_n))^2 + \zeta_{un}.
    \end{equation}
    Here, $\rho(\cdot) \in [0, 1]$ is conformity weight that is weakly decreasing in $c_n$; it reflects the intuition that as the controversy of a note decreases, users are more inclined to report their true signal, and as the controversy of a note increases, users are more inclined to conform to the majority. $\zeta_{un}$ is an idiosyncratic payoff shock (mean zero and finite variance, i.i.d. across users $u$ and notes $n$) that generates residual noise in choices independent of action $a$\footnote{In X's model, reports are discretized to $\{0,0.5,1\}$, so we interpret user actions $a$ as a latent index; the observed rating maps negative values of $a$ to $0$, positive values to $1$, and $0$ to $0.5$. Our MF is fit to the observed ratings, while the analysis proceeds on the latent index.}.
\end{definition}
We model the expected downstream consequence of disagreement using a smooth quadratic loss, which serves as a reduced-form approximation to the platform’s discrete eligibility and impact rules. Similar tensions between private information and social conformity arise in models of social learning and information cascades

% If $c_n\in\{0,1\}$ we write $\rho_0:=\rho(0)\succeq \rho_1:=\rho(1)$ to emphasize that controversial notes (\(c_n{=}1\)) place less weight on truth and more on the consensus projection. 

Maximizing a user's utility \eqref{eq:user_period_utility} immediately implies that the optimal latent report is the convex combination
\begin{equation}\label{eq:reporting}
a_{un}^\star=\rho(c_n) r^{\star}_{un} +(1-\rho(c_n)) \tilde m_{un}(c_n).
\end{equation}

Thus, reports place weight \(\rho(c_n)\) on the contributor's private signal and weight \(1-\rho(c_n)\) on the anticipated platform consensus. 
When $\rho(c_n)=1$, the contributor reports their private signal exactly; we refer to this case as \textit{truthful reporting}.
Because \(\rho(\cdot)\) is weakly decreasing in controversy, more controversial notes lead contributors to place relatively less weight on their own signals and more weight on the anticipated platform outcome.

\subsection*{Results}
In this section, we present our main theoretical results.  Proofs and technical regularity conditions are given in Appendix \ref{sec:proofs}. 
Throughout, we work in the setting where the number of users and notes are growing at the same asymptotic rate.
In reality, the platform observes only a subset of user–note interactions, which we model as missing-completely-at-random sampling: each rating \(a_{un}\) is observed independently with probability \(p\in(0,1]\).  
% In reality, the platform cannot observe all user-note interactions, so we assume that the platform observes ratings $a_{un}$ at random with constant probability $p \in (0, 1]$ independently of all other latent variables in the system. This condition is called missing completely at random (MCAR).
% We denote the resulting observation matrix by \(\Omega\), with entries \(\omega_{un}\sim \mathrm{Bernoulli}(p)\). 
We also assume that the true $\{(h_u, f_u)\}_u$ and $\{(i_n, g_n)\}_u$ are bounded and i.i.d., with finite variance and that all intercept and factor terms are mutually independent. In addition, we assume that $h_u, i_n, g_n$ are mean zero, but that $\E[f_u] = \mu_f$, for some known positive constant $\mu_f$; this allows us to identify a majority and minority group of users. The boundedness assumption is typically observed in real-life data, since user and item attributes are generally finite or are normalized by construction. Finally, we assume mean-zero sub-Gaussian noise terms  $\epsilon_{un}$ and $\epsilon_{un}^m$ independent of the latent variables. Formal regularity conditions are given in Appendix \ref{sec:proofs} Assumption 1.

% \begin{assumption}\label{asmp:user}
%     Assume that  with $\E[h_u] = 0, \E[f_u] = c$, for some constant $c>0$, which is known to the platform, and $\var(h_u) = \sigma_h^2 > 0, \var(f_u) = \sigma_f^2 > 0$. Furthermore, assume that $h_u$ and $f_u$ are independent and that $h_u, f_u$ are bounded. 
% \end{assumption}
% \begin{assumption}\label{asmp:note}
%     Assume that $\{(i_n, g_n)\}_u$ are i.i.d. with $\E[i_n] = 0, \E[g_n] = 0$, and $\var(i_n) = \sigma_i^2 > 0, \var(g_n) = \sigma_g^2 > 0$. Furthermore, assume that $i_n$ and $g_n$ are independent and that $i_n, g_n$ are bounded, i.e. $|i_n| \leq B_i, |g_n| \leq B_g$. 
% \end{assumption}
% \begin{assumption}\label{asmp:independence}
%      The user-side variables \(\{(h_u,f_u)\}_{u=1}^U\) are independent of the note-side variables \(\{(i_n,g_n)\}_{n=1}^N\).
% \end{assumption}
% \begin{assumption}\label{asmp:sub-guassian}
%     The noise variables \(\epsilon_{un}, \epsilon_{un}^m\) are independent, mean-zero, \(\sigma_u\) sub-Gaussian, and independent of all latent variables (see e.g., \cite{vershynin2012introduction} Definition 5.7).
% \end{assumption}
% \begin{assumption}\label{asmp:mf_centering}
%     Assume that 
%     \[
%         \sum_u h_u^0 = 0 \qquad \text{and} \qquad \sum_n i_n^0 = 0.
%     \]
% \end{assumption}
% \begin{assumption}\label{asmp:observations}
%     Assume that each user and each note have at least one observation. Note that in practice if this is not the case, such users and notes are not considered in factor estimation. 
% \end{assumption}
\subsubsection*{Private-Signal Reporting ($\rho=1$)}
We begin with the benchmark case \(\rho(\cdot)\equiv 1\), so that contributors report their private signals and \(a_{un}^\star=r_{un}^\star\). In this setting the platform observes noisy realizations of the latent-signal model itself, and the matrix factorization recovers note helpfulness.
This result extends the analogous result in the interactive fixed-effects framework in \cite{bai2009panel} to our setting with missing data using techniques from the matrix completion and interactive fixed-effects literature \cite{chen2020noisy, moon2015linear, su2026estimation}.

\begin{theorem}\label{thm:truth-recovery}
   Assume $U,N\to\infty$ and that $\E[f_u] = \mu_f$ where $\mu_f$ is known. Then, the estimate for note helpfulness is consistent. In particular 
    % there exists a constant $c$ such that,
    % \[
    %     \hat f_u \xrightarrow{p} cf_u^0, \qquad \hat g_n \xrightarrow{p} c^{-1} g_n^0,
    % \]
    % and
    \[
        \hat i_n \xrightarrow{p} i_n^0.
    \]
    That is, in the truthful regime, rank-1 MF recovers the true note helpfulness.
\end{theorem}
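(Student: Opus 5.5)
We view the ratings as a noisy, partially observed low-rank matrix and proceed in three steps. Under $\rho\equiv 1$ the reports are $a_{un}=r^\star_{un}=\mu+h_u+i_n+f_ug_n+\epsilon_{un}$. The signal matrix $S=(s_{un})$ then has rank at most $3$: it lies in the span of $\one\one^\top$, $h\one^\top$, $\one i^\top$ and $fg^\top$. Since $\E[f_u]=\mu_f\neq 0$, we decompose $f_u=\mu_f+\tilde f_u$. This gives
\[
S=\mu\one\one^\top + (h+\mu_f\cdot 0)\one^\top + \one(i+\mu_f g)^\top + \tilde f g^\top,
\]
so the MF column effect $i_n+\mu_f g_n$ is entangled with the interaction. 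This is exactly why knowledge of $\mu_f$ is needed: we identify $g$ from the interaction part (through $\tilde f$, which is mean zero and has variance $\sigma_f^2>0$), and then read off $i_n$ by subtracting $\mu_f \hat g_n$. Equivalently, we fix the scale and sign normalization of $(\hat f,\hat g)$ by requiring $U^{-1}\sum_u\hat f_u\to\mu_f$ together with $U^{-1}\sum_u \hat f_u^2$ matching $\mu_f^2+\sigma_f^2$.

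\textbf{Step 1 (global consistency of the fit).} We show that the ridge-penalized least-squares estimator \eqref{eq:matrix_factorization_ridge}, with penalties $\lambda=o(\sqrt{p\,UN})$ (or fixed), recovers $S$ in normalized Frobenius norm: $(UN)^{-1}\|\hat S-S\|_F^2\pto 0$. The argument is a standard basic-inequality one. We compare the objective at the estimate with its value at the truth and bound the cross term $\sum_{\Omega}\epsilon_{un}(\hat s_{un}-s_{un})$ by $\|P_\Omega(\epsilon)\|_{\op}\,\|\hat S-S\|_*$. Sub-Gaussianity and MCAR sampling give $\|P_\Omega(\epsilon)\|_{\op}=O_p(\sqrt{p\max(U,N)})$, and the rank of $\hat S-S$ is at most $6$. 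We then pass from $P_\Omega$ to the full matrix using restricted strong convexity for incoherent low-rank matrices, which holds because all latent variables are bounded (as in \cite{chen2020noisy}).

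\textbf{Step 2 (from matrix recovery to factor recovery, up to rotation).} Using the i.i.d. structure and the laws of large numbers $U^{-1}\sum h_u\to0$, $U^{-1}\sum f_u\to\mu_f$, $N^{-1}\sum_n i_n\to 0$, and so on, the population column space of $S$ is spanned by $\{\one,h,f\}$ and these vectors are asymptotically non-collinear. Projecting $\hat S$ onto $\one$ and onto the orthogonal complement therefore separates $\hat\mu$, the row effects, and the column effects. Following the interactive fixed-effects argument of \cite{bai2009panel} and \cite{moon2015linear}, we get $\hat g = c\,g + o_p(1)$ in averaged $\ell_2$ for some rotation/scale $c$, and the scale is pinned down by the normalization above.

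\textbf{Step 3 (entrywise consistency of $\hat i_n$) — the main obstacle.} Averaged consistency does not immediately give $\hat i_n\pto i_n$ for a fixed $n$. The plan is to use the first-order condition for $i_n$:
\[
\hat i_n=\frac{\sum_{u:(u,n)\in\Omega}(a_{un}-\hat\mu-\hat h_u-\hat f_u\hat g_n)}{|\Omega_{\cdot n}|+\lambda_i},
\]
and likewise for $\hat g_n$, which solves a two-by-two local system given the user-side estimates. Because $|\Omega_{\cdot n}|\approx pU\to\infty$, and because the user-side estimates are consistent in average $\ell_2$ from Step 2, the noise average $|\Omega_{\cdot n}|^{-1}\sum\epsilon_{un}$ vanishes. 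The remaining concern is that the user-side estimation error is correlated with $\epsilon_{\cdot n}$ only through one column out of $N$. We would handle this with a leave-one-out argument: refit without column $n$, show the refit differs negligibly, and then use independence, in the spirit of \cite{chen2020noisy,su2026estimation}. Substituting back, $\hat i_n$ converges to the population column effect minus $\mu_f g_n$, which by construction equals $i_n^0$ under the known-$\mu_f$ normalization.
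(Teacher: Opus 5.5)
Your architecture matches the paper's: averaged consistency of the fit from a basic inequality, then a fixed-$n$ argument through the first-order conditions of the local two-by-two system, then a correction by $\mu_f$ times the note factor. The genuine gap is in the identification step. Your two conditions $U^{-1}\sum_u\hat f_u=\mu_f$ and $U^{-1}\sum_u\hat f_u^2=\mu_f^2+\sigma_f^2$ fix the scale of $(\hat f,\hat g)$ but not its sign. Consider the reparametrization $f\mapsto 2\mu_f\mathbf{1}-f$, $g\mapsto -g$, $i\mapsto i+2\mu_f g$. It leaves every entry $\mu+h_u+i_n+f_ug_n$ unchanged, keeps $h$, $i$, $g$ centered, and preserves both of your moment conditions. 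So your decoded intercept can equally well converge to $i_n^0+2\mu_f g_n^0$, and some rule must select the branch. The paper builds this into its canonical decomposition by imposing $\langle f^c,f\rangle\ge 0$, which is an oracle choice referring to the true $f$.

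A data-driven fix within your framework is to replace the second-moment condition by the sample orthogonality $\sum_n(\hat i_n-\bar{\hat i})(\hat g_n-\bar{\hat g})=0$. This is legitimate because $i_n$ and $g_n$ are independent under Assumption 1. Up to additive constants, the only reparametrization that moves $i$ is $f\mapsto f+a\mathbf{1}$, $i\mapsto i-ag$, and orthogonality forces $a=o_p(1)$. The remaining maps $(f,g)\mapsto(\kappa f,g/\kappa)$ leave $i$ alone, and the known mean $\mu_f>0$ pins them down, sign included. This also removes your use of $\sigma_f^2$, which the theorem does not assume. Some scale information is in fact hidden in the paper's own decoding: its normalization $U^{-1}\|f^c\|_2^2=1$ makes $g_n^c\approx\sigma_f g_n^0$, so the correction there should be $(\mu_f/\sigma_f)\hat g_n^c$ rather than $\mu_f\hat g_n^c$.

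Two smaller points. First, the leave-one-out step is unnecessary for consistency. Every term in the note-$n$ equations involving user-side estimation error can be bounded by Cauchy--Schwarz against the averaged error, whatever its dependence on column $n$. For instance, $|U^{-1}\sum_u\omega_{un}(\hat f_u-f_u)\epsilon_{un}|\le(U^{-1}\|\hat f-f\|_2^2)^{1/2}(U^{-1}\sum_u\epsilon_{un}^2)^{1/2}=o_p(1)$. This is how the paper's pointwise argument proceeds; leave-one-out matters only for entrywise rates or inference.

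Second, in Step 1, restricted strong convexity for a rank-$6$ error under uniform sampling requires that error to be non-spiky. Boundedness of the true latents says nothing about the unconstrained ridge minimizer $\hat S$. The paper avoids this by applying matrix Bernstein to the tangent-space design at the truth, with rows built from the bounded $f^c,g^c$ and orthonormal bases. That controls the linearized error uniformly and leaves only the bilinear remainder $\delta f\,\delta g^\top$ to be handled separately.
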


\subsubsection*{Strategic Conformity ($\rho<1$)} Next, we turn to analyzing the case when $\rho(\cdot) \not\equiv 1$, in which contributors place positive weight on the anticipated platform consensus. In this regime, the following theorem tells us that the estimate of note helpfulness $\hat i_n$ will be biased, and in particular will not converge to the model-implied helpfulness $i_n$. For the next theorem, we assume that the anticipated platform consensus $m_n$ varies across notes and has finite variance; formal regularity conditions are given in Appendix \ref{sec:proofs} Assumption 2. 

\begin{theorem}\label{thm:conformity}
    Suppose $\rho(\cdot) \not\equiv 1$. As $U, N \to\infty$, there exists a random variable $i_n^\infty$ such that
    \[
        \hat i_n \xrightarrow{p} i_n^\infty, 
    \]
    and for at least one $n$, $i_n^\infty \neq i_n^0.$
\end{theorem}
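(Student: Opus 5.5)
The plan is to rewrite the strategic reports as a perturbed latent-factor model, so that Theorem~\ref{thm:truth-recovery} applies to the transformed data. Substituting \eqref{eq:reporting} and writing $\rho_n:=\rho(c_n)$ gives
\[
a_{un}^\star=\rho_n\bigl(\mu+h_u+i_n^0+f_ug_n\bigr)+(1-\rho_n)m_n(c_n)+\rho_n\epsilon_{un}+(1-\rho_n)\epsilon^m_{un}.
\]
Sorting terms by their index dependence, the reports have the form $\mu+h'_{un}+i'_n+f_u g'_n+\eta_{un}$ with
\[
i'_n=(\rho_n-1)\mu+\rho_n i_n^0+(1-\rho_n)m_n,\qquad g'_n=\rho_n g_n,\qquad \eta_{un}=\rho_n\epsilon_{un}+(1-\rho_n)\epsilon^m_{un}.
\]
The user intercept becomes $\rho_n h_u$, which is not purely user-indexed. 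I would therefore split it as
\[
\rho_n h_u=\bar\rho\, h_u+(\rho_n-\bar\rho)h_u,\qquad \bar\rho=\E[\rho_n].
\]
The residual piece $(\rho_n-\bar\rho)h_u$ is of product form $f'_u g''_n$. The reported matrix is then a model with bounded i.i.d.\ intercepts plus a low-rank (at most rank 2) interaction, plus mean-zero sub-Gaussian noise $\eta_{un}$ that is independent across $u$. The noise is heteroskedastic, but its variance is bounded.

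The next step is to argue that the rank-1 ridge MF estimator still converges in probability to a deterministic-given-note limit. My first attempt would reuse the machinery behind Theorem~\ref{thm:truth-recovery}: under MCAR sampling and $U,N\to\infty$ proportionally, the estimated low-rank signal matrix is consistent entrywise. This rests on the spectral and perturbation bounds for noisy matrix completion in \cite{chen2020noisy}, together with interactive fixed-effects arguments. The note intercept is identified through the centering and $\E[f_u]=\mu_f$ normalization.

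I would define $i_n^\infty$ as the population projection of the row/column means of the reported signal onto the note-intercept direction. Concretely, averaging over users with $\E[h_u]=0$ and $\E[f_u]=\mu_f$ known, the note-level mean of $a_{un}$ converges to $\mu+i'_n+\mu_f\rho_n g_n$, up to the regularization bias that vanishes asymptotically. Separating out $\mu_f g'_n$ with the identified factor gives
\[
i_n^\infty=\rho_n i_n^0+(1-\rho_n)m_n+(\rho_n-1)\mu,
\]
modulo the recentering $\sum_n i_n=0$ absorbed into $\hat\mu$. The rank-2 mismatch matters here: the term $(\rho_n-\bar\rho)h_u$ is fit by a rank-1 model, and the part not captured by the factor direction is orthogonal noise because $h_u$ is mean zero and independent of $f_u$. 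I expect this to leave the intercept limit unaffected.

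The nonequality follows directly. Since $\rho\not\equiv1$, there is a positive-probability set of notes with $\rho_n<1$. On that set,
\[
i_n^\infty-i_n^0=(1-\rho_n)\bigl(m_n-i_n^0-\mu\bigr)
\]
after centering. By Assumption~2, $m_n$ varies across notes with nondegenerate variance and is not almost surely equal to $i_n^0+\mu$, so the difference is nonzero for some $n$. The main obstacle is the consistency step in the misspecified setting. Theorem~\ref{thm:truth-recovery} assumes the data are exactly rank-1 with user-constant noise variance, whereas here the intercept interaction makes the fitted model misspecified and the noise depends on $n$. I would first try to treat the misspecification term $(\rho_n-\bar\rho)h_u$ as additional bounded noise that is mean zero given $n$. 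I would then bound its operator norm by $O(\sqrt{U}+\sqrt{N})$ using independence, so that the existing perturbation bounds still give consistency of the note-level averages.
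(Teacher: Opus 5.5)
Your overall route matches the paper's: rewrite $a^\star_{un}$ as a perturbed factor model, pass to a pseudo-true limit, read $i_n^\infty$ off note-level averages, and attribute the discrepancy to $(1-\rho_n)\delta_n$ with $\delta_n=m_n-(\mu+i_n^0)$. The gap is in the step you yourself flag as the main obstacle.

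\textbf{The misspecification term is not noise.} The term $(\rho_n-\bar\rho)h_u$ is the rank-one matrix $h(\rho-\bar\rho\mathbf 1_N)^\top$, and
\[
\bigl\|h(\rho-\bar\rho\mathbf 1_N)^\top\bigr\|_{\op}=\|h\|_2\,\|\rho-\bar\rho\mathbf 1_N\|_2\asymp\sqrt{UN}
\]
whenever $\rho_n$ varies across notes. That is exactly the controversial versus non-controversial case the theorem is about. Its entries share row and column factors, so no independence argument gives $O(\sqrt U+\sqrt N)$. It is a second spike of the same order as $f(\rho\circ g)^\top$. If you feed it into the noise term in the proof of Lemma~\ref{lem:frobenius-norm-bound}, the noise contribution becomes of order $UN$, the same order as $M(\delta)$, and no consistency follows.

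Worse, write $P_U=I_U-\frac1U\mathbf 1_U\mathbf 1_U^\top$ and define $P_N$ analogously. The best rank-one approximation of the double-centered interaction $(P_Uf)(P_N(\rho\circ g))^\top+(P_Uh)(P_N\rho)^\top$ can lock onto the $h$--$\rho$ direction. For example, this happens if $c_n$ is independent of $g_n$ and $\sigma_h^2\var(\rho_n)>\sigma_f^2\sigma_g^2\E[\rho_n^2]$. At a tie it is not even unique. In either case the fitted note factor is not a multiple of $\rho_ng_n$, so ``separating out $\mu_f g'_n$ with the identified factor'' is unjustified, and your formula for $i_n^\infty$ can be wrong.

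The paper's route sidesteps factor identification. In the canonical centered parametrization of Lemma~\ref{lem:canonical-centering}, the full-data least-squares intercept is exactly $\frac1U\sum_u a_{un}-\frac1{UN}\sum_{u,n}a_{un}$, whichever direction the rank-one term captures. The paper computes its limit directly as $i_n+\mu_f\rho_ng_n+(1-\rho_n)\delta_n-(1-\bar\rho)\E[\delta_1]$, without ever de-centering through the fitted factor.

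\textbf{The nonequality does not follow from Assumption~\ref{ass:conformity}.} That assumption makes $m_n$ nondegenerate and independent of $g_n$, but it says nothing about $m_n$ versus $\mu+i_n^0$. Even your stated condition $m_n\not\equiv\mu+i_n^0$ is not enough, because the recentering removes the average of $(1-\rho_n)\delta_n$. What you need is that $(1-\rho_n)\delta_n$ is not a.s.\ constant.

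Here is a counterexample. Take $m_n=\mu+i_n^0+d$ for a constant $d\neq 0$ and $\rho\equiv\rho_0\in(0,1)$. Then $m_n$ is bounded, has variance $\sigma_i^2>0$, and is independent of $g_n$, so Assumption~\ref{ass:conformity} holds. The reports form an exact rank-one model with:
\begin{itemize}
\item intercepts $\mu+(1-\rho_0)d$, $\rho_0h_u$, and $i_n^0$;
\item factors $f_u$ and $\rho_0g_n$.
\end{itemize}
By Theorem~\ref{thm:truth-recovery}, your de-centered estimator then converges to $i_n^0$ for every $n$.

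You need an explicit nondegeneracy hypothesis on $(1-\rho_n)\delta_n$. The paper itself only concludes that $i_n^\infty\neq i_n$ ``generally''.
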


The source of this bias comes from the conformity incentives. When \(\rho(c_n)<1\), the platform no longer observes contributors' latent signals directly. Instead, it observes reports that combine the private signal \(r^\star_{un}\) with the anticipated consensus target \(m_n\). Thus, matrix factorization is applied to conformity-distorted signals rather than to the latent signal matrix itself. 

To see where the distortion enters, write
\[
\delta_n := m_n-(\mu+i_n).
\]
The quantity \(\delta_n\) measures how far the anticipated platform consensus deviates from the note-side latent component of the signal.   Under users’ strategic reporting, the observed report matrix differs from the latent signal matrix by a note-side perturbation proportional to \((1-\rho(c_n))\delta_n\). The platform nevertheless fits the same matrix factorization model. Consequently, the recovered parameters correspond to the best rank-1 approximation of this distorted matrix. In particular, the bias is governed by the projection of the conformity term \((1-\rho(c_n))\delta_n\) onto the note-factor direction \(g_n\). Whenever this projection is nonzero on a nontrivial set of notes, the resulting factorization converges to a parameter \(i_n^\ast\) which is different from the true note helpfulness \(i_n^0\).

% This distortion also appears in the inferred user factors. 
Furthermore, we can characterize the extent to which user factors will shift in this regime due to strategic behavior. Note that the latent factors $f_u$ and $g_n$ are unique up to a global constant scaling factor. Multiplying all note factors by $c$ and user factors by $c^{-1}$ still leads to a valid solution for solving the matrix factorization optimization problem given by \eqref{eq:matrix_factorization_eq}. Thus, we use the following normalization for identification. Define the (estimated) residualized ratings to be
\[
    y_{un} \;:=\; a_{un}^\star - \hat\mu - \hat h_u - \hat i_n
\]
where $\hat \mu, \hat h_u, \hat i_n$ are given by solving the least squares problem \eqref{eq:matrix_factorization_eq}. The latent factor normal equations are given by 
\begin{equation}
\hat f_u \;=\; \frac{\sum_n\omega_{un} y_{un}\,g_n}{\sum_n \omega_{un}g_n^2},
\qquad
\hat g_n \;=\; \frac{\sum_u \omega_{un} y_{un}\,f_u}{\sum_u \omega_{un}f_u^2}.
\label{eq:normal-eq}
\end{equation}
The identifiability conditions on the matrix factorization estimates allow us to determine the sign and scale of the factor estimates.

Our next result states how the estimate of the user factor behaves as a function of $\rho(c_n), m(c_n),$ and $g_n$ in expectation. 

\begin{theorem}
\label{lem:user-normal}
{Let $\rho \not \equiv 1$. Consider the setting of Theorem \ref{thm:conformity}, and suppose that the true note factors $\{g_n\}_n$ are known.
Let $\hat \mu, \hat h_u, \hat i_n, \hat f_u$ denote the solution to \eqref{eq:matrix_factorization_eq}. Then
\[
    \E[\hat f_u \mid f_u, g_n, c_n] = w_1 f_u + c(1-w_1) + o_p(1),
\]
where 
\[
    \qquad w_1 = \frac{\sum_n \rho_n g_n^2}{\sum_n g_n^2}.
\]}
\end{theorem}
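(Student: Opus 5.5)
The plan is to work directly from the user-side normal equation \eqref{eq:normal-eq} with the true note factors $\{g_n\}$ plugged in, so that
\[
\hat f_u = \frac{\sum_n \omega_{un}\, y_{un}\, g_n}{\sum_n \omega_{un} g_n^2},
\]
and then substitute the reporting rule \eqref{eq:reporting}. Writing $\rho_n := \rho(c_n)$ and expanding,
\[
a^\star_{un} = \rho_n(\mu + h_u + i_n + f_u g_n + \epsilon_{un}) + (1-\rho_n)\bigl(m_n + \epsilon^m_{un}\bigr).
\]
The residualized rating $y_{un} = a^\star_{un} - \hat\mu - \hat h_u - \hat i_n$ therefore splits into three pieces. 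The first is the interaction piece $\rho_n f_u g_n$. The second is a note-side piece $(1-\rho_n) m_n + \rho_n(\mu+i_n) - \hat\mu - \hat i_n$. The third is a user-side piece $\rho_n h_u - \hat h_u$ plus the noises $\rho_n\epsilon_{un} + (1-\rho_n)\epsilon^m_{un}$.

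Multiplying by $g_n$ and dividing by $\sum_n \omega_{un} g_n^2$, the interaction piece gives exactly $f_u \sum_n \omega_{un}\rho_n g_n^2 / \sum_n \omega_{un} g_n^2$. Since $\omega_{un}$ is Bernoulli$(p)$ and independent of everything else, a law of large numbers in $N$ replaces the $\omega$-weighted sums with $p$ times the unweighted sums. This gives $w_1 f_u + o_p(1)$.

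The noise terms have conditional mean zero given $(f_u, g_n, c_n)$, so they drop out of the conditional expectation. Their fluctuation is $O_p(N^{-1/2})$ by sub-Gaussianity, so they are $o_p(1)$ in any case.

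The main step is to show that the remaining intercept and consensus terms combine into $c(1-w_1)$, where $c$ is the constant fixed by the identification of the factor scale, which the paper takes via $\E[f_u] = \mu_f$.

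To do this I would use the intercept normal equations of \eqref{eq:matrix_factorization_eq}. The note-intercept equation forces $\hat\mu + \hat i_n$ to absorb the user-averaged part of each column of $a^\star$, namely $\rho_n(\mu + i_n + \bar f g_n) + (1-\rho_n) m_n$ up to $o_p(1)$, with $\bar f \to \mu_f$. Likewise the user-intercept equation makes $\hat h_u$ absorb $\rho_n h_u$ up to an $o_p(1)$ error; the shrinkage from the ridge penalty also vanishes as $N \to \infty$.

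After these cancellations, the leftover note-side term in $y_{un}$ is the column-mean piece $-\rho_n \mu_f g_n$ that the intercepts absorbed, together with what remains of the conformity term. Normalizing so that the fitted factors have population mean $\mu_f$ means the total coefficient on the constant direction must equal $c$. The part not carried by $f_u$ through $\rho_n$ therefore has weight
\[
\frac{\sum_n (1-\rho_n) g_n^2}{\sum_n g_n^2} = 1 - w_1 .
\]
This yields the affine form $w_1 f_u + c(1-w_1)$.

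The main obstacle is this bookkeeping: showing that the estimated intercepts $\hat\mu, \hat h_u, \hat i_n$ converge to the appropriate column and row averages of the distorted report matrix, uniformly enough that their errors stay $o_p(1)$ after weighting by $g_n$. I would first try to control it by treating $g$ as known, so the problem is a linear least-squares problem with intercepts and a single regressor per user. I would then apply the Frisch–Waugh–Lovell theorem to partial out the intercepts explicitly, using $\E[g_n] = 0$ and the independence of $c_n$ from the user side to kill the cross terms. The consistency arguments behind Theorem \ref{thm:truth-recovery} and Theorem \ref{thm:conformity} would supply the needed convergence of the intercept estimates.
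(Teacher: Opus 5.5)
Your proposal is correct and is essentially the paper's argument: with $g_n$ known, the user normal equation plus the intercept first-order conditions (so that $\hat\mu+\hat i_n$ absorbs the column mean, including $\rho_n\bar f g_n$) give $\E[y_{un}\mid f_u,g_n,c_n]=\rho_n g_n(f_u-c)+o_p(1)$, hence the centered estimate $w_1(f_u-c)$, and the de-centering shift $+c$ (your ``population mean $\mu_f$'' normalization) then yields $w_1f_u+c(1-w_1)$. One correction to your narrative: the conformity term $(1-\rho_n)m_n$ is purely note-side and is absorbed entirely by the note intercept, so apart from mean-zero noise nothing of it survives in $y_{un}$. The constant therefore comes only from $-w_1c$ plus the $+c$ shift, and that shift fixes the location of $f$ (the direction $f_u\mapsto f_u+a$, $i_n\mapsto i_n-ag_n$), not its scale.
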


Theorem \ref{lem:user-normal} tells us that $\mathbb{E}[\hat f_u\mid f_u]$ is an affine transformation of the user's truth $f_u$. {As the probability of encountering controversial notes increases (more notes with $c_n=1$),  $w_1$ \emph{decreases}.} Thus, the estimated user factor places less weight on the individual’s true position and more weight on the aggregate conformity distortion.  In particular, the bias is driven by the dependence of ratings on anticipated consensus, rather than by estimation error in the note factors. In practice, the ${g_n}_n$ are themselves estimated, so additional distortion may arise from estimation error, potentially amplifying the effect described above.

Theorem \ref{lem:user-normal} also implies the following proposition, telling us which members of the minority are more susceptible to measurement errors in their latent factor estimations. Let the true minority be $\{u:f_u<0\}$ with share $\pi^-_{\text{true}}:=\Pr(f_u<0)\in(0,\tfrac12)$, since we assume that $f_u$ has positive expected value. Because the latent-factor sign is globally arbitrary, the theoretical section uses a normalization opposite to the empirical sections. Empirically, we flip signs so that the majority is negative and the minority positive; all sign-based claims are invariant under the global transformation $(f, g) \to (-f, -g)$.

\begin{proposition}
\label{prop:flip}
Consider the setting of Theorem \ref{lem:user-normal}. Then, 
\[
    \E[\hat f_u^* \mid f_u, g_n, c_n] > 0 \qquad \text{if and only if} \qquad f_u > -\frac{c(1-w_1)}{w_1}.
\]
In particular, the minority slice $(-\frac{c(1-w_1)}{w_1}, 0)$ is mapped to positive estimates.
\end{proposition}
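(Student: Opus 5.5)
The plan is to read Proposition~\ref{prop:flip} off the affine representation in Theorem~\ref{lem:user-normal}. We interpret $\hat f_u^*$ as the user-factor estimate from that theorem, with the $o_p(1)$ remainder absorbed, i.e.\ we work with the limiting conditional mean. The first step is to write
\[
    \E[\hat f_u^* \mid f_u, g_n, c_n] = w_1 f_u + c(1-w_1),
\]
where $w_1 = \sum_n \rho_n g_n^2 / \sum_n g_n^2$.

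The second step is to pin down the signs of the constants.

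\begin{itemize}
    \item \emph{Range of $w_1$.} Since $\rho(\cdot)\in[0,1]$, $w_1$ is a $g_n^2$-weighted average of the $\rho_n$, so $w_1\in[0,1]$. Because $\rho\not\equiv 1$ on a nontrivial set of notes with $g_n\neq 0$ (the $g_n$ have positive variance), $w_1<1$ with probability tending to one.
    \item \emph{Strict positivity $w_1>0$.} The division requires this, and it needs $\rho_n>0$ on a set of notes with positive $g_n^2$-mass. I would add this as part of the regularity conditions (Assumption~2), or note that if $w_1=0$ the expectation is the constant $c$ and the ``iff'' degenerates.
    \item \emph{Sign of $c$.} $c$ is the positive constant from the theorem; under the paper's normalization it is tied to $\mu_f>0$ and the conformity target, so $c(1-w_1)\ge 0$.
\end{itemize}

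The third step is elementary algebra. Since $w_1>0$,
\[
    w_1 f_u + c(1-w_1) > 0 \iff f_u > -\frac{c(1-w_1)}{w_1},
\]
which is the claimed equivalence. For the ``in particular'' statement, note that $-c(1-w_1)/w_1<0$ whenever $c>0$ and $w_1<1$, so the interval $(-c(1-w_1)/w_1,0)$ is nonempty. Any true minority member ($f_u<0$) with $f_u$ in this slice has a positive expected estimate, so their sign is flipped.

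The main obstacle is not the algebra but making the constants rigorous. This means justifying $w_1\in(0,1)$ and $c>0$ from the assumptions, and handling the $o_p(1)$ term: the sign statement is exact for the limit, and for finite samples it holds with probability tending to one for $f_u$ bounded away from the threshold. I would state it in that form.
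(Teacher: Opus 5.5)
Your proposal is correct and takes the same route as the paper. The paper writes $\hat f_u^0=\hat f_u+c$, i.e.\ it de-centers the canonically centered estimate, where $c=\mathbb{E}[f_u]=\mu_f>0$ is simply the mean user factor and is not tied to the conformity target. This turns $w_1 f_u - w_1 c$ into $w_1 f_u + c(1-w_1)+o_p(1)$, and the paper then calls the equivalence immediate. Your explicit treatment of $w_1>0$, $w_1<1$, and the $o_p(1)$ remainder supplies details the paper leaves implicit.
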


\begin{proposition}
\label{prop:shrink}
Consider the setting of Theorem \ref{lem:user-normal}. Let $F$ be the CDF of $f_u$. Assume that $F$ is continuous with no atom at $0$. Then, the estimated minority share
\[
    \pi^-_{\text{est}} := \Pr(\hat f_u<0 \mid f_u, g_n, c_n)
\]
satisfies
\[
    \pi^-_{\text{est}} = F\left(-\frac{c(1-w_1)}{w_1}\right) < F(0) +o(1) =\pi^-_{\text{true}} +o(1),
\]
and $\pi^-_{\text{est}}$ is (weakly) decreasing in $w_2$ and (weakly) increasing in $w_1$.
\end{proposition}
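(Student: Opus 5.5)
The plan is to read Proposition~\ref{prop:shrink} off the affine representation in Theorem~\ref{lem:user-normal}, in the same way as Proposition~\ref{prop:flip}. By Theorem~\ref{lem:user-normal}, conditionally on $(f_u,g_n,c_n)$ we have
\[
\hat f_u = w_1 f_u + c(1-w_1) + o_p(1),
\]
where I take the estimate to concentrate around its conditional mean. The natural route is through the normal equation \eqref{eq:normal-eq}: it writes $\hat f_u$ as a ratio of averages over the $\asymp N$ notes user $u$ rates, and the sub-Gaussian noise contributes only $O_p(N^{-1/2})$. If instead one reads the statement purely at the level of $\E[\hat f_u\mid\cdot]$, as Proposition~\ref{prop:flip} does, this step is not needed.

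Given this representation, I would assume $w_1>0$, which holds whenever $\rho_n>0$ on a set of notes with positive $g_n^2$-mass. The event $\{\hat f_u<0\}$ then agrees, up to an $o_p(1)$ perturbation, with
\[
\{f_u < -c(1-w_1)/w_1\}.
\]
This threshold is exactly the one in Proposition~\ref{prop:flip}. Taking probabilities over $f_u\sim F$ gives
\[
\pi^-_{\text{est}} = F\bigl(-c(1-w_1)/w_1\bigr) + o(1).
\]
To absorb the $o_p(1)$ term I would use continuity of $F$: for any $\eta>0$,
\[
\Pr(\hat f_u<0) \le F(t+\eta)+\Pr(|o_p(1)|>w_1\eta),
\]
and symmetrically from below. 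Letting $\eta\to0$ gives the limit.

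The strict inequality comes next. Since $\rho\not\equiv1$, we have $w_1<1$, and $c>0$ (identified with the positive mean $\mu_f$ of the user factors), so
\[
t:=-c(1-w_1)/w_1<0.
\]
Then $F(t)\le F(0)=\pi^-_{\text{true}}$. For strictness I need $F$ to put positive mass on $(t,0)$. I would either add this as a mild condition, for example a positive density near $0$, or obtain it from nondegeneracy of the minority: $F(0)\in(0,1/2)$ and $F$ has no atom at $0$. This is the point where I expect care is needed, since continuity alone only yields a weak inequality. The paper's ``$+o(1)$'' phrasing suggests they accept this.

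For monotonicity, differentiate $t(w_1)=-c(1/w_1-1)$. Its derivative is $c/w_1^2>0$, so $t$ is increasing in $w_1$, and hence $F(t)$ is weakly increasing in $w_1$. The $w_2$ dependence requires interpreting $w_2$ as the conformity weight $1-w_1$ (or as the coefficient $c(1-w_1)$ on the consensus shift) from the fuller version of Theorem~\ref{lem:user-normal}. Increasing $w_2$ pushes the threshold further negative, so $\pi^-_{\text{est}}$ is weakly decreasing in $w_2$. The main obstacle is thus twofold: justifying concentration of $\hat f_u$ rather than only its mean, and pinning down strictness and the role of $w_2$.
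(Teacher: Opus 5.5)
This is essentially the paper's argument. You substitute the affine form $w_1 f_u + c(1-w_1)$ from Theorem~\ref{lem:user-normal}, read off $\pi^-_{\text{est}} = F\bigl(-c(1-w_1)/w_1\bigr) + o(1)$ with a nonpositive threshold, and differentiate to get $c/w_1^2>0$. Your continuity argument for absorbing the $o_p(1)$ term, and your remark that strictness needs $F$ to put mass on $(t,0)$, are more careful than the paper, which simply replaces $\hat f_u$ by its conditional mean and attributes strictness to ``no atom at $0$''.

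Three corrections. First, your fallback of getting strictness from $F(0)\in(0,\tfrac12)$ plus no atom at $0$ fails, because a continuous $F$ can be flat on $(t,0)$; the density condition has to be assumed. Second, $\rho\not\equiv 1$ alone does not force $w_1<1$, since only notes with $g_n\neq 0$ enter $w_1$; the condition $0<w_1<1$ must also be assumed, as the paper does. Third, the paper's appendix sets $w_2=cw_1$ and proves only monotonicity in $w_1$. Read as a function of $w_2$, its threshold $(w_2-c)/w_1$ is increasing. Your reading of $w_2$ as $1-w_1$ or $c(1-w_1)$ therefore repairs the main-text claim rather than following the paper's definition.
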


% \begin{remark}
%     \kh{Our model reveals that user-factor distortion is larger when more weight is placed on controversial notes. To see this mathematically, note that if $\rho(c)$ is non-increasing in $c$, then $w_1$ decreases and $1-w_1$ increases. Then, by Propositions~\ref{prop:flip}-\ref{prop:shrink}, the flip moves left and $\pi^-_{\text{est}}$ decreases as $\bar c$ rises.}
% \end{remark}

\begin{remark}
    One can carry a similar computation for the note factors $\E[\hat g_n\mid f_u, g_n, c_n]$ to get
    \[
        \mathbb{E}[\hat g_n \mid f_u,g_n,c_n] \approx g_n\rho_n \left(1 - c\frac{\sum_u f_u}{\sum_u f_u^2}\right).
    \]
    If user opinions are balanced ($\mathbb{E}(f_u)=0$), then the expected estimate $\mathbb{E}[\hat g_n \mid f_u, g_n, c_n]$ is proportional to $g_n$ with a coefficient that depends on $\rho_n(c_n)$. In particular, higher value of $c_n$ (controversial notes) reduce this coefficient, shrinking the note factor toward zero.
\end{remark}

\subsubsection*{Statistical guarantee for the two-stage estimator}
We now turn to the alternative auditing rule studied in the empirical section. Recall  \emph{two-stage} algorithm: the platform first fits the standard unweighted regularized matrix factorization model, and then uses the resulting residuals to estimate contributor-specific noise levels. It then refits the same model using inverse residual variance weights. This is the matrix-factorization analogue of feasible generalized least squares and is closely related to weighted low-rank approximation \cite{Aitken1936least, srebro2003weighted, udell2016glrm}.

In the first stage, the platform fits the unweighted regularized model in \eqref{eq:matrix_factorization_ridge} and obtains estimates \((\hat\mu,\hat h,\hat i,\hat f,\hat g)\). For each contributor \(u\), let
 \(S_u := \{n : (u,n)\in\Omega\}\) and \(N_u := |S_u|\), and define the first-stage residual variance estimate
\begin{equation}
\label{eq:first-stage-variance}
\hat\sigma_u^2
:=
\frac{1}{N_u}
\sum_{n\in S_u}
\bigl(a_{un}-\hat\mu-\hat h_u-\hat i_n-\hat f_u\hat g_n\bigr)^2.
\end{equation}

In the second stage, the platform sets
\[
\hat w_u := \frac{1}{\hat\sigma_u^2}
\]
and refits the same regularized rank-1 model using contributor-specific weights. More generally, for any bounded positive weights \(w=\{w_u\}\), define the weighted regularized matrix factorization problem
\begin{equation}\label{eq:weighted-matrix-factorization}
    \arg \min_{\tilde \mu, \tilde h_u, \tilde i_n, \tilde f_u, \tilde g_n} \sum_{\substack{(u,n)\\ \text{observed}}} w_u\left(r_{un}-\tilde{\mu}-\tilde{h}_u -\tilde{i}_n-\tilde f_u \cdot \tilde g_n \right)^2.
\end{equation}
We write \(\tilde i_n^{\mathrm{ts}}\) for the note-helpfulness estimate produced by \eqref{eq:weighted-matrix-factorization} with weights \(\hat w_u = 1/\hat\sigma_u^2\).

Our next theorem shows that, among estimators obtained in this way, the estimator with weights $w_u = 1/\hat\sigma_u^2$ is consistent and has the lowest asymptotic variance.

\begin{theorem}
Assume that $\rho \equiv 1$ and that $\mu_f$ is known. Then the solution to \eqref{eq:weighted-matrix-factorization} with weights $1/\hat\sigma_u^2$, $\tilde i_n^\ts$, recovers consistent estimates of $i_n$, i.e., as $U, N\to\infty$
    \[
        \tilde i_n^\ts \xrightarrow{p} i_n^0.
    \]
    Moreover, among all other solutions of \eqref{eq:weighted-matrix-factorization} $\tilde i_n$ with positive, finite weights $w_u \in (0, \infty)$,
    \[
        \avar(\tilde i_n^\ts) \leq \avar(\tilde i_n).
    \]
    Here, for a scalar estimator $X_m$, 
    \[
        \avar(X_m) = \lim_{m\to\infty} m\cdot \var(X_m).
    \]
\end{theorem}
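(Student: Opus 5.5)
The proof has two parts. First we show that the weighted problem \eqref{eq:weighted-matrix-factorization} is consistent for any fixed bounded positive weights, and that the estimated weights $\hat w_u=1/\hat\sigma_u^2$ behave like fixed weights. Second, we compute the asymptotic variance of $\tilde i_n$ as a function of $w$ and minimize it, which is the Aitken/Gauss--Markov argument.

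\textbf{Step 1: consistency with oracle weights.} With $\rho\equiv 1$ the reports are $a_{un}=s_{un}+\epsilon_{un}$. For fixed weights $w_u\in[\underline w,\overline w]\subset(0,\infty)$, rescaling row $u$ by $\sqrt{w_u}$ turns \eqref{eq:weighted-matrix-factorization} into an unweighted rank-1 problem. The noise $\sqrt{w_u}\epsilon_{un}$ is still independent, mean zero and sub-Gaussian, and the signal keeps the additive-plus-rank-1 structure. We then rerun the argument of Theorem~\ref{thm:truth-recovery} on this weighted objective. That argument combines an interactive fixed-effects consistency proof in the style of Bai (2009) with a MCAR spectral bound $\|P_\Omega(E)\|_{\op}=O_p(\sqrt{(U+N)p})$. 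It should go through verbatim because it only uses boundedness of the weights. The output is $\sup_n|\tilde i_n(w)-i_n^0|\to_p 0$, with the known $\mu_f$ fixing scale and sign. We need the uniform-in-$n$ version, not just pointwise, for Step 2.

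\textbf{Step 2: feasible weights.} By Theorem~\ref{thm:truth-recovery} the first-stage fit is consistent. Decompose the first-stage residual as $\epsilon_{un}+\Delta_{un}$, where $\Delta_{un}$ is the fitting error. Then
\[
\hat\sigma_u^2=\frac{1}{N_u}\sum_{n\in S_u}\epsilon_{un}^2+O_p\Bigl(\max_{n}|\Delta_{un}|\Bigr)+\text{cross term}.
\]
Since $N_u\sim pN\to\infty$, a sub-Gaussian law of large numbers together with a union bound over $u$ gives $\max_u|\hat\sigma_u^2-\sigma_u^2|\to_p0$. Because $\sigma_u^2$ is bounded away from $0$ and $\infty$, $\hat w_u$ lies in a fixed compact interval with probability tending to one. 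The weighted estimator is a continuous function of $w$ in sup-norm, uniformly on that compact set; we get this from the implicit function theorem on the normal equations, using local strong convexity near the truth. Hence $\tilde i_n^{\ts}-\tilde i_n(\sigma^{-2})\to_p0$, which gives consistency.

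\textbf{Step 3: asymptotic variance.} Linearize the normal equation for $i_n$ around the truth:
\[
\tilde i_n-i_n^0=\frac{\sum_u\omega_{un}w_u\epsilon_{un}}{\sum_u\omega_{un}w_u}+R_n.
\]
The remainder $R_n$ collects the estimation error in $\mu$, $h_u$, $f_u$ and $g_n$. Each of these is an average over $N$ notes or $U\cdot N$ entries, so it is of smaller order than the $U^{-1/2}$ leading term; the note-factor error is handled through the analogous equation for $g_n$. With $m=U$, conditioning on $\Omega$ and on the latent variables, the leading term has variance
\[
\frac{\sum_u\omega_{un}w_u^2\sigma_u^2}{\bigl(\sum_u\omega_{un}w_u\bigr)^2},
\]
so
\[
\avar(\tilde i_n)=\frac{1}{p}\,\frac{\E[w^2\sigma^2]}{(\E[w])^2}.
\]
Cauchy--Schwarz gives
\[
(\E w)^2=\bigl(\E[w\sigma\cdot\sigma^{-1}]\bigr)^2\le\E[w^2\sigma^2]\,\E[\sigma^{-2}],
\]
with equality iff $w\propto\sigma^{-2}$. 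The minimum value $1/(p\,\E[\sigma^{-2}])$ is therefore attained by oracle inverse-variance weights. Step 2 transfers this to $\hat w$, provided the error from estimating the weights is $o_p(U^{-1/2})$ in $\tilde i_n$. This holds because the score $\sum_u\omega_{un}w_u\epsilon_{un}$ is orthogonal to perturbations of $w$: its derivative in $w$ is mean zero.

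\textbf{Main obstacle.} The delicate step is controlling the linearization remainder $R_n$, in particular the coupling between $\hat g_n$ and $\hat i_n$ through the rank-1 term, and showing it is $o_p(U^{-1/2})$. The regularization must also vanish fast enough, $\lambda=o(\sqrt U)$, which we will have to assume. If a direct bound on $R_n$ fails, the fallback is to cite the asymptotic-normality results for interactive fixed effects with missing data (Moon--Weidner, Chen et al.) applied to the row-rescaled problem. Those results give the leading term above and leave only the Cauchy--Schwarz step to be done here.
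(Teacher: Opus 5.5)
Your architecture is the paper's. You reduce to the variance $\sum_u w_u^2\sigma_u^2/(\sum_u w_u)^2$ of a weighted noise average, show it is minimized at $w_u\propto\sigma_u^{-2}$ (your Cauchy--Schwarz step is equivalent to the paper's Lagrange step), and pass to $\hat w_u=1/\hat\sigma_u^2$ through uniform consistency of $\hat\sigma_u^2$. The gap is in Step~3, and it is not merely delicate: the claim $R_n=o_p(U^{-1/2})$ is false for the estimator you analyze. The note-factor error $\tilde g_n-g_n$ is fitted from column $n$ alone. It is therefore an average over the roughly $pU$ raters of note $n$, not over $N$ notes, and is $O_p(U^{-1/2})$. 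Through $\tilde f_u(g_n-\tilde g_n)$ it places the term
\[
-(\tilde g_n-g_n)\,\frac{\sum_u\omega_{un}w_u\tilde f_u}{\sum_u\omega_{un}w_u}
\]
into $R_n$. Because Step~1 has $\tilde i_n$ converging to $i_n^0$, $\tilde i_n$ must be paired with uncentered user factors, whose weighted mean tends to $\E[wf]/\E[w]$. This limit equals $\mu_f>0$ when $\sigma_u$ is unrelated to $f_u$. So the term is of exactly leading order, and it is correlated with $\sum_u\omega_{un}w_u\epsilon_{un}$. A simple case shows the discrepancy: with $p=1$, $\sigma_u\equiv\sigma$, $w\equiv1$ and $f$ known, the intercept of the column-$n$ fit has variance asymptotic to $\frac{\sigma^2}{U}(1+\mu_f^2/\sigma_f^2)$, not $\sigma^2/U$. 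Hence your avar formula is not the avar of an estimator of $i_n^0$, and scalar Cauchy--Schwarz does not close the argument. The Moon--Weidner/Chen et al.\ fallback would not return your leading term either.

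What is missing is a joint treatment of $(\tilde i_n,\tilde g_n)$. To first order they are the weighted least-squares fit of column $n$ on $X=[\mathbf 1_U,\ f]$ with weights $W_n=\mathrm{diag}(\omega_{un}w_u)$; the errors in $\tilde\mu,\tilde h_u,\tilde f_u$ do average out over $u$. Aitken's theorem then gives
\[
(X^\top W_nX)^{-1}X^\top W_n\Sigma W_nX(X^\top W_nX)^{-1}\succeq(X^\top\Omega_n\Sigma^{-1}X)^{-1},
\qquad \Sigma=\mathrm{diag}(\sigma_u^2),\ \Omega_n=\mathrm{diag}(\omega_{un}),
\]
with equality at $w_u\propto\sigma_u^{-2}$, and the $(1,1)$ entry is the claim for $i_n$. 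The paper instead analyzes the weighted column mean of the fitted matrix, $N_w^{-1}M_N\widehat S^\top W\mathbf 1_U$. This equals $\tilde i_n+\bar{\tilde f}_w\tilde g_n$ up to an $n$-independent constant, so the $\tilde g_n$ error cancels and the leading term really is the weighted noise average. However, its target is the weighted-centered intercept $i_n+\bar f_w g_n$. De-centering by $\mu_f\hat g_n$ brings the coupling back, so the joint argument is what $i_n^0$ itself requires.

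Two smaller points. First, the argument behind Theorem~\ref{thm:truth-recovery} gives only pointwise limits and Frobenius-average rates. It therefore supplies neither your $\sup_n$ claim in Step~1 nor the $\max_u$ control of $N_u^{-1}\sum_{n\in S_u}\Delta_{un}^2$ that Step~2 needs. The paper gets both from the entrywise nuclear-norm bound \eqref{eq:rate-bound} (see Lemma~\ref{lem:sigma-consistency}). Second, row rescaling does not preserve the additive-plus-rank-1 form, because $\sqrt{w_u}\,i_n$ becomes a rank-one term with loading $W^{1/2}\mathbf 1_U$. This is why the paper treats $W^{1/2}S$ as a generic incoherent matrix of rank at most $3$.
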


This theorem gives a statistical interpretation of contributor impact under the redesigned rule. In the two-stage estimator, contributors are weighted by the inverse of their residual variance, so contributors whose evaluations are more stable relative to the fitted latent structure receive greater influence in the second stage. In this sense, the redesign audits contributors by \emph{residual stability} rather than by agreement with the platform's eventual consensus.

This is the key contrast with consensus-based auditing. Under the current rule implemented in Community Notes, contributor influence is tied to whether ratings align with the platform's final aggregate outcome. Under the two-stage rule, influence is instead tied to the statistical precision of a contributor's evaluations within the latent-factor model. The theorem shows that, under private-signal reporting, this weighting rule yields a consistent estimator and attains the lowest asymptotic variance among weighted matrix factorization estimators in this class.

Whether the same rule also changes contributors' strategic incentives is a separate behavioral question.
The result here suggests an alternative notion of contributor \emph{rating impact} based on \emph{residual stability} rather than agreement with the platform's eventual consensus.

\section*{Conclusion}
Crowdsourced moderation systems are often motivated by the idea that diverse, independent evaluations can be aggregated into reliable judgments. Our findings show that this promise depends not only on how ratings are aggregated, but also on how contributors are audited. In X's Community Notes, auditing contributors by whether they agree with the platform's eventual consensus creates incentives to anticipate that consensus rather than to provide independent evaluations. Empirically, this is associated with strategic conformity by minority contributors, reduced engagement on controversial content, and lower predictive performance of the platform's latent-factor model.

Our theoretical analysis clarifies why this occurs. Even if one grants the platform's latent-signal model, consensus-based auditing alters the object being measured: once contributors partially conform to the anticipated platform outcome, matrix factorization no longer aggregates independent evaluations alone. Instead, it recovers a conformity-distorted projection of those evaluations. These observations also suggest a different design principle. Rather than rewarding contributors for matching the eventual majority outcome, platforms can evaluate them using targets that do not mechanically favor conformity. Motivated by this idea, we study a two-stage procedure that weights contributors by the stability of their residual behavior rather than by agreement with the final consensus. In the Community Notes data, this approach improves out-of-sample predictive performance while allowing informative disagreement to retain influence.

More broadly, our results suggest that crowdsourced moderation should be designed to preserve independence, especially on controversial content where finding misinformation is most valuable. Systems that reward agreement with the final aggregate may appear to improve reliability, but can instead suppress the disagreement needed for accurate aggregation. In environments without externally verifiable ground truth, the design of the auditing rule is therefore not a peripheral implementation detail; it is part of the core of the system.

\section*{Acknowledgments}
KH was partially supported by the National Science Foundation under grant DGE 2146752.

\newpage
% Bibliography
\bibliographystyle{plain} % Replace "plain" with your desired style
\bibliography{ref}

@misc{CommunityNotesDownloadData,
  author       = {{X (formerly Twitter) Community Notes Guide}},
  title        = {Downloading Data},
  howpublished = {\url{https://communitynotes.x.com/guide/en/under-the-hood/download-data}},
  note         = {Accessed: 2025-08-30},
  year = {n.d.}
}

@misc{CommunityNotesRankingNotes,
  author       = {{X Community Notes Guide}},
  title        = {Ranking Notes},
  howpublished = {\url{https://communitynotes.x.com/guide/en/under-the-hood/ranking-notes}},
  note         = {Accessed: 2025-08-30},
  year = {n.d.}
}

@misc{CommunityNotesWritingAbility,
  author       = {{Community Notes Guide – X}},
  title        = {Locking and unlocking the ability to write notes},
  howpublished = {\url{https://communitynotes.x.com/guide/en/contributing/writing-ability}},
  note         = {Accessed: 2025-08-30},
  year = {n.d.}
}

@misc{CommunityNotesWritingAndRatingImpact,
  author       = {{Community Notes Guide – X}},
  title        = {Rating and Writing Impact},
  howpublished = {\url{https://communitynotes.x.com/guide/en/contributing/writing-and-rating-impact}},
  note         = {Accessed: 2025-08-30},
  year = {n.d.}
}

@article{acemoglu2022fast,
  title={Learning from reviews: The selection effect and the speed of learning},
  author={Acemoglu, Daron and Makhdoumi, Ali and Malekian, Azarakhsh and Ozdaglar, Asuman},
  journal={Econometrica},
  volume={90},
  number={6},
  pages={2857--2899},
  year={2022},
  publisher={Wiley Online Library}
}

@article{acemoglu2024model,
  title={A model of online misinformation},
  author={Acemoglu, Daron and Ozdaglar, Asuman and Siderius, James},
  journal={Review of Economic Studies},
  volume={91},
  number={6},
  pages={3117--3150},
  year={2024},
  publisher={Oxford University Press UK}
}

@article{renault2025republicans,
  title={Republicans are flagged more often than Democrats for sharing misinformation on X’s Community Notes},
  author={Renault, Thomas and Mosleh, Mohsen and Rand, David G},
  journal={Proceedings of the National Academy of Sciences},
  volume={122},
  number={25},
  pages={e2502053122},
  year={2025},
  publisher={National Academy of Sciences}
}

@article{miller2005eliciting,
  title={Eliciting informative feedback: The peer-prediction method},
  author={Miller, Nolan and Resnick, Paul and Zeckhauser, Richard},
  journal={Management Science},
  volume={51},
  number={9},
  pages={1359--1373},
  year={2005}
}

@inproceedings{witkowski2012peer,
  title={Peer prediction without a common prior},
  author={Witkowski, Jens and Parkes, David C},
  booktitle={Proceedings of the 13th ACM Conference on Electronic Commerce},
  pages={964--981},
  year={2012}
}

@inproceedings{shnayder2016informed,
  title={Informed truthfulness in multi-task peer prediction},
  author={Shnayder, Victor and Agarwal, Arpit and Frongillo, Rafael and Parkes, David C},
  booktitle={Proceedings of the 2016 ACM Conference on Economics and Computation},
  pages={179--196},
  year={2016}
}

@article{raykar2010learning,
  title={Learning from crowds.},
  author={Raykar, Vikas C and Yu, Shipeng and Zhao, Linda H and Valadez, Gerardo Hermosillo and Florin, Charles and Bogoni, Luca and Moy, Linda},
  journal={Journal of machine learning research},
  volume={11},
  number={4},
  year={2010}
}

@article{karger2011iterative,
  title={Iterative learning for reliable crowdsourcing systems},
  author={Karger, David and Oh, Sewoong and Shah, Devavrat},
  journal={Advances in neural information processing systems},
  volume={24},
  year={2011}
}

@article{dawid1979maximum,
  title={Maximum likelihood estimation of observer error-rates using the EM algorithm},
  author={Dawid, Alexander Philip and Skene, Allan M},
  journal={Journal of the Royal Statistical Society: Series C (Applied Statistics)},
  volume={28},
  number={1},
  pages={20--28},
  year={1979},
  publisher={Wiley Online Library}
}

@article{eyster2010naive,
  title={Naive herding in rich-information settings},
  author={Eyster, Erik and Rabin, Matthew},
  journal={American economic journal: microeconomics},
  volume={2},
  number={4},
  pages={221--243},
  year={2010},
  publisher={American Economic Association}
}

@article{prelec2004bayesian,
  title={A Bayesian truth serum for subjective data},
  author={Prelec, Drazen},
  journal={Science},
  volume={306},
  number={5695},
  pages={462--466},
  year={2004}
}

@article{van2010manipulation,
  title={Manipulation robustness of collaborative filtering},
  author={Van Roy, Benjamin and Yan, Xiang},
  journal={Management Science},
  volume={56},
  number={11},
  pages={1911--1929},
  year={2010}
}

@article{muchnik2013social,
  title={Social influence bias: A randomized experiment},
  author={Muchnik, Lev and Aral, Sinan and Taylor, Sean J},
  journal={Science},
  volume={341},
  number={6146},
  pages={647--651},
  year={2013}
}

@article{noelle1974spiral,
  title={The spiral of silence a theory of public opinion},
  author={Noelle-Neumann, Elisabeth},
  journal={Journal of communication},
  volume={24},
  number={2},
  pages={43--51},
  year={1974},
  publisher={Oxford University Press}
}

@article{zhao2025spiral,
  title   = {Mapping the Spiral of Silence: Surveying Unspoken Opinions in Online Communities},
  author  = {Zhao, Dora and Yang, Diyi and Bernstein, Michael S.},
  journal = {arXiv preprint arXiv:2502.00952},
  year    = {2025},
  doi     = {10.48550/arXiv.2502.00952}
}

@article{allen2024quantifying,
  title={Quantifying the impact of misinformation and vaccine-skeptical content on Facebook},
  author={Allen, Jennifer and Watts, Duncan J and Rand, David G},
  journal={Science},
  volume={384},
  number={6699},
  pages={eadk3451},
  year={2024},
  publisher={American Association for the Advancement of Science}
}

@techreport{horvitz2012incentives,
  title={Incentives and truthful reporting in consensus-centric crowdsourcing},
  author={Horvitz, Eric},
  institution={Microsoft Research},
  year={2012},
  url={https://www.microsoft.com/en-us/research/wp-content/uploads/2012/02/Incentives-MSR-TR.pdf}
}

@article{brashier2021timing,
  title={Timing matters when correcting fake news},
  author={Brashier, Nadia M and Pennycook, Gordon and Berinsky, Adam J and Rand, David G},
  journal={Proceedings of the National Academy of Sciences},
  volume={118},
  number={5},
  pages={e2020043118},
  year={2021},
  publisher={National Academy of Sciences}
}

@article{jackson2021finding,
  title={Finding the wise and the wisdom in a crowd: Estimating underlying qualities of reviewers and items},
  author={Carayol, Nicolas and Jackson, Matthew O},
  journal={The Economic Journal},
  volume={134},
  number={663},
  pages={2712--2745},
  year={2024},
  publisher={Oxford University Press}
}

@online{Perez2022Birdwatch,
  author    = {Perez, Sarah},
  title     = {Twitter expands its crowdsourced fact-checking program {Birdwatch} ahead of {US} midterms},
  year      = {2022},
  month     = {September},
  day       = {7},
  url       = {https://techcrunch.com/2022/09/07/twitter-expands-its-crowdsourced-fact-checking-program-birdwatch-ahead-of-u-s-midterms/},
  journal   = {TechCrunch},
  note      = {Accessed: 2025-09-16}
}

@online{Perez2022BirdwatchVisible,
  author    = {Perez, Sarah},
  title     = {Twitter is making its crowdsourced fact-checks visible to all {U.S.} users with {Birdwatch} expansion},
  year      = {2022},
  month     = {October},
  day       = {6},
  url       = {https://techcrunch.com/2022/10/06/twitter-is-making-its-crowdsourced-fact-checks-visible-to-all-u-s-users-with-birdwatch-expansion/},
  journal   = {TechCrunch},
  note      = {Accessed: 2025-09-16}
}

@article{
vicario2016spreading,
author = {Michela Del Vicario  and Alessandro Bessi  and Fabiana Zollo  and Fabio Petroni  and Antonio Scala  and Guido Caldarelli  and H. Eugene Stanley  and Walter Quattrociocchi },
title = {The spreading of misinformation online},
journal = {Proceedings of the National Academy of Sciences},
volume = {113},
number = {3},
pages = {554-559},
year = {2016},
doi = {10.1073/pnas.1517441113},
URL = {https://www.pnas.org/doi/abs/10.1073/pnas.1517441113},
eprint = {https://www.pnas.org/doi/pdf/10.1073/pnas.1517441113}}

@misc{xcommunitynotes,
  author       = {{X Corp.}},
  title        = {About Community Notes on X},
  howpublished = {\url{https://help.x.com/en/using-x/community-notes}},
  note         = {Accessed: 2025-11-17},
  year         = {2025}
}

@misc{meta2025communitynotes,
  author       = {{Meta Platforms, Inc.}},
  title        = {Introducing Community Notes -- Adding Context to Posts},
  howpublished = {\url{https://www.meta.com/technologies/community-notes}},
  year         = {2025}
}

@article{perez2024bluesky,
  author       = {Sarah Perez},
  title        = {Bluesky adds ‘anti-toxicity’ tools and aims to integrate ‘a Community Notes-like’ feature in the future},
  journal      = {TechCrunch},
  date         = {2024-08-28},
  url          = {https://techcrunch.com/2024/08/28/bluesky-adds-anti-toxicity-tools-and-aims-to-integrate-a-community-notes-like-feature-in-the-future/#:~:text=On%20X%2C%20Community%20Notes%20serve%20as%20a,algorithm%20that%20works%20to%20find%20consensus%20among},
  note         = {Accessed: 2025-11-17},
  year = {2024}
}

@misc{tiktok2025footnotes,
  author       = {{TikTok Pte. Ltd.}},
  title        = {Rolling out TikTok Footnotes in the U.S.},
  howpublished = {\url{https://newsroom.tiktok.com/rolling-out-tiktok-footnotes-in-the-us?lang=en}},
  note         = {Accessed: 2025-11-17},
  year         = {2025}
}

@article{bhuiyan2020investigating,
  title={Investigating differences in crowdsourced news credibility assessment: Raters, tasks, and expert criteria},
  author={Bhuiyan, Md Momen and Zhang, Amy X and Sehat, Connie Moon and Mitra, Tanushree},
  journal={Proceedings of the ACM on Human-Computer Interaction},
  volume={4},
  number={CSCW2},
  pages={1--26},
  year={2020},
  publisher={ACM New York, NY, USA}
}

@misc{amatriain2009ilikeit,
author = {Amatriain, Xavier and Pujol, Josep and Oliver, Nuria},
year = {2009},
month = {06},
pages = {247-258},
title = {I Like It... I Like It Not: Evaluating User Ratings Noise in Recommender Systems},
volume = {5535},
isbn = {978-3-642-02246-3},
doi = {10.1007/978-3-642-02247-0_24}
}

@misc{galton1907vox,
  title={Vox populi},
  author={Galton, Francis},
  year={1907},
  publisher={Nature Publishing Group UK London}
}

@book{surowiecki2005wisdom,
  title={The wisdom of crowds},
  author={Surowiecki, James},
  year={2005},
  publisher={Vintage}
}

@article{banerjee1992simple,
  title={A simple model of herd behavior},
  author={Banerjee, Abhijit V},
  journal={The quarterly journal of economics},
  volume={107},
  number={3},
  pages={797--817},
  year={1992},
  publisher={MIT Press}
}

@article{Smith2000ObservationalLearning,
  author  = {Smith, Lones and S{\o}rensen, Peter},
  title   = {Pathological Outcomes of Observational Learning},
  journal = {Econometrica},
  year    = {2000},
  volume  = {68},
  number  = {2},
  pages   = {371--398},
  doi     = {10.1111/1468-0262.00113}
}

@article{Acemoglu2011BayesianLearning,
  author  = {Acemoglu, Daron and Dahleh, Munther A. and Lobel, Ilan and Ozdaglar, Asuman},
  title   = {Bayesian Learning in Social Networks},
  journal = {The Review of Economic Studies},
  year    = {2011},
  volume  = {78},
  number  = {4},
  pages   = {1201--1236},
  doi     = {10.1093/restud/rdr004}
}

@article{lorenz2011social,
  title={How social influence can undermine the wisdom of crowd effect},
  author={Lorenz, Jan and Rauhut, Heiko and Schweitzer, Frank and Helbing, Dirk},
  journal={Proceedings of the national academy of sciences},
  volume={108},
  number={22},
  pages={9020--9025},
  year={2011},
  publisher={National Academy of Sciences}
}

@article{
slaughter2025reduce,
author = {Isaac Slaughter  and Axel Peytavin  and Johan Ugander  and Martin Saveski },
title = {Community notes reduce engagement with and diffusion of false information online},
journal = {Proceedings of the National Academy of Sciences},
volume = {122},
number = {38},
pages = {e2503413122},
year = {2025},
doi = {10.1073/pnas.2503413122},
URL = {https://www.pnas.org/doi/abs/10.1073/pnas.2503413122},
eprint = {https://www.pnas.org/doi/pdf/10.1073/pnas.2503413122}}

@article{gao2024can,
  title   = {Can Crowdchecking Curb Misinformation? Evidence from Community Notes},
  author  = {Gao, Yang and Zhang, Maggie Mengqing and Rui, Huaxia},
  journal = {Information Systems Research},
  year    = {2025},
  doi     = {10.1287/isre.2024.1609}
}

@article{drolsbach2024community,
  title={Community notes increase trust in fact-checking on social media},
  author={Drolsbach, Chiara Patricia and Solovev, Kirill and Pr{\"o}llochs, Nicolas},
  journal={PNAS nexus},
  volume={3},
  number={7},
  pages={pgae217},
  year={2024},
  publisher={Oxford University Press US}
}

@article{drolsbach2023diffusion,
  title={Diffusion of community fact-checked misinformation on twitter},
  author={Drolsbach, Chiara Patricia and Pr{\"o}llochs, Nicolas},
  journal={Proceedings of the ACM on Human-Computer Interaction},
  volume={7},
  number={CSCW2},
  pages={1--22},
  year={2023},
  publisher={ACM New York, NY, USA}
}

@inproceedings{allen2022birds,
author = {Allen, Jennifer and Martel, Cameron and Rand, David G},
title = {Birds of a feather don’t fact-check each other: Partisanship and the evaluation of news in Twitter’s Birdwatch crowdsourced fact-checking program},
year = {2022},
isbn = {9781450391573},
publisher = {Association for Computing Machinery},
address = {New York, NY, USA},
url = {https://doi.org/10.1145/3491102.3502040},
doi = {10.1145/3491102.3502040},
booktitle = {Proceedings of the 2022 CHI Conference on Human Factors in Computing Systems},
articleno = {245},
numpages = {19},
location = {New Orleans, LA, USA},
series = {CHI '22}
}

@article{Aitken1936least, 
    title={{IV.} {O}n {L}east {S}quares and {L}inear {C}ombination of {O}bservations}, 
    volume={55}, 
    DOI={10.1017/S0370164600014346}, 
    journal={Proceedings of the Royal Society of Edinburgh}, 
    author={Aitken, A. C.}, 
    year={1936}, 
    pages={42–48}
}

@inproceedings{resnick2007influence,
author = {Resnick, Paul and Sami, Rahul},
title = {The influence limiter: provably manipulation-resistant recommender systems},
year = {2007},
isbn = {9781595937308},
publisher = {Association for Computing Machinery},
address = {New York, NY, USA},
url = {https://doi-org.libproxy.berkeley.edu/10.1145/1297231.1297236},
doi = {10.1145/1297231.1297236},
booktitle = {Proceedings of the 2007 ACM Conference on Recommender Systems},
pages = {25–32},
numpages = {8},
location = {Minneapolis, MN, USA},
series = {RecSys '07}
}

@article{carroll1982robust,
  author       = {Carroll, Raymond J. and Ruppert, David},
  title        = {Robust Estimation in Heteroscedastic Linear Models},
  journal      = {The Annals of Statistics},
  volume       = {10},
  number       = {2},
  pages        = {429--441},
  year         = {1982},
  publisher    = {Institute of Mathematical Statistics},
  doi          = {10.1214/aos/1176345784}
}

@article{dai2018aggregation,
  title={Aggregation of consumer ratings: an application to yelp. com},
  author={Dai, Weijia and Jin, Ginger and Lee, Jungmin and Luca, Michael},
  journal={Quantitative Marketing and Economics},
  volume={16},
  number={3},
  pages={289--339},
  year={2018},
  publisher={Springer}
}

@misc{commnotes2022code,
  author       = {{Twitter, Inc.}},
  title        = {Community Notes: Documentation and source code powering Community Notes},
  year         = {2022},
  howpublished = {\url{https://github.com/twitter/communitynotes}}
}

@article{Gneiting2007strictly,
author = {Tilmann Gneiting and Adrian E Raftery},
title = {Strictly Proper Scoring Rules, Prediction, and Estimation},
journal = {Journal of the American Statistical Association},
volume = {102},
number = {477},
pages = {359--378},
year = {2007},
publisher = {ASA Website},
doi = {10.1198/016214506000001437},
URL = { 
    
        https://doi.org/10.1198/016214506000001437
},
eprint = { 
        https://doi.org/10.1198/016214506000001437
}
}

@article{
west2021misinformation,
author = {Jevin D. West  and Carl T. Bergstrom },
title = {Misinformation in and about science},
journal = {Proceedings of the National Academy of Sciences},
volume = {118},
number = {15},
pages = {e1912444117},
year = {2021},
doi = {10.1073/pnas.1912444117},
URL = {https://www.pnas.org/doi/abs/10.1073/pnas.1912444117},
eprint = {https://www.pnas.org/doi/pdf/10.1073/pnas.1912444117}}

@article{
vosoughi2018spread,
author = {Soroush Vosoughi  and Deb Roy  and Sinan Aral },
title = {The spread of true and false news online},
journal = {Science},
volume = {359},
number = {6380},
pages = {1146-1151},
year = {2018},
doi = {10.1126/science.aap9559},
URL = {https://www.science.org/doi/abs/10.1126/science.aap9559},
eprint = {https://www.science.org/doi/pdf/10.1126/science.aap9559}}

@article{van2022misinformation,
  title={Misinformation: susceptibility, spread, and interventions to immunize the public},
  author={Van Der Linden, Sander},
  journal={Nature medicine},
  volume={28},
  number={3},
  pages={460--467},
  year={2022},
  publisher={Nature Publishing Group US New York}
}

@article{kangur2024checks,
  title={Who checks the checkers? exploring source credibility in twitter’s community notes},
  author={Kangur, Uku and Chakraborty, Roshni and Sharma, Rajesh},
  journal={Journal of Computational Social Science},
  volume={9},
  number={1},
  pages={24},
  year={2026},
  publisher={Springer}
}

@inproceedings{thebault2023diverse,
  title={Diverse perspectives can mitigate political bias in crowdsourced content moderation},
  author={Thebault-Spieker, Jacob and Venkatagiri, Sukrit and Mine, Naomi and Luther, Kurt},
  booktitle={Proceedings of the 2023 ACM Conference on Fairness, Accountability, and Transparency},
  pages={1280--1291},
  year={2023}
}

@online{CommunityNotesRankingAlgorithm,
  title        = {Note ranking algorithm},
  author       = {{X Corp. / Community Notes Guide}},
  url          = {https://communitynotes.x.com/guide/en/under-the-hood/ranking-notes},
  note         = {Accessed: 2025-08-05},
  organization = {Community Notes},
  howpublished = {\url{https://communitynotes.x.com/guide/en/under-the-hood/ranking-notes}},
  year = {n.d.}
}

@article{bai2009panel,
  title={Panel data models with interactive fixed effects},
  author={Bai, Jushan},
  journal={Econometrica},
  volume={77},
  number={4},
  pages={1229--1279},
  year={2009},
  publisher={Wiley Online Library}
}

@inproceedings{perdomo2020performative,
  title={Performative prediction},
  author={Perdomo, Juan and Zrnic, Tijana and Mendler-D{\"u}nner, Celestine and Hardt, Moritz},
  booktitle={International Conference on Machine Learning},
  pages={7599--7609},
  year={2020},
  organization={PMLR}
}

@inproceedings{kong2016putting,
  title={Putting peer prediction under the micro (economic) scope and making truth-telling focal},
  author={Kong, Yuqing and Ligett, Katrina and Schoenebeck, Grant},
  booktitle={International Conference on Web and Internet Economics},
  pages={251--264},
  year={2016},
  organization={Springer}
}

@inproceedings{waggoner2014output,
  title={Output agreement mechanisms and common knowledge},
  author={Waggoner, Bo and Chen, Yiling},
  booktitle={Proceedings of the AAAI Conference on Human Computation and Crowdsourcing},
  volume={2},
  pages={220--226},
  year={2014}
}

@article{kashima2024trustworthy,
  title={Trustworthy human computation: a survey},
  author={Kashima, Hisashi and Oyama, Satoshi and Arai, Hiromi and Mori, Junichiro},
  journal={Artificial Intelligence Review},
  volume={57},
  number={12},
  pages={322},
  year={2024},
  publisher={Springer}
}

@inproceedings{liu2017machine,
author = {Liu, Yang and Chen, Yiling},
title = {Machine-Learning Aided Peer Prediction},
year = {2017},
isbn = {9781450345279},
publisher = {Association for Computing Machinery},
address = {New York, NY, USA},
url = {https://doi-org.libproxy.berkeley.edu/10.1145/3033274.3085126},
doi = {10.1145/3033274.3085126},
booktitle = {Proceedings of the 2017 ACM Conference on Economics and Computation},
pages = {63–80},
numpages = {18},
location = {Cambridge, Massachusetts, USA},
series = {EC '17}
}

@book{faltings2022game,
  title={Game theory for data science: Eliciting truthful information},
  author={Faltings, Boi and Radanovic, Goran},
  year={2022},
  publisher={Springer Nature}
}

@misc{vershynin2012introduction,
  title={Introduction to the non-asymptotic analysis of random matrices.},
  author={Vershynin, Roman},
  year={2012}
}

@inproceedings{srebro2003weighted,
  title={Weighted low-rank approximations},
  author={Srebro, Nathan and Jaakkola, Tommi},
  booktitle={Proceedings of the 20th international conference on machine learning (ICML-03)},
  pages={720--727},
  year={2003}
}

@article{udell2016glrm,
  title={Generalized low rank models},
  author={Udell, Madeleine and Horn, Corinne and Zadeh, Reza and Boyd, Stephen},
  journal={Foundations and Trends in Machine Learning},
  volume={9},
  number={1},
  pages={1--118},
  year={2016},
  publisher={Emerald Publishing Limited}
}

@article{chen2020noisy,
  title={Noisy matrix completion: Understanding statistical guarantees for convex relaxation via nonconvex optimization},
  author={Chen, Yuxin and Chi, Yuejie and Fan, Jianqing and Ma, Cong and Yan, Yuling},
  journal={SIAM journal on optimization},
  volume={30},
  number={4},
  pages={3098--3121},
  year={2020},
  publisher={SIAM}
}

@inproceedings{farias2022uncertainty,
  title={Uncertainty quantification for low-rank matrix completion with heterogeneous and sub-exponential noise},
  author={Farias, Vivek and Li, Andrew A and Peng, Tianyi},
  booktitle={International Conference on Artificial Intelligence and Statistics},
  pages={1179--1189},
  year={2022},
  organization={PMLR}
}

@article{moon2015linear,
  title={Linear regression for panel with unknown number of factors as interactive fixed effects},
  author={Moon, Hyungsik Roger and Weidner, Martin},
  journal={Econometrica},
  volume={83},
  number={4},
  pages={1543--1579},
  year={2015},
  publisher={Wiley Online Library}
}

@article{su2026estimation,
  title={Estimation and inference for unbalanced panel data models with interactive fixed effects},
  author={Su, Liangjun and Wang, Fa and Wang, Yiren},
  journal={Journal of Econometrics},
  volume={255},
  pages={106222},
  year={2026},
  publisher={Elsevier}
}

@misc{blair2026structure,
    title={The Structure of Bridging},
    author={Blair, Carter and De Raaij, Jacob and Procaccia, Ariel and Rubin-Toles, Maxon and Shah, Nisarg and Si, Michelle and Wang, Serena},
    year={2026}
}
\appendix

\newpage

\section{Guide to the Appendix}

This Appendix has two goals. First, it provides the empirical implementation details and robustness analyses underlying the main-text results. Second, it contains the full technical Appendix for the theoretical results. The organization of the Appendix is designed to mirror the logic of the paper: we first document the empirical reconstruction and additional analyses, and then turn to the stylized model and proofs.

The Appendix is organized as follows. Appendix~\ref{sec:data} describes the data sources, preprocessing steps, and reconstruction of weekly latent factors from the public Community Notes data and open-source code. Appendix~\ref{sec:empiric} presents additional empirical analyses in the order of the main text, including the shift in minority-aligned contributors after the introduction of Rating Impact, the change in participation on controversial content, and additional predictive-performance analyses. Appendix~\ref{sec:twostage imp} describes the implementation of the two-stage weighted matrix factorization procedure and additional empirical details for that estimator. Appendix~\ref{sec:proofs} contains the full theory Appendix: it states the stylized estimation model, lists the regularity conditions used in the proofs, and provides proofs of all theorems and propositions in the main text.

\section{Data, reconstruction, and empirical methodology}\label{sec:data}

% \section{SI Methods and Implementation Details}
\subsection{Data sources}
In this paper, we use the open source code and data from X Community Notes \cite{CommunityNotesDownloadData}. To study the effect of the Rating Impact rollout, we focus on the time frame between June 1, 2022 and May 31, 2023. To evaluate the predictive performance of our two-stage matrix factorization (MF) method, we use ratings data from Jan. 1, 2023 to June 1, 2024. Information about the primary dataframes we use in our analysis and relevant columns are stored in Table \ref{tab:dataframes}. More detailed information about all available data can be found in the Community Notes documentation \cite{CommunityNotesDownloadData}.

\begin{table}[ht]
\centering
\small
\setlength{\tabcolsep}{4pt}
\begin{tabularx}{\linewidth}{l X X}
\hline
\textbf{Dataframe} & \textbf{Relevant Columns} & \textbf{Description} \\
\hline
\texttt{ratings\_df} &
\texttt{noteId, raterParticipantId, createdAtMillis, helpfulnessLevel} &
Record of each (user, note) rating pair and the timestamp \\
\texttt{history\_df} &
\texttt{noteId, createdAtMillis, currentStatus} &
Record of each note and what its most recent note status was (i.e. whether it has reached Helpful or Not Helpful status) \\
\texttt{note\_df} &
\texttt{noteId, raterParticipantId, createdAtMillis, tweetId, summary} &
Contains metadata about each note \\
\texttt{note\_factor\_df} &
\texttt{noteId, week\_dt, noteIntercept, noteFactor1} &
Contains weekly computations for note intercept and note factor (one for 2022 version, one for 2025 version of the code) \\
\texttt{rater\_factor\_df} &
\texttt{raterParticipantId, week\_dt, raterIntercept, raterFactor1} &
Contains weekly computations for rater intercept and rater factor (one for 2022 version, one for 2025 version of the code) \\
\hline
\end{tabularx}
\caption{Dataframes used in our analysis.}
\label{tab:dataframes}
\end{table}

\subsection{Reconstructing Weekly Latent Factors}
The public release does not include the latent parameters used internally by the platform's aggregation system. As a result, all note and rater intercepts and factors used in our analysis are reconstructed from the ratings history rather than observed directly. 
In order to recover weekly estimates for rater and note intercepts and factors, we run X's matrix factorization algorithm. For each week $w$ from June 1, 2022 to May 31, 2023, we run the matrix factorization algorithm on all ratings up to and including ratings from week $w$. Since matrix factorization can only recover the rater and note factors up to a global scaling and sign, the algorithm checks the sign distribution of factors and ensures that the majority always has a negative sign. Thus, the sign meaning stays consistent throughout the weeks. We run both the version from Dec. 2022 and the version from May 2025 \cite{commnotes2022code}. Both implementations largely solve the biased matrix factorization problem presented in the main text using stochastic gradient descent with $L^2$ regularization, and factor normalization to ensure consistent interpretation of factors across weeks.

The 2022 version is a straightforward implementation of the least-squares MF optimization problem with single-round optimization and basic convergence criteria. The 2025 version of the code includes many enhancements (multi-round reputation filtering, harassment detection, uncertainty quantification, and abuse mitigation), however, our implementation utilizes only the stable initialization improvement from the modern codebase. Specifically, we run the \texttt{run\_single\_round\_mf} function, which implements stable initialization using a designated modeling group to prevent factor sign drift across time.

\subsection{Policy Timing, and User Cohorts}

We use Oct.\ 1, 2022 as the analysis cutoff date. This is a conservative operational cutoff following the period when the Rating Impact eligibility rules (announced in September 2022) begin to take effect in the observed data. Dates before Oct.\ 1 are referred to as \emph{pre-rollout}, and dates on or after Oct.\ 1 as \emph{post-rollout}.

Several analyses distinguish behavioral adaptation from compositional change due to entry of new raters. We define \emph{early users} to be raters who were active on Community Notes before Oct.\ 1, 2022, and \emph{new users} to be raters who first became active between Oct.\ 1, 2022 and Jan.\ 1, 2023. 

Finally, note and rater factors should be interpreted as relative positions within a weekly latent scale estimated from all observed user-note interactions on the platform. In particular, note factors are equilibrium objects: they reflect not only how notes are evaluated, but also which notes are written and which notes receive enough ratings to be assigned a factor. For this reason, our empirical comparisons focus on within-pipeline temporal changes, cohort differences, and discontinuities at the rollout boundary, rather than on absolute comparisons across different estimation procedures.

\section{Additional Empirical Results}\label{sec:empiric}

\subsection{Robustness Checks for Minority Behavior Shift}
In this section, we provide several sensitivity tests for the evidence on changes in minority behavior. Recall that the platform normalizes users with negative latent factor to be the majority. The main empirical finding was that, following the introduction of Rating Impact, minority-aligned contributors moved closer to the majority in the platform's latent-factor space, and the predictive role of user-note alignment for Helpful ratings weakened. Here we show that this pattern is robust across alternative visualizations of the factor distributions, a permutation-based comparison of factor shifts for early and new users, a regression discontinuity design for distributional shape, and additional predictive specifications based on the user-note dot product.

\subsubsection{Latent Factor Distribution Shift}
Recall from the main text that we define early users to be the cohort of users were who active on Community Notes before the rollout date of Oct. 1, 2022, and new users to be the cohort of users who became active between Oct. 1, 2022 and Jan. 1, 2023. In Figures \ref{fig:early-new-user-density-plots}, \ref{fig:factor-shift-histogram}, and \ref{fig:factor-density-violin-plot}, we give additional visualizations for the distribution shift for early users compared with new users between Oct. 2022 and Jan. 2023. All figures give observational evidence that users who were affected by the Rating Impact policy change their behavior, with their factors aligning more with the majority over time.

% \subsection*{Latent Factor Distribution Shift}
\subsubsection{RDD Tests for Bimodality}
As an additional robustness test the latent factor distribution shift among note and rater factors, we compute the bimodality coefficient of the distributions over time and run a regression discontinuity design. For each week $t$, we compute the \emph{bimodality coefficient} (BC) of the empirical distribution of latent factors, defined as
\begin{equation}
\mathrm{BC}_t \;=\; \frac{\mathrm{skewness}_t^2 + 1}{\mathrm{kurtosis}_t},
\end{equation}
where skewness and kurtosis are computed from the stimated factors in week $t$.
The bimodality coefficient is scale-invariant and increases with the prominence of multiple modes; for reference, unimodal symmetric distributions (e.g., Gaussian) have $\mathrm{BC} \approx 1/3$, while bimodal distributions yield larger values.

We compute $\mathrm{BC}_t$ separately for (i) rater factors $\{f_u\}$ and (ii) note factors $\{f_n\}$ using weekly snapshots of the matrix factorization estimates reconstructed from the public data. The estimation pipeline, normalization, and regularization are held fixed across time, so temporal changes in $\mathrm{BC}_t$ reflect changes in the empirical distribution rather than rescaling artifacts. 
% In Figures \ref{fig:bc-rater} and \ref{fig:bc-note}, we show the bimodality coefficients for rater and note factors over time.

\paragraph{Regression discontinuity design.}
We estimated a sharp RDD to test whether the October 1, 2022 intervention produced a
discontinuous shift in weekly bimodality coefficients. The running variable $R_t$ is defined as the signed number of days between the Monday of week $t$ and the cutoff date, so $R_t = 0$ corresponds to the first post-intervention week. We use the full available time series as the estimation bandwidth ($9$ weeks pre-, $9$ weeks post-intervention).

We estimated a local linear model on each side of the cutoff:
\begin{equation}
    BC_t = \beta_0 + \beta_1 R_t + \beta_2 \cdot \mathbf{1}[R_t \geq 0]
    + \beta_3 \cdot \left(R_t \times \mathbf{1}[R_t \geq 0]\right) + \varepsilon_t
\end{equation}
where $\beta_2$ identifies the discontinuous jump at the threshold and $\beta_3$ allows the post-intervention slope to differ from the pre-intervention slope. The model was estimated separately for the rater-level and note-level bimodality series via OLS with HC3 heteroskedasticity-robust standard errors.

\paragraph{Results and interpretation.}
For rater factors, we observe a statistically significant negative discontinuity in the bimodality coefficient at the cutoff (Figure~\ref{fig:rdd-bc-rater}), indicating an abrupt shift toward a more unimodal distribution following the introduction of Rating Impact. This pattern is consistent with minority-aligned raters moving closer to the center of the latent spectrum or crossing alignment toward the majority group.

For note factors, we also observe a significant decline in the bimodality coefficient at the cutoff (Figure~\ref{fig:rdd-bc-note}), though the subsequent time trend differs from that of raters.
The regression discontinuity design is used to identify the \emph{local} effect of the Rating Impact rollout on the
distributional shape of latent factors, not to characterize longer-run dynamics. While the post-cutoff time path of note factors differs from that of
rater factors, this divergence is expected and does not affect the interpretation of the discontinuity. Rater factors
represent latent traits of a largely fixed population of users and therefore evolve primarily through behavioral
adaptation. In contrast, note factors are equilibrium objects shaped by endogenous entry:
which notes are written, and which notes receive sufficient evaluations to get assigend a factor all depend
on post-rollout incentives. These selection forces can alter higher-order moments of the note-factor distribution over
time, even when the immediate response to the policy change is a reduction in bimodality. Importantly, our inference
relies on the direction and significance of the discontinuity at the rollout boundary, which is common to both rater
and note factors, and is consistent with strategic conformity reducing the salience of minority-aligned positions.

\subsubsection{Additional Alignment Tests}
The main text used Spearman's correlation between the rater--note dot product \(f_u g_n\) and Helpful ratings as a nonparametric measure of the predictive role of user--note alignment. Here we report two additional robustness checks on the same question.

\paragraph{Logistic Regression}
For each rating between Aug. 1, 2022 and Jan. 1, 2023, we compute the dot product between the user factor and note factor at the time of rating. For each period before and after the Oct. 1, 2022 rollout, we regress helpfulness ratings on the rater-note dot product using logistic regression, and take the difference in coefficients (post minus pre) as our test statistic measuring change in predictiveness. The logistic regression coefficient declined from $15.696$ to $4.427$, a change of $-11.269$. We conduct a permutation test with $1,000$ iterations, randomly reassigning the ratings to pre/post groups, while preserving the original group sizes. The p-value of $0.001$ indicates that the observed decline is statistically significant at the $0.05$ level, consistent with the Spearman correlation results reported in the main text. The test statistic distribution is shown in Figure \ref{fig:permutation-logistic}.

Note that the logistic regression coefficient is sensitive to the scale of the dot product and is therefore less robust than the Spearman correlation as a test statistic; we include it here for completeness.

\paragraph{DiD for Note Helpfulness} 
Second, we estimate a difference-in-differences (DiD) framework on the same dataset to test whether the rollout of Rating Impact affects the predictiveness of rater-note factor for note helpfulness ratings. We regress note helpfulness using the following:
\[
    r_{un} = \alpha + \beta(f_u g_n) + \gamma\text{Post} + \delta(\text{Post}\times f_u g_n) + \epsilon_{un},
\]
where $f_u, g_n$ are the rater and note factors, and $\text{Post}$ is an indicator that is $1$ after Oct. 1, 2022. The coefficient $\beta$ captures the baseline predictiveness of the rater-note factor prior to the intervention, $\gamma$ captures any level shift in helpfulness ratings post-rollout, and $\delta$ is the DiD estimator of interest. Standard errors are heteroskedasticity-robust (HC3). Results are shown below.

\begin{table}[h]
\centering
\begin{tabular}{lrrrrrr}
\hline
 & Parameter & Std.\ Err. & $z$ & $p$-value & Lower CI & Upper CI \\
\hline
Intercept                    &  0.4937 & 0.039 &  12.649 & $<$0.001 &  0.417 &  0.570 \\
$f_u g_n$                    &  1.3821 & 0.068 &  20.323 & $<$0.001 &  1.249 &  1.515 \\
Post                         & $-$0.0767 & 0.047 & $-$1.636 &    0.102 & $-$0.169 &  0.015 \\
Post$\times f_u g_n$         & $-$0.5344 & 0.100 & $-$5.345 & $<$0.001 & $-$0.730 & $-$0.338 \\
\hline
% \multicolumn{7}{l}{$R^2 = 0.350$, Adj.\ $R^2 = 0.344$, $N = 329$} \\
% \hline
\end{tabular}
\caption{DiD estimates for the predictiveness of the rater-note factor on note helpfulness ratings, corresponding to the specification in \eqref{eq:did_share}. The dependent variable is $r_{un}$, is the helpfulness rating. $f_u g_n$ is the rater-note dot product and Post is an indicator equal to 1 after the Rating Impact rollout (October 2022). The baseline coefficient on $f_u g_n$ ($\hat{\beta} = 1.382$, $p < 0.001$) indicates a strong pre-intervention relationship between the rater-note factor and helpfulness. The DiD estimator Post$\times f_u g_n$ ($\hat{\delta} = -0.534$, $p < 0.001$) indicates that this predictive relationship weakened significantly following the rollout. Standard errors are heteroskedasticity-robust (HC3).}
\label{tab:did_dot_product}
\end{table}

Taken together with the Spearman and rolling-correlation analyses in the main text, these additional specifications reinforce the same conclusion: after the introduction of Rating Impact, user--note alignment becomes less predictive of helpfulness ratings among contributors who were active through the policy change.

\begin{figure}[t]
    \centering
    \includegraphics[width=.6\linewidth]{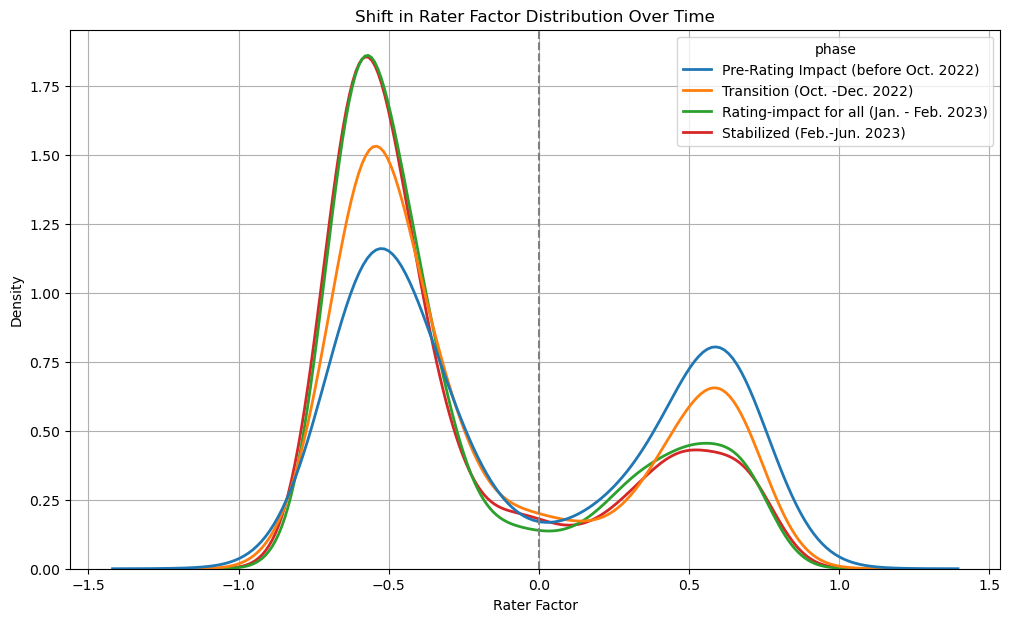}\par\vspace{0.5em}
    \includegraphics[width=.6\linewidth]{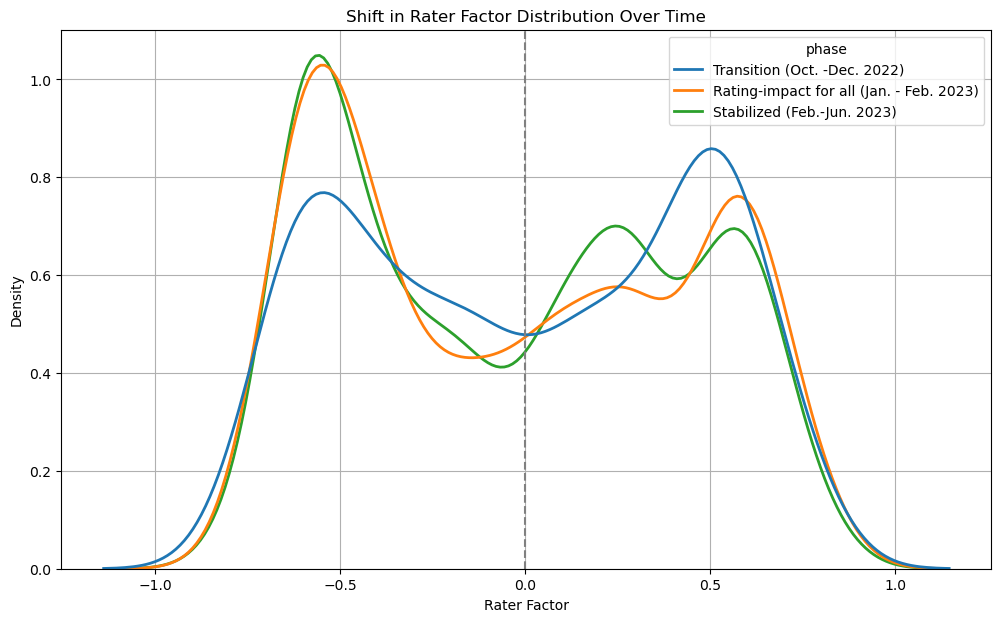}
    \caption{The top figure shows rater factor distribution shift for the early user cohort and the bottom figure shows rater factor distribution shift for the new user cohort using the 2022 version of the matrix factorization code. The minority mode decreases significantly for early users, while remaining more stable for new users. For early users, the proportion of positive factors during the Transition period compared with the Stabilized period decreases from $31.4\%$ to $24.8\%$, a decrease of $6.6$ percentage points. For new users, the proportion of positive factors comparing the same two time periods decreases from $50.3\%$ to $49.5\%$, a decrease of only $0.8$ percentage points.}
    \label{fig:early-new-user-density-plots}
\end{figure}

\begin{figure}[t]
    \centering
    \includegraphics[width=.6\linewidth]{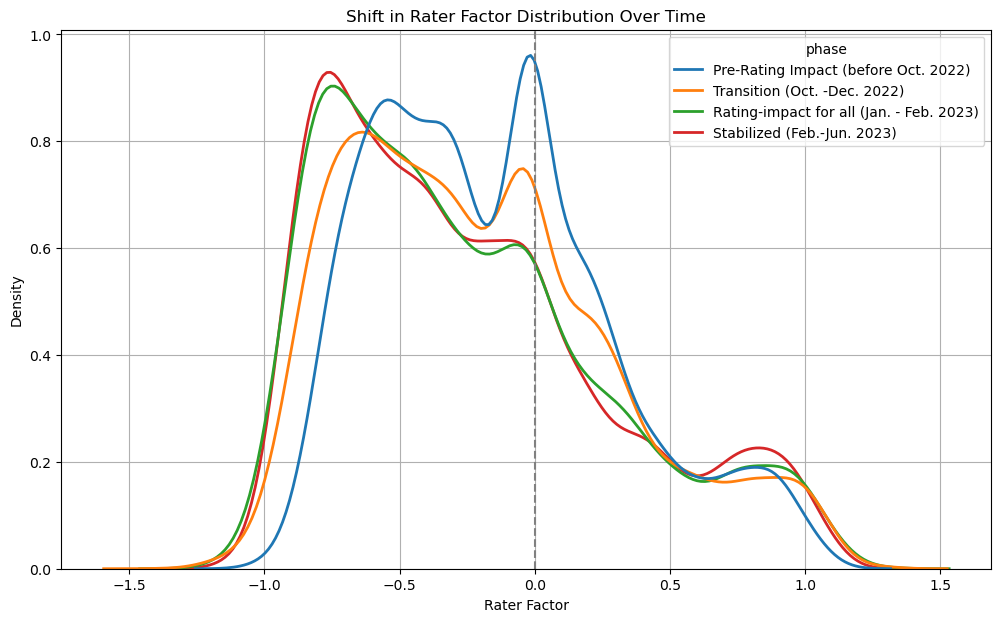}\par\vspace{0.5em}
    \includegraphics[width=.6\linewidth]{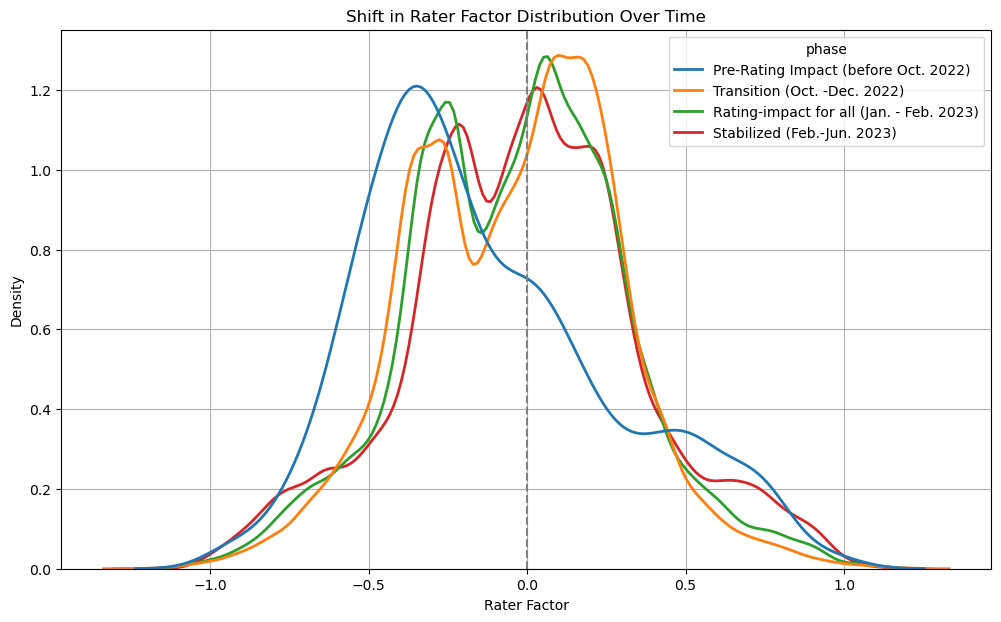}
    \caption{The top figure shows rater factor distribution shift for the early user cohort and the bottom figure shows rater factor distribution shift for the new user cohort using the 2025 version of the matrix factorization code. Here again, the minority mode decreases, while remaining relatively stable for new users. For early users, the proportion of positive factors decreases from $30.8\%$ during the Transition period to $27.8\%$ during the Stabilized period, a decrease of $3.0$ percentage points. For new users the proportion of positive factors actually increases from $48.8\%$ during the Transition period to $49.6\%$ during the Stabilized period, an increase of $0.8$ percentage points.}
    \label{fig:early-new-user-density-plots-2025}
\end{figure}

\begin{figure}
    \centering
    \includegraphics[width=.6\linewidth]{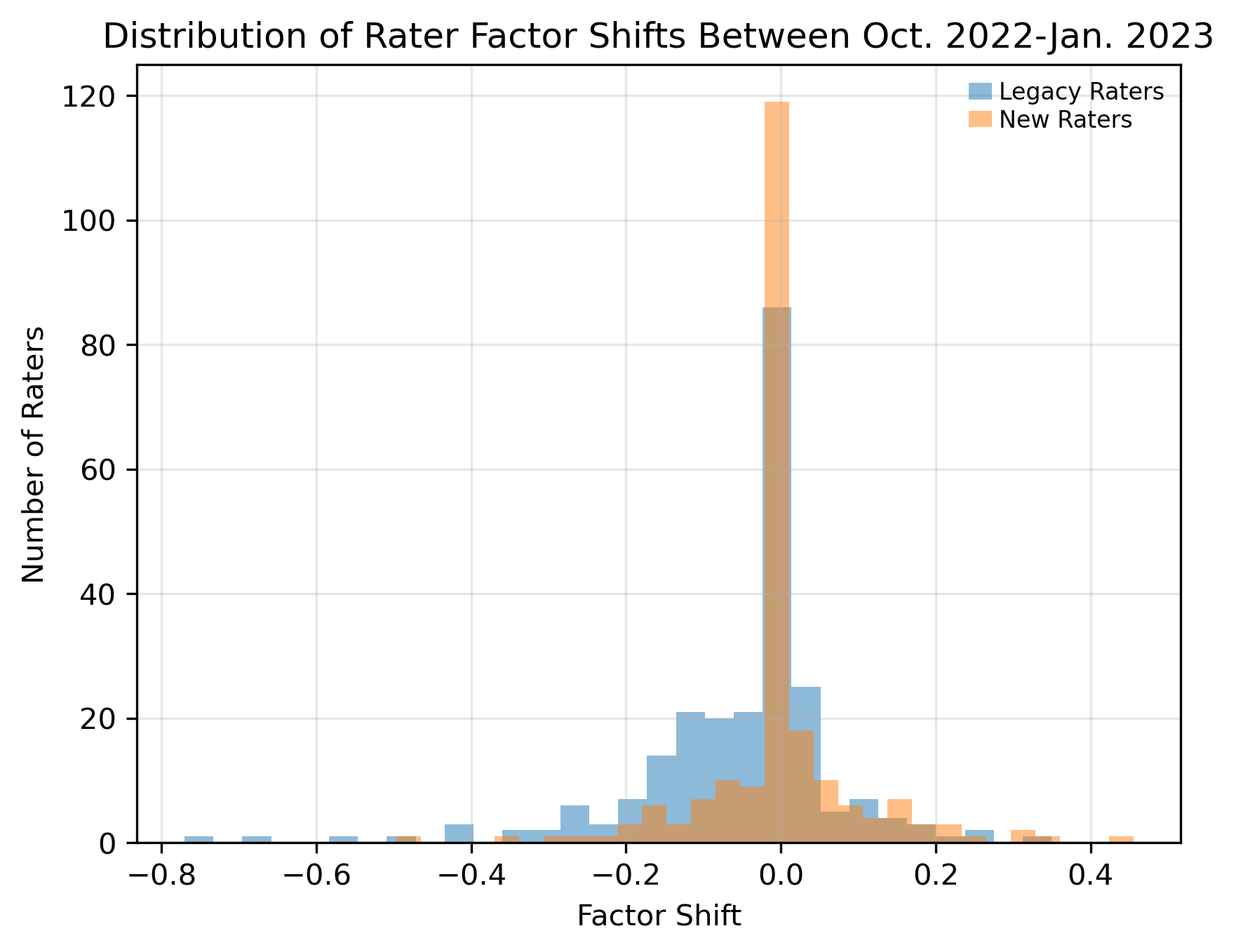}
    \caption{This figure shows a histogram of the difference in rater factor between Oct. 2022 and Jan. 2023 for raters in the early cohort vs. in the new cohort. The mean of the early cohort factor shift is $-0.049$ (95\% bootstrap CI: [-0.067, -0.033]), and the mean of the new cohort factor shift is $0.003$ (95\% bootstrap CI: [-0.010, 0.017]). This shows that early user factors shift more toward the majority than new user factors.}
    \label{fig:factor-shift-histogram}
\end{figure}

\begin{figure}
    \centering
    \includegraphics[width=\linewidth]{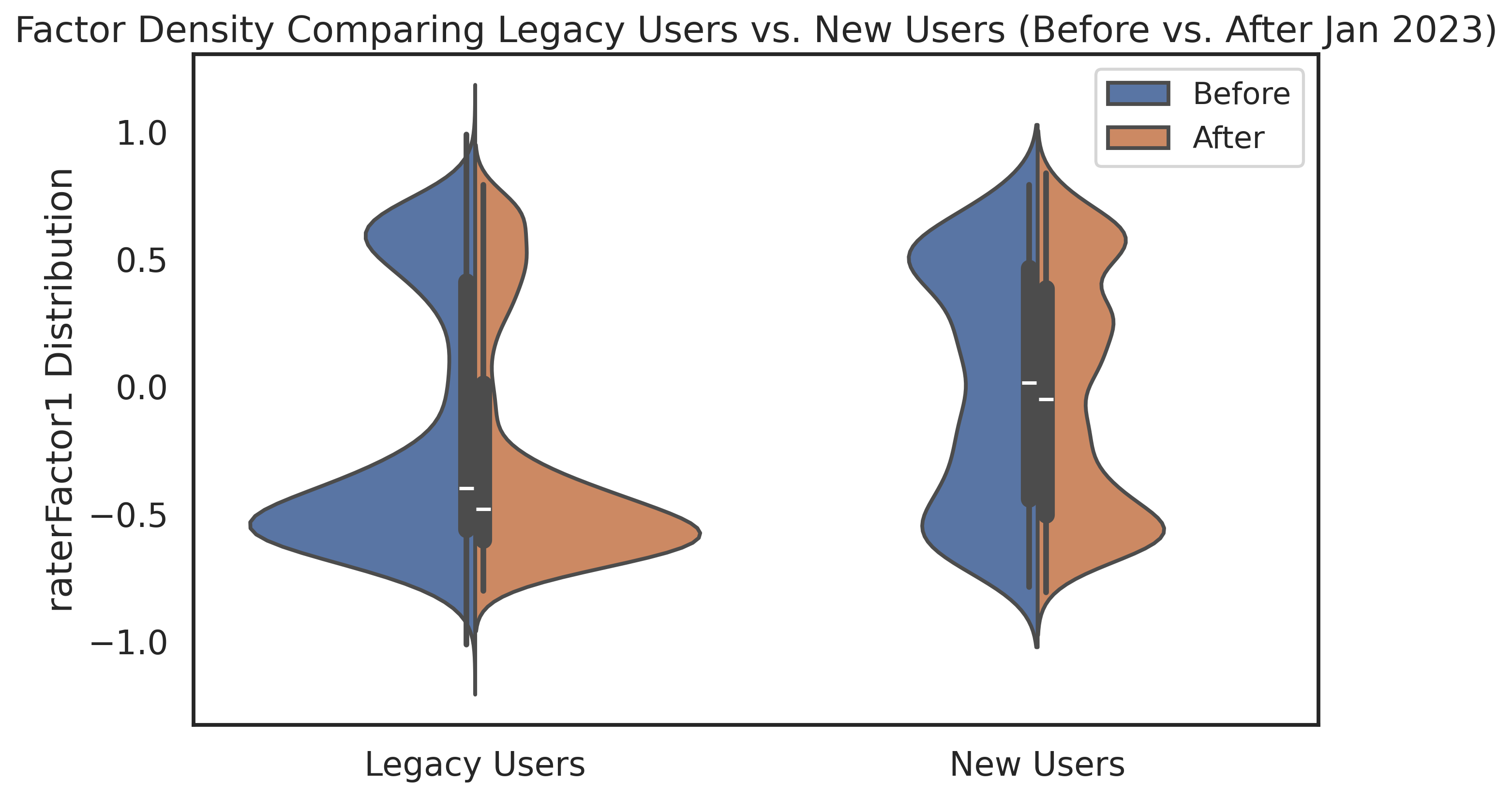}
    \caption{This plot visualizes the difference in rater factor distribution for early users compared to new users before and after Jan. 2023 when the rollout of Rating Impact was complete. Again we see that the distribution of new user factors remains relatively stable before and after, whereas the distribution of early user factors shifts towards the majority.}
    \label{fig:factor-density-violin-plot}
\end{figure}

\begin{figure}
    \centering
    \includegraphics[width=.8\linewidth]{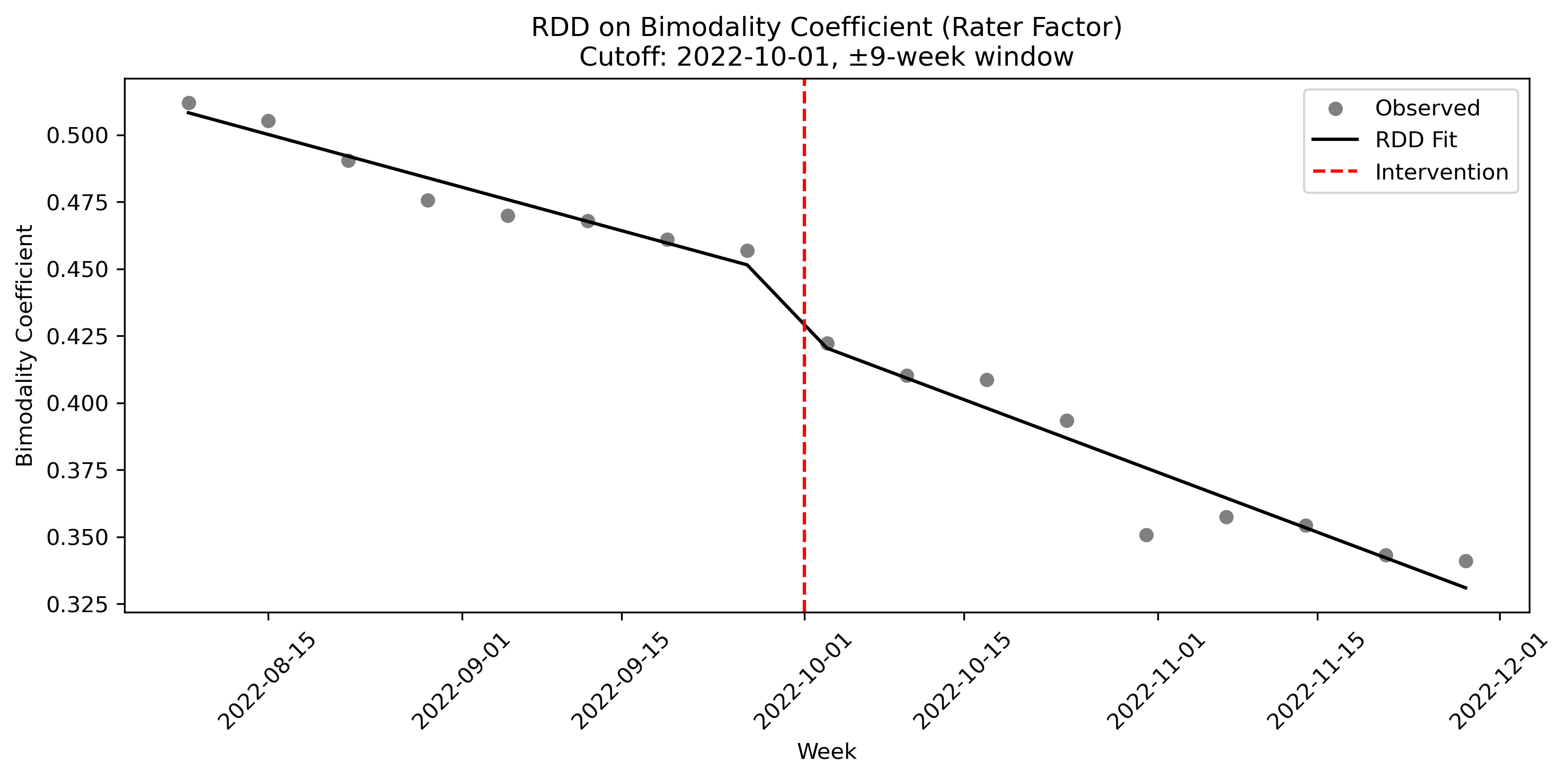}
    \caption{This figure shows the RDD analysis on the bimodality coefficient measured at weekly intervals for rater factors with the cutoff date of Oct. 1, 2022. There is a $0.022$ decline in the bimodality coefficient at the cutoff ($p=0.003$), implying that post-cutoff the distribution becomes more unimodal.}
    \label{fig:rdd-bc-rater}
\end{figure}

\begin{figure}
    \centering
    \includegraphics[width=.8\linewidth]{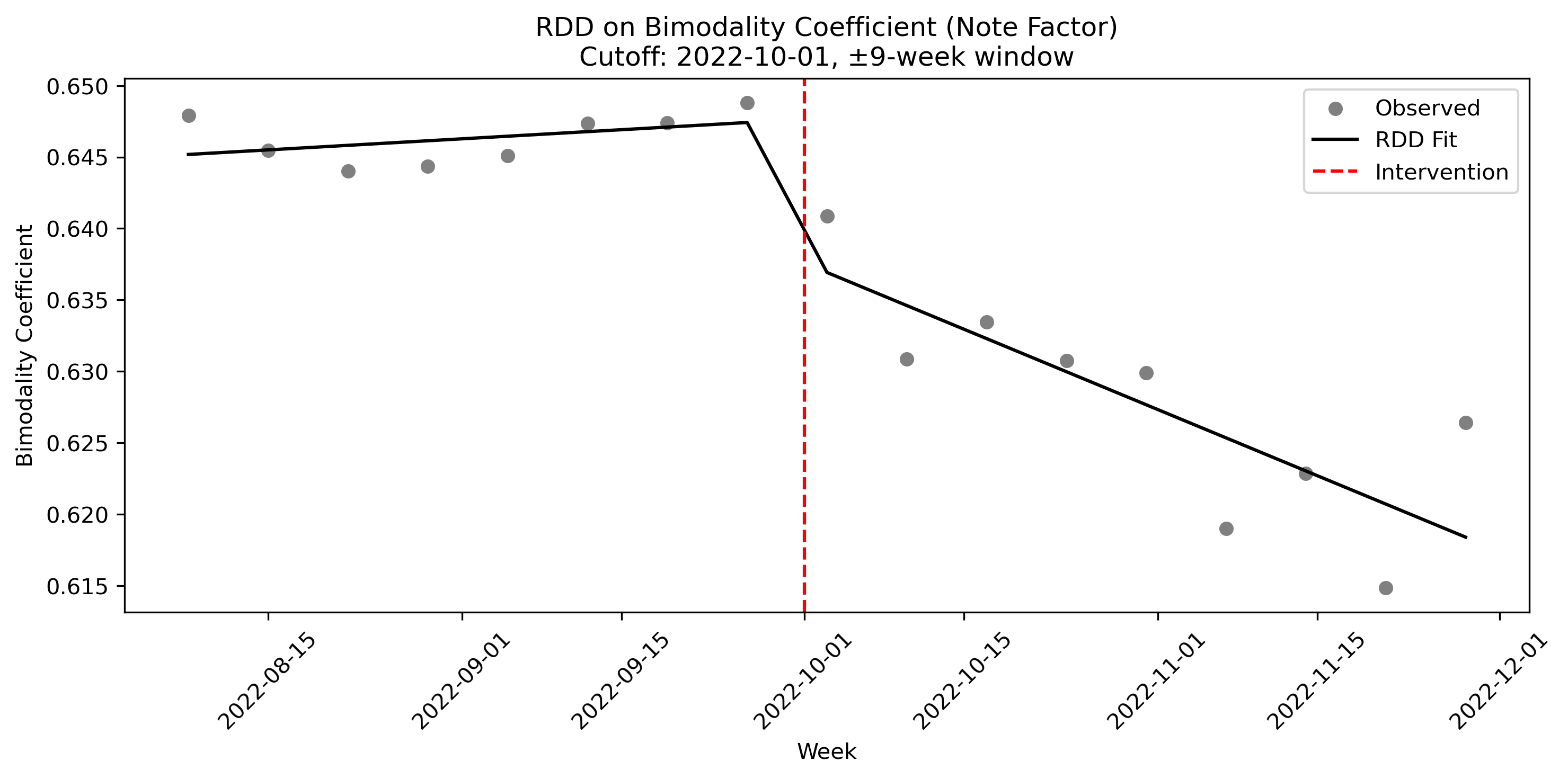}
    \caption{This figure shows the RDD analysis on the bimodality coefficient measured at weekly intervals for note factors with the cutoff date of Oct. 1, 2022. There is a $0.010$ decline in the bimodality coefficient at the cutoff ($p = 0.017$), implying that at the cutoff the distribution jumps to a more unimodal distribution.}
    \label{fig:rdd-bc-note}
\end{figure}

\begin{figure}
    \centering
    \includegraphics[width=.6\linewidth]{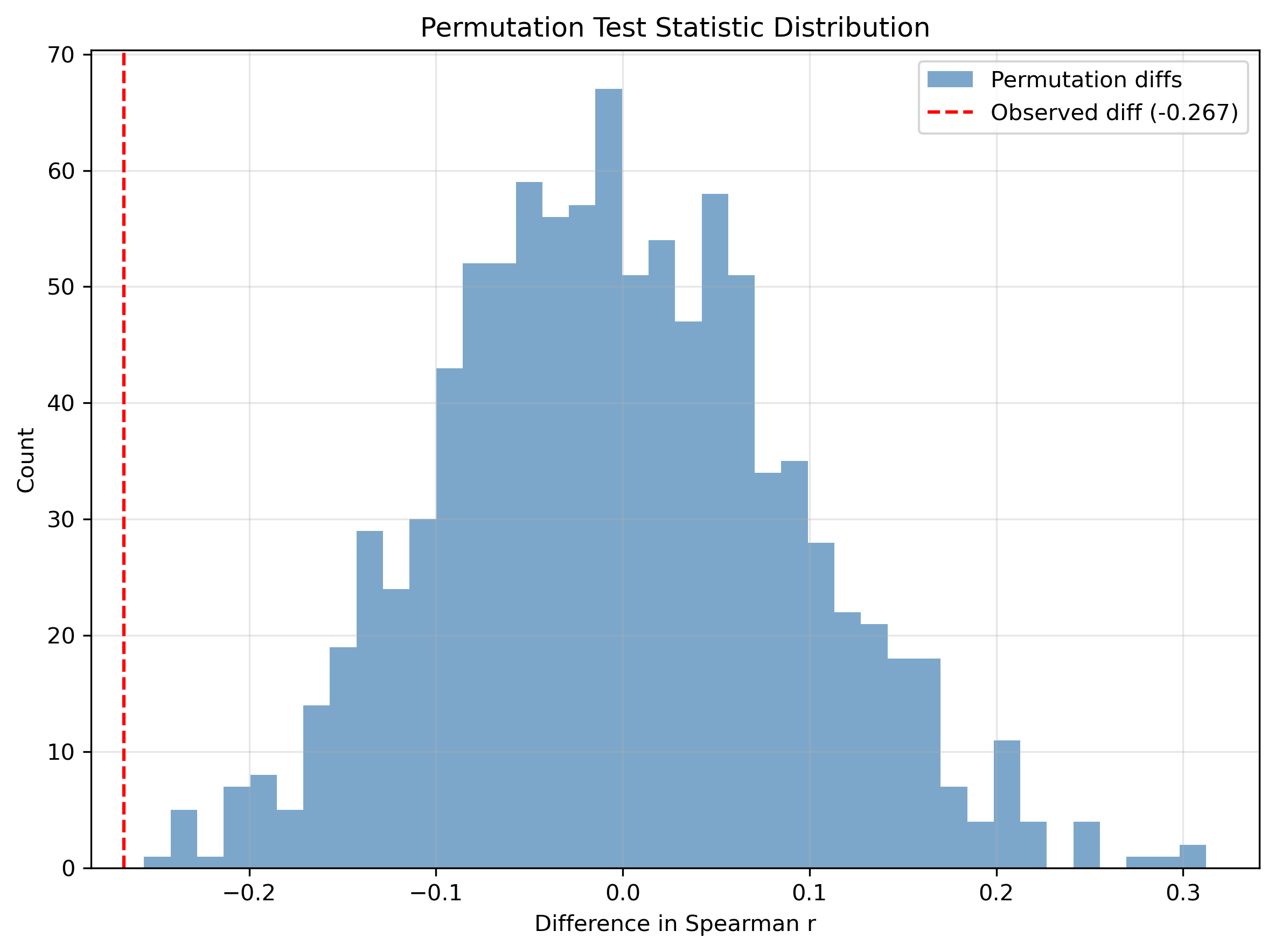}
    \caption{Permutation test test statistic distribution for the difference in Spearman correlation (post minus pre intervention) between rater-note factor dot product and helpfulness ratings among early users. The red dashed line marks the observed difference of $-0.267$. Very few of the $1,000$ permuted differences fall near the observed value, yielding $p = 0.004$, indicating that the decline in predictiveness after October 1, 2022 is highly unlikely to have occurred by chance.}
    \label{fig:permutation-spearman}
\end{figure}

\begin{figure}
    \centering
    \includegraphics[width=.6\linewidth]{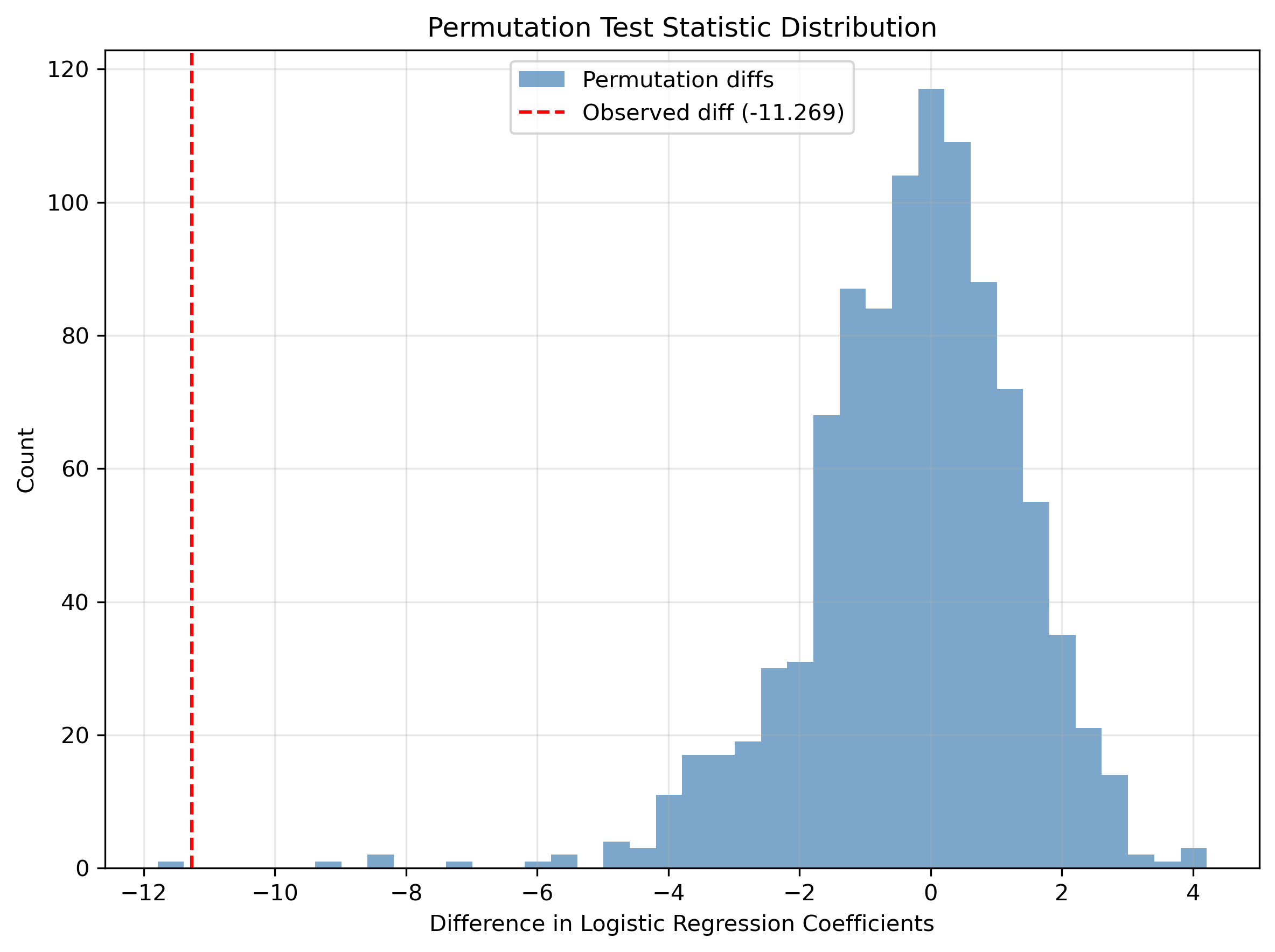}
    \caption{Permutation test test statistic distribution for the difference in logistic regression coefficients (post minus pre intervention). The red dashed line marks the observed difference of $-11.269$. Very few of the $10,000$ permuted differences fall near the observed value, yielding $p = 0.001$, indicating that the decline in predictiveness after October 1, 2022 is highly unlikely to have occurred by chance.}
    \label{fig:permutation-logistic}
\end{figure}

\subsection{Controversial Content and Participation}

This section provides additional detail and sensitivity analyses for the controversial-content result in the main text. The main pattern is that, following the rollout of Rating Impact, notes on controversial content are less likely to attain Helpful status than notes on non-controversial content. We document this pattern using two complementary definitions of controversy. The first is topic-based and uses note summaries to classify notes into broad content areas before labeling those areas as controversial or non-controversial. The second is factor-based and uses the magnitude of the estimated note factor as a model-based measure of polarization. We also examine whether the decline in controversial-note visibility is accompanied by changes in contributor-level engagement with controversial content.

\subsubsection{Topic Assignment and Controversy Definitions}

Next we describe how we assign topic labels to notes and how those labels are used
to classify content as controversial or non-controversial. 
We first define the primary topic
assignment procedure used throughout the main text, which closely follows X’s public
implementation with expanded coverage. We then present alternative large-language-model
(LLM)-based topic assignments used as alternative classification procedures to assess sensitivity to the
topic-classification mechanism.

\paragraph{Primary Topic Assignment (Bag-of-Words Classifier)}
Our primary topic assignment builds on the topic-modeling code used by the
platform implementation, which combines seed-term matching with a supervised
bag-of-words classifier. In the version of X’s code used for this paper,
each topic is defined initially by a small set of seed terms. Preliminary
topic assignment is based on exact and fuzzy matches to these seed terms,
after which a multi-class logistic regression classifier is trained to
expand the set of in-topic notes. 

In the version of X’s code used for this paper, the native topic inventory is limited to
\emph{Ukraine Conflict}, \emph{Gaza Conflict}, \emph{Messi--Ronaldo}, and \emph{Scams}. To
support analyses requiring broader topical coverage, we expand this inventory by introducing
additional candidate topics and associated seed terms. These additional topics and seed terms
are used \emph{only} to augment the training
data for the bag-of-words classifier. The classifier architecture, feature representation,
and regularization follow X’s implementation, with one modification: we lower the minimum
balanced-accuracy threshold for topic inclusion to $0.01$ in order to retain topics with
sparse coverage.
Table~\ref{tab:topic-seed-terms} lists the resulting topic set and seed terms. All topic
labels used in the main-text analyses are produced by this retrained bag-of-words classifier.

Independently of the topic-assignment procedure, we label topics \emph{a priori} as
controversial or non-controversial based on domain knowledge and prior literature.
Table~\ref{tab:topic-controversy} reports the full list of topics and their controversy
classification.

\paragraph{LLM-Based Topic Assignment (Alternative Classification Procedures)}
To assess whether our results depend on the specific topic-assignment mechanism, we conduct
robustness checks using two alternative LLM-based classification procedures. These procedures
are not used in the main analyses and serve only to evaluate sensitivity to the choice of
topic classifier.

Both LLM-based procedures use an identical, fixed inventory of topic labels:
\emph{Ukraine Conflict, Gaza Conflict, Messi Ronaldo, Sports NFL, Sports NBA, Movies TV,
Education, Food Nutrition, Space Astronomy, Health, COVID-19, Climate Environment,
Weather Disasters, Artificial Intelligence, Tech Companies, US Politics, Crime Legal,
Economy Finance, Scams, Other}. In both cases, the input text is the note-level
\texttt{summary} field, and each note is assigned \emph{exactly one} topic label.
Notes with missing summaries are excluded.

\textbf{Approach 1 (Prompted LLM Classification):}
Our first procedure uses a prompted instruction-tuned LLM accessed through Snowflake Cortex’s text completion interface. For each note summary, we supplied an explicit natural-language prompt that enumerates the full label set and enforces a single-label classification objective with a structured output format. Decoding was deterministic (temperature set to zero) to ensure reproducibility across runs. The exact prompt used was:
\begin{quote}
\small\ttfamily
You are a classifier. Choose exactly ONE topic label from: \\
\relax [Ukraine Conflict, Gaza Conflict, Messi Ronaldo, Sports NFL, Sports NBA, Movies TV, \\
Education, Food Nutrition, Space Astronomy, Health, COVID-19, Climate Environment, \\
Weather Disasters, Artificial Intelligence, Tech Companies, US Politics, Crime Legal, \\
Economy Finance, Scams, Other]. \\
Return ONLY valid JSON like \{"topic":"<label>"\}. \\
Summary: <raw>\{\emph{summary}\}</raw>
\end{quote}
The model’s response was parsed to extract the value of the \texttt{topic} field. Outputs that failed to parse as valid JSON or that returned a label outside the admissible set were treated as missing for this procedure (these cases were rare in practice).

\textbf{Approach 2 (Managed LLM Classification):}
Our second procedure uses Snowflake Cortex’s managed multi-class classification function. Given a note summary and the same fixed list of candidate topics, this service returns a ranked list of predicted labels. We assigned the highest-ranked label as the note’s topic, using the following task description to enforce a single-label objective:
\begin{quote}
\small\ttfamily
Classify the topic of a short summary. Choose exactly one topic.
\end{quote}
As with the prompted approach, notes with missing summaries or malformed outputs were excluded for this classification. X's bag-of-words classifier excludes 64.22\% of notes, LLM Approach 1 excludes 28.79\% of notes, and LLM Approach 2 excludes 31.11\% of notes from having a topic. Excluding all notes that are excluded in any one of the models above, the three approaches agree on 10.89\% of the notes. Table \ref{tab:pairwise_agreement_nonmissing} shows pairwise comparisons for the agreement of the different approaches.

\begin{table}[h]
\centering
\renewcommand{\arraystretch}{1.25}
\setlength{\tabcolsep}{10pt}
\begin{tabular}{lccc}
\hline
 & \textbf{Bag-of-Words} & \textbf{Prompted LLM} & \textbf{Managed LLM} \\
\hline
\textbf{Bag-of-Words} & -- & 11.20\% & 10.92\% \\
\textbf{Prompted LLM} & 11.20\% & -- & 74.57\% \\
\textbf{Managed LLM}  & 10.92\% & 74.57\% & -- \\
\hline
\end{tabular}
\caption{Pairwise topic agreement rates between different topic modeling approaches. Percentages are calculated over denominator computed as notes that are included for both approaches being compared.}
\label{tab:pairwise_agreement_nonmissing}
\end{table}

The three topic-assignment procedures differ in coverage and agreement. The bag-of-words classifier excludes 64.22\% of notes outright, reflecting some limitations of X's native topic-modeling implementation, which degrades in precision when extended to our broader topic inventory. This contributes to the lower agreement between the bag-of-words and LLM-based classifiers, relative to the agreement between the LLM-based approaches.

These agreement rates indicate that the bag-of-words and LLM-based procedures are not interchangeable at the topic level. We therefore interpret the LLM-based analyses as sensitivity checks to alternative classification procedures.

\subsubsection{Topic-Based Controversy Results to Alternative Classifiers}
The main-text analyses use topic labels only to stratify notes into \emph{controversial} versus \emph{non-controversial} bins, rather than to make topic-specific claims. To assess sensitivity to the topic-classification procedure, we repeat the analysis using each classifier’s assigned labels independently  (see Figures \ref{fig:topic-prompted}-\ref{fig:topic-managed}). These procedures should be interpreted as alternative operationalizations of controversy rather than interchangeable measurements of the same latent topic labels. Reassuringly, the qualitative pattern is broadly similar across procedures: after the policy change, controversial content is relatively less likely to attain Helpful status than non-controversial content. We therefore interpret the LLM-based results as sensitivity analyses to alternative classification procedures. 

\subsubsection{Topic-Based Difference-in-Differences}
Here we give the details for the DiD in Table 1. of the main text. We construct a weekly panel dataframe for each week $w$ and topic group $g \in \{\text{Controversial, Non-Controversial}\}$. Define $n_{g, w}$ to be the number of unique tweets that receive their first note in week $w$ and $k_{g, w}$ to be the number of unique tweets whose first observed note attains Helpful status in week $w$. We estimate a weekly helpfulness proportion using Jeffreys shrinkage:
\[
    p_{g, w} = \frac{k_{g, w} + 0.5}{n_{g, w} + 1}.
\]
We restrict our analysis to weeks where $n_{g, w} > 5$ for all $g$ (the cutoff 5 is chosen by X).

To summarize the contrast between topic groups, we define the weekly helpfulness gap
\[
    d_w = p_{\text{Controversial}, w} - p_{\text{Non-Controversial}, w}
\]
weighted by $\eta_w = \min_g n_{g, w}$. We estimate short-term and medium-term effects by comparing the mean of $d_w$ for pre-defined time periods before the rollout date (Oct. 1, 2022). The pre weeks are the 12 weeks before Oct. 1, 2022, the post short term weeks are the 12 weeks after Oct. 1, 2022, and the post medium term weeks are weeks 13-26 after Oct. 1, 2022. For each set of post weeks, we estimate
\[
    d_w = \alpha + \beta \cdot \text{Post}_w + \epsilon_w.
\]
We compute HAC standard errors, and report $\hat \beta$, HAC standard error, $95\%$ confidence intervals, and $p$-values for both the short term and medium term post windows.

This specification yields the topic-based difference-in-differences estimates reported in the main text. A positive coefficient \(\beta\) indicates that, after the rollout, non-controversial notes become more likely to attain Helpful status relative to controversial notes. The resulting estimates and confidence intervals are reported in Table~1 of the main text.

\subsubsection{User-Level Engagement Analysis}
The note-level DiD above is descriptive: it shows that controversial notes become relatively less likely to attain Helpful status after the rollout, but it does not by itself distinguish between changes in exposure, changes in rating behavior, and changes in which users choose to rate which content. To address this, we study contributor-level engagement with controversial content.

% \paragraph{Selective engagement with controversial content.}
For each user $u$ and calendar week $w$, let $R_{uw}$ denote the total number of ratings
submitted by user $u$ in week $w$, and let $R^{C}_{uw}$ denote the number of those ratings
assigned to controversial notes. We define
\[
S_{uw} = \frac{R^{C}_{uw}}{R_{uw}},
\]
the share of user $u$’s rating activity in week $w$ devoted to controversial content, and restrict attention to weeks with $R_{uw}>0$.
We estimate the following difference-in-differences specification:
\begin{equation}
\label{eq:did_share}
S_{uw}
=
\alpha_u
+
\gamma_w
+
\beta\,(\text{Minority}_u \times \text{Post}_w)
+
\varepsilon_{uw},
\end{equation}
where $\alpha_u$ are user fixed effects, $\gamma_w$ are week fixed effects,
$\text{Minority}_u$ is an indicator for users classified as minority-aligned based on
pre-rollout behavior, and $\text{Post}_w$ indicates weeks on or after the Rating Impact
rollout.

% \paragraph{Overall engagement.}
To assess whether minority-aligned users reduce rating activity more broadly, we estimate
a parallel difference-in-differences specification using total weekly engagement as the
outcome:
\begin{equation}
\label{eq:did_total}
\log(1+R_{uw})
=
\alpha_u
+
\gamma_w
+
\delta\,(\text{Minority}_u \times \text{Post}_w)
+
\eta_{uw},
\end{equation}
where variables are defined as above.
The log transformation accommodates the skewness of rating counts and allows coefficients
to be interpreted as approximate percentage changes.

Comparing estimates from ~\eqref{eq:did_share} and \eqref{eq:did_total} allows us to
distinguish selective disengagement from overall reductions in activity: a decline in
$S_{uw}$ conditional on $R_{uw}>0$ indicates avoidance of controversial content even among
users who remain active.

% \paragraph{Method.}
For each note, we add a binary indicator for whether the note is controversial or non-controversial based on its topic, as defined by Table \ref{tab:topic-controversy}. We generate topics using LLMs, with the method described above. We consider the cohort of users who were active before the initial roll out of Rating Impact. In this cohort, there are 1052 users. Our cut-off date for treatment is Oct. 1, 2022 when the Rating Impact policy was announced. We take the users' pre-treatment factor to be their most recently computed factor before Oct. 1, 2022. There are 354 users with positive factor and 698 users with negative factor. Those with positive factor ($>0.1$ to avoid borderline cases at the boundary of $0$) are labeled to be in the minority group using a binary indicator variable. We then run the two DiD studies documented by ~\eqref{eq:did_share} and \eqref{eq:did_total}. We consider a 12-week window before and after Oct. 1, 2022. Results are shown in the tables below.

\begin{table}[h]
\centering
\begin{tabular}{lrrrrrr}
\hline
& Parameter & Std.\ Err. & T-stat & P-value & Lower CI & Upper CI \\
\hline
Intercept & 0.6125 & 0.0063 & 96.730 & 0.0000 & 0.6001 & 0.6249 \\
treat     & -0.0753 & 0.0343 & -2.1926 & 0.0284 & -0.1427 & -0.0080 \\
\hline
\end{tabular}
\caption{This table shows the DiD given by \eqref{eq:did_share} for the proportion of controversial notes a user rates in a week. The DiD estimate of -0.0753 indicates that, post Oct. 1, 2022, the proportion of controversial notes rated by minority users decreased by about 7.5 percentage points relative to the change observed for non-minority users. This effect is statistically significant at the 0.05 level (p = 0.0284).}
\end{table}

\begin{table}[h]
\centering
\begin{tabular}{lrrrrrr}
\hline
& Parameter & Std.\ Err. & T-stat & P-value & Lower CI & Upper CI \\
\hline
Intercept & 1.4903 & 0.0129 & 115.37 & 0.0000 & 1.4650 & 1.5157 \\
treat     & 0.1346 & 0.0701 & 1.9216 & 0.0547 & -0.0027 & 0.2720 \\
\hline
\end{tabular}
\caption{This table shows the DiD given by \eqref{eq:did_total} for the log of the total number of notes that a user rates in a week. The DiD estimate of 0.1346 indicates that, post Oct. 1, 2022, the log of the number of notes rated by minority users increases, however this was not statistically significant at the 0.05 level.}
\end{table}

% \subsection*{Factor-Based Classification}

\subsubsection{Factor-Based Controversy Robustness}

As discussed in the main text, we complement the topic-based classification with a factor-based measure of controversy that does not rely on text or topic labels. In the latent-factor model used by Community Notes, the note factor $g_n$ captures systematic differences in how a note is evaluated by different groups of raters. Notes with larger absolute factor values therefore correspond to content that elicits more polarized evaluations across the user base.

As a robustness check, we replicate the analyses from the
\emph{Annotations on Controversial Topics} section of the main text using this factor-based definition of controversy. Specifically, we classify a note as \emph{controversial} if the magnitude of its estimated factor $|g_n|$ lies in the top $20$th percentile of the distribution of note factors, and as \emph{non-controversial} otherwise.\footnote{We choose the $20$th percentile for concreteness; results are qualitatively unchanged when using alternative thresholds, e.g. $10$th percentile.} Because note factors are re-estimated over time, we first average the estimated factor values for each note across the pre-weeks (before the intervention) and then form the distribution over these averaged note factors. This averaging reduces noise from week-to-week re-estimation and ensures a stable ranking of notes by ideological extremity.

Using this classification, we re-compute Wilson confidence intervals and estimate a difference-in-differences (DiD) design to compare changes in note helpfulness around the rollout of Rating Impact for controversial versus non-controversial notes. For controversial notes, the share of tweets receiving at least one note rated helpful decreases from $0.143$ (Wilson CI $[0.071, 0.3267]$) pre-rollout to $0.080$ (Wilson CI $[0.059, 0.107]$) post-rollout, a decline of $6.3$ percentage points. In contrast, for non-controversial notes, the helpful share increases from $0.131$ (Wilson CI $[0.068, 0.238]$) to $0.265$ (Wilson CI $[0.232, 0.302]$), an increase of $13.4$ percentage points. The DiD estimate using a symmetric $14$-week window around October~1,~2022 implies that the probability a non-controversial note is rated helpful increased by about $13.8$ percentage points more than the probability a controversial note is rated helpful ($p=0.031$). Results are shown in Table \ref{tab:did_results} and Figure~\ref{fig:factor-controversy-plot}.

\begin{table}[h]
\centering
\begin{tabular}{lcccc}
\toprule
Weeks (Pre, Post) & Estimate & 95\% CI & $p$-value \\
\midrule
(14, 0-14)  & 0.138 & [0.012, 0.263] & 0.031 \\
(14, 15-28) & 0.239 & [0.153, 0.324] & <1e-5 \\
\bottomrule
\end{tabular}
\captionsetup{justification=centering}
\caption{DiD Estimates. \\ Windowed difference-in-differences estimates of the change in the weekly helpfulness rate difference between non-controversial and controversial content after the October~1,~2022 rollout. The outcome is the difference in weekly proportion of tweets receiving a note that is ultimately rated \textit{Helpful} between tweets with controversial vs. non-controversial notes. The gap increases by 13.8 pp in the first 14 weeks post-rollout and by 23.9 pp in weeks 15--28. Both effects are positive and statistically significant.}
\label{tab:did_results}
\end{table}

To account for the possibility that controversial notes receive more polarized ratings and therefore fail to attain helpful status, we additionally examine engagement at the rating level. Specifically, we compare the proportion of all ratings (regardless of final helpfulness status) assigned to controversial versus non-controversial notes before and after the rollout. We run a Fisher's exact test on a $2 \times 2$ contingency table of controversial and non-controversial ratings in the pre- and post-period, excluding ratings on unclassified topics. The odds ratio quantifies the relative shift in the balance between controversial and non-controversial ratings across the two periods; an odds ratio greater than 1 indicates that controversial ratings were relatively more prevalent than non-controversial ratings in the pre-period compared to the post-period. Results are shown in Table \ref{tab:fisher_results} for the three topic modelers.

Taken together, the topic-based and factor-based analyses point in the same direction. The topic-based approach is easier to interpret substantively, while the factor-based approach captures within-topic heterogeneity and avoids dependence on text classification. The agreement between the two strengthens the conclusion that, after the rollout of Rating Impact, controversial content becomes relatively less likely to receive publicly surfaced annotations.

\begin{table}[h]
\centering
\begin{tabular}{lcccc}
\toprule
Topic Modeler & Pre Controversial Rate & Post Controversial Rate & Odds Ratio & $p$-value \\
\midrule
Bag of Words & 0.897 & 0.853 & 1.505 & $<$0.0001 \\
LLM1         & 0.756 & 0.535 & 2.696 & $<$0.0001 \\
LLM2         & 0.783 & 0.495 & 3.672 & $<$0.0001 \\
\bottomrule
\end{tabular}
\captionsetup{justification=centering}
\caption{Fisher's Exact Test Results by Topic Modeler. \\ Odds ratios from a Fisher's exact test on a $2\times 2$ contingency table of controversial and non-controversial ratings in the pre- and post-period over classified topics. An odds ratio greater than 1 indicates that controversial ratings were relatively more prevalent in the pre-period than in the post-period. All three topic modelers show a significant shift toward non-controversial ratings after the rollout.}
\label{tab:fisher_results}
\end{table}

\newpage

\begin{table}

\centering
\begin{tabularx}{\textwidth}{l l X}
\toprule
\textbf{Category} & \textbf{Topic} & \textbf{Seed terms} \\
\midrule
\textbf{Conflict \& Geopolitics} & Ukraine Conflict & \texttt{ukrain}, \texttt{russia}, \texttt{kiev}, \texttt{kyiv}, \texttt{moscow}, \texttt{zelensky}, \texttt{putin} \\
 & Gaza Conflict & \texttt{israel}, \texttt{palestin}, \texttt{gaza}, \texttt{jerusalem} \\
\addlinespace
\textbf{Sports \& Entertainment} & Messi--Ronaldo & \texttt{messi\textbackslash s}, \texttt{ronaldo} \\
 & Sports (NFL) & \texttt{\textbackslash bnfl\textbackslash b}, \texttt{super\textbackslash sbowl}, \texttt{touchdown}, \texttt{quarterback}, \texttt{patrick\textbackslash smahomes}, \texttt{eagles}, \texttt{cowboys}, \texttt{patriots} \\
 & Sports (NBA) & \texttt{\textbackslash bnba\textbackslash b}, \texttt{basketball}, \texttt{lebron}, \texttt{curry}, \texttt{lakers}, \texttt{warriors}, \texttt{playoffs?}, \texttt{finals?}, \texttt{dunk} \\
 & Movies \& TV & \texttt{movie}, \texttt{film}, \texttt{hollywood}, \texttt{box\textbackslash soffice}, \texttt{netflix}, \texttt{disney\textbackslash +?}, \texttt{marvel}, \texttt{season\textbackslash s\textbackslash d+}, \texttt{episode}, \texttt{series} \\
\addlinespace
\textbf{Education, Food \& Space} & Education & \texttt{student\textbackslash sloan}, \texttt{tuition}, \texttt{college}, \texttt{university}, \texttt{k[-]?12}, \texttt{school\textbackslash sboard} \\
 & Food \& Nutrition & \texttt{\textbackslash bdiet\textbackslash b}, \texttt{calorie}, \texttt{vegan}, \texttt{vegetarian}, \texttt{keto}, \texttt{gluten[-\textbackslash s]?free}, \texttt{nutrition}, \texttt{protein}, \texttt{sugar} \\
 & Space \& Astronomy & \texttt{nasa}, \texttt{spacex}, \texttt{rocket}, \texttt{launch}, \texttt{mars}, \texttt{lunar}, \texttt{satellite}, \texttt{iss}, \texttt{telescope} \\
\addlinespace
\textbf{Science, Health \& Environment} & Health, COVID-19 & \texttt{covid}, \texttt{coronavirus}, \texttt{pandemic}, \texttt{vaccine}, \texttt{booster}, \texttt{omicron}, \texttt{mask}, \texttt{cdc}, \texttt{\textbackslash bwho\textbackslash b}, \texttt{pfizer}, \texttt{moderna} \\
 & Climate \& Environment & \texttt{climate\textbackslash schange}, \texttt{global\textbackslash swarm}, \texttt{greenhouse}, \texttt{carbon}, \texttt{emission}, \texttt{renewable\textbackslash senergy}, \texttt{solar}, \texttt{wind\textbackslash s(?:energy|power)}, \texttt{climate\textbackslash scrisis} \\
 & Weather \& Disasters & \texttt{hurricane}, \texttt{tornado}, \texttt{earthquake}, \texttt{wildfire}, \texttt{heatwave}, \texttt{flood}, \texttt{blizzard}, \texttt{typhoon}, \texttt{storm} \\
\addlinespace
\textbf{Technology \& AI} & Artificial Intelligence & \texttt{\textbackslash bai\textbackslash b}, \texttt{artificial\textbackslash sintelligence}, \texttt{machine\textbackslash slearning}, \texttt{chatgpt}, \texttt{openai}, \texttt{gpt[-\_]?4}, \texttt{deepmind}, \texttt{neural\textbackslash snetwork} \\
 & Tech Companies & \texttt{apple}, \texttt{iphone}, \texttt{google}, \texttt{android}, \texttt{microsoft}, \texttt{windows}, \texttt{amazon}, \texttt{aws}, \texttt{meta}, \texttt{facebook}, \texttt{instagram}, \texttt{threads} \\
\addlinespace
\textbf{Society \& Governance} & U.S. Politics & \texttt{biden}, \texttt{trump}, \texttt{democrat}, \texttt{republican}, \texttt{gop}, \texttt{congress}, \texttt{senate}, \texttt{house\textbackslash sof\textbackslash srepresentatives}, \texttt{white\textbackslash shouse}, \texttt{supreme\textbackslash scourt}, \texttt{2024\textbackslash selection} \\
 & Crime \& Legal & \texttt{indict}, \texttt{lawsuit}, \texttt{legal}, \texttt{court}, \texttt{judge}, \texttt{jury}, \texttt{police}, \texttt{arrest}, \texttt{felony}, \texttt{trial} \\
\addlinespace
\textbf{Economics \& Finance} & Economy \& Finance & \texttt{inflation}, \texttt{recession}, \texttt{interest\textbackslash srate}, \texttt{stock\textbackslash smarket}, \texttt{dow\textbackslash sjones}, \texttt{nasdaq}, \texttt{gdp}, \texttt{unemployment}, \texttt{cpi} \\
\addlinespace
\textbf{Scams \& Platform Integrity} & Scams & \texttt{scam}, \texttt{undisclosed\textbackslash sad}, \texttt{terms\textbackslash sof\textbackslash sservice}, \texttt{help\textbackslash .x\textbackslash .com}, \texttt{x\textbackslash .com/tos}, \texttt{engagement\textbackslash sfarm}, \texttt{spam}, \texttt{gambling}, \texttt{apostas}, \texttt{apuestas}, \texttt{dropship}, \texttt{drop\textbackslash sship}, \texttt{promotion} \\
\bottomrule
\end{tabularx}
\caption{Our expanded set of topics and seed terms. The regex syntax follows Python \texttt{re} conventions.}
\label{tab:topic-seed-terms}
\end{table}

\begin{table}
\centering
\begin{tabularx}{\textwidth}{l X}
\toprule
\textbf{Category} & \textbf{Topics} \\
\midrule
Controversial &
\texttt{USPolitics}, \texttt{UkraineConflict}, \texttt{GazaConflict}, \texttt{CrimeLegal}, \texttt{HealthCovid}, \texttt{EconomyFinance}, \texttt{MessiRonaldo}, \texttt{politics}, \texttt{economy}, \texttt{health} \\
\addlinespace
Non-controversial &
\texttt{SpaceAstronomy}, \texttt{ClimateEnvironment}, \texttt{EntertainmentMoviesTV}, \texttt{WeatherDisasters}, \texttt{TechCompanies}, \texttt{SportsNFL}, \texttt{SportsNBA}, \texttt{FoodNutrition}, \texttt{Education}, \texttt{Scams}, \texttt{ArtificialIntelligence}, \texttt{science}, \texttt{other} \\
\bottomrule
\end{tabularx}
\caption{Manual categorization of topics as controversial or non-controversial.}
\label{tab:topic-controversy}
\end{table}

\begin{figure}
    \centering
    \includegraphics[width=.6\linewidth]{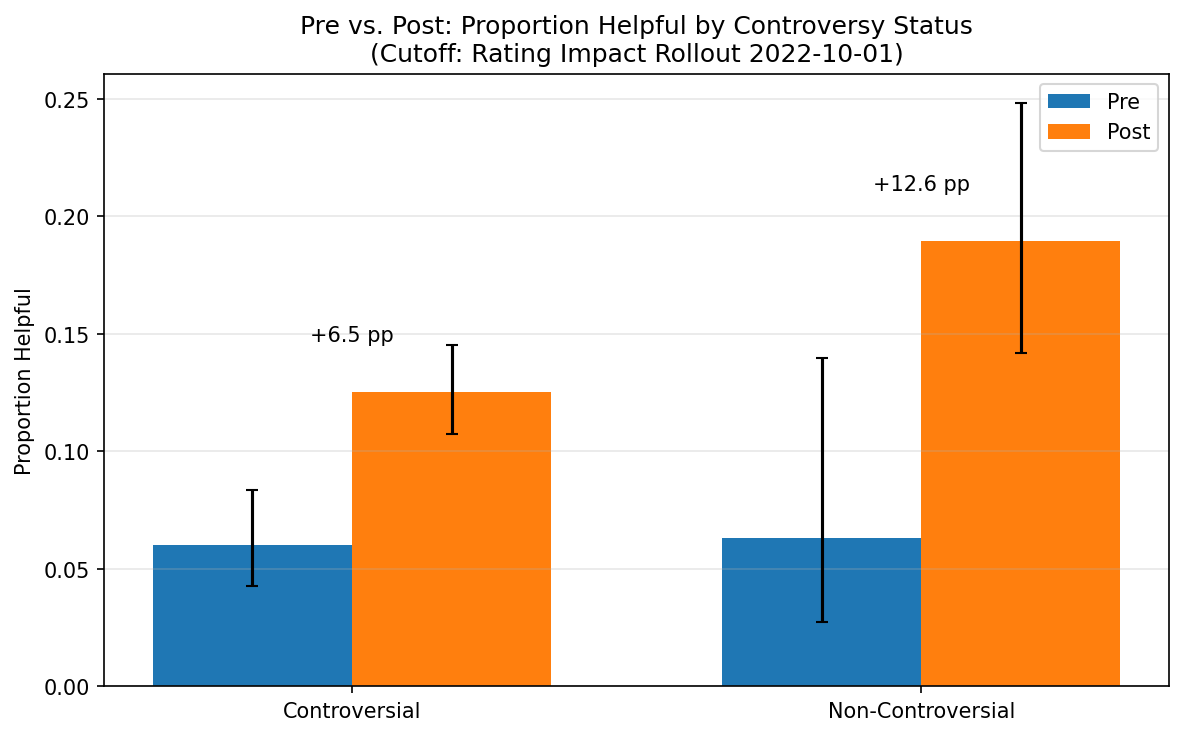}
    \caption{Pre-post change in the share of notes with final status Helpful by controversy category using topic-based controversy with topics classified using the prompted LLM approach around the Rating Impact rollout (cutoff: 2022-10-01). Bars show the mean proportion in the roughly 20 weeks before (Pre, blue) and after (Post, orange) the cutoff; error bars are 95\% CIs. The difference between controversial notes and non-controversial notes is stark; controversial notes face a dramatic decline in helpful ratings post-rollout, whereas non-controversial notes have increased helpful ratings.}
    \label{fig:topic-prompted}
\end{figure}

\begin{figure}
    \centering
    \includegraphics[width=.6\linewidth]{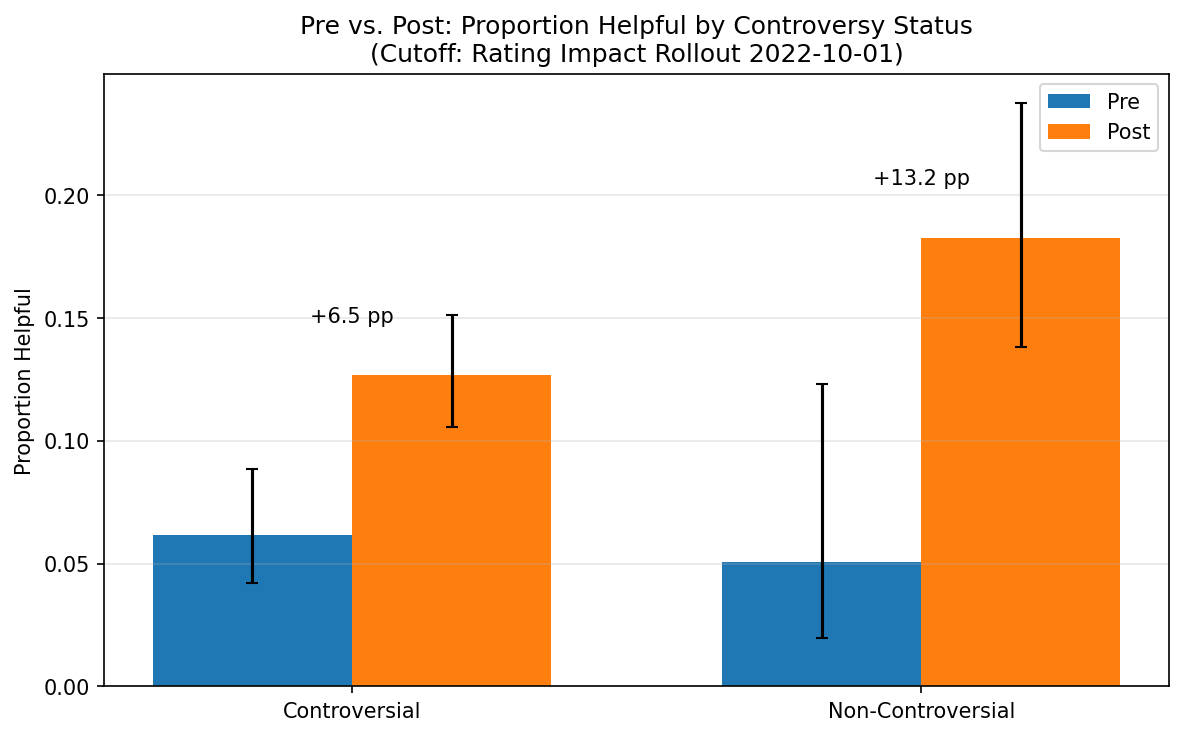}
    \caption{Pre-post change in the share of notes with final status Helpful by controversy category using topic-based controversy with topics classied using the managed LLM approach around the Rating Impact rollout (cutoff: 2022-10-01). Bars show the mean proportion in the roughly 20 weeks before (Pre, blue) and after (Post, orange) the cutoff; error bars are 95\% CIs. The difference between controversial notes and non-controversial notes is stark; controversial notes face a dramatic decline in helpful ratings post-rollout, whereas non-controversial notes have increased helpful ratings.}
    \label{fig:topic-managed}
\end{figure}

\begin{figure}
    \centering
    \includegraphics[width=.6\linewidth]{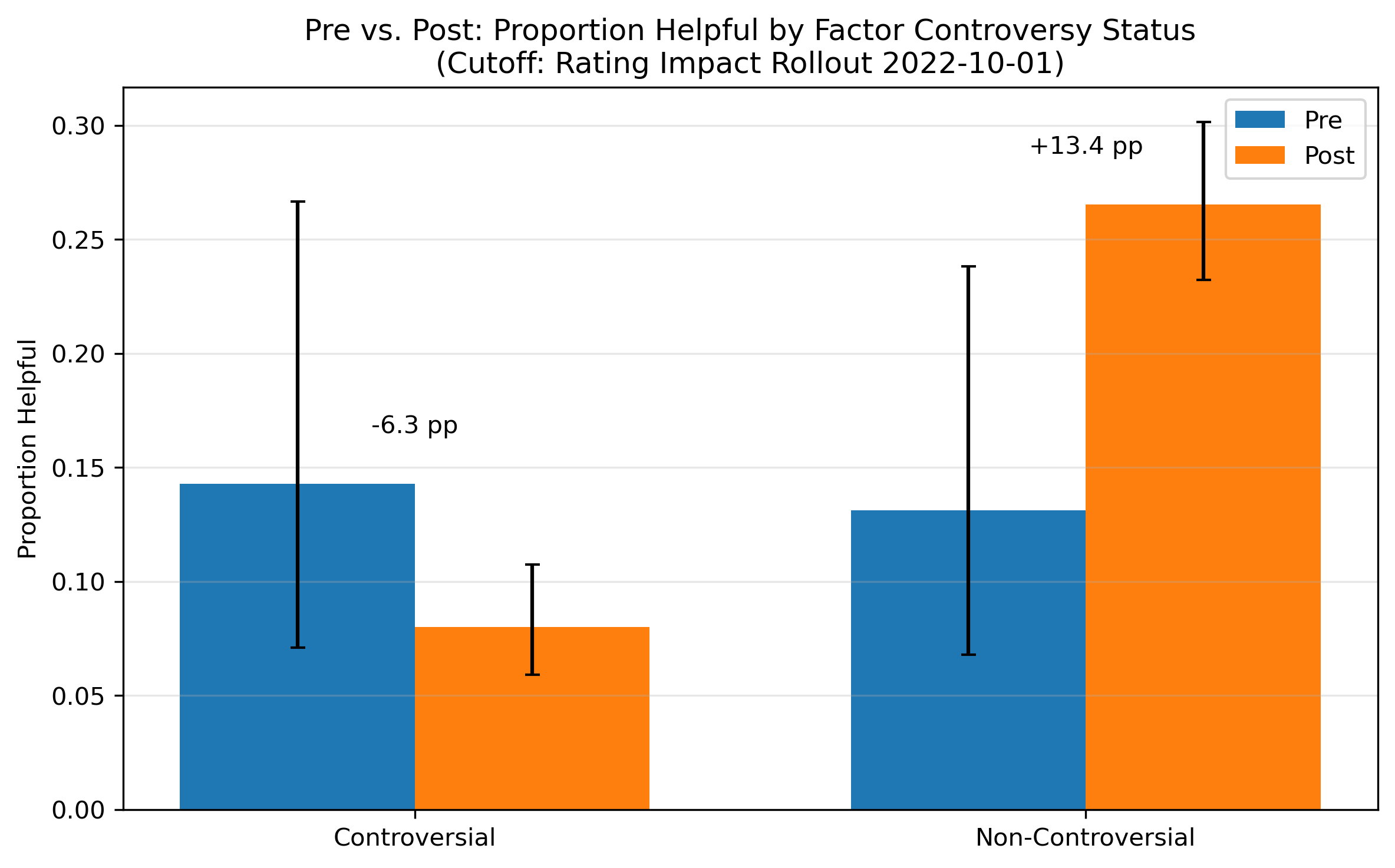}
    \caption{Pre-post change in the share of notes with final status Helpful by controversy category using \textit{factor-based} controversy around the Rating Impact rollout (cutoff: 2022-10-01). Bars show the mean proportion in the roughly 20 weeks before (Pre, blue) and after (Post, orange) the cutoff; error bars are 95\% CIs. The difference between controversial notes and non-controversial notes is stark; controversial notes face a dramatic decline in helpful ratings post-rollout, whereas non-controversial notes have increased helpful ratings.}
    \label{fig:factor-controversy-plot}
\end{figure}

\section{Two-Stage Weighted Matrix Factorization: Empirical Implementation}\label{sec:twostage imp}
In this section we give the implementation details for the two-stage matrix factorization approach documented in the empirical section (Fig. 5) of the main text. As proof of concept that our algorithm improves upon the predictive performance of the current algorithm, we test our algorithm on ratings data from Jan. 1, 2023 to June 1, 2024, dates during which there is a larger volume of ratings.

\subsection{Algorithm and Implementation Details}
The empirical comparison is designed to be conservative. Both the baseline algorithm and the two-stage algorithm are run on the same weekly filtered datasets, with the same underlying regularized matrix factorization objective and the same sign convention for the latent factor. The only modification is that the second-stage procedure reweights contributors using the inverse of their first-stage residual variance.

Consistent with the Community Notes implementation, we restrict attention each week to notes that have received at least five ratings and raters who have rated at least ten notes. We index weeks by \(t\), let \(D_t^{\mathrm{all}}\) denote all ratings observed up to and including week \(t\), and let \(D_t\) denote the filtered version of this cumulative dataset. We write \(\Theta_t\) for the weekly latent-factor estimates from the platform's baseline matrix factorization algorithm and \(\Theta_t^{\mathrm{ts}}\) for the corresponding estimates from the two-stage procedure.

The algorithm proceeds as follows.

\paragraph{Warm start.}
Before the weekly evaluation loop begins, we fit the existing matrix factorization algorithm to all ratings observed prior to June~1,~2023 and use the resulting latent quantities as a warm start. This reduces transient instability in the factor estimates early in the evaluation period.

\paragraph{Stage 1: baseline matrix factorization.}
For each week \(t\) from June~1,~2023 to June~1,~2024, we fit the platform's regularized matrix factorization algorithm to the filtered cumulative dataset \(D_t\), initialized at the previous week's fit. Denote the resulting first-stage estimates by
\[
\Theta_t^{(1)} = (\hat\mu_t,\hat h_t,\hat i_t,\hat f_t,\hat g_t).
\]

For each observed user--note pair \((u,n)\in D_t\), we compute the first-stage residual
\[
e_{un}^{(1)} := r_{un} - \hat r_{un}\!\left(\Theta_t^{(1)}\right),
\]
where \(\hat r_{un}(\Theta_t^{(1)})\) is the fitted rating estimate under the first-stage model. For each contributor \(u\), let \(N_t(u)\) denote the set of notes rated by \(u\) in the filtered cumulative dataset at week \(t\). We then estimate the contributor-specific residual variance by
\[
\hat\sigma_{u,t}^2 := \var\!\bigl(\{e_{un}^{(1)}: n\in N_t(u)\}\bigr).
\]
The corresponding second-stage weight is
\[
\hat w_{u,t} := \bigl(\max\{\hat\sigma_{u,t}^2,10^{-4}\}\bigr)^{-1}.
\]
The floor at \(10^{-4}\) prevents extremely large weights when a contributor's empirical residual variance is near zero.

\paragraph{Stage 2: weighted matrix factorization.}
We then refit the same regularized matrix factorization model on the same filtered cumulative dataset \(D_t\), but now using contributor-specific weights:
\[
\Theta_t^{(2)}
=
\arg\min_{\Theta}
\sum_{(u,n)\in D_t}
\hat w_{u,t}\bigl(r_{un}-\hat r_{un}(\Theta)\bigr)^2 + \lambda \|\Theta\|_2^2.
\]
We initialize this second-stage optimization at the first-stage solution \(\Theta_t^{(1)}\). The resulting fitted quantities are stored as
\[
\Theta_t^{\mathrm{ts}} := \Theta_t^{(2)}.
\]

Algorithm \ref{alg:two_stage_mf} outputs two dataframes, one with latent factors from X's existing matrix factorization algorithm, and one with latent factors from the two-stage algorithm.

\paragraph{Sign convention.}
Because matrix factorization identifies the latent factor only up to a global sign, we orient the factors after each weekly fit so that the majority of users have negative latent factor. This keeps the interpretation of majority- and minority-aligned contributors consistent across weeks. The sign-fixed second-stage estimates are also used as the warm start for the next week's fit.

\begin{algorithm2e}[H]
\SetAlgoLined
\DontPrintSemicolon
\caption{Two-Stage Matrix Factorization}
\label{alg:two_stage_mf}

\KwIn{\texttt{ratings\_df}: a dataframe containing rater, note pairs with the date of the rating, the rating (\textit{Helpful, Somewhat Helpful, Not Helpful}) and the estimated helpfulness level based on previous latent factors}
\KwOut{$\Theta_t$ and $\Theta_t^\mathrm{ts}$: dataframes containing weekly note and rater latent factor estimates from X's matrix factorization algorithm and from the two-stage algorithm, respectively}
% \KwParam{Min ratings per note $K_n=5$, Min ratings per user $K_u=10$, Regularization $\lambda$}

\BlankLine
\tcc{Warm start initialization}
Let $\mathcal{D}_{0} \leftarrow \{(u, n, r_{un}) \mid t < T_{\mathrm{start}}\}$\;

$\Theta_{\mathrm{prev}} \leftarrow \text{TrainExistingMF}(\mathcal{D}_{0})$ 
% \tcp*{Initial warm start}

\BlankLine
\tcc{Weekly latent factor estimation loop}
\For{week $t \leftarrow T_{\mathrm{start}}$ \KwTo $T_{\mathrm{end}}$}{
    Let $\mathcal{D}_t^{\mathrm{all}}\leftarrow \{(u, n, r_{un}) \mid \text{timestamp} \le t\}$\;
    
    \BlankLine
    % \tcp{Step 1: Iterative Density Filtering}
    $\mathcal{D}_t \leftarrow \text{Filter}(\mathcal{D}_t^{\mathrm{all}})$ s.t. within $\mathcal D_t$ each note has at least 5 ratings and each user has rated as least 10 notes \;
    
    \BlankLine
    \tcp{Stage 1: X's Matrix Factorization Algorithm}
    Initialize $\Theta^{(1)}$ with $\Theta_{\mathrm{prev}}$\;
    
    $\Theta^{(1)} \leftarrow \operatorname*{argmin}_{\Theta} \sum_{(u,n) \in \mathcal{D}_t} (r_{un} - \hat{r}_{un}(\Theta))^2 + \lambda \|\Theta\|^2$\;

    $\Theta_t \leftarrow \text{append}(\Theta_t, \Theta^{(1)})$
    
    \BlankLine
    % \tcp{Weekly factor estimation loop}
    \For{each user $u \in \mathcal{D}_t$}{
        Calculate residuals: $e_{un} \leftarrow r_{un} - \hat{r}_{un}(\Theta^{(1)})$\;
        
        Estimate variance: $\sigma_u^2 \leftarrow \text{Var}(\{e_{un} \mid n \in \text{set of notes rated by user $u$}\})$\;
        
        Compute weight: $w_u \leftarrow (\max(\sigma_u^2, 10^{-4}))^{-1}$\;
    }
    
    \BlankLine
    % \tcp{Step 4: Weighted Refinement (Stage 2)}
    Initialize $\Theta^{(2)}$ with $\Theta^{(1)}$\;
    
    $\Theta^{(2)} \leftarrow \operatorname*{argmin}_{\Theta} \sum_{(u,n) \in \mathcal{D}_t} w_u (r_{un} - \hat{r}_{un}(\Theta))^2 + \lambda \|\Theta\|^2$\;

    $\Theta_t^{\mathrm{ts}} \leftarrow \text{append}(\Theta_t^{\mathrm{ts}}, \Theta^{(2)})$
    
    \BlankLine
    % \tcp{Step 5: Alignment and Forecast}
    $\Theta_{\mathrm{prev}} \leftarrow \text{FixFactorSigns}(\Theta^{(2)})$ \tcp*{Flip signs if needed}
}
\end{algorithm2e}

\subsection{Prediction Targets and Evaluation Protocol}
We compare the baseline and two-stage procedures using out-of-sample one-week-ahead predictions. For each week \(t\), the baseline fit \(\Theta_t\) and the two-stage fit \(\Theta_t^{\mathrm{ts}}\) are computed using ratings observed up to and including week \(t\). These week-\(t\) estimates are then used to predict ratings observed in the following week.

Formally, for every rating \(r_{un}\) observed in week \(t+1\), we compute two fitted values:
\[
\hat r_{un}^{\mathrm{MF},t} := \hat r_{un}(\Theta_t),
\qquad
\hat r_{un}^{\mathrm{TS},t} := \hat r_{un}(\Theta_t^{\mathrm{ts}}).
\]
The corresponding one-week-ahead residuals are
\[
e_{un}^{\mathrm{MF},t+1} := r_{un} - \hat r_{un}^{\mathrm{MF},t},
\qquad
e_{un}^{\mathrm{TS},t+1} := r_{un} - \hat r_{un}^{\mathrm{TS},t}.
\]

For each evaluation week, we summarize predictive performance using several standard statistics derived from these out-of-sample residuals. In particular, the main text reports the weekly mean absolute residual
\[
\mathrm{MAR}_t
:=
\frac{1}{|\mathcal R_{t+1}|}
\sum_{(u,n)\in \mathcal R_{t+1}}
|e_{un}^{t+1}|,
\]
and the weekly median absolute residual
\[
\mathrm{MedAR}_t
:=
\mathrm{median}\bigl\{|e_{un}^{t+1}|:(u,n)\in\mathcal R_{t+1}\bigr\},
\]
computed separately for the baseline and two-stage procedures, where \(\mathcal R_{t+1}\) denotes the set of ratings observed in week \(t+1\). These are the quantities shown in the main-text figure comparing one-week-ahead predictive performance of the two methods.

We also report the corresponding one-week-ahead MSE,
\[
\mathrm{MSE}_t
:=
\frac{1}{|\mathcal R_{t+1}|}
\sum_{(u,n)\in \mathcal R_{t+1}}
\bigl(e_{un}^{t+1}\bigr)^2,
\]
but our main empirical comparison emphasizes the absolute-residual metrics because they are easier to interpret on the scale of the platform's fitted rating scores.

\newpage

\section{Theory Appendix}\label{sec:proofs}

This appendix collects the technical results underlying the theoretical section of the main text. Throughout, we study a stylized regularized rank-1 matrix factorization model that isolates the effect of conformity incentives on estimation. This stylized estimator is intentionally simpler than the full Community Notes production system used in the empirical sections. The distinction is important: the empirical reconstruction follows the open-source Community Notes pipeline, whereas the theory abstracts to a regularized rank-1 factorization in order to characterize what object is recovered when contributors strategically anticipate the platform's eventual consensus.

The appendix is organized to mirror the logic of the main text. We first state the estimation problem and regularity conditions. We then prove consistency of note helpfulness under private-signal reporting, show that conformity makes estimators biased, characterize the resulting distortion in user factors and minority compression, and finally establish a statistical guarantee for the two-stage weighted estimator.

\subsection{Estimation Model and Assumptions}

We begin by restating the estimator analyzed in the proofs.

 Let \(\Omega\subseteq [U]\times[N]\) denote the set of observed user--note ratings. The platform fits the regularized rank-1 model
\begin{equation}
\label{eq:ridge-opt}
\arg\min_{\mu,h,i,f,g}
\sum_{(u,n)\in\Omega}
\bigl(r_{un}-(\mu+h_u+i_n+f_u g_n)\bigr)^2
+
\lambda_h\|h\|_2^2
+
\lambda_i\|i\|_2^2
+
\lambda_f\|f\|_2^2
+
\lambda_g\|g\|_2^2.
\end{equation}

For simplicity of notation, for a parameter vector $\theta = (\mu, h, i, f, g)$ and a set of regularization parameters $\lambda = (\lambda_h, \lambda_i, \lambda_f, \lambda_g)$, we write
\[
    L(\theta)=\sum_{(u,n)\in\Omega}\bigl(r_{un}-(\mu+h_u+i_n+f_u g_n)\bigr)^2, \qquad G_\lambda(\theta) = \lambda_h\|h\|_2^2 + \lambda_i\|i\|_2^2 + \lambda_f\|f\|_2^2 + \lambda_g\|g\|_2^2.
\]
Here and below, for a matrix \(A\in\mathbb{R}^{U\times N}\), we write $\|A\|_*$ for the nuclear norm of a matrix, $\|A\|_F$ for the Frobenius norm, and $\|A\|_\infty$ for the entrywise maximum norm.

We use two sets of regularity conditions. Assumption~\ref{asmp:random-latents} governs the latent components, noise, and sampling scheme under private-signal reporting. Assumption~\ref{ass:conformity} adds conditions on the anticipated consensus \(m_n\) and conformity weights $\rho_n = \rho(c_n)$ for the conformity regime.

% \paragraph{Assumptions used in the proof (move to theoerem statment)}
\begin{assumption}[Random latent components, boundedness, and MCAR sampling]
\label{asmp:random-latents}
Assume:
\begin{enumerate}
    \item \(\mu \in \mathbb R\) is fixed.
    \item \(\{(h_u,f_u)\}_{u=1}^U\) are i.i.d., with
    \[
    \mathbb E[h_u]=0,\qquad \mathbb E[f_u]=c,
    \]
    \[
    \var(h_u)=\sigma_h^2>0,\qquad \var(f_u)=\sigma_f^2>0,
    \]
    and \(h_u\) is independent of \(f_u\). Moreover, \(|h_u|\le B_h\) and \(|f_u|\le B_f\) almost surely.
    \item \(\{(i_n,g_n)\}_{n=1}^N\) are i.i.d., with
    \[
    \mathbb E[i_n]=0,\qquad \mathbb E[g_n]=0,
    \]
    \[
    \var(i_n)=\sigma_i^2>0,\qquad \var(g_n)=\sigma_g^2>0,
    \]
    and \(i_n\) is independent of \(g_n\). Moreover, \(|i_n|\le B_i\) and \(|g_n|\le B_g\) almost surely.
    \item The user-side variables \(\{(h_u,f_u)\}_{u=1}^U\) are independent of the note-side variables \(\{(i_n,g_n)\}_{n=1}^N\).
    \item The noise variables \(\epsilon_{un}\) are independent, mean-zero, \(\sigma_\epsilon\)-sub-Gaussian, and independent of all other latent variables (see e.g., \cite{vershynin2012introduction} Definition 5.7).
    \item Each entry \((u,n)\) is observed independently with probability \(p\in(0,1)\), and \(U \asymp N\).
\end{enumerate}
\end{assumption}

\begin{assumption}[Conformity parameters]
\label{ass:conformity}
In addition to Assumption~\ref{asmp:random-latents}, assume:
\begin{enumerate}
    \item \(m_n\) are i.i.d.\ draws from a bounded distribution on a compact interval with positive finite variance.
    \item \(\rho_n:=\rho(c_n)\) is a weakly decreasing function in $c_n$. 
    \item Let $\bar \rho_N = N^{-1}\sum_n \rho_n$. Assume that $\bar \rho_N \to \bar \rho \in [0, 1]$ a deterministic constant.
    \item For each note \(n\), the random variables \(m_n\) and \(g_n\) are independent.
\end{enumerate}
\end{assumption}

\subsection{Private-Signal Reporting: Consistency of Note Helpfulness (Proof of Theorem 1)}\label{proof:thm1}
We first study the benchmark case in which contributors report their private signals, so the platform observes a noisy version of the latent signal matrix
\[
S=\mu \mathbf{1}_U \mathbf{1}_N^\top + h\mathbf{1}_N^\top + \mathbf{1}_U i^\top + f g^\top.
\]
The proof proceeds in three steps. First, we show that the note intercept can be uniquely decoded from the fitted matrix by means of a canonical centered decomposition. Second, we show that a global solution to \eqref{eq:ridge-opt} is consistent for the true intercept and factor terms under the canonical decomposition. Finally, we connect the estimator of \eqref{eq:ridge-opt} to a corresponding nuclear-norm matrix completion problem.

\paragraph{Centered decomposition}
Here, we identify the intercept terms from a complete signal matrix. For any tuple \(\theta=(\mu,h,i,f,g)\), let
\[
S(\theta) = \mu \mathbf{1}_U \mathbf{1}_N^\top + h\mathbf{1}_N^\top + \mathbf{1}_U i^\top + f g^\top.
\]
We use the canonical centered decomposition of \(S(\theta)\) to recover centered intercept and factor components directly from the matrix itself.

% The next lemma is the key identification step. 

\begin{lemma}\label{lem:canonical-centering}
    Every matrix \(S\) with entries of the form
    \[
        s_{un} = \mu + h_u + i_n + f_u g_n
    \]
    admits a canonical representation with
    \[
        s_{un} = \mu^{c} + h_u^{c} + i_n^{c} + f_u^{c} g_n^{c}
    \]
    such that 
    \[
        \mathbf 1_U^{\top} h^{c}=0, \qquad 
        \mathbf 1_N^{\top} i^{c}=0, \qquad
        \mathbf 1_U^{\top} f^{c}=0, \qquad
        \mathbf 1_N^{\top} g^{c}=0.
    \]
    Since the factors \(f, g\) are only identifiable up to sign and scale, we may also choose a canonical sign and scaling representation:
    \[
        \frac{1}{U}\|f^{c}\|_2^2 = 1, \qquad \langle f^{c}, f\rangle \geq 0.
    \]
\end{lemma}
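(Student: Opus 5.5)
The plan is to build the representation explicitly by double-centering the rank-one term and absorbing its non-centered parts into the intercepts. Then we show that the centered pieces are determined by $S$ alone, which is the property that justifies the word ``canonical''.

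\emph{Construction.} Write $\bar h = U^{-1}\mathbf 1_U^\top h$, $\bar f = U^{-1}\mathbf 1_U^\top f$, $\bar i = N^{-1}\mathbf 1_N^\top i$, $\bar g = N^{-1}\mathbf 1_N^\top g$. Set $\tilde f = f - \bar f\mathbf 1_U$ and $\tilde g = g-\bar g\mathbf 1_N$. Expanding the product gives
\[
f_u g_n = \tilde f_u\tilde g_n + \bar g\,\tilde f_u + \bar f\,\tilde g_n + \bar f\bar g .
\]
We therefore define
\[
\mu^c=\mu+\bar h+\bar i+\bar f\bar g,\qquad h^c_u = h_u-\bar h+\bar g\,\tilde f_u,\qquad i^c_n = i_n-\bar i+\bar f\,\tilde g_n .
\]
Substituting shows directly that $s_{un}=\mu^c+h^c_u+i^c_n+\tilde f_u\tilde g_n$. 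Each of $h^c$, $i^c$, $\tilde f$, $\tilde g$ is a sum of centered vectors, so all four centering constraints hold.

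\emph{Normalization.} If $\tilde f\neq 0$, let $s=\|\tilde f\|_2/\sqrt U>0$ and put $f^c=\tilde f/s$, $g^c=s\tilde g$. The product $f^c (g^c)^\top$ is unchanged, and $U^{-1}\|f^c\|_2^2=1$. Because $f^c$ is centered, $\langle f^c,f\rangle=\langle f^c,\tilde f\rangle = \|\tilde f\|_2^2/s>0$, so the sign condition holds automatically. If the representation is transported to another matrix or another reference vector, a sign flip $(f^c,g^c)\mapsto(-f^c,-g^c)$ restores the condition without changing the product.

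\emph{Degenerate case.} If $\tilde f=0$, the interaction has already been absorbed into $h^c$ and $i^c$, so $\tilde f\tilde g^\top=0$. We take $g^c=0$ and any centered $f^c$ with $U^{-1}\|f^c\|_2^2=1$, flipping its sign if necessary so that $\langle f^c,f\rangle\ge 0$; this requires $U\ge 2$.

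\emph{Uniqueness.} This is what allows $i^c$ to be decoded from $S$. Let $P_U=I-U^{-1}\mathbf 1\mathbf 1^\top$ and $P_N=I-N^{-1}\mathbf 1\mathbf 1^\top$. Under the centering constraints:
\begin{itemize}
\item $\mu^c$ is the grand mean of $S$;
\item $h^c$ is the vector of row means minus the grand mean;
\item $i^c$ is the vector of column means minus the grand mean;
\item $f^c(g^c)^\top=P_U S P_N$.
\end{itemize}
So all of these are functions of $S$ only. When $P_USP_N\neq0$, this rank-one matrix determines $(f^c,g^c)$ up to a common scalar; the norm constraint fixes the scalar up to sign, and the inner-product condition fixes the sign, except in the measure-zero tie $\langle f^c,f\rangle=0$.

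The main obstacle is not computational. It is handling the degenerate cases, namely $\tilde f=0$, $P_USP_N=0$, or the sign tie, cleanly enough that the later consistency argument can apply the decoding map $S\mapsto i^c$ to the fitted matrix. For that purpose, the key output of the lemma is the explicit linear formula $i^c_n=\bar s_{\cdot n}-\bar s_{\cdot\cdot}$. This formula is continuous in $S$ and needs no identification of the factors at all.
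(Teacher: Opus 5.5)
Your proof is correct and follows the paper's argument essentially step for step. Both center $f$ and $g$, absorb the cross terms $\bar g\,\tilde f$ and $\bar f\,\tilde g$ into $h^c$ and $i^c$ (and $\bar f\bar g$ into $\mu^c$), rescale by $f^c=\tilde f/s$, $g^c=s\tilde g$, and decode $\mu^c, h^c, i^c$ from the grand, row, and column means of $S$. Your additions are valid small refinements rather than a different route: you note that $\langle f^c,f\rangle=\|\tilde f\|_2^2/s>0$ makes the paper's sign flip unnecessary for the original $f$, you handle $\tilde f=0$ so that the normalization $U^{-1}\|f^c\|_2^2=1$ actually holds (the paper only says the rank-one term vanishes), and you give the identity $f^c(g^c)^\top=P_USP_N$.
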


\begin{proof}
    Let
    \[
        \bar h = \frac{1}{U}\mathbf 1_U^{\top}h, \qquad
        \bar i = \frac{1}{N}\mathbf 1_N^{\top}i,\qquad
        \bar f = \frac{1}{U}\mathbf 1_U^{\top}f, \qquad
        \bar g = \frac{1}{N}\mathbf 1_N^{\top}g,
    \]
    and define
    \begin{align*}
        \mu^c &= \mu + \bar h + \bar i + \bar f\,\bar g,\\
        h^c &= h - \bar h \mathbf 1_U + \bar g\bigl(f-\bar f\mathbf 1_U\bigr),\\
        i^c &= i - \bar i \mathbf 1_N + \bar f\bigl(g-\bar g\mathbf 1_N\bigr),\\
        f^c &= f-\bar f\mathbf 1_U, \qquad g^c = g-\bar g\mathbf 1_N.
    \end{align*}
    Writing $ f = f^c + \bar f \mathbf 1_U, g = g^c + \bar g \mathbf 1_N$, we have
    \[
        f g^{\top} = f^c (g^c)^{\top} + \bar f \mathbf 1_U (g^c)^{\top} + \bar g f^c \mathbf 1_N^{\top} + \bar f \bar g \mathbf 1_U \mathbf 1_N^{\top}.
    \]
    Substituting and collecting terms gives
    \[
        S(\theta) = \mu^c \mathbf 1_U \mathbf 1_N^{\top} +
        h^c \mathbf 1_N^{\top} + \mathbf 1_U (i^c)^{\top} + f^c (g^c)^{\top}.
    \]
    The centering conditions
    \[
        \mathbf 1_U^{\top}h^c=0, \qquad 
        \mathbf 1_N^{\top}i^c=0, \qquad
        \mathbf 1_U^{\top}f^c=0, \qquad
        \mathbf 1_N^{\top}g^c=0
    \]
    are immediate from the definitions. Multiplying the centered decomposition on the left and right by vectors of ones then yields
    \[
        \mu^c = \frac{1}{UN}\mathbf 1_U^{\top}S\mathbf 1_N,
    \]
    \[
        i^c = \frac{1}{U}S^{\top}\mathbf 1_U - \mu^c \mathbf 1_N, \qquad
        h^c = \frac{1}{N}S\mathbf 1_N - \mu^c \mathbf 1_U.
    \]

    Finally, the rank-one term is unchanged under reciprocal rescaling:
    \[
        f^c g^{\top}
        =
        \Bigl(\frac{f^c}{c}\Bigr)\Bigl(c \, g^c\Bigr)^{\top},
        \qquad c>0.
    \]
    For \(f^c \neq 0\), take $c = \sqrt{\frac{1}{U}\|f^c\|_2^2}$ and redefine $f^c \leftarrow f^c/c$,  $g^c \leftarrow c\,g^c.$ Then $ \frac{1}{U}\|f^c\|_2^2 = 1.$ Since simultaneously replacing \((f^c,g^c)\) by \((-f^c,-g^c)\) leaves the product unchanged, we may also choose the sign so that $\langle f^c, f\rangle \geq 0.$ These operations preserve both the representation and the centering conditions. If \((f^c) = 0\), then the rank-one term vanishes.
\end{proof}

In the following we use $\theta^0 = (\mu^0, h^0, i^0, f^0, g^0)$ to denote the true parameter values, $\theta^c = (\mu^c, h^c, i^c, f^c, g^c)$ to denote the canonically centered representation of the true parameter values, and $\hat \theta, \hat \theta^c$ to denote the global minimizer of \eqref{eq:ridge-opt} and its canonically centered representation, respectively.

% unless explicitly otherwise stated, we work in the setting where all parameters, true and estimated, are in their canonical centered decomposition. Thus, for notation we use $\mu, h_u, i_n, f_u, g_n$ to denote the canonical decomposition of the true parameters. At the end of each section, we remark on how our results transfer back to the original factors and intercepts.

\begin{remark}\label{rmk:canonical-centered}
    Under the canonical centered decomposition of the matrix $S(\theta^0)$, it is straightforward to see that under Assumption \ref{asmp:random-latents} the centered intercepts and factors $\mu^c, h_u^c, i_n^c, f_u^c, g_n^c$ are still bounded random variables with strictly positive, finite variance. We use the notation $\sigma_h^c, \sigma_i^c, \sigma_f^c, \sigma_g^c$ to denote the standard deviations of the canonical centered factors in the following sections. Note that according to the decomposition given in Lemma \ref{lem:canonical-centering}, $\sigma_f^c = 1$.
\end{remark}

\begin{corollary}\label{cor:asmp-center}
    Under Assumption \ref{asmp:random-latents}, 
    \[
        \bar h^0 = o_p(1), \qquad \bar i^0 = o_p(1), \qquad \bar f^0 \pto c, \qquad \bar g^0 = o_p(1).
    \]
    Thus, as a corollary of Lemma \ref{lem:canonical-centering} we have that, 
    \[
        \frac{1}{N}\sum_n \omega_{un} g_n^c = o_p(1), \qquad \frac 1N \sum_n \omega_{un} (g_n^c)^2 \pto p\sigma_g^2, \qquad \frac{1}{N}\sum_n \omega_{un} g_n^c \epsilon_{un} = o_p(1).
    \]
\end{corollary}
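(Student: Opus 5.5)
The plan is to reduce every claim to the weak law of large numbers for i.i.d. bounded (or sub-Gaussian) summands, combined with Slutsky's lemma and the continuous mapping theorem. Under Assumption~\ref{asmp:random-latents}, the sequences $\{h_u\}$, $\{f_u\}$, $\{i_n\}$ and $\{g_n\}$ are each i.i.d. and bounded, so they have finite second moments. The WLLN then gives, as $U,N\to\infty$,
\[
\bar h^0 \pto \E[h_u]=0,\qquad \bar f^0\pto \E[f_u]=c,\qquad \bar i^0\pto 0,\qquad \bar g^0\pto 0.
\]
This is the first display. The same argument applied to $f_u^2$ gives $\frac1U\|f-\bar f\mathbf 1_U\|_2^2\pto \sigma_f^2$. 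Hence the rescaling constant in Lemma~\ref{lem:canonical-centering} converges in probability to a positive constant, and the sign choice there does not affect any of the quantities in the second display.

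For the second display, fix a user $u$. The plan is to write $g_n^c = \kappa\,(g_n-\bar g^0)$, where $\kappa$ is the rescaling constant above. The indicators $\omega_{un}\sim\Bern(p)$ are i.i.d. over $n$ and independent of $\{g_n\}$ and of $\{\epsilon_{un}\}_n$. Each average then splits into an i.i.d. average plus a correction term.

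\textbf{First average.} We have
\[
\frac1N\sum_n\omega_{un}(g_n-\bar g^0)=\frac1N\sum_n\omega_{un}g_n-\bar g^0\cdot\frac1N\sum_n\omega_{un}.
\]
The first term tends to $\E[\omega_{un}g_n]=p\cdot 0=0$. The second term is $o_p(1)\cdot O_p(1)$.

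\textbf{Second average.} Expanding $(g_n-\bar g^0)^2=g_n^2-2\bar g^0 g_n+(\bar g^0)^2$ and applying the WLLN termwise gives the limit $\E[\omega_{un}g_n^2]=p\sigma_g^2$. Multiplying by $\kappa^2$ gives the stated limit, with $\sigma_g^2$ read as the variance of the canonically scaled note factor $g^c$, in accordance with Remark~\ref{rmk:canonical-centered}.

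\textbf{Noise average.} The summands $\omega_{un}g_n\epsilon_{un}$ are independent across $n$ and have mean zero, because $\epsilon_{un}$ is mean-zero and independent of $(\omega_{un},g_n)$. Their variance is at most $B_g^2\,\var(\epsilon_{un})\lesssim B_g^2\sigma_\epsilon^2$. Chebyshev's inequality then gives
\[
\frac1N\sum_n\omega_{un}g_n\epsilon_{un}=O_p(N^{-1/2}).
\]
The centering correction $\bar g^0\cdot\frac1N\sum_n\omega_{un}\epsilon_{un}$ is $o_p(1)\cdot o_p(1)$ by the same argument.

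The only delicate point is bookkeeping rather than probability. The canonical $g^c$ carries both the data-dependent centering $\bar g^0$ and the rescaling $\kappa$, and both are random. The plan is to handle them by first proving each statement for the uncentered, unscaled $g_n$. The correction terms are then absorbed using $\bar g^0=o_p(1)$, $\kappa\pto\sigma_f$ and Slutsky's lemma. The limit in the squared average is consistent with the stated $p\sigma_g^2$ once $\sigma_g$ is interpreted as the canonical scale.
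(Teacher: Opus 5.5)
Your argument is correct and follows the paper's route: the paper says only that the result is immediate from Lemma~\ref{lem:canonical-centering} and the weak law of large numbers, and you supply the details (termwise WLLN using independence of $\omega_{un}$ from $g_n$ and $\epsilon_{un}$, then Slutsky to absorb $\bar g^0$ and the rescaling factor). Your bookkeeping remark is also right: for the fully rescaled $g^c=\kappa(g-\bar g^0\mathbf 1_N)$ the second limit is $p\,\sigma_f^2\sigma_g^2$, so $p\sigma_g^2$ holds literally only for the centered-but-unscaled vector $g-\bar g^0\mathbf 1_N$ (which is the form the paper actually invokes in the proof of Theorem~\ref{thm:ridge-recovery}), or with $\sigma_g$ read canonically as you do.
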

\begin{proof}
    This is immediate by Lemma \ref{lem:canonical-centering} and the weak law of large numbers.
\end{proof}

\paragraph{Consistency of solution under missing data}

Recall that when $\rho \equiv 1$, users report their true latent signals $r_{un} = s_{un} + \epsilon_{un}$. \eqref{eq:ridge-opt} is equivalent to the following optimization problem where $\omega_{un} \sim \Bern(p)$ with constant probability $p$:
\begin{equation*}
% \label{eq:ridge-opt-2}
    \arg\min_{\mu,h,i,f,g} \sum_{u, n} \omega_{un}\left(r_{un}-(\mu+h_u+i_n+f_u g_n)\right)^2 + \lambda_h\|h\|_2^2 + \lambda_i\|i\|_2^2 + \lambda_f\|f\|_2^2 + \lambda_g\|g\|_2^2.
\end{equation*}

% Let $\mu, h, i, f, g$ be the canonical representation of the true parameters. Let $\hat \mu, \hat h, \hat i, \hat f, \hat g$ denote the canonically identified observed optimizers for \eqref{eq:ridge-opt}. Then, we can prove the following theorem.

\begin{theorem}\label{thm:ridge-recovery}
    Let $\theta^c, \hat \theta^c$ be defined as above. Under Assumption \ref{asmp:random-latents}, as $U, N\to\infty$, for each fixed user $u$ and note $n$ we have 
    \[
        \hat \mu^c \pto \mu^c, \qquad \hat h_u^c \pto h_u^c, \qquad \hat f_u^c \pto f_u^c, \qquad \hat i_n^c \pto i_n^c, \qquad \hat g_n^c \pto g_n^c.
    \]
\end{theorem}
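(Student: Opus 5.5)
The plan is to reduce the nonconvex ridge problem \eqref{eq:ridge-opt} to a convex matrix-completion problem. From that we get an averaged (Frobenius-norm) consistency statement for the fitted signal matrix. We then upgrade it to the entrywise statements of Theorem~\ref{thm:ridge-recovery} using the explicit decoding formulas from Lemma~\ref{lem:canonical-centering} and the per-row normal equations \eqref{eq:normal-eq}.

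\textbf{Step 1 (convex reformulation).} I will use the variational identity $\min_{fg^\top = X}\tfrac12(\|f\|_2^2+\|g\|_2^2)=\|X\|_*$ for rank-one $X$. With $\lambda_f=\lambda_g=\lambda$, the factor part of $G_\lambda$ then becomes $2\lambda\|X\|_*$ with $X=fg^\top$, and $\lambda_h\|h\|_2^2+\lambda_i\|i\|_2^2$ is a strongly convex penalty on the additive part. The rank-one restriction of the nuclear-norm program is exactly \eqref{eq:ridge-opt}. Following \cite{chen2020noisy}, I will argue that for $\lambda\asymp \sigma_\epsilon\sqrt{pN}$ the convex solution has rank one with high probability and coincides with the global minimizer $\hat\theta$. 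Here $\sqrt{pN}$ is the order of $\|\mathcal P_\Omega(E)\|_{\op}$ for the sub-Gaussian noise matrix $E=(\epsilon_{un})$ restricted to the observed entries. I expect the incoherence and condition-number hypotheses of that paper to hold automatically: the true factors are bounded i.i.d.\ with positive variance (Remark~\ref{rmk:canonical-centered}), so $f^c$ and $g^c$ are delocalized, and the additive part is spanned by $\mathbf 1$ vectors, which are maximally incoherent.

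\textbf{Step 2 (Frobenius consistency).} Standard restricted strong convexity under MCAR sampling, together with the choice of $\lambda$, gives
\[
\frac{1}{UN}\|S(\hat\theta)-S(\theta^0)\|_F^2 \lesssim \frac{\lambda^2}{p^2 UN}+o_p(1)=O_p(1/N)\to 0,
\]
and the same bound holds for the rank-one and additive components separately, since these are orthogonal after centering. Applying the linear decoding maps from Lemma~\ref{lem:canonical-centering}, namely $\mu^c=(UN)^{-1}\mathbf 1^\top S\mathbf 1$, $h^c=N^{-1}S\mathbf 1-\mu^c\mathbf 1$, and $i^c=U^{-1}S^\top\mathbf 1-\mu^c\mathbf 1$, together with Cauchy--Schwarz, gives $\hat\mu^c\pto\mu^c$. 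It also gives $U^{-1}\|\hat h^c-h^c\|_2^2\pto0$ and $N^{-1}\|\hat i^c-i^c\|_2^2\pto0$. For the rank-one part, Frobenius convergence of $\hat f^c(\hat g^c)^\top$ combined with the normalization $U^{-1}\|f^c\|^2=1$ and the sign convention $\langle \hat f^c,f\rangle\ge0$ yields $U^{-1}\|\hat f^c-f^c\|^2\to0$ and $N^{-1}\|\hat g^c-g^c\|^2\to0$; this is a Davis--Kahan-type argument for a rank-one matrix with a nonvanishing singular value $\asymp\sqrt{UN}\,\sigma_g^c$.

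\textbf{Step 3 (entrywise consistency; the main obstacle).} Averaged consistency does not imply consistency for a fixed $u$ or $n$, so this is where the real work lies. I will use the first-order conditions of \eqref{eq:ridge-opt} at $\hat\theta$. For fixed $u$, $\hat f_u$ and $\hat h_u$ solve a two-parameter ridge regression of $r_{un}-\hat\mu-\hat i_n$ on $(1,\hat g_n)$ over $n\in S_u$. Substituting $r_{un}=s_{un}+\epsilon_{un}$ expresses the error in $(\hat h_u,\hat f_u)$ through three kinds of averages. The first are $N^{-1}\sum_n\omega_{un}\epsilon_{un}(1,\hat g_n)$. The second are $N^{-1}\sum_n\omega_{un}(\hat i_n-i_n)$ and $N^{-1}\sum_n\omega_{un}(\hat g_n-g_n)g_n$, which are controlled by Cauchy--Schwarz and Step~2. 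The third is a ridge bias of order $\lambda/N\to0$. The Gram matrix $N^{-1}\sum_n\omega_{un}(1,\hat g_n)^{\otimes2}$ converges to $\mathrm{diag}(p,p\sigma_g^2)$ by Corollary~\ref{cor:asmp-center}, so it is invertible in the limit.

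The delicate term is $N^{-1}\sum_n\omega_{un}\epsilon_{un}\hat g_n$, because $\hat g$ depends on row $u$'s noise. I plan to handle it with a leave-one-out estimator $\hat g^{(-u)}$ fitted without row $u$. That estimator is independent of $\{\epsilon_{un}\}_n$, and the perturbation bound $\|\hat g-\hat g^{(-u)}\|_2=O_p(1)$ (one row out of $U$) makes the difference term $O_p(N^{-1/2})$. The symmetric argument for a fixed note $n$, which regresses over users, gives $\hat i_n\pto i_n$ and $\hat g_n\pto g_n$. Finally, re-centering the estimates commutes with these limits because the centering shifts are the averages already shown to converge in Step~2, so all five statements hold for the canonical centered representations, as claimed.
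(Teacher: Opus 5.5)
Your Step~3 is essentially the paper's last step: row-wise normal equations, Cauchy--Schwarz against averaged rates, and the Gram limit $\mathrm{diag}(p,p\sigma_g^2)$. The gap is in Steps~1--2, which rest on an unjustified reduction: that the global minimizer $\hat\theta$ of \eqref{eq:ridge-opt} coincides with the solution of a convex nuclear-norm program. This fails for two reasons.

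\emph{Wrong penalty regime.} The relaxation can only be tight when the effective nuclear-norm weight $\sqrt{\lambda_f\lambda_g}$ is at least of order $\|\Omega\circ E\|_{\op}\asymp\sigma_\epsilon\sqrt{pN}$. You do not get to choose this. $\hat\theta$ is defined with the given penalties, and the theorem is proved for vanishing ones ($\lambda=o(1)$ and $o(N^{-1/2})$ in Lemmas~\ref{lem:regularization-to-zero} and~\ref{lem:frobenius-norm-bound}). In that regime the KKT conditions of the convex program force the observed residual matrix to have operator norm at most $\lambda$. Its solution therefore essentially interpolates the noisy observed entries: it is far from rank one and not even Frobenius-consistent. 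Nothing transfers to $\hat\theta$, whose consistency comes from the rank-one constraint itself.

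\emph{The cited result does not cover your program.} Even at $\lambda\asymp\sigma_\epsilon\sqrt{pN}$, \cite{chen2020noisy} treats the pure nuclear-norm program on the full rank-$\le 3$ matrix. That is how the paper uses it, for the \emph{different} estimator \eqref{eq:si-nuclear}. It does not cover your hybrid program with unpenalized $\mu$ and ridge-penalized $h,i$. Its conclusion is that the convex solution is (nearly) low rank, not that it equals the nonconvex global minimizer. Exact rank one of your $X$-block is not automatic: if $\lambda_h,\lambda_i$ are not small relative to $\lambda$, additive effects migrate into $X$ and raise its rank. Without exact rank one, you would have to show that near-minimizers of the convex objective are close to the truth, which is the whole difficulty. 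Likewise, ``standard'' restricted strong convexity for MCAR nuclear-norm completion usually needs a spikiness constraint your program lacks.

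What is missing is a direct analysis of the nonconvex minimizer, which is what Lemma~\ref{lem:frobenius-norm-bound} supplies:
\begin{enumerate}
\item Start from $L(\hat\theta)+G_\lambda(\hat\theta)\le L(\theta^0)+G_\lambda(\theta^0)$.
\item Write $S(\hat\theta)-S(\theta^0)$ in canonical coordinates as a part linear in $(\delta\mu,\delta h,\delta i,\delta f,\delta g)$ plus the quadratic term $\delta f\,\delta g^\top$.
\item Apply matrix Bernstein to the normalized design Gram matrix to get $\sum_{u,n}\omega_{un}x_{un}^2=(p+o_p(1))\sum_{u,n}x_{un}^2$ for the linear part.
\item Bound the noise cross terms by $\|\Omega\circ E\|_{\op}=O_p(\sqrt{U\vee N})$ times the Frobenius norm of a bounded-rank perturbation.
\item Absorb the quadratic term after showing $M(\delta)=o_p(UN)$.
\end{enumerate}
This gives $M(\delta)=O_p(U\vee N)$ for small penalties without any convexity, after which your Step~3 goes through.

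Two smaller corrections to Step~3. First, the leave-one-out device is unnecessary, and the bound $\|\hat g-\hat g^{(-u)}\|_2=O_p(1)$ for a nonconvex global minimizer is itself unproved. Cauchy--Schwarz already gives
\[
\Bigl|\frac1N\sum_n\omega_{un}\epsilon_{un}(\hat g^c_n-g^c_n)\Bigr|\le\Bigl(\frac1N\sum_n\epsilon_{un}^2\Bigr)^{1/2}\Bigl(\frac1N\|\hat g^c-g^c\|_2^2\Bigr)^{1/2}=O_p(1)\,o_p(1),
\]
and $\frac1N\sum_n\omega_{un}\epsilon_{un}g^c_n=o_p(1)$ by Corollary~\ref{cor:asmp-center}. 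Second, the raw ridge estimates $\hat f_u,\hat g_n$ need not converge to $f_u,g_n$ at all. Their scale, and the split of $\bar f\,\mathbf 1_U g^\top$ between interaction and note intercepts, are set by the penalties. Step~2 also controls only centered quantities, so ``re-centering afterwards'' is not justified. Instead, run the row regressions directly in canonical coordinates, using that the fitted matrix and residuals are invariant and that $\mathrm{span}\{\mathbf 1_N,\hat g\}=\mathrm{span}\{\mathbf 1_N,\hat g^c\}$.
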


The proof of Theorem \ref{thm:ridge-recovery} requires the following lemmas.

\begin{lemma}\label{lem:regularization-to-zero}
    Suppose Assumption \ref{asmp:random-latents} holds. Assume that $\lambda_h, \lambda_f, \lambda_i, \lambda_g = o(1)$ are non-negative. Then, 
    \[
        \frac{\lambda_h}{N} |\hat h_u| \pto 0, \qquad \frac{\lambda_f}{N} |\hat f_u| \pto 0, \qquad \frac{\lambda_i}{U} |\hat i_n| \pto 0, \qquad \frac{\lambda_g}{U}|\hat g_n| \pto 0.
    \]
\end{lemma}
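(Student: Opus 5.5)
The plan is to prove this by a crude comparison of the objective at the global minimizer with its value at the trivial point $\theta=0$. No first-order conditions or consistency arguments are needed. Let $\hat\theta=(\hat\mu,\hat h,\hat i,\hat f,\hat g)$ be a global minimizer of \eqref{eq:ridge-opt}, written in the $\omega_{un}$-form of the objective. By optimality,
\[
\lambda_h\|\hat h\|_2^2+\lambda_i\|\hat i\|_2^2+\lambda_f\|\hat f\|_2^2+\lambda_g\|\hat g\|_2^2
\;\le\; L(\hat\theta)+G_\lambda(\hat\theta)\;\le\; L(0)+G_\lambda(0)\;=\;\sum_{u,n}\omega_{un}r_{un}^2 .
\]
The first inequality holds because $L\ge 0$. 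Each penalty term is therefore bounded by $T:=\sum_{u,n}\omega_{un}r_{un}^2$.

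The first step is to show that $T=O_p(UN)$. Under Assumption~\ref{asmp:random-latents} with $\rho\equiv1$ we have $r_{un}=s_{un}+\epsilon_{un}$. Here $|s_{un}|\le |\mu|+B_h+B_i+B_fB_g$, and $\epsilon_{un}$ is $\sigma_\epsilon$-sub-Gaussian, so $\E[\epsilon_{un}^2]\lesssim\sigma_\epsilon^2$. Hence $\E[T]\le pUN\cdot 2\bigl((|\mu|+B_h+B_i+B_fB_g)^2+C\sigma_\epsilon^2\bigr)=O(UN)$, and Markov's inequality gives $T=O_p(UN)$.

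The second step turns the $\ell_2$ bound into a coordinatewise bound. For every fixed $u$, $\lambda_h\hat h_u^2\le\lambda_h\|\hat h\|_2^2\le T$, so
\[
\frac{\lambda_h}{N}|\hat h_u|\;\le\;\frac{\sqrt{\lambda_h}\sqrt{T}}{N}\;=\;\sqrt{\lambda_h}\cdot O_p\!\Bigl(\sqrt{U/N}\Bigr).
\]
Since $U\asymp N$ and $\lambda_h=o(1)$, the right-hand side is $o_p(1)$. The same computation gives $\frac{\lambda_f}{N}|\hat f_u|\le \sqrt{\lambda_f}\,O_p(\sqrt{U/N})$, $\frac{\lambda_i}{U}|\hat i_n|\le\sqrt{\lambda_i}\,O_p(\sqrt{N/U})$ and $\frac{\lambda_g}{U}|\hat g_n|\le\sqrt{\lambda_g}\,O_p(\sqrt{N/U})$, all of which are $o_p(1)$.

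The main obstacle is the degenerate case in which some $\lambda_\bullet=0$, since the bound above then says nothing about the corresponding coordinates. Strictly, the claimed quantity $\lambda_\bullet|\hat\theta_\bullet|$ is then identically zero provided the minimizer is finite. The remaining issue is therefore existence of a global minimizer. When all $\lambda$'s are positive, the objective is continuous and coercive in $(h,i,f,g)$ and a convex quadratic in $\mu$, so a minimizer exists. When some $\lambda=0$, I would restrict to minimizers of finite norm, or equivalently take the limit of minimizers as that $\lambda\downarrow0$, and note that the statement is trivial for that coordinate. I would state this as a standing convention before the argument, as the paper already implicitly assumes that a global minimizer $\hat\theta$ exists.
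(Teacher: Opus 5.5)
Your proof is correct and essentially the paper's: use global optimality to get $G_\lambda(\hat\theta)=O_p(UN)$, then bound each coordinate by the full $\ell_2$ norm to obtain $O_p(\sqrt{\lambda U/N})=o_p(1)$. The only difference is the comparison point. You compare against $\theta=0$ rather than the true parameter $\theta^0$, which avoids bounding $G_\lambda(\theta^0)$ and needs only $\sum_{u,n}\omega_{un}r_{un}^2=O_p(UN)$, so the same bound holds for any reports with bounded second moments, including the conformity regime.
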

\begin{proof}
    Since $\hat \theta$ is a global optimum for \eqref{eq:ridge-opt}, we have that
    \begin{align*}
        L(\hat \theta) + G_\lambda(\hat\theta) &\leq L(\theta^0) + G_\lambda(\theta^0) \\
        &= \sum_{u, n}\omega_{un}\epsilon_{un}^2 + G_\lambda(\theta^0) = O_p(UN)
    \end{align*}
    where the last equality is true by weak law of large numbers and Assumption \ref{asmp:random-latents}. This implies that
    \[
        G_\lambda(\hat\theta) = \lambda_h \|\hat h\|_2^2 + \lambda_i \|\hat i\|_2^2 + \lambda_f \|\hat f\|_2^2 + \lambda_g \|\hat g\|_2^2 = O_p(UN).
    \]
    For any fixed user $u$, we then have that
    \[
        \frac{\lambda_h}{N}|\hat h_u| \leq \frac{\lambda_h}{N}\|\hat h\|_2 = O_p\left(\sqrt{\frac{U\lambda_h}{N}}\right) = o_p(1).
    \]
    The rest of the statements can be proven analogously.
\end{proof}

\begin{lemma}\label{lem:frobenius-norm-bound}
    Suppose that Assumption \ref{asmp:random-latents} holds. Let $\hat \theta$ be a global optimizer of \eqref{eq:ridge-opt} and $\hat \theta^c$ be its canonically centered representation. Let $S(\hat\theta)$ be the matrix with entries $s_{un}(\hat\theta) = s_{un}(\hat\theta^c) = \hat \mu^c + \hat h_u^c + \hat i_n^c + \hat f_u^c \hat g_n^c$ and $S(\theta)$ be the matrix with entries $s_{un}(\theta) = s_{un}(\theta^c) =\mu^c + h_u^c + i_n^c + f_u^c g_n^c$. Further, assume that $\lambda_h, \lambda_f = o(N^{-1/2})$, and $\lambda_i, \lambda_g = o(U^{-1/2})$. Then,
    \[
        \|S(\hat\theta) - S(\theta)\|_F = O_p\left(\sqrt{U\wedge N}\right).
    \] 
    In addition, the canonically identified solutions satisfy,
    \[
        |\hat \mu^c - \mu^c| = O_p\left((U\wedge N)^{-1/2}\right)
    \]
    \[
        \frac{1}{\sqrt{U}}\|\hat h^c - h^c\|_2 = O_p\left((U\wedge N)^{-1/2}\right), \qquad \frac{1}{\sqrt U}\|\hat f^c - f^c\|_2 = O_p\left((U\wedge N)^{-1/2}\right)
    \]
    \[
        \frac{1}{\sqrt{N}}\|\hat i^c - i^c\|_2 = O_p\left((U\wedge N)^{-1/2}\right), \qquad \frac{1}{\sqrt N}\|\hat g^c - g^c\|_2 = O_p\left((U\wedge N)^{-1/2}\right).
    \]
\end{lemma}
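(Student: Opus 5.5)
Write $\Delta := S(\hat\theta)-S(\theta)$ and $E := (\epsilon_{un})$, and let $P_\Omega$ denote entrywise masking by $\omega_{un}$. The plan is a basic-inequality argument for the matrix fit. It has three parts: a one-sided bound on $\|P_\Omega(\Delta)\|_F$ from optimality, a spectral-norm control of the noise cross term using the low-rank structure of $\Delta$, and a restricted-strong-convexity step that passes from the observed entries to all entries.

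\textbf{Basic inequality.} Comparing the objective at $\hat\theta$ with its value at $\theta^0$, and using $r_{un}=s_{un}(\theta)+\epsilon_{un}$, gives
\[
\|P_\Omega(\Delta)\|_F^2 \le 2\langle P_\Omega(E),\Delta\rangle + G_\lambda(\theta^0) - G_\lambda(\hat\theta).
\]
Drop $-G_\lambda(\hat\theta)\le 0$. By boundedness, $G_\lambda(\theta^0)=O(\lambda_h U+\lambda_f U+\lambda_i N+\lambda_g N)$, which is $o(\sqrt{U\wedge N}\cdot\sqrt{U\vee N})$ under the assumed rates, and this is negligible at the scale we target.

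\textbf{Controlling the cross term.} Every matrix of the form $S(\cdot)$ lies in the span of $\mathbf 1\mathbf 1^\top$, $h\mathbf 1^\top$, $\mathbf 1 i^\top$ and a rank-one term, so $\rank(\Delta)\le 6$. Hence $\|\Delta\|_*\le\sqrt6\|\Delta\|_F$, and
\[
|\langle P_\Omega(E),\Delta\rangle| \le \|P_\Omega(E)\|_{\op}\,\|\Delta\|_* \le \sqrt6\,\|P_\Omega(E)\|_{\op}\,\|\Delta\|_F.
\]
The masked noise matrix has independent, mean-zero, sub-Gaussian entries. Standard random-matrix bounds (Vershynin) then give $\|P_\Omega(E)\|_{\op}=O_p(\sqrt{U\vee N})$, which is of the same order as $\sqrt{U\wedge N}$ because $U\asymp N$.

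\textbf{From observed to full entries (main obstacle).} We need $\|P_\Omega(\Delta)\|_F^2 \ge (p/2)\|\Delta\|_F^2$ uniformly over the relevant low-rank class. I would prove this by matrix-completion-style restricted strong convexity over low-rank matrices with bounded entries, as in Chen et al.\ and Su et al. The entry bound is where the work lies. It requires first showing that the estimated parameters are bounded in sup-norm, or at least $\|\Delta\|_\infty=O_p(1)$ with high probability. I would try to obtain this from the row- and column-wise normal equations combined with Lemma~\ref{lem:regularization-to-zero}, which controls the regularization terms. Granting it, the deviation $\|P_\Omega(\Delta)\|_F^2-p\|\Delta\|_F^2$ is $O_p(\|\Delta\|_\infty\sqrt{(U\vee N)}\,\|\Delta\|_*)$. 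Substituting this, the spectral-norm bound, and the basic inequality gives $p\|\Delta\|_F^2 \lesssim \sqrt{U\vee N}\,\|\Delta\|_F + o(U\vee N)$. This yields a Frobenius rate at the stated order, which is effectively $O_p(\sqrt{U\wedge N})$ up to constants since $U\asymp N$.

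\textbf{Parameter rates.} By Lemma~\ref{lem:canonical-centering}, the canonical components are linear functionals of the matrix:
\[
\mu^c=\tfrac1{UN}\mathbf 1^\top S\mathbf 1,\qquad i^c=\tfrac1U S^\top\mathbf 1-\mu^c\mathbf 1,\qquad h^c=\tfrac1N S\mathbf 1-\mu^c\mathbf 1.
\]
Cauchy–Schwarz then gives $|\hat\mu^c-\mu^c|\le\|\Delta\|_F/\sqrt{UN}$, $\|\hat i^c-i^c\|_2\lesssim\|\Delta\|_F/\sqrt U$, and similarly for $h^c$. After normalizing by $\sqrt N$ or $\sqrt U$, each is $O_p((U\wedge N)^{-1/2})$.

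For the factors, the doubly-centered residual $\Delta_{fg}=\hat f^c\hat g^{c\top}-f^cg^{c\top}$ has Frobenius norm at most $\|\Delta\|_F$. Apply a Wedin/Davis–Kahan $\sin\Theta$ argument to the leading singular pair of $f^c g^{c\top}$. Its singular value is $\|f^c\|\|g^c\|\asymp\sqrt{UN}$ since $\sigma_g>0$. This gives $\|\hat f^c/\|\hat f^c\| - f^c/\|f^c\|\|_2=O_p(\|\Delta\|_F/\sqrt{UN})$ with the sign fixed by $\langle\hat f^c,f\rangle\ge0$. The normalization $\|f^c\|^2=U$ then yields the $f$ rate. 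The $g$ rate follows from $\hat g^c=\hat f^{c\top}(\hat f^c\hat g^{c\top})/U$ and the triangle inequality.
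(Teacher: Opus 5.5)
There is a genuine gap, at the step you flag and then grant. The Negahban--Wainwright/Klopp restricted strong convexity you invoke needs $\|\Delta\|_\infty=O_p(1)$, and nothing in your plan supplies it.

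Lemma~\ref{lem:regularization-to-zero} cannot supply it. It only gives $\lambda_h|\hat h_u|/N\to0$; its proof yields $\|\hat h\|_2^2=O_p(UN/\lambda_h)$, which with $\lambda_h=o(N^{-1/2})$ allows $|\hat h_u|$ to diverge.

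The row and column normal equations do not supply it either, because the argument is circular. The equation $(N_u+\lambda_h)\hat h_u=\sum_n\omega_{un}(r_{un}-\hat\mu-\hat i_n-\hat f_u\hat g_n)$ and its companion for $\hat f_u$ bound $(\hat h_u,\hat f_u)$ only under two conditions. First, $\hat i_n$ and $\hat g_n$ must be controlled on the random set of notes rated by $u$. Second, the $2\times2$ Gram matrix of $(1,\hat g_n)$ over that set must be well conditioned. Both hold only once you know the column-side estimates are close to the truth, which is what you are trying to prove. The paper uses the normal equations in exactly this way, but in Theorem~\ref{thm:ridge-recovery}, after the $\ell_2$ rates of this lemma are available. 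Entrywise control of the unconstrained, vanishing-ridge nonconvex minimizer of \eqref{eq:ridge-opt} is a leave-one-out-type result in its own right, so it cannot simply be granted.

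The rest of your plan is sound: the basic inequality, the nuclear-norm treatment of the noise term, and the decoding of parameter rates from $\|\Delta\|_F$. That decoding uses the centering identities and Wedin on the doubly centered part $\hat f^c\hat g^{c\top}-f^cg^{c\top}$, with the sign aligned to the true $f$, and it is a clean alternative to the paper's reading of the rates off $M(\delta)$.

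The paper never needs a bound on the estimate. It writes $\Delta=x^{\mathrm{lin}}+x^{\mathrm{rem}}+y$ with $x^{\mathrm{lin}}_{un}=\delta\mu+\delta h_u+\delta i_n+f^c_u\delta g_n+g^c_n(Q_f\alpha)_u$. Every perturbation in $x^{\mathrm{lin}}$ is multiplied either by constants or by the \emph{true} bounded latents. The resulting $(2U+2N-4)$-dimensional subspace therefore has leverage scores $\|\tilde z_{un}\|_2^2\lesssim1/(U\wedge N)$, however large the perturbations are. Matrix Bernstein on its normalized Gram matrix then gives $\sum\omega_{un}(x^{\mathrm{lin}}_{un})^2=(p+o_p(1))\sum(x^{\mathrm{lin}}_{un})^2$ uniformly over the subspace.

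Spikiness of the estimate enters only through two second-order pieces: $y_{un}=\delta f_u\delta g_n$ and the radial correction $x^{\mathrm{rem}}=r f^cg^{c\top}$ from the normalization $\|\hat f^c\|_2^2=U$. These are bounded by $\|\delta f\|_2^2\|\delta g\|_2^2\le M(\delta)^2/(UN)$ and $C\|\delta f\|_2^4$, and they are absorbed once the coarse localization $M(\delta)=o_p(UN)$ holds.

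So the missing idea is to do the observed-to-full transfer on the tangent space at the truth, not on a generic spikiness-constrained low-rank class. This trades your entrywise requirement for a normalized-$\ell_2$ localization of the parameters. To keep your matrix-level route you would need either a real proof that $\|S(\hat\theta)\|_\infty=O_p(1)$, or a box-constrained estimator, which is not the estimator in \eqref{eq:ridge-opt}.
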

\begin{proof}
    Let $\hat \theta$ be a global optimizer of \eqref{eq:ridge-opt} and let $\hat \theta^c$ be its canonical representation as given by Lemma \ref{lem:canonical-centering}. Then, letting
    \[
        \delta \mu = \hat \mu^c - \mu^c, \quad \delta h = \hat h^c - h^c, \quad \delta i = \hat i^c - i^c, \quad \delta f = \hat f^c - f^c, \quad \delta g = \hat g^c - g^c,
    \]
    we can write
    \[
        \hat s_{un} = \mu^c + \delta \mu + h_u^c + \delta h_u + i_n^c + \delta i_n + (f_u^c + \delta f_u)(g_n^c + \delta g_n).
    \]
    Let $\delta \hat \theta = (\delta \mu, \delta h, \delta i, \delta f, \delta g)$. Since $\hat \theta$ is a global optimizer for \eqref{eq:ridge-opt}, we furthermore have that
    \[
        \sum_{u, n} \omega_{un}(r_{un} - s_{un})^2 - \sum_{u, n}\omega_{un}(r_{un} - \hat s_{un})^2 + G_\lambda(\theta) - G_\lambda(\hat\theta) > 0,
    \]
    where
    \[
        G_\lambda(\theta) = \lambda_h\|h\|_2^2 + \lambda_i\|i\|_2^2 + \lambda_f\|f\|_2^2 + \lambda_g\|g\|_2^2.
    \]
    Some rearranging of the above implies that
    \begin{align}
        \sum_{u, n} \omega_{un}(\hat s_{un} - s_{un})^2 &\leq 2\sum_{u, n}\omega_{un} \epsilon_{un}(\hat s_{un} - s_{un}) + G_\lambda(\theta) - G_\lambda(\hat\theta) \\
        &\leq 2\sum_{u, n}\omega_{un} \epsilon_{un}(\hat s_{un} - s_{un}) + G_\lambda(\theta).\label{eq:bounded-error}
    \end{align}
    Let $\delta_{un} = \hat s_{un} - s_{un}$. Then, we have that
    \begin{align*}
        \delta_{un} &= x_{un} + y_{un}
    \end{align*}
    where 
    \[
        x_{un} = \delta \mu + \delta h_u + \delta i_n + f_u^c \delta g_n + \delta f_u g_n^c, \qquad y_{un} = \delta f_u \delta g_n.
    \]
    Rewriting the error bound from above and using the assumptions on the regularization terms, we then have
    \begin{align*}
        \sum_{u, n}\omega_{un}(x_{un} + y_{un})^2 \leq 2 \sum_{u, n}\omega_{un} \epsilon_{un}(x_{un} + y_{un}) + O_p(\sqrt{U \wedge N}).
    \end{align*}
    This implies that
    \[
        \frac{1}{2}\sum_{u, n}\omega_{un} x_{un}^2 \leq \sum_{u, n}\omega_{un} y_{un}^2 + 2 \sum_{u, n}\omega_{un} \epsilon_{un}(x_{un} + y_{un}) + O_p(\sqrt{U \wedge N}).
    \]

    First, we can lower bound the left-hand side. Under the canonical choice of centering constraints:
    \[
        \delta h^\top \one_U =  \delta i^\top\one_N = \delta f^\top \one_U =  \delta g^\top \one_N= 0,
    \]
    we have
    \begin{align*}
        \sum_{u, n}x_{un}^2 &= UN (\delta \mu)^2 + N\|\delta h\|_2^2 + U\|\delta i\|_2^2 + \|f^c \delta g^\top + \delta f (g^c)^\top\|_F^2\\
        &=UN (\delta \mu)^2 + N\|\delta h\|_2^2 + U\|\delta i\|_2^2 + \|f^c\|_2^2 \|\delta g\|_2^2 + \|g^c\|_2^2 \|\delta f\|_2^2 + 2 \langle f^c, \delta f\rangle\langle g^c, \delta g\rangle.
    \end{align*}
    By Lemma \ref{lem:canonical-centering}, we have that
    \[
        \langle f^c, \delta f\rangle = - \frac{1}{2}\|\delta f\|_2^2.
    \]
    Continuing from above we have that there exists constants $c_1, c_2$ with high probability such that
    \begin{align*}
         \|f^c\|_2^2 \|\delta g\|_2^2 + \|g^c\|_2^2 \|\delta f\|_2^2 + 2 \langle f^c, \delta f\rangle\langle g^c, \delta g\rangle &= U\|\delta g\|_2^2 + \|g^c\|_2^2 \|\delta f\|_2^2 - \|\delta f\|_2^2 \langle g^c, \delta g\rangle \\
         &\geq c_1\left(U\|\delta g\|_2^2 + N \|\delta f\|_2^2\right) -c_2 \|\delta f\|_2^2 \|\delta g\|_2^2,
    \end{align*}
    where in the last line we bounded $\langle g^c, \delta g\rangle$ by Cauchy-Schwarz and then used Young's inequality. Note that by Assumption \ref{asmp:random-latents} and Remark \ref{rmk:canonical-centered}, $f^c, g^c$ have component-wise variances bounded and away from zero, so continuing from above we have for some constant $c$ that 
    \[
        \sum_{u, n} x_{un}^2 \geq c\left(UN (\delta \mu)^2 + N\|\delta h\|_2^2 + U\|\delta i\|_2^2 + N \|\delta f\|_2^2 + U \|\delta g\|_2^2\right) - c_2 \|\delta f\|_2^2 \|\delta g\|_2^2.
    \]
   
    Now, it suffices to show that the observed errors do not deviate too much from the population errors. Define the following orthonormal basis matrices. Let $Q_h$ be an orthonormal basis for the space of vectors $\{v \in \mathbb R^U : \one_U^\top v = 0\}$, $Q_i, Q_g$ be an orthonormal basis for the space of vectors $\{v \in \mathbb R^N : \one_N^\top v = 0\}$. Finally, let $Q_f$ be an orthonormal basis for the tangent space of vectors $\{v \in \mathbb R^U: \one_U^\top v = 0 , (f^c)^\top v =0\}$. Define
    \[
        z_{un} = \begin{pmatrix}
            1 \\
            Q_h(u, :) \\
            Q_i(n, :) \\
            g_n^c Q_f(u, :) \\
            f_u^c Q_g(n, :)
        \end{pmatrix}.
    \]
    Let 
    \[
        \hat G = \sum_{u, n}\omega_{un} z_{un} z_{un}^\top, \qquad G = \sum_{u, n}z_{un}z_{un}^\top,
    \]
    and $D$ be the block diagonal matrix
    \[
        D = \mathrm{diag}\begin{pmatrix}
            UN & NI_{U-1} & UI_{N-1} & \|g\|_2^2 I_{U-2} & \|f\|_2^2 I_{N-1}
        \end{pmatrix}.
    \]
    where $I_N$ denotes the $N \times N$ identity matrix. Write $\delta f = Q_f \alpha + r f^c$, where
    \[
        r = \frac{\langle f^c, \delta f\rangle}{\|f^c\|_2^2}.
    \]
    By the normalization imposed on the norm in the canonically centered representation in Lemma \ref{lem:canonical-centering}, $\|\hat f^c\|_2^2 = \|f^c\|_2^2 = U$, which implies that $2\langle f^c, \delta f\rangle + \|\delta f\|_2^2 = 0,$ and therefore
    \[
        r = -\frac{\|\delta f\|_2^2}{2U}.
    \]
    
    Define
    \[
        x_{un}^{\mathrm{lin}} = \delta \mu + \delta h_u + \delta i_n + f_u^c \delta g_n + g_n^c (Q_f \alpha)_u, \qquad x_{un}^{\mathrm{rem}} = r f_u^c g_n^c,
    \]
    so $x_{un} = x_{un}^{\mathrm{lin}} + x_{un}^{\mathrm{rem}}.$ Then we can write
    \[
        \sum_{u, n} \omega_{un} (x_{un}^{\mathrm{lin}})^2 = \delta \vartheta^\top \hat G \delta \vartheta, \qquad \sum_{u, n} (x_{un}^{\mathrm{lin}})^2 = \delta \vartheta^\top G \delta \vartheta,
    \]
    where
    \[
        \delta \vartheta = (\delta \mu, a, b, \alpha, d), \qquad a = Q_h\delta h, \quad b = Q_i \delta i, \quad d = Q_g \delta g.
    \]
    In addition, notice that $\tilde G = D^{-1/2} G D^{-1/2} = I$. Write $\hat{\tilde G} = D^{-1/2} \hat G D^{-1/2}$. Then,
    \[
        v^\top (\hat{\tilde G} - pI) v = \sum_{u, n}(\omega_{un} -p)(v^\top \tilde z_{un})^2, \qquad \tilde z_{un} = D^{-1/2}z_{un}.
    \]
    We have that
    \[
        \|\tilde z_{un}\|_2^2 \le \frac{1}{UN} + \frac{\|Q_h(u,:)\|_2^2}{N} + \frac{\|Q_i(n,:)\|_2^2}{U} + \frac{|g_n^c|^2\,\|Q_f(u,:)\|_2^2}{\|g^c\|_2^2} + \frac{|f_u^c|^2\,\|Q_g(n,:)\|_2^2}{\|f^c\|_2^2}.
    \]
    Using the fact that
    \[
        \|Q_h(u,:)\|_2,\ \|Q_i(n,:)\|_2,\ \|Q_f(u,:)\|_2,\ \|Q_g(n,:)\|_2 \le 1,
    \]
    together with Assumption \ref{asmp:random-latents} and Remark \ref{rmk:canonical-centered}, this gives
    \[
    \|\tilde z_{un}\|_2^2 \le \frac{C}{U \wedge N}.
    \]
     
    Therefore each summand $X_{un}:=(\omega_{un}-p)\tilde z_{un}\tilde z_{un}^\top$ satisfies
    \[
        \|X_{un}\|_{\op}\le \frac{C}{U \wedge N}.
    \]

    We now use a matrix Bernstein inequality to bound $\|\hat{\tilde G} - pI\|_{\mathrm{op}}$. Let
    \[
        \mathcal F = \sigma\!\left(\{f_u^c\}_{u=1}^U,\{g_n^c\}_{n=1}^N,Q_h,Q_i,Q_f,Q_g\right).
    \]
    Conditional on $\mathcal F$, the vectors $\tilde z_{un}$ are fixed, and the matrices $X_{un}$ are independent, self-adjoint, and mean-zero.
    From the bound above on $\|\tilde z_{un}\|_2^2$, there exists a deterministic constant $C_0$ such that
    \[
        \|X_{un}\|_{\op}\le \frac{C_0}{U\wedge N}.
    \]
    For the variance proxy, notice that
    \[
        X_{un}^2 = (\omega_{un}-p)^2\,\|\tilde z_{un}\|_2^2\, \tilde z_{un}\tilde z_{un}^\top,
    \]
    which implies that
    \[
        \E[X_{un}^2\mid \mathcal F] = p(1-p)\,\|\tilde z_{un}\|_2^2\,\tilde z_{un}\tilde z_{un}^\top.
    \]
    Therefore,
    \[
        \sum_{u, n}\E[X_{un}^2\mid \mathcal F] \preceq p(1-p) \left(\max_{u, n}\|\tilde z_{un}\|_2^2\right)\sum_{u, n}\tilde z_{un} \tilde z_{un}^\top = p(1-p)\left(\max_{u, n}\|\tilde z_{un}\|_2^2\right)I,
    \]
    so
    \[
        \left\|\sum_{u,n} \E[X_{un}^2\mid \mathcal F]\right\|_{\op}
        \le \frac{C_1}{U\wedge N}
    \]
    for a deterministic constant $C_1$.
    Let
    \[
        d:=\dim(\hat{\tilde G})=1+(U-1)+(N-1)+(U-2)+(N-1)=2U+2N-4.
    \]
    Then, conditional on $\mathcal F$, matrix Bernstein (Tropp, 2012) gives for any $t>0$,
    \[
        \Pr\left(
            \|\hat{\tilde G} - pI\|_{\mathrm{op}} \ge t \,\middle|\, \mathcal F
        \right)
        \le 2d \exp\left(
            \frac{-t^2/2}{v_{U,N} + R_{U,N}t/3}
        \right).
    \]
   Choosing
    \[
        t = C'\sqrt{\frac{\log d}{U\wedge N}}
    \]
    and taking $C'$ large enough yields
    \[
        \Pr\left(
            \|\hat{\tilde G} - pI\|_{\mathrm{op}} \ge C'\sqrt{\frac{\log d}{U\wedge N}} \,\middle|\, \mathcal F
        \right)
        \le 2d^{-2}.
    \]
    Taking expectation over $\mathcal F$ gives the same unconditional bound, hence
    \[
        \|\hat{\tilde G} - pI\|_{\mathrm{op}}
        = O_p\left(\sqrt{\frac{\log d}{U\wedge N}}\right)=o_p(1).
    \]
    For any vector $v$ we then have 
    \begin{align*}
        \left|v^\top (\hat G - p G) v\right| &=  \left|D^{1/2}v^\top (\hat{\tilde G} - p I) D^{1/2}v\right|\\
        &\leq \|\hat{\tilde G} - pI\|_\op v^\top D v \\
        &= o_p(1) v^\top G v.
    \end{align*}
    This implies that
    \[
        \sum_{u, n}\omega_{un} (x_{un}^{\mathrm{lin}})^2 = (p+o_p(1)) \sum_{u, n} (x_{un}^{\mathrm{lin}})^2.
    \]
    Moreover,
    \[
        \sum_{u, n}(x_{un}^{\mathrm{rem}})^2 = r^2 \|f^c\|_2^2 \|g^c\|_2^2 \le C\frac{N}{U}\|\delta f\|_2^4,
    \]
    and hence, by Cauchy-Schwarz,
    \[
        \left|\sum_{u, n} x_{un}^{\mathrm{lin}} x_{un}^{\mathrm{rem}}\right| \le \|X^{\mathrm{lin}}\|_F \|X^{\mathrm{rem}}\|_F.
    \]
    Therefore
    \[
        \sum_{u, n} x_{un}^2 \ge \frac{1}{2}\sum_{u, n}(x_{un}^{\mathrm{lin}})^2 - \sum_{u, n}(x_{un}^{\mathrm{rem}})^2,
    \]
    and similarly
    \[
        \sum_{u, n}\omega_{un} x_{un}^2 \ge \frac{1}{2}\sum_{u, n}\omega_{un}(x_{un}^{\mathrm{lin}})^2 - \sum_{u, n}\omega_{un}(x_{un}^{\mathrm{rem}})^2.
    \]
    Since $U \asymp N$ with probability tending to $1$, the remainder terms are bounded by $C\|\delta f\|_2^4$. Combining this with the lower bound on $\sum_{u,n} x_{un}^2$ established above, we obtain with high probability
    \[
        \sum_{u, n}\omega_{un} x_{un}^2 \geq c'\left(UN (\delta \mu)^2 + N\|\delta h\|_2^2 + U\|\delta i\|_2^2 + N \|\delta f\|_2^2 + U \|\delta g\|_2^2\right) - C\|\delta f\|_2^2\|\delta g\|_2^2 - C\|\delta f\|_2^4
    \]
    for some constants $c', C > 0$.

    Now, we turn to proving upper bounds. Recall that it remains to upper bound the terms
    \[
        \sum_{u, n}\omega_{un} y_{un}^2, \qquad \sum_{u, n}\omega_{un} \epsilon_{un} x_{un}, \qquad \sum_{u, n}\omega_{un} \epsilon_{un} y_{un}.
    \]
    Define $M(\delta) = UN (\delta \mu)^2 + N\|\delta h\|_2^2 + U\|\delta i\|_2^2 + N \|\delta f\|_2^2 + U \|\delta g\|_2^2$. Then, 
    \[
        \sum_{u, n}\omega_{un} y_{un}^2 \leq \sum_{u, n}y_{un}^2 = \sum_{u, n}(\delta f_u \delta g_n)^2 \leq \|\delta f\|_2^2\|\delta g\|_2^2 \leq \frac{M(\delta)^2}{UN}.
    \]
    Since $X$ is at most rank $5$, we have
    \[
        \left|\sum_{u, n}\omega_{un}\epsilon_{un}x_{un}\right| \leq \|\Omega\circ E\|_\op \cdot \|X\|_* \leq \sqrt{5}\|\Omega \circ E\|_\op\cdot \|X\|_F.
    \]
    Here, we slightly abuse notation and let $\Omega$ denote the matrix with entries $\omega_{un}$. Using the fact that $\|X\|_F^2 \lesssim M(\delta)$, Young's inequality, and sub-Gaussian errors from Assumption \ref{asmp:random-latents}, we have that for some constants $\eta$ and $C_\eta$,
    \[
        \sqrt{5}\|\Omega \circ E\|_\op \cdot \|X\|_F \leq \eta M(\delta) + C_\eta \|\Omega \circ E\|_\op^2 = \eta M(\delta) + O_p(U\vee N).
    \]
    Finally,
    \[
       \left| \sum_{u, n}\omega_{un}\epsilon_{un} y_{un} \right| \leq \|\Omega\circ E\|_\op \cdot \|\delta f\|_2 \cdot \|\delta g\|_2 \leq \frac{\|\Omega\circ E\|_\op}{2\sqrt{UN}}\left(N\|\delta f\|_2^2 + U\|\delta g\|_2^2\right) \leq \frac{\|\Omega\circ E\|_\op}{2\sqrt{UN}}M(\delta).
    \]

    Combining this all gives, for some constants $c', C'$
    \[
        \left(c' -\eta - \frac{\|\Omega\circ E\|_\op}{2\sqrt{UN}}\right) M(\delta) \leq C'\frac{M(\delta)^2}{UN} + O_p(U \vee N).
    \]
    By sub-Gaussian concentration, we have that $\|\Omega\circ E\|_\op = O_p(\sqrt{U \vee N})$. Thus, the bound simplifies to
    \begin{equation}\label{eq:m-delta-bound}
        (c'-\eta - o_p(1)) M(\delta) \leq C'\frac{M(\delta)^2}{UN} + O_p(U \vee N).
    \end{equation}
    By a simple argument, we can argue that $M(\delta) = o_p(UN)$. Fix $\epsilon > 0$. Take $\eta < c'/4$ so that $c' - \eta > c'/2$. Since $U\asymp N$ with probability tending to $1$, on the event $\mathcal A = \{M(\delta) \geq \epsilon N^2\}$ \eqref{eq:m-delta-bound} gives
    \[
        \frac{c'}{2}\epsilon N^2 \leq C'\epsilon^2 N^2 + O_p(N),
    \]
    which implies that
    \[
        \frac{c'}{2}\epsilon \leq C'\epsilon^2 + O_p(N^{-1}).
    \]
    If $\epsilon < c'/(4C')$, then $\epsilon^2 < c'/2 \epsilon$, implying that $M(\delta) = o_p(UN)$.

    Putting this all together, we have that
    \begin{align*}
        &UN (\delta \mu)^2 + N\|\delta h\|_2^2 + U\|\delta i\|_2^2 + N \|\delta f\|_2^2 + U \|\delta g\|_2^2 = O_p(U \vee N) \\ 
        &\implies (\delta \mu)^2 + \frac{1}{U}\|\delta h\|_2^2 + \frac{1}{N}\|\delta i\|_2^2 + \frac{1}{U}\|\delta f\|_2^2 + \frac{1}{N}\|\delta g\|_2^2 = O_p\left(\frac{1}{U\wedge N}\right)
    \end{align*}
    immediately implying the Frobenius bounds on the recovered matrix $\hat S$, as well as the bounds for each of the intercept terms.
\end{proof}

\begin{proof}[Proof of Theorem \ref{thm:ridge-recovery}]
    Let $\hat \theta = (\hat \mu, \hat h, \hat i, \hat f, \hat g)$ be a global optimizer of \eqref{eq:ridge-opt}, and $\tilde \theta = (\tilde \mu, \tilde h, \tilde i, \tilde f, \tilde g)$ be the centered but unscaled representation of the true parameters $\theta^0$ obtained from Lemma \ref{lem:canonical-centering} before the final normalization $U^{-1}\|f\|_2^2 = 1$. Let $\hat{\tilde \theta}$ be the corresponding quantity derived by centering $\hat \theta$. Notice that we can recover $\theta^c$ from $\tilde \theta$ since
    \[
        f^c = \frac{\tilde f}{a_U}, \qquad g^c = a_u \tilde g, \qquad a_u^2 = \frac{1}{U}\|\tilde f\|_2^2,
    \]
    and similarly for $\hat\theta^c$. Thus, it suffices to show the statement of the theorem replacing $\hat \theta^c$ and $\theta^c$ with $\hat{\tilde \theta}$ and $\tilde \theta$.

    Fix $u$ and let $\tilde{\alpha}_u = (\tilde{h}_u, \tilde{f}_u)^\top$, $\hat{\tilde{\alpha}}_u = (\hat{\tilde{h}}_u, \hat{\tilde{f}}_u)^\top$ and let $\tilde \nu = (\tilde \mu, \tilde i, \tilde g)$, $\hat{\tilde{\nu}} = (\hat{\tilde \mu}, \hat{\tilde i}, \hat{\tilde g})$. Denote the loss function in the objective \eqref{eq:ridge-opt} by 
    \[
        L(\theta) + G_\lambda(\theta), 
    \]
    where $G_\lambda(\theta)$ is the ridge penalty term as in the proof of Lemma \ref{lem:frobenius-norm-bound} and 
    \[
        L(\theta) = \sum_{(u, n)}\omega_{un}(r_{un} - \mu - h_u - i_n - f_ug_n)^2.
    \]
    For a parameter vector $\theta$, define the row-normalized score function to be 
    \[
        \Psi_{u, N}(\alpha_u; \nu) = -\frac{1}{2N}\begin{pmatrix}
            \frac{d}{dh_u} L(\theta) \\
            \frac{d}{df_u}L(\theta)
        \end{pmatrix} = \frac{1}{N}\sum_{n}\omega_{un} \begin{pmatrix}
            1\\ g_n
        \end{pmatrix}e_{un},
    \]
    where $\nu = (\mu, i, g)$, $e_{un}(\theta) = r_{un} - \mu - h_u - i_n - f_u g_n$. Also, define
    \[
        \hat A_u = \frac{1}{N}\sum_n \omega_{un} \begin{pmatrix}
            1 & \hat{\tilde{g}}_n \\
            \hat{\tilde g}_n & \hat{\tilde g}_n^2
        \end{pmatrix}.
    \]
    Then,
    \[
        \Psi_{u, N}(\hat{\tilde \alpha}_u; \hat{\tilde \nu}) - \Psi_{u, N}(\tilde \alpha_u; \hat{\tilde \nu}) = -\hat A_u(\hat{\tilde \alpha}_u - \tilde \alpha_u).
    \]
    In order show pointwise convergence in probability for $\hat{\tilde \alpha}_u$ to $\tilde \alpha_u$, it suffices to show that the terms on the left-hand side of the above are $o_p(1)$ and that $\hat A_u$ is asymptotically invertible.

    First, consider $\Psi_{u, N}(\hat{\tilde \alpha}_u; \hat{\tilde \nu})$. Since the fitted matrix given by $\hat \theta$ and $\tilde \theta$ are the same, we have
    \begin{align*}
        \Psi_{u, N}(\hat{\tilde \alpha}_u; \hat{\tilde \nu}) &= \frac{1}{N}\sum_n \omega_{un}\begin{pmatrix}
            1 \\ \hat{\tilde g}_n 
        \end{pmatrix}e_{un}(\hat\theta).
    \end{align*}
    The first order conditions on $\hat \theta$ imply that,
    \[
        \frac{1}{N}\sum_n \omega_{un} e_{un}(\hat\theta) = \frac{\lambda_h}{N} \hat h_u
    \]
    and by Lemma \ref{lem:regularization-to-zero} the right-hand side goes to $0$. For the second term, we have
    \begin{align*}
        \frac{1}{N}\sum_n \omega_{un} \hat{\tilde g}_n e_{un}(\hat\theta) &= \frac{1}{N}\sum_n \omega_{un} \hat g_n e_{un}(\hat\theta) - \bar{\hat g} \frac{1}{N}\sum_n \omega_{un} e_{un}(\hat\theta) \\
        &= \frac{\lambda_f}{N}\hat f_u - \bar{\hat g} \frac{\lambda_h}{N}\hat h_u \pto 0,
    \end{align*}
    where the convergence is again implied by Lemma \ref{lem:regularization-to-zero}.

    Next, consider $\Psi_{u, N}(\tilde \alpha_u; \tilde \nu)$. We have 
    \begin{equation}\label{eq:psi-tilde-nu}
        \Psi_{u, N}(\tilde \alpha_u; \hat{\tilde \nu}) = \frac{1}{N}\sum_n \omega_{un} \begin{pmatrix}
            1 \\ \hat{\tilde g}_n
        \end{pmatrix}\left(\epsilon_{un} + (\tilde \mu - \hat{\tilde \mu}) + (\tilde i_n - \hat{\tilde i}_n) + \tilde f_u(\tilde g_n - \hat{\tilde g}_n) \right).
    \end{equation}
    We show that the last term is $o_p(1)$; the rest can be proven analogously. First, by Cauchy-Schwarz
    \begin{align*}
        \left|\tilde f_u \frac{1}{N}\sum_n \omega_{un} (\tilde g_n - \hat{\tilde g}_n) \right| &\leq |\tilde f_u|\left(\frac{1}{N}\sum_n (\tilde g_n - \hat{\tilde g}_n)^2\right)^{1/2} = o_p(1)
    \end{align*}
    by Lemma \ref{lem:frobenius-norm-bound} and the fact that Assumption \ref{asmp:random-latents} implies $f_u$ is bounded so $\tilde f_u = O_p(1)$. Then, again by Cauchy-Schwarz
    \begin{align*}
        \left|\tilde f_u \frac{1}{N}\sum_n \omega_{un}\hat{\tilde g}_n (\tilde g_n - \hat{\tilde g}_n) \right| &\leq |\tilde f_u| \left(\frac{1}{N}\sum_n \hat{\tilde g}_n^2\right)^{1/2} \left(\frac{1}{N}\sum_n (\tilde g_n - \hat{\tilde g}_n)^2\right)^{1/2} = o_p(1).
    \end{align*}
    The last equality comes from Lemma \ref{lem:frobenius-norm-bound} and the fact that Lemma \ref{lem:frobenius-norm-bound} and Corollary \ref{cor:asmp-center} together imply 
    \[
        \frac{1}{N}\sum_n \hat{\tilde g}_n^2 \leq 2\frac{1}{N}\sum_n (\hat{\tilde g}_n - \tilde g_n)^2 + 2\frac{1}{N}\sum_n \tilde g_n^2 = O_p(1).
    \]
    The other terms in \eqref{eq:psi-tilde-nu} can be shown to be $o_p(1)$ analogously.

    Finally, we show that $\hat A_u$ is asymptotically invertible. Let
    \[
        \tilde A_u = \frac{1}{N}\sum_n \omega_{un}\begin{pmatrix}
            1 & \tilde g_n \\
            \tilde g_n & \tilde g_n^2
        \end{pmatrix}.
    \]
    Then, 
    \[
        \hat A_u - \tilde A_u = \frac{1}{N}\sum_n \omega_{un}\begin{pmatrix}
            0 & \hat{\tilde g}_n - \tilde g_n \\
            \hat{\tilde g}_n - \tilde g_n & \hat{\tilde g}_n^2 - \tilde g_n^2
        \end{pmatrix}.
    \]
    By similar reasoning as before, the off-diagonal terms of $\hat A_u - \tilde A_u$ are $o_p(1)$. Also,
    \begin{align*}
        \frac{1}{N}\sum_n |\hat{\tilde g}_n^2 - \tilde g_n^2| \leq \left(\frac{1}{N}\sum_n (\hat{\tilde g}_n - \tilde g_n)^2\right)^{1/2}\left(\frac{1}{N}\sum_n (\hat{\tilde g}_n + \tilde g_n)^2\right)^{1/2}
    \end{align*}
    The first term in the product is $o_p(1)$ and the second term can be bounded as follows:
    \[
        \frac{1}{N}\sum_n (\hat{\tilde g}_n + \tilde g_n)^2 \leq 2\frac{1}{N}\sum_n (\hat{\tilde g}_n - \tilde g_n)^2 + 8\frac{1}{N}\sum_n \tilde g_n^2 = O_p(1)
    \]
    by Lemma \ref{lem:frobenius-norm-bound} and Corollary \ref{cor:asmp-center}. Therefore, each entry in $\hat A_u - \tilde A_u = o_p(1)$. By Corollary \ref{cor:asmp-center}, 
    \[
        \tilde A_u \pto p \begin{pmatrix}
            1 & 0\\
            0 & \sigma_g^2
        \end{pmatrix},
    \]
    and since $\sigma_g^2 > 0$, the limit matrix is positive definite. Thus, with probability tending to one $\hat A_u$ is invertible and its inverse has bounded operator norm.

    Putting this all together, we have that
    \[
        \hat A_u(\hat{\tilde \alpha}_u - \tilde \alpha_u) = o_p(1), 
    \]
    and invertibility of $\hat A_u$ implies that $(\hat{\tilde h}_u, \hat{\tilde f}_u)\pto (\tilde h_u, \tilde f_u)$. The proof of the pointwise convergence for the note side factors follows analogously.
\end{proof}

To prove Theorem 1 from the main text, it now suffices to show that the solution to \eqref{eq:ridge-opt} with no missing data is consistent for the true parameters under the canonical decomposition. We will then be able to recover the original factors by de-centering the estimates using Lemma \ref{lem:canonical-centering}.

% The proof is similar to the framework in \cite{bai2009panel}, but we specialize to our case and write a proof here for completion.

\begin{proof}[Proof of Theorem 1]\label{pf:proof-thm-1}
    Theorem \ref{thm:ridge-recovery} implies that the canonical representation of the solution to \eqref{eq:ridge-opt} is consistent for the canonical representation of the true parameter values. We now show that $\hat i_n \pto i_n^0$. 

    First, Theorem \ref{thm:ridge-recovery} implies that for a fixed $n$, $\hat g_n^c\pto g_n^c = g_n^0 - \bar g_n^0 \one_N$. By weak law of large numbers, $\bar g_n^0 \pto 0$, so $\hat g_n^c \pto g_n^0$. 
    In particular, we have that $\hat i_n^c \pto i_n^c$. By Lemma \ref{lem:canonical-centering}, 
     \[
        i_n^c = i_n^0 - \bar i^0\one_N + \bar f^0(g_n^0 - \bar g^0 \one_N).
    \]
    with
    \[
        \bar i^0 = \frac{1}{N}\one_N^\top i^0, \qquad \bar f^0 = \frac{1}{U}\one_U^\top f^0.
    \]
    By Theorem \ref{thm:ridge-recovery}, for fixed $n$, $\hat i_n^c \pto i_n^c$. By weak law of large numbers and Assumption \ref{asmp:random-latents}, $i_n^c \pto i_n^0 + cg_n^0$.

    $\hat g_n^c \pto g_n$ coordinate-wise. Furthermore, by Lemma \ref{lem:canonical-centering} and weak law of large numbers, $g_n^c = g_n^0 + \bar g^0 \one_N \pto g_n^0$ since $\E[g_n] = 0$. Therefore, $\hat g_n \pto g_n^0$. This in turn implies, by weak law of large numbers, that 
    \[
        \hat i_n \pto i_n^0 + cg_n^0,
    \]
    where $c = \E[f_u]$. If $c$ is known, then we can recover a consistent estimate of $i_n^0$ by $\hat i_n^0 = \hat i_n + c\hat g_n$.
\end{proof}

% \begin{proof}[Proof of Theorem 1]\label{pf:proof-thm-1}
% Let \((\mu^c,h^c,i^c,f^c,g^c)\) denote the canonical representation of the true parameter vector, with normalization
% \[
%     \frac1U\|f^c\|_2^2=1.
% \]
% By Theorem \ref{thm:ridge-recovery}, for each fixed \(n\),
% \[
%     \hat i_n^c \pto i_n^c, \qquad \hat g_n^c \pto g_n^c.
% \]

% By Lemma \ref{lem:canonical-centering}, if we write
% \[
%     \bar i^0 = \frac1N\sum_{m=1}^N i_m^0 \qquad \bar f^0 = \frac1U\sum_{u=1}^U f_u^0, \qquad a_U^2 = \frac1U\sum_{u=1}^U (f_u^0-\bar f^0)^2,
% \]
% then the canonical and original parameters are related by
% \[
%     g_n^c = a_U(g_n^0-\bar g^0), \qquad i_n^c = i_n^0-\bar i^0+\frac{\bar f^0}{a_U}g_n^c.
% \]
% Equivalently,
% \[
%     i_n^0 = i_n^c+\bar i^0-\frac{\bar f^0}{a_U}g_n^c.
% \]

% Now let \(\hat{\bar i}\), \(\hat{\bar f}\), and \(\hat a_U\) be estimators such that
% \[
%     \hat{\bar i}\pto \bar i^0, \qquad \hat{\bar f}\pto \bar f^0, \qquad \hat a_U\pto a_U.
% \]
% Define the recovered original note intercept by
% \[
%     \hat i_n^0 = \hat i_n^c+\hat{\bar i}-\frac{\hat{\bar f}}{\hat a_U}\hat g_n^c.
% \]
% Then, by \((1)\), \((2)\), Theorem \ref{thm:ridge-recovery}, and Slutsky's theorem,
% \[
%     \hat i_n^0 = \hat i_n^c+\hat{\bar i}-\frac{\hat{\bar f}}{\hat a_U}\hat g_n^c \pto i_n^c+\bar i^0-\frac{\bar f^0}{a_U}g_n^c = i_n^0.
% \]
% Hence \(\hat i_n^0\pto i_n^0\).
% \end{proof}

\paragraph{Matrix-completion formulation and incoherence}
In this section, we describe a related approach for finding estimators for latent factors and intercept terms. We can first estimate the missing entries of the observed matrix using matrix completion techniques, and then use the canonical decomposition of the resulting matrix as an estimator for the latent factors. 

Define the nuclear-norm completion problem
\begin{equation}
\label{eq:si-nuclear}
\hat S_{\mathrm{mc}}
\in
\arg\min_{X\in\mathbb{R}^{U\times N}}
\frac12 \sum_{(u,n)\in\Omega} (X_{un}-R_{un})^2 + \lambda \|X\|_\ast.
\end{equation}
The following definition and lemmas verify the conditions needed to apply the matrix-completion result. As before, in the truthful regime users report
\[
r_{un} = s_{un} + \epsilon_{un}, \qquad s_{un} = \mu + h_u + i_n + f_u g_n,
\]
where \(\epsilon_{un}\) is an additive noise term. Let $S$ be the true signal matrix with entries $s_{un}$. We apply the framework of \cite{chen2020noisy} (see also \cite{farias2022uncertainty}) to get bounds on the entry-wise norm of the error of an estimator of $S$, $\| \widehat S_{\mathrm{mc}} - S\|_\infty$, and then show how this translates to pointwise consistency of the intercept terms. This framework requires bounds on the condition number, $\kappa(S)$, and that $S$ is $\nu$-incoherent, which we define next.

\begin{definition}[Incoherence]
    A rank-$r$ matrix $A \in \mathbb R^{U\times N}$ with Singular Value Decomposition $A = U\Sigma V^{\top}$ is said to be $\nu$-incoherent if 
    \[
        \|U\|_{2, \infty} \leq \sqrt{\frac{\nu}{U}}\|U\|_F = \sqrt{\frac{\nu r}{U}}, \qquad \text{and} \qquad \|V\|_{2, \infty} \leq \sqrt{\frac{\nu}{N}}\|V\|_F = \sqrt{\frac{\nu r}{N}}.
    \]
    Here $\|X\|_{2, \infty}$ denotes the max row  $\ell_2$-norm of all rows in the matrix $X$.
\end{definition}

\begin{lemma}\label{lem:s-incoherent}
Under Assumption~\ref{asmp:random-latents}, the signal matrix $S$ has rank at most $3$. Furthermore, with probability tending to one, $S$ is $\nu$-incoherent for some deterministic, finite $\nu$, and its singular values satisfy
\[
    \sigma_{\min}^+(S) \asymp \sqrt{UN}, \qquad \kappa(S) = O_p(1).
\]
Here $\sigma_{\min}^+(S)$ denotes the smallest nonzero singular value of $S$.
\end{lemma}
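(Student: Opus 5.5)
The plan is to exhibit an explicit rank-$3$ factorization of $S$ and reduce every claim to the behavior of two $3\times 3$ Gram matrices. Write
\[
S = A B^\top, \qquad A = \bigl[\,\mu\mathbf 1_U + h,\ \mathbf 1_U,\ f\,\bigr]\in\mathbb R^{U\times 3}, \qquad B = \bigl[\,\mathbf 1_N,\ i,\ g\,\bigr]\in\mathbb R^{N\times 3}.
\]
Expanding $AB^\top$ gives $(\mu\mathbf 1_U+h)\mathbf 1_N^\top + \mathbf 1_U i^\top + f g^\top$, which is exactly $S$. Hence $\rank(S)\le 3$ deterministically.

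\textbf{Step 1 (Gram matrices).} By the weak law of large numbers and the boundedness in Assumption~\ref{asmp:random-latents}, the normalized Gram matrices converge:
\[
\frac1U A^\top A \pto \Sigma_A := \E\bigl[a a^\top\bigr], \quad a=(\mu+h_u,1,f_u)^\top, \qquad \frac1N B^\top B \pto \Sigma_B := \E\bigl[b b^\top\bigr], \quad b=(1,i_n,g_n)^\top.
\]
Both limits are positive definite. For $\Sigma_A$, apply the invertible linear change of coordinates $(\mu+h,1,f)\mapsto(h,1,f-c)$. This turns $\Sigma_A$ into $\mathrm{diag}(\sigma_h^2,1,\sigma_f^2)$, using that $h_u$ and $f_u$ are independent and $h_u$ is mean zero. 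Hence $\det\Sigma_A=\sigma_h^2\sigma_f^2>0$. The same argument for $\Sigma_B$ uses $\E i_n=\E g_n=0$ and the independence of $i_n$ and $g_n$. By continuity of eigenvalues, with probability tending to one
\[
\lambda_{\min}(A^\top A)\ge \tfrac12 \lambda_{\min}(\Sigma_A)\,U, \qquad \lambda_{\max}(A^\top A)\le 2\lambda_{\max}(\Sigma_A)\,U,
\]
and the analogous bounds hold for $B^\top B$ with $N$ in place of $U$. In particular, $A$ and $B$ have full column rank, so $\rank(S)=3$.

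\textbf{Step 2 (singular values).} Take thin QR factorizations $A=Q_AR_A$ and $B=Q_BR_B$. Then $S=Q_A(R_AR_B^\top)Q_B^\top$, so the nonzero singular values of $S$ are those of the $3\times 3$ matrix $R_AR_B^\top$. These satisfy
\[
\sigma_{\min}(R_A)\sigma_{\min}(R_B)\le \sigma_k(S)\le \sigma_{\max}(R_A)\sigma_{\max}(R_B),
\]
with $\sigma(R_A)^2$ equal to the eigenvalues of $A^\top A$. By Step 1, every nonzero singular value is $\asymp\sqrt{UN}$, which gives $\sigma^+_{\min}(S)\asymp\sqrt{UN}$. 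It also gives
\[
\kappa(S)\le 4\sqrt{\kappa(\Sigma_A)\kappa(\Sigma_B)} = O_p(1).
\]

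\textbf{Step 3 (incoherence).} The left singular subspace of $S$ equals $\mathrm{col}(A)$, since $B$ has full column rank. An orthonormal basis of it is $U_S=AR_A^{-1}O$ for some $3\times 3$ orthogonal matrix $O$. Row $u$ of $U_S$ therefore has norm at most
\[
\|A_{u,:}\|_2\,\|R_A^{-1}\|_{\op}\le \frac{\sqrt{(|\mu|+B_h)^2+1+B_f^2}}{\sqrt{\lambda_{\min}(\Sigma_A)U/2}}.
\]
This is of the form $\sqrt{\nu_A\cdot 3/U}$ with $\nu_A$ deterministic and finite. The same bound holds for $V_S$ with $B_i, B_g$ and $\Sigma_B$. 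Taking $\nu=\max(\nu_A,\nu_B)$ gives $\nu$-incoherence with probability tending to one.

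The only delicate point is Step 1's positive definiteness of the limiting second-moment matrices, because $\mu\mathbf 1$, $\mathbf 1$ and $\E f=c\neq 0$ are collinear in mean. The centering change of variables handles this, and it relies on the positive variances $\sigma_h^2,\sigma_f^2,\sigma_i^2,\sigma_g^2$. The remaining bounds follow from boundedness and elementary linear algebra.
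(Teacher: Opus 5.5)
Your proof is correct and follows essentially the same route as the paper: you write $S$ as a product of two thin three-column factors, use the law of large numbers to get Gram matrices of order $U$ and $N$ with positive definite limits, and then use thin QR factorizations twice, once to bound the row norms of the singular-vector matrices by $\|\ell_u\|_2\,\|T^{-1}\|_{\mathrm{op}}$ and once to read off the nonzero singular values from the $3\times 3$ core $T_LT_R^\top$. The only difference is the choice of factors. The paper uses the canonically centered factors $[\mathbf 1_U, h^c, f^c]$ and $[\mu^c\mathbf 1_N + i^c, \mathbf 1_N, g^c]$, while you use the raw factors, whose rows are genuinely i.i.d., and handle their collinearity in mean with your determinant-one change of coordinates; this is a valid and slightly cleaner variant.
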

\begin{proof}
Let
\[
    L =
    \begin{bmatrix}
    \mathbf 1_U & h^c & f^c
    \end{bmatrix}
    \in \mathbb R^{U\times 3},
    \qquad
    R =
    \begin{bmatrix}
        \mu^c \mathbf 1_N + i^c & \mathbf 1_N & g^c
    \end{bmatrix}
    \in \mathbb R^{N\times 3}.
\]
Then $S = L R^{\top}$, so $\rank(S)\le 3$. Let $\ell_u^{\top}$ denote the $u$-th row of $L$. By boundedness and weak law of large numbers,
\[
    \frac{1}{U}L^{\top}L = \frac{1}{U}\sum_u \ell_u \ell_u^{\top} \xrightarrow{p} \Sigma_L := \mathbb E[\ell_1\ell_1^{\top}] =
    \begin{pmatrix}
    1 & 0 & 0 \\
    0 & (\sigma_h^c)^2 & 0 \\
    0 & 0 & (\sigma_f^c)^2
    \end{pmatrix},
\]
which is positive definite. Therefore, there exist constants \(0<c_L<C_L<\infty\) such that, with probability tending to one,
\[
c_L U I_3 \preceq L^{\top}L \preceq C_L U I_3.
\]
Let \(L=Q_L T_L\) be the thin QR decomposition. Since \(T_L^{\top}T_L=L^{\top}L\), $\sigma_{\min}(T_L)\ge \sqrt{c_L U}$ with probability tending to one. Since
% \[
%     \tilde h_u = h_u-\bar h+\bar g(f_u-\bar f),\qquad \tilde f_u=f_u-\bar f,
% \]
% and the 
the latent variables are uniformly bounded under Assumption~\ref{asmp:random-latents} there exists a deterministic constant $B_L<\infty$ such that with probability tending to one $\max_{1\le u\le U}\|\ell_u\|_2 \le B_L.$ 
Hence
\[
\|(Q_L)_{u\cdot}\|_2
\le
\|\ell_u\|_2 \|T_L^{-1}\|_{\mathrm{op}}
\le
B_L \,\sigma_{\min}(T_L)^{-1}
=
O_p(U^{-1/2}),
\]
uniformly in \(u\). Therefore $\|Q_L\|_{2,\infty}=O_p(U^{-1/2})$.

Now let \(r_n^{\top}\) denote the \(n\)-th row of \(R\). Again by Assumption \ref{asmp:random-latents} and the weak law of large numbers,
\[
    \frac{1}{N}R^{\top}R = \frac{1}{N}\sum_{n=1}^N r_n r_n^{\top} \xrightarrow{p} \Sigma_R := \mathbb E[r_1r_1^{\top}] =
    \begin{pmatrix}
    \mu^2+(\sigma_i^c)^2 & \mu & 0 \\
    \mu & 1 & 0 \\
    0 & 0 & (\sigma_g^c)^2
    \end{pmatrix}.
\]
The matrix \(\Sigma_R\) is positive definite because its upper-left \(2\times 2\) principal minor equals \((\sigma_i^c)^2>0\), and \((\sigma_g^c)^2>0\). Thus, for some constants \(0<c_R<C_R<\infty\),
\[
c_R N I_3 \preceq R^{\top}R \preceq C_R N I_3
\]
with probability tending to one.

If \(R=Q_R T_R\) is the thin QR decomposition, the same argument gives $\|Q_R\|_{2,\infty}=O_p(N^{-1/2})$. Finally, define \(K:=T_L T_R^{\top}\). Then $S = L R^{\top} = Q_L K Q_R^{\top}$. Let \(K=\widetilde U \Sigma \widetilde V^{\top}\) be the singular value decomposition of \(K\). Then $S = (Q_L\widetilde U)\Sigma(Q_R\widetilde V)^{\top}$ is a singular value decomposition of \(S\). Since right multiplication by an orthogonal matrix preserves row-wise Euclidean norms,
\[
    \|U_S\|_{2,\infty} \le \|Q_L\|_{2,\infty} = O_p(U^{-1/2}) \qquad
    \|V_S\|_{2,\infty} \le \|Q_R\|_{2,\infty} = O_p(N^{-1/2}),
\]
where \(U_S\) and \(V_S\) denote the left and right singular vector matrices of \(S\). Since \(\rank(S)\le 3\), this is exactly the desired incoherence bound.

Finally, the nonzero singular values of $S$ are exactly the singular values of $K$ so $\kappa(S)=\kappa(K)$. By submultiplicativity of the spectral condition number,
\[
    \kappa(K) = \kappa(T_LT_R^\top) \le \kappa(T_L)\kappa(T_R).
\]
On the high-probability event above,
\[
\kappa(T_L)
\le
\sqrt{\frac{C_L}{c_L}},
\qquad
\kappa(T_R)
\le
\sqrt{\frac{C_R}{c_R}},
\]
so
\[
\kappa(S)\le \sqrt{\frac{C_LC_R}{c_Lc_R}}=O(1)
\]
with probability tending to one. Moreover, the nonzero singular values of \(S\) are exactly the singular values of \(K=T_LT_R^\top\). On the same high-probability event, $\sigma_{\min}(T_L)\ge \sqrt{c_LU}, \sigma_{\min}(T_R)\ge \sqrt{c_RN}.$ Since \(T_L,T_R\in\mathbb R^{3\times 3}\), we may apply the bound $\sigma_{\min}(AB)\ge \sigma_{\min}(A)\sigma_{\min}(B)$ to obtain
\[
    \sigma_{\min}(K) \ge \sigma_{\min}(T_L)\sigma_{\min}(T_R) \ge \sqrt{c_Lc_R}\,\sqrt{UN}.
\]
Therefore the smallest nonzero singular value of \(S\) satisfies
\[
    \sigma_{\min}^+(S)=\sigma_{\min}(K)=\Omega(\sqrt{UN})
\]
with probability tending to one. Combined with the bound \(\kappa(S)=O(1)\), this also implies
\[
    \sigma_{\min}^+(S)=\Theta(\sqrt{UN}), \qquad \sigma_{\max}(S)=\Theta(\sqrt{UN})
\]
with probability tending to one. This completes the proof.
\end{proof}

We now apply Theorem~1 from \cite{chen2020noisy}, which we bring next for completion:
\begin{theorem}[Theorem 1 from \cite{chen2020noisy}]\label{thm:mat-completion}
    Assume that $S$ is $\nu$-incoherent with condition number $\kappa$, such that $\kappa = O(1)$. Assume that $U\asymp N$. Let $\lambda = C_\lambda \sigma \sqrt{(U \vee N)p}$ be the regularizer in \eqref{eq:si-nuclear} for some large constant $C_\lambda$. Then, with probability at least $1 - O((U\vee N)^{-3})$, any minimizer $\hat S$ of \eqref{eq:si-nuclear} satisfies
    \[
        \|\hat S - S\|_\infty \lesssim \sqrt{\kappa^3 \nu r}\frac{\sigma}{\sigma_{\min}} \sqrt{\frac{\nu\log(U \vee N)}{p}}\|S\|_\infty.
    \]
\end{theorem}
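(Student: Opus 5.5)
Since this is Theorem~1 of \cite{chen2020noisy}, restated for completeness, the plan is to reproduce their convex--nonconvex bridging argument in our notation rather than build a new proof. The idea is that the nuclear-norm estimator \eqref{eq:si-nuclear} is extremely close to an approximate stationary point of the nonconvex Burer--Monteiro objective
\[
F(X,Y)=\frac{1}{2p}\sum_{(u,n)\in\Omega}\bigl((XY^\top)_{un}-R_{un}\bigr)^2+\frac{\lambda}{2p}\|X\|_F^2+\frac{\lambda}{2p}\|Y\|_F^2,
\]
with $X\in\mathbb R^{U\times r}$, $Y\in\mathbb R^{N\times r}$ and $r=\rank(S)\le 3$. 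The proof then transfers the sharp entrywise guarantees available for the nonconvex iterates to the convex minimizer. The balanced-factor form of the penalty matters here: $\min_{XY^\top=Z}\tfrac12(\|X\|_F^2+\|Y\|_F^2)=\|Z\|_*$, which is what makes the two problems comparable.

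\textbf{Step 1: nonconvex gradient descent with spectral initialization.} I would run gradient descent on $F$ from a spectral initialization built from $p^{-1}\mathcal P_\Omega(R)$. Using $\nu$-incoherence and $\kappa=O(1)$ (verified for our $S$ in Lemma~\ref{lem:s-incoherent}), I would establish by induction over iterations three bounds: a Frobenius-norm error bound, an operator-norm error bound, and an $\ell_{2,\infty}$ error bound on the factors (after optimal rotation). The $\ell_{2,\infty}$ control is obtained by leave-one-out sequences: for each row $u$ (and each column $n$), run an auxiliary descent in which the $u$-th row of observations is replaced by its expectation. Each auxiliary sequence is then independent of the randomness in that row, and one shows it stays uniformly close to the true iterates. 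Combined with matrix Bernstein bounds of the kind used in Lemma~\ref{lem:frobenius-norm-bound} and the bound $\|\mathcal P_\Omega(E)\|_{\op}\lesssim\sigma\sqrt{(U\vee N)p}$, this gives, among the first polynomially many iterates, one with $\|\nabla F\|_F$ polynomially small and
\[
\|XY^\top-S\|_\infty\lesssim \sqrt{\kappa^3\nu r}\,\frac{\sigma}{\sigma_{\min}}\sqrt{\frac{\nu\log(U\vee N)}{p}}\,\|S\|_\infty .
\]

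\textbf{Step 2: the convex minimizer equals this approximate stationary point up to negligible error.} Write the first-order conditions of $F$ at $(X,Y)$. They show that $Z=XY^\top$ nearly satisfies the KKT conditions of \eqref{eq:si-nuclear}, with dual certificate $-\lambda^{-1}\mathcal P_\Omega(Z-R)$: its projection onto the tangent space $T$ of $Z$ equals approximately $UV^\top$, and its $T^\perp$ component has operator norm strictly below $1$. Then use a restricted strong convexity property of $\mathcal P_\Omega$ on a cone around $T$, which follows from incoherence and $p\gtrsim \kappa^4\nu^2r^2\log^3(U\vee N)/(U\wedge N)$. Together these give $\|\hat S-Z\|_F\lesssim \|\nabla F\|_F/\text{poly}$, which is far smaller than the target entrywise bound. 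The triangle inequality then gives the stated $\|\hat S-S\|_\infty$ bound, holding with probability $1-O((U\vee N)^{-3})$ after a union bound over the leave-one-out events.

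\textbf{Main obstacle.} The leave-one-out induction in Step~1 is the delicate part. Showing the auxiliary sequences remain within $o(\ell_{2,\infty}$-error$)$ of the main sequence requires careful alignment of rotations at every step and a contraction argument for the rotated difference. The strict dual-feasibility bound in Step~2, $\|\text{certificate}\,|_{T^\perp}\|_{\op}<1$, is where the choice $\lambda=C_\lambda\sigma\sqrt{(U\vee N)p}$ with large $C_\lambda$ is essential. I would follow \cite{chen2020noisy} closely at both of these points, using $U\asymp N$ to absorb the dimension ratios.
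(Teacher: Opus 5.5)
The paper gives no proof of its own here. It imports the result from \cite{chen2020noisy}, and your plan follows that paper's bridging argument: an approximate critical point of the balanced Burer--Monteiro objective, leave-one-out control in $\ell_{2,\infty}$, then a near-KKT/dual-certificate argument showing every convex minimizer is within negligible Frobenius distance of $XY^\top$. One simplification you can borrow from there: the gradient sequence is only an analysis device, so it can be started at the balanced ground-truth factors, and no spectral-initialization analysis is needed.

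The step that fails is the bound you assert at the end of Step~1, which copies the display. Your own ingredients are $\|\mathcal P_\Omega(E)\|_{\op}/p\lesssim\sigma\sqrt{(U\vee N)/p}$ and $\lambda/p\asymp\sigma\sqrt{(U\vee N)/p}$. With these, the leave-one-out analysis controls the rotated factor error in $\ell_{2,\infty}$ only at relative level $\frac{\sigma}{\sigma_{\min}}\sqrt{(U\vee N)\log(U\vee N)/p}$, up to $\kappa$ factors. What comes out, and what I believe \cite{chen2020noisy} actually prove, is
\[
\|\hat S-S\|_\infty\lesssim\sqrt{\kappa^3\nu r}\,\frac{\sigma}{\sigma_{\min}}\sqrt{\frac{(U\vee N)\log(U\vee N)}{p}}\,\|S\|_\infty .
\]
The $\nu$ printed under the second square root is a transcription error, and the printed bound is false. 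Take $S=a\mathbf 1_N^\top$ with entries of $a$ bounded above and away from zero, so $r=\kappa=1$, $\nu=O(1)$, $\sigma_{\min}\asymp\sqrt{UN}$ and $\|S\|_\infty=O(1)$. The printed bound would then give entrywise error $O\bigl(\sigma\sqrt{\log(U\vee N)}/(N\sqrt p)\bigr)$. But $S$ and $S+\delta e_u\mathbf 1_N^\top$ with $\delta=\sigma/\sqrt{Np}$ differ only in row $u$. Row $u$ has roughly $Np$ Gaussian-noise observations, and the KL divergence between the two models is $O(1)$. So no estimator achieves error below $\delta/2$ with probability $1-O((U\vee N)^{-3})$ under both models, and $\delta/2$ is far larger than the printed rate.

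You should therefore target the corrected bound. It is exactly what the paper uses downstream: with $\sigma_{\min}\asymp\sqrt{UN}$ and $\|S\|_\infty=O(1)$ it gives \eqref{eq:rate-bound}.

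You should also state the hypotheses the display leaves implicit, which your Step~1 genuinely needs:
\begin{itemize}
\item Bernoulli$(p)$ sampling;
\item independent sub-Gaussian noise with parameter $\sigma$;
\item the sample-size condition you quote;
\item a small-noise condition of the form $\sigma\sqrt{(U\vee N)/p}\le c\,\sigma_{\min}/\sqrt{\kappa^4\nu r\log(U\vee N)}$, which keeps the iterates in the region where your contraction and strict dual feasibility hold.
\end{itemize}
All of these hold in the paper's regime.
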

Since \(\rank(S)\le 3\), \(S\) is \(\nu\)-incoherent with \(\nu=O(1)\) and  \(\kappa(S)=O_p(1)\)  by Lemma~\ref{lem:s-incoherent}, and
\[
    \|S\|_\infty \le |\mu| + B_h + B_i + B_f B_g <\infty,
\]
Theorem~\ref{thm:mat-completion} yields
\begin{equation}\label{eq:rate-bound}
\|\widehat S_{\mathrm{mc}}-S\|_\infty
=
O_p\!\left(
\sqrt{\frac{\log(U\vee N)}{U\vee N}}
\right).
\end{equation}

\paragraph{Proof of Theorem 1 (Matrix Completion)}
Now, we can prove an analogous version of Theorem 1 from the main text under Assumption~\ref{asmp:random-latents}, using the nuclear norm estimator.

\begin{proof}[Proof of Theorem 1 (Matrix Completion)]
Define the matrix-decoded note estimator from \(\widehat S_{\mathrm{mc}}\) by
\[
    \hat\mu_{\mathrm{mc}} = \frac{1}{UN}\mathbf 1_U^{\top}\widehat S_{\mathrm{mc}}\mathbf 1_N,
    \qquad
    \hat i_{\mathrm{mc}} := \frac{1}{U}\widehat S_{\mathrm{mc}}^{\top}\mathbf 1_U - \widehat\mu_{\mathrm{mc}}\mathbf 1_N.
\]
Comparing \(\hat i_{\mathrm{mc}}\) to \(i^c\) we have,
\begin{align*}
    \|\hat i_{\mathrm{mc}} - i^c\|_\infty &= \left\| \frac{1}{U}(\widehat S_{\mathrm{mc}}-S)^{\top}\mathbf 1_U -\left(\frac{1}{UN}\mathbf 1_U^{\top}(\widehat S_{\mathrm{mc}}-S)\mathbf 1_N\right)\mathbf 1_N\right\|_\infty\\
    &\le
    \left\| \frac{1}{U}(\widehat S_{\mathrm{mc}}-S)^{\top}\mathbf 1_U\right\|_\infty+\left|\frac{1}{UN}\mathbf 1_U^{\top}(\widehat S_{\mathrm{mc}}-S)\mathbf 1_N\right|\\
    &\le 2\|\widehat S_{\mathrm{mc}}-S\|_\infty.
\end{align*}
Using \eqref{eq:rate-bound}, this gives
\[
    \|\hat i_{\mathrm{mc}} - i^c \|_\infty = O_p(\alpha_{U,N}), \qquad \alpha_{U,N} = \sqrt{\frac{\log(U\vee N)}{U\vee N}}.
\]
Therefore, for each fixed \(n\),
\[
    |\hat i_{\mathrm{mc}, n}- i_n^c| = O_p(\alpha_{U,N}) \to 0.
\]

\begin{remark}
    In the same way as in the proof of Theorem 1, by de-centering the estimated $\hat i_{\mathrm{mc}, n}$, we can recover an estimate for the original note helpfulness, $i_n^0$. 
\end{remark}

\end{proof}

\subsection{Proof of the Conformity Regime (Theorem 2)} 

We now turn to the case \(\rho(\cdot) < 1\), where contributors partially align their reports with the anticipated platform consensus. In this regime, the platform no longer observes the latent signal matrix \(S\), but rather a conformity-distorted matrix. The proof parallels the private-signal case, but with a different target matrix and a different limiting parameter.  
Here, we restate the Theorem 2 from the main text to be proven.

\begin{theorem}[Conformity Regime]\label{thm:conformity}
    Assume that $\rho \not \equiv 1$, i.e. there exists at least one $n$ such that $\rho_n < 1$. Let $\hat \theta = (\hat \mu, \hat h, \hat i, \hat f, \hat g)$ denote the canonically chosen estimator from either \eqref{eq:ridge-opt} and let $\theta = (\mu, h, i, f, g)$ denote the canonically centered true parameters. Under Assumptions \ref{asmp:random-latents}-\ref{ass:conformity}, 
    \[
        \hat i_n \pto i_n^\infty
    \]
    with $i_n^\infty \neq i_n$ for at least one $n$.
\end{theorem}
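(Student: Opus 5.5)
The plan is to reduce Theorem~\ref{thm:conformity} to the private-signal analysis (Theorem~\ref{thm:ridge-recovery} and Lemma~\ref{lem:frobenius-norm-bound}) by exhibiting the observed reports as a noisy version of a new low-rank target matrix. By \eqref{eq:reporting},
\[
a_{un}^\star=\rho_n\bigl(\mu+h_u+i_n+f_ug_n\bigr)+(1-\rho_n)m_n+\tilde\epsilon_{un},\qquad \tilde\epsilon_{un}:=\rho_n\epsilon_{un}+(1-\rho_n)\epsilon^m_{un}.
\]
Writing $\rho_n h_u=\bar\rho_N h_u+(\rho_n-\bar\rho_N)h_u$, the target $S^\rho$ with entries $s^\rho_{un}=\rho_n\mu+(1-\rho_n)m_n+\rho_n i_n+\rho_n h_u+f_u(\rho_ng_n)$ has rank at most $4$. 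It is an exact rank-one-plus-intercepts matrix of the form in Lemma~\ref{lem:canonical-centering} in two cases: when $\rho_n h_u$ is replaced by its average $\bar\rho_N h_u$ (this requires $\rho_n$ to be asymptotically constant or independent of $h_u$), or when we treat the extra rank-one piece $(\rho_n-\bar\rho_N)h_u$ as a deterministic misspecification. In the clean case the effective parameters are
\[
\mu^\rho=\E[\rho_n\mu+(1-\rho_n)m_n],\quad h^\rho_u=\bar\rho h_u,\quad i^\rho_n=\rho_n(\mu+i_n)+(1-\rho_n)m_n-\mu^\rho,\quad f^\rho_u=f_u,\quad g^\rho_n=\rho_ng_n .
\]
Since $m_n,\rho_n$ are bounded with finite variance and $\tilde\epsilon_{un}$ is independent, mean-zero and sub-Gaussian, Assumption~\ref{asmp:random-latents} holds for this new parametrization with $i^\rho_n,g^\rho_n$ i.i.d.\ and bounded.

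Next I would rerun the proofs of Lemma~\ref{lem:frobenius-norm-bound} and Theorem~\ref{thm:ridge-recovery} verbatim with $S$ replaced by $S^\rho$. The only inputs used there are boundedness, nondegenerate variances of the centered factors (here $\var(\rho_ng_n)>0$ because $\rho\not\equiv 0$ on a positive-mass set, or else the factor term vanishes and we are in a pure-intercept model), independence of noise, and MCAR sampling. This yields pointwise convergence of the canonically centered estimates to the canonical centering of $\theta^\rho$. Decentering exactly as in the proof of Theorem~1 then gives
\[
\hat i_n\pto i_n^\infty:=i^\rho_n+c\,g^\rho_n-(\text{centering constants}),
\]
which identifies the random limit $i_n^\infty$. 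If the cross term $(\rho_n-\bar\rho_N)h_u$ is not negligible, I would absorb it by noting that the rank-one best approximation still has a well-defined population minimizer, so the same argument gives convergence to that minimizer. However, an explicit formula would then be lost. I would therefore first try to impose or argue independence of $\rho_n$ from the user side, which holds since $c_n$ is note-level.

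Finally, for the non-identity claim, I compare $i_n^\infty$ to the truth-regime limit $i_n+cg_n$ from Theorem~1. Their difference is $(1-\rho_n)(m_n-\mu-i_n)-c(1-\rho_n)g_n$, up to centering constants, which equals $(1-\rho_n)(\delta_n-cg_n)$ with $\delta_n$ as in the main text. For a note with $\rho_n<1$, this vanishes only if $m_n=\mu+i_n+cg_n$. Because $m_n$ is independent of $g_n$ and has positive variance (Assumption~\ref{ass:conformity}), this equality fails with positive probability, so some $n$ has $i_n^\infty\ne i_n$.

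The main obstacle is the second step, namely checking that the identification and Bernstein lower-bound argument of Lemma~\ref{lem:frobenius-norm-bound} survives when the note-side quantities $\rho_n$ multiply both the intercept and the factor. In particular one must confirm positive definiteness of the limiting Gram matrices and handle the possibly non-rank-one residual $\rho_nh_u$ term.
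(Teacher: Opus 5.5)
There is a genuine gap where you identify the limit $i_n^\infty$. Rerunning Lemma~\ref{lem:frobenius-norm-bound} and Theorem~\ref{thm:ridge-recovery} ``verbatim'' requires the report matrix to be exactly intercepts plus one rank-one interaction. But $\rho_n h_u=\bar\rho_N h_u+(\rho_n-\bar\rho_N)h_u$ leaves a second interaction $h(\rho-\bar\rho_N\mathbf 1_N)^\top$. Its Frobenius norm is of order $\sqrt{UN}$, the same order as $f(\rho\circ g)^\top$, whenever $\rho_n$ varies across notes.

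Independence of $\rho_n$ from the user side does nothing to remove this term: $h_u$ and $\rho_n$ are independent in exactly the way $f_u$ and $g_n$ are. So your ``clean case'' covers only constant $\rho_n$. That is not the setting of the theorem, where $\rho_n=\rho(c_n)$ varies with controversy.

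In the general case you fall back on a misspecified pseudo-true rank-one minimizer, which need not even be unique if the two centered interaction components have comparable top singular values. You also concede that the explicit formula is then lost. Your non-identity step uses that formula, so as written $i_n^\infty\neq i_n$ is not established.

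The missing idea, which the paper uses, is that you never need the factor part to pin down the intercept. With full data and vanishing penalties, the normal equations in $\mu$, $h_u$, $i_n$ force the grand, row and column sums of the residual to vanish. Hence the canonically centered intercept of any stationary point equals $\frac1U\sum_u a_{un}-\frac1{UN}\sum_{u,n}a_{un}$, whatever the (possibly misspecified) rank-one term does. The weak law of large numbers then gives directly
\[
\hat i_n\pto i_n+c\rho_n g_n+(1-\rho_n)\delta_n-(1-\bar\rho)\E[\delta_1],
\]
and the troublesome term contributes only $\rho_n\bar h\to 0$. This limit coincides with your clean-case formula. Your non-identity argument, using $m_n\perp g_n$, $\var(g_n)>0$ and $c>0$ while keeping track of the additive centering constant, then applies to it essentially unchanged. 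That argument is more explicit than the paper's ``generally $i_n^\infty\neq i_n$''.

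The reduction from MCAR-observed data to full data under misspecification is only sketched in the paper too. So that step is not where your proposal falls short; the shortfall is the missing column-mean identity.
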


In the case that $\rho\not \equiv 1$, a users'  best action is given by
\[
    a_{un} + \epsilon_{un} = \rho(c_n) s_{un} + (1-\rho(c_n))m_n + \epsilon_{un},
\]
where $\epsilon_{un}$ satisfies the condition in Assumption \ref{ass:conformity}. For notational simplicity, in this section we denote $\rho_n = \rho(c_n)$. We can use the techniques from Section \ref{proof:thm1} to show that in this case, the platform's estimate for the note helpfulness does not allow for the recovery of the true note helpfulness.

\begin{proof}[Proof of Theorem \ref{thm:conformity}]
    The natural target under conformity is not the original note effect \(i_n\), but the note intercept induced by the approximation to the distorted matrix \(A\) with entries $a_{un}$. 

    Under misspecification, the arguments in Section \ref{proof:thm1} continue to apply with the true parameter replaced by the canonical centered pseudo-true parameters, i.e., the projections of $A$ recovered from Lemma \ref{lem:canonical-centering}. The only change is that the residual is $\zeta_{un}=\epsilon_{un}+(a_{un}-s^*_{un})$ rather than $\epsilon_{un}$; the additional approximation term is absorbed by the projection property of the pseudo-true optimizer, so all rates in Lemma 2 and the subsequent score expansion argument goes through with minor modification.
    
    Thus, it suffices to understand the limiting behavior of $\hat \mu, \hat h_u, \hat i_n, \hat f_u, \hat g_n$ when they solve \eqref{eq:ridge-opt} with full data. We show that $\hat i$ is a biased estimate of $i$. Define $\delta_n = m_n - (\mu + i_n)$. We have that
    \begin{align*}
        \hat \mu
        &= \frac{1}{UN} \sum_{u, n}\rho_n (\mu + h_u + i_n + f_u g_n) + (1-\rho_n)m_n \\
        &= \frac{1}{UN} \sum_{u, n} \mu + i_n + \rho_n(h_u + f_ug_n) + (1-\rho_n)\delta_n \\
        &= \mu + \frac{1}{N}\sum_n i_n + \left(\frac1U\sum_u h_u\right)\left(\frac1N\sum_n \rho_n\right)  + \left(\frac1U\sum_u f_u\right)\left(\frac1N\sum_n \rho_n g_n\right) + \frac1N\sum_n (1-\rho_n)\delta_n.
    \end{align*}
    Using Assumption \ref{asmp:random-latents}-\ref{ass:conformity} and weak law of large numbers, we have that 
    \[
        \hat \mu \pto \mu + \E[\delta_1](1-\bar\rho).
    \]
    Then, for fixed \(n\), we have
    \begin{align*}
        \hat i_n
        &= \frac{1}{U}\sum_u \left(\mu + i_n + \rho_n(h_u + f_u g_n) + (1-\rho_n)\delta_n \right) - \hat \mu \\
        &= \mu + i_n
        + \rho_n\Bigl(\frac1U\sum_u h_u\Bigr)
        + \rho_n g_n\Bigl(\frac1U\sum_u f_u\Bigr)
        + (1-\rho_n)\delta_n
        - \hat \mu.
    \end{align*}
    By the weak law of large numbers and Assumption \ref{asmp:random-latents},
    \[
        \frac1U\sum_u h_u \xrightarrow{p} \E[h_u]=0,
        \qquad
        \frac1U\sum_u f_u \xrightarrow{p} \E[f_u]=c.
    \]
    Since \(\rho_n\) and \(g_n\) do not depend on \(u\), it follows by Slutsky's theorem that
    \[
        \rho_n\Bigl(\frac1U\sum_u h_u\Bigr)\xrightarrow{p}0,
        \qquad
        \rho_n g_n\Bigl(\frac1U\sum_u f_u\Bigr)\xrightarrow{p} c\rho_n g_n.
    \]
    Therefore,
    % \[
    %     \hat i_n \pto \rho_n(i_n + c g_n) + (\rho_n - \E[\rho_1])\mu + (1-\rho_n)m_n - \E[1-\rho_1]\E[m_1].
    % \]
    \[
        \hat i_n \pto i_n + c\rho_n g_n + (1-\rho_n)\delta_n - (1-\bar \rho)\E[\delta_1].
    \]
    In this setting, the platform cannot recover an estimate of the true $i_n$ without additionally knowing $\delta_n$ and $\rho_n$, hence generally $i_n^\infty \neq i_n$.
\end{proof}

\subsection{User-Factor Distortion and Minority Compression}
Now, we turn to understanding when the factor estimates become biased in the case where users do not report truthfully. In this section, we again assume that all estimated factors are chosen under the canonical decomposition, and at the end of the section we discuss what the implications are for the true factors. By Theorem \ref{thm:ridge-recovery} we may proceed as if we observe full data. The first order conditions on $f_u$ and $g_n$ in the full data case give

\begin{equation}
\hat f_u \;=\; \frac{\sum_n y_{un}\,g_n}{\sum_n g_n^2},
\qquad
\hat g_n \;=\; \frac{\sum_u y_{un}\,f_u}{\sum_u f_u^2},
\label{eq:normal-eq}
\end{equation}
where $y_{un} = a_{un} - (\hat \mu + \hat h_u+ \hat i_n)$ is the estimated rank-1 residual.

\begin{theorem}
\label{thm:user-normal}
Let $\rho \not \equiv 1$. Consider the setting from Theorem \ref{thm:conformity}, and suppose that the true note factors $\{g_n\}_n$ are known. 
Let $\hat \mu, \hat h_u, \hat i_n, \hat f_u$ denote the canonically centered solution to \eqref{eq:ridge-opt}. Then
\[
    \E[\hat f_u \mid f_u, g_n, c_n] = w_1 f_u - w_1 c + o_p(1), \qquad w_1 = \frac{\sum_n \rho_n g_n^2}{\sum_n g_n^2}.
\]
\end{theorem}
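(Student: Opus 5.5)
Since the $g_n$ are taken as known, I would use the normal equation \eqref{eq:normal-eq}, $\hat f_u = \sum_n y_{un} g_n / \sum_n g_n^2$, and compute the conditional expectation of its numerator term by term. Theorem~\ref{thm:ridge-recovery} and the misspecified version of it in the proof of Theorem~\ref{thm:conformity} let me work with full data, up to an $o_p(1)$ error. I would then substitute the report model: conditional on $(f_u,g_n,c_n)$ the observed rating is
\[
a_{un} = \rho_n(\mu+h_u+i_n+f_u g_n) + (1-\rho_n)\tilde m_{un} + \epsilon_{un},
\]
so that
\[
y_{un} = \rho_n(\mu+h_u+i_n) + \rho_n f_u g_n + (1-\rho_n)(m_n+\epsilon^m_{un}) + \epsilon_{un} - \hat\mu - \hat h_u - \hat i_n .
\]

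For the intercept estimates I would use the explicit limits computed in the proof of Theorem~\ref{thm:conformity}:
\[
\hat\mu \to \mu + (1-\bar\rho)\E[\delta_1], \qquad \hat i_n \to i_n + c\rho_n g_n + (1-\rho_n)\delta_n - (1-\bar\rho)\E[\delta_1],
\]
together with the analogous row limit $\hat h_u \to \bar\rho h_u + \bar\rho' f_u + \text{const}$, where $\bar\rho'$ is the limit of $N^{-1}\sum_n \rho_n g_n$, which is $o_p(1)$ since $\E[g_n]=0$. Plugging these in, I expect the $(1-\rho_n)\delta_n$ pieces and the $\mu$, $i_n$ pieces to cancel. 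What should remain in $y_{un}$ is
\[
\rho_n f_u g_n - c\rho_n g_n + (\rho_n-\bar\rho)h_u + (\text{terms constant in } n) + \text{mean-zero noise}.
\]

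Next I would multiply by $g_n$, sum over $n$, and divide by $\sum_n g_n^2$:
\begin{itemize}
\item the term $\rho_n f_u g_n$ gives $w_1 f_u$;
\item the term $-c\rho_n g_n$ gives $-w_1 c$;
\item terms constant in $n$, and $(\rho_n-\bar\rho)h_u$, are multiplied by $N^{-1}\sum_n g_n$ or $N^{-1}\sum_n \rho_n g_n$, which vanish in probability because $g_n$ is mean zero and independent of $m_n$ (Assumption~\ref{ass:conformity}, with $\rho_n$ treated as driven by $c_n$);
\item the noise terms $\epsilon_{un}$ and $\epsilon^m_{un}$ have conditional mean zero, so they drop out of the conditional expectation.
\end{itemize}
This yields $w_1 f_u - w_1 c + o_p(1)$, matching the form stated in Theorem~\ref{thm:user-normal}.

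The main obstacle is making the cancellation of the consensus term rigorous. The quantity $(1-\rho_n)\delta_n$ appears both in $a_{un}$ and in $\hat i_n$, and it cancels only if $\hat i_n$ absorbs it exactly, not merely in the limit. I would handle this by using the exact full-data formula $\hat i_n = U^{-1}\sum_u a_{un} - \hat\mu$ (after centering $h$ and $f$) instead of the limit; the error is then an average over $u$ of mean-zero quantities, which is $O_p(U^{-1/2})$. A second concern is that $\rho_n$ may be correlated with $g_n$ through $c_n$, which would keep $N^{-1}\sum_n \rho_n g_n$ from vanishing. Here I would invoke the independence structure in Assumption~\ref{ass:conformity}; if that is insufficient, I would carry the extra term explicitly, since it multiplies $c$ and $h_u$ and would only change the constant.
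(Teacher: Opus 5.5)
Your proposal is correct and is essentially the paper's proof. The paper also passes to full data, writes $\hat f_u=\sum_n y_{un}g_n/(\sum_n g_n^2+\lambda_f)$, and uses the exact first-order conditions $\hat h_u=N^{-1}\sum_n a_{un}-\hat\mu$ and $\hat i_n=U^{-1}\sum_u a_{un}-\hat\mu$, so that $y_{un}$ is the double-centered report and the consensus term $(1-\rho_n)m_n$ cancels exactly; it then concludes by the weak law of large numbers, the only cosmetic difference being that it takes the conditional expectation of $a_{un}$ before double-centering rather than after. Your worry about $N^{-1}\sum_n\rho_n g_n$, which the paper simply asserts vanishes, is in fact harmless: in your decomposition it multiplies only $h_u-\bar h$ (conditional mean zero) or terms constant in $n$ (killed by $\sum_n g_n/\sum_n g_n^2=O_p(N^{-1/2})$), so that step needs no independence between $\rho_n$ and $g_n$, and none involving $m_n$.
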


\begin{proof}
    From \eqref{eq:ridge-opt}, the first order conditions for $f_u$ gives
    \begin{equation}
        \hat f_u \;=\; \frac{\sum_ny_{un}\,g_n}{\sum_n g_n^2 + \lambda_f},
    \label{eq:normal-eq}
    \end{equation}
    where $y_{un} = a_{un} - (\hat \mu + \hat h_u+ \hat i_n)$ is the estimated residual.

    First, we compute $\E[y_{un} \mid f_u, g_n, c_n]$. Recall that the first order conditions on $\hat h_u$ and $\hat i_n$ imply that 
    \[
        \hat h_u = \frac{1}{N}\sum_n a_{un} - \hat \mu, \qquad \hat i_n = \frac{1}{U}\sum_u a_{un} - \hat \mu,
    \]
    with $\hat \mu$ defined as in the proof of Theorem \ref{thm:conformity}.  Then,
    \begin{align*}
        \E[y_{un} \mid f_u, g_n, c_n] &= \E[a_{un} - (\hat \mu + \hat h_u + \hat i_n) \mid f_u, g_n, c_n] \\
        &= \E\left[a_{un} - \left(\frac{1}{N}\sum_n a_{un} + \frac{1}{U}\sum_u a_{un} - \hat \mu\right)\mid f_u, g_n, c_n\right].
    \end{align*}
    We can compute each term separately.
    \begin{align*}
        \E[a_{un} \mid f_u, g_n, c_n] &= \E[\rho_n(\mu + h_u + i_n + f_u g_n) + (1-\rho_n) m_n \mid f_u, g_n, c_n] \\
        &= \mu \rho_n + \E[m_n](1-\rho_n) + \rho_n f_u g_n.
    \end{align*}
    Then,
    \begin{align*}
        \frac{1}{N}\sum_n \E[a_{un} \mid f_u, g_n, c_n] &= \frac{1}{N}\sum_n \mu \rho_n + \E[m_n](1-\rho_n) + \rho_n f_u g_n \\
        &= \mu \bar \rho + \E[m_1](1-\bar \rho) + f_u \frac{1}{N}\sum_n g_n \rho_n,
    \end{align*}
    \begin{align*}
        \frac{1}{U}\sum_u \E[a_{un} \mid f_u, g_n, c_n] &= \frac{1}{U}\sum_u \mu \rho_n + \E[m_n](1-\rho_n) + \rho_nf_u g_n \\
        &= \mu \rho_n  + \E[m_1](1-\rho_n) + g_n \rho_n\frac{1}{U}\sum_u f_u,
    \end{align*}
    \begin{align*}
        \frac{1}{UN}\sum_u \E[a_{un} \mid f_u, g_n, c_n] &= \frac{1}{UN}\sum_{u, n} \mu \rho_n + \E[m_n](1-\rho_n) + \rho_n f_u g_n \\
        &= \mu \bar \rho + \E[m_1](1-\bar \rho) + \frac{1}{U}\sum_u f_u \frac{1}{N}\sum_n g_n\rho_n.
    \end{align*}
    Therefore, 
    \begin{align*}
        \E[y_{un} \mid f_u, g_n, c_n] &= \mu \rho_n + \E[m_n](1-\rho_n)+ \rho_nf_u g_n \\
        &\qquad - \left(\mu \bar \rho + \E[m_1](1-\bar \rho) + f_u \frac{1}{N}\sum_n g_n \rho_n\right) \\
        &\qquad - \left(\mu \rho_n + \E[m_1](1-\rho_n) + g_n \rho_n\frac{1}{U}\sum_u f_u\right) \\
        &\qquad + \left(\mu \bar \rho + \E[m_1](1-\bar \rho) + \frac{1}{U}\sum_u f_u \frac{1}{N}\sum_n g_n \rho_n\right) \\
        &= \rho_nf_u g_n - f_u \frac{1}{N}\sum_n g_n \rho_n - g_n \rho_n\frac{1}{U}\sum_u f_u + \frac{1}{U}\sum_u f_u \frac{1}{N}\sum_n g_n \rho_n.
    \end{align*}
    By weak law of large numbers and Assumptions \ref{asmp:random-latents}-\ref{ass:conformity}, 
    \[
         \E[y_{un} \mid f_u, g_n, c_n] = \rho_nf_u g_n - cg_n \rho_n + o_p(1).
    \]
    Since $\lambda_f = o(N^{-1/2})$, we have that 
    \begin{align*}
        \E\left[\hat f_u \mid f_u, g_n, c_n\right] &= \E\left[\frac{\sum_n y_{un} g_n}{\sum_n g_n^2 + \lambda_f}\mid f_u, g_n, c_n\right] \\
        &= \frac{\sum_n f_u \rho_n g_n^2 - cg_n^2 \rho_n}{\sum_n g_n^2} + o_p(1) \\
        &= f_u \frac{\sum_n \rho_n g_n^2}{\sum_n g_n^2} - c\frac{\sum_n \rho_n g_n^2}{\sum_n g_n^2} + o_p(1)
    \end{align*}

\end{proof}

Notice that Theorem \ref{thm:user-normal} is a statement about $\hat f_u$, which represent the centered user factor estimates. To recover estimates for the original factors, the platform can de-center the distribution of $\hat f_u$ according to Lemma \ref{lem:canonical-centering}. In particular, given knowledge of the true mean $\E[f_u] =c $, the estimate for the original factor can be recovered as $\hat f_u^0 = \hat f_u + c$. Theorem \ref{thm:user-normal} implies the following two propositions about $\hat f_u^0$.
\begin{proposition}
\label{prop:flip}
Consider the setting of Theorem \ref{thm:user-normal}. Then, 
\[
    \E[\hat f_u^0 \mid f_u, g_n, c_n] > 0 \qquad \text{if and only if} \qquad f_u > -\frac{c(1-w_1)}{w_1}.
\]
In particular, the minority slice $(-\frac{c(1-w_1)}{w_1}, 0)$ is mapped to positive estimates.
\end{proposition}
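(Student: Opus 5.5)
The plan is to read off the sign of the de-centered estimate directly from the affine expression in Theorem \ref{thm:user-normal}, and then solve a linear inequality in $f_u$.

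\textbf{Step 1: de-center.} Theorem \ref{thm:user-normal} gives, for the canonically centered estimate,
\[
\E[\hat f_u \mid f_u, g_n, c_n] = w_1 f_u - w_1 c + o_p(1).
\]
By the de-centering discussed after that theorem, $\hat f_u^0 = \hat f_u + c$. Linearity of conditional expectation then gives
\[
\E[\hat f_u^0 \mid f_u, g_n, c_n] = w_1 f_u + c(1-w_1) + o_p(1).
\]
This is exactly the affine form stated in the main-text version, Theorem \ref{lem:user-normal}.

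\textbf{Step 2: pin down $w_1$.} Since $\rho_n \in [0,1]$, the weight $w_1 = \sum_n \rho_n g_n^2 / \sum_n g_n^2$ lies in $[0,1]$.

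For the equivalence to be meaningful we need $w_1 > 0$ in the limit. I would argue this via the weak law of large numbers. By Assumption \ref{asmp:random-latents} we have $\var(g_n) = \sigma_g^2 > 0$. If $\rho_n$ is bounded away from $0$ on a non-negligible set of notes with $g_n \neq 0$, then $N^{-1}\sum_n \rho_n g_n^2$ has a positive limit, and so does $N^{-1}\sum_n g_n^2$. Hence $w_1$ converges in probability to a constant $w_1^\ast \in (0,1]$.

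If instead $w_1 \to 0$, the estimate degenerates to the constant $c$ and the statement is vacuous. I would state this boundary case explicitly as a caveat.

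\textbf{Step 3: solve the inequality.} Divide the affine expression by $w_1 > 0$:
\[
w_1 f_u + c(1-w_1) > 0 \iff f_u > -\frac{c(1-w_1)}{w_1}.
\]

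Two facts give the second claim. First, $c = \E[f_u] > 0$ and $1 - w_1 \geq 0$, so the threshold $-c(1-w_1)/w_1$ is nonpositive. Second, because $\rho \not\equiv 1$, $w_1 < 1$ strictly as long as some notes with $\rho_n < 1$ carry nonvanishing $g_n^2$ mass. In that case the threshold is strictly negative.

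It follows that every true minority user with $f_u \in \bigl(-c(1-w_1)/w_1,\, 0\bigr)$ has a positive expected estimate. In other words, these users are mapped into the majority sign.

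\textbf{Main obstacle.} The hard part is making the "if and only if" rigorous in the presence of the $o_p(1)$ remainder. Exactly at the threshold, the sign of the expectation is determined by the error term rather than by $f_u$.

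I would first try to interpret the statement in the limit. That is, replace $w_1$ by its probability limit $w_1^\ast$ and prove the equivalence for the limiting expectation $w_1^\ast f_u + c(1-w_1^\ast)$.

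Equivalently, one can prove the following: for any fixed $f_u \neq -c(1-w_1^\ast)/w_1^\ast$, the probability that the sign of $\E[\hat f_u^0 \mid \cdot]$ matches the sign of $f_u + c(1-w_1^\ast)/w_1^\ast$ tends to one. This holds because the affine term is bounded away from zero by an amount proportional to the distance $|f_u - \text{threshold}|$, which eventually dominates the $o_p(1)$ remainder.
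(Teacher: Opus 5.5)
Your proposal is correct and follows the paper's own proof: the paper de-centers via $\hat f_u^0 = \hat f_u + c$ to obtain $\E[\hat f_u^0 \mid f_u, g_n, c_n] = w_1 f_u + c(1-w_1) + o_p(1)$ and states that the equivalence follows immediately. Your extra steps — requiring $w_1>0$, which the paper uses implicitly so that the threshold is defined, and reading the ``if and only if'' in the limit away from the threshold because of the $o_p(1)$ remainder — make explicit what the paper leaves unstated; they do not change the argument.
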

\begin{proof}
Since $\hat f_u^0 = \hat f_u + c$, Theorem \ref{thm:user-normal} implies
\[
    \E[\hat f_u^0 \mid f_u, \{g_n,\rho_n\}]
    = w_1 f_u - w_2 + c + o_p(1)
    = w_1 f_u + c(1-w_1) + o_p(1).
\]
This immediately implies the proposition.
\end{proof}

\begin{proposition}
\label{prop:shrink}
Let $F$ be the CDF of $f_u$. Assume that $F$ is continuous with no atom at $0$. Then, the estimated minority share
\[
    \pi^-_{\mathrm{est}} := \Pr(\hat f_u^0<0 \mid f_u, \{g_n,\rho_n\})
\]
satisfies
\[
    \pi^-_{\mathrm{est}}
    = F\left(\frac{w_2-c}{w_1}\right) + o(1)
    = F\left(-\frac{c(1-w_1)}{w_1}\right) + o(1)
    \le F(0)+o(1)
    = \pi^-_{\mathrm{true}}+o(1).
\]
If $0<w_1<1$, then the inequality is strict. Moreover, $\pi^-_{\mathrm{est}}$ is weakly increasing in $w_1$.
\end{proposition}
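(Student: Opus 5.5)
The plan is to reduce the statement to a monotone thresholding of $f_u$ via Proposition~\ref{prop:flip}, and then read off the minority share from $F$. By Theorem~\ref{thm:user-normal} and the de-centering $\hat f_u^0=\hat f_u+c$, the conditional mean satisfies
\[
\E[\hat f_u^0\mid f_u,\{g_n,\rho_n\}] = w_1 f_u + c(1-w_1) + o_p(1),
\]
where $w_2 := c\,w_1$ is the shift appearing in the proof of Proposition~\ref{prop:flip}. With this convention, $(w_2-c)/w_1 = -c(1-w_1)/w_1$, which gives the second equality in the display.

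\textbf{Step 1: concentration.} I would interpret $\Pr(\hat f_u^0<0\mid\cdot)$ through the limit object. I would first argue that $\hat f_u^0$ concentrates around its conditional mean. By \eqref{eq:normal-eq}, $\hat f_u$ is a ratio whose numerator $\frac1N\sum_n y_{un}g_n$ is an average over $N$ terms with bounded latent components and sub-Gaussian noise. A weak law of large numbers (conditionally on $\{g_n,\rho_n\}$, using Assumptions~\ref{asmp:random-latents}--\ref{ass:conformity}) then gives $\hat f_u^0 - (w_1 f_u + c(1-w_1)) = o_p(1)$. Consequently
\[
\{\hat f_u^0<0\} \text{ and } \{f_u < -c(1-w_1)/w_1\}
\]
differ only on an event of vanishing probability, outside a shrinking neighbourhood of the threshold.

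\textbf{Step 2: continuity.} Continuity of $F$ ensures that this neighbourhood carries $o(1)$ mass. Hence $\pi^-_{\mathrm{est}} = F(-c(1-w_1)/w_1)+o(1)$, with $w_1$ replaced by its limit if needed. The ratio $\frac{\sum_n\rho_ng_n^2}{\sum_n g_n^2}$ converges by the law of large numbers provided $\rho_n$ has a limiting joint law with $g_n$; I would assume this, or else treat $w_1$ as conditionally fixed.

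\textbf{Step 3: inequality and monotonicity.} Since $c>0$ and $w_1\in(0,1]$ (because $\rho_n\in[0,1]$), the threshold $-c(1-w_1)/w_1$ is at most $0$. Monotonicity of $F$ then gives $\pi^-_{\mathrm{est}}\le F(0)+o(1)=\pi^-_{\mathrm{true}}+o(1)$, using that $F(0)=\Pr(f_u<0)$ because there is no atom at $0$. For strictness when $0<w_1<1$, the threshold is strictly negative. Strict inequality $F(t)<F(0)$ additionally requires $F$ to put positive mass on $(t,0)$. I would either note that this holds when $0$ lies in the interior of the support of $f_u$ (e.g.\ a positive density near $0$), or add this as a mild assumption, since continuity alone does not force it. Finally, $t(w_1)=-c(1/w_1-1)$ is increasing in $w_1$, so $F(t(w_1))$ is weakly increasing in $w_1$.

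\textbf{Main obstacle.} The hard step is Step 1: Theorem~\ref{thm:user-normal} controls only a conditional expectation, so converting it into a statement about $\Pr(\hat f_u^0<0)$ requires a variance bound for $\hat f_u$ given $f_u$. I would obtain it by writing
\[
y_{un} = \rho_n f_u g_n - c\rho_n g_n + (\text{mean-zero noise}) + o_p(1),
\]
and bounding the variance of $\frac1N\sum_n g_n(\text{noise})$ by $O(1/N)$ using independence across $n$ and boundedness. The random $\epsilon^m_{un}$ and $h_u$ contributions are absorbed the same way.
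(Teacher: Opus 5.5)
Your argument follows the paper's proof. You de-center Theorem~\ref{thm:user-normal} to get the mean $w_1 f_u + c(1-w_1)$, threshold at $t(w_1)=-c(1-w_1)/w_1\le 0$, read off $F$, and use $dt/dw_1=c/w_1^2>0$ for monotonicity. The paper simply asserts $\Pr(\hat f_u^0<0)=\Pr(w_1 f_u+c(1-w_1)<0)+o(1)$ from the conditional mean alone. Your Step~1 therefore supplies a concentration step the paper omits. Your strictness remark is also right: ``no atom at $0$'' (the paper's justification) does not give $F(t)<F(0)$ without mass on $(t,0)$. As in the paper's proof, $F$ only appears once you drop the conditioning on $f_u$ in $\pi^-_{\mathrm{est}}$; conditionally on $f_u$ the probability tends to an indicator.

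The one claim in Step~1 that fails as written is that the $h_u$ contribution is ``absorbed the same way.'' Under conformity, $h_u$ enters the report as $\rho_n h_u$, which is no longer a pure row effect. After double-centering it leaves $(\rho_n-\bar\rho)(h_u-\bar h)$ in $y_{un}$, where $\bar\rho=\frac1N\sum_n\rho_n$ and $\bar h=\frac1U\sum_u h_u$. So $\hat f_u$ contains the term
\[
(h_u-\bar h)\,\frac{\sum_n(\rho_n-\bar\rho)\,g_n}{\sum_n g_n^2}.
\]
For fixed $u$ this is a single random variable times a coefficient that is deterministic given $\{g_n,\rho_n\}$. Independence across $n$ therefore does nothing here, and its variance does not shrink with $N$ unless $\frac1N\sum_n \rho_n g_n\to 0$. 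That holds, for example, if $c_n$ is independent of $g_n$, or if $c_n=|g_n|$ with $g_n$ symmetric, but Assumption~\ref{ass:conformity} does not guarantee it. If it fails, then $\hat f_u^0\approx w_1 f_u+c(1-w_1)+\beta h_u$ with $\beta\neq 0$. In that case $\pi^-_{\mathrm{est}}$ equals $\Pr\bigl(w_1 f_u+\beta h_u<-c(1-w_1)\bigr)$ rather than $F(t(w_1))$. You need to state this decorrelation condition explicitly; the proposition implicitly relies on it as well. With it, the $\epsilon_{un}$, $\epsilon^m_{un}$ and centering terms are $O_p(N^{-1/2})$ as you say, and your proof goes through.
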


\begin{proof}
From Theorem \ref{thm:user-normal},
\[
    \E[\hat f_u^0 \mid f_u,\{g_n,\rho_n\}]
    = w_1 f_u - w_2 + c + o(1).
\]
Hence
\[
    \pi^-_{\mathrm{est}}
    = \Pr(\hat f_u^0<0 \mid \{g_n,\rho_n\})
    = \Pr\!\left(w_1 f_u - w_2 + c < 0\right) + o(1)
    = \Pr\!\left(f_u < \frac{w_2-c}{w_1}\right) + o(1)
    = F\!\left(\frac{w_2-c}{w_1}\right) + o(1).
\]
Since $w_2=cw_1$ and $c>0$,
\[
    \frac{w_2-c}{w_1}
    = -\frac{c(1-w_1)}{w_1}
    \le 0,
\]
with strict inequality when $0<w_1<1$. Therefore
\[
    \pi^-_{\mathrm{est}}
    \le F(0)+o(1)
    = \pi^-_{\mathrm{true}}+o(1),
\]
and strict inequality holds when $0<w_1<1$ because $F$ has no atom at $0$. Finally,
\[
    \frac{d}{dw_1}\left(\frac{w_2-c}{w_1}\right)
    = \frac{d}{dw_1}\left(-\frac{c(1-w_1)}{w_1}\right)
    = \frac{c}{w_1^2} > 0,
\]
so $\pi^-_{\mathrm{est}}$ is weakly increasing in $w_1$.
\end{proof}

\section{Statistical Guarantee for the Two-Stage Estimator}
We finally analyze the two-stage weighted estimator introduced in the main text. The argument proceeds in the same order as feasible generalized least squares: first analyze the inverse-variance rule, then show that estimated residual variances are consistent, and finally conclude that the feasible two-stage estimator attains the same asymptotic variance.

For bounded positive weights \(w=\{w_u\}\), recall the weighted regularized rank-1 problem
\begin{equation}\label{eq:weighted-opt}
    \arg\min_{\mu, h, i, f, g} \sum_{(u, n)\in \Omega}w_u \left(r_{un} - (\mu + h_u +i_n + f_ug_n)\right)^2 + \lambda_h\|h\|_2^2 + \lambda_i\|i\|_2^2 + \lambda_g\|g\|_2^2 + \lambda_f\|f\|_2^2.
\end{equation}
In this section, we use the nuclear norm problem to complete the matrix in both the first and second stages of the analysis before proceeding with ridge regression. Although we have shown that the nuclear norm and the direct solution to the ridge regularized problem are both consistent for the true canonical parameters, the sharper bounds on the nuclear norm problem allow for a more refined analysis. We first show that our estimates $\hat\sigma_u^2$ are consistent for the true variance of user residuals $\sigma_u^2$. 

\begin{lemma}\label{lem:sigma-consistency}
    Let $\hat\mu, \hat h_u, \hat i_n, \hat f_u, \hat g_n$ be the first stage estimates for the latent factors. Define
    \begin{equation}
    \label{eq:first-stage-variance}
    \hat\sigma_u^2 = \frac{1}{N_u} \sum_{n\in S_u} \left(r_{un}-\hat\mu-\hat h_u-\hat i_n-\hat f_u\hat g_n\right)^2,
    \end{equation}
    where $S_u = |\{n: (u, n) \in \Omega\}|$. Then, 
    \[
        \max_u|\hat\sigma_u^2 - \sigma_u^2| = O_p\left(\sqrt{\frac{\log (U \vee N)}{U \vee N}}\right).
    \]
\end{lemma}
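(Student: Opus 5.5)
Here $\sigma_u^2=\E[\epsilon_{un}^2]$ and we are in the truthful regime $\rho\equiv1$, so $r_{un}=s_{un}+\epsilon_{un}$. As announced before the statement, the first stage is taken to be the nuclear-norm completion \eqref{eq:si-nuclear}, decoded into $(\hat\mu,\hat h,\hat i,\hat f,\hat g)$ via Lemma~\ref{lem:canonical-centering}. Write $\hat s_{un}:=\hat\mu+\hat h_u+\hat i_n+\hat f_u\hat g_n$ for the fitted signal and $\Delta_{un}:=\hat s_{un}-s_{un}$ for its error. Then the residual is $r_{un}-\hat s_{un}=\epsilon_{un}-\Delta_{un}$. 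This gives the exact decomposition
\[
\hat\sigma_u^2-\sigma_u^2
=\underbrace{\frac{1}{N_u}\sum_{n\in S_u}\bigl(\epsilon_{un}^2-\sigma_u^2\bigr)}_{T_1}
\;-\;\underbrace{\frac{2}{N_u}\sum_{n\in S_u}\epsilon_{un}\Delta_{un}}_{T_2}
\;+\;\underbrace{\frac{1}{N_u}\sum_{n\in S_u}\Delta_{un}^2}_{T_3}.
\]
The plan is to bound each of the three terms uniformly in $u$.

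\textbf{Bounding $T_3$.} The decoded fitted matrix equals $\widehat S_{\mathrm{mc}}$ exactly, since Lemma~\ref{lem:canonical-centering} is a reparametrization of the same matrix. Therefore \eqref{eq:rate-bound}, obtained from Theorem~\ref{thm:mat-completion} together with Lemma~\ref{lem:s-incoherent}, gives
\[
\max_{u,n}|\Delta_{un}|=O_p(\alpha_{U,N}),\qquad \alpha_{U,N}=\sqrt{\log(U\vee N)/(U\vee N)}.
\]
Hence $\max_u T_3=O_p(\alpha_{U,N}^2)$, which is of smaller order than the claimed rate.

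\textbf{Bounding $T_2$.} By Cauchy--Schwarz, $|T_2|\le 2\max_{u,n}|\Delta_{un}|\cdot \frac{1}{N_u}\sum_{n\in S_u}|\epsilon_{un}|$. I would first show that $\max_u N_u^{-1}\sum_{n\in S_u}\epsilon_{un}^2=O_p(1)$, which follows from the concentration argument for $T_1$ below. It then follows that $\max_u|T_2|=O_p(\alpha_{U,N})$. This crude bound already matches the target rate, so I do not need to exploit any cancellation between $\epsilon_{un}$ and $\Delta_{un}$. That is convenient, because $\Delta_{un}$ depends on $\epsilon$ and exploiting independence would be awkward.

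\textbf{Bounding $T_1$.} This is the step I expect to be the main obstacle. The variables $\epsilon_{un}^2-\sigma_u^2$ are centred and sub-exponential, since $\epsilon_{un}$ is sub-Gaussian by Assumption~\ref{asmp:random-latents}. Conditional on $\Omega$, Bernstein's inequality gives
\[
\Pr\bigl(|T_1|>t\mid\Omega\bigr)\le 2\exp\bigl(-cN_u\min(t^2,t)\bigr).
\]
Next I would use a Chernoff bound on the sampling mask: with probability tending to one, $\min_u N_u\ge pN/2$. Taking $t=C\sqrt{\log(U\vee N)/N}$ and a union bound over the $U\asymp N$ users yields $\max_u|T_1|=O_p\bigl(\sqrt{\log(U\vee N)/N}\bigr)$, which equals $O_p(\alpha_{U,N})$ because $U\asymp N$. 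The same argument applied to $\epsilon_{un}^2$ gives the uniform $O_p(1)$ bound used for $T_2$.

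The delicate point is that the union bound over users requires both the logarithmic factor and the uniform lower bound on $N_u$. I would state this explicitly, since it is exactly where the $\log$ in the rate comes from.

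Combining the three bounds gives $\max_u|\hat\sigma_u^2-\sigma_u^2|=O_p(\alpha_{U,N})$, as claimed.
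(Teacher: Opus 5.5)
Your proposal follows the paper's proof essentially step for step: the same three-term expansion of $(\epsilon_{un}-\Delta_{un})^2$, the entrywise completion rate \eqref{eq:rate-bound} to control the cross and quadratic terms, and a uniform concentration bound for $N_u^{-1}\sum_{n\in S_u}\epsilon_{un}^2$. You make that last bound more explicit than the paper does, via Bernstein, a Chernoff lower bound on $\min_u N_u$, and a union bound over users. One caveat, which applies equally to the paper: the fitted entries $\hat\mu+\hat h_u+\hat i_n+\hat f_u\hat g_n$ equal those of $\widehat S_{\mathrm{mc}}$ "exactly" only if the nuclear-norm minimizer already has the form $\mu\mathbf 1\mathbf 1^\top+h\mathbf 1^\top+\mathbf 1 i^\top+fg^\top$, which it need not; the paper sidesteps this by simply taking $\hat s_{un}$ to be the entries of $\widehat S_{\mathrm{mc}}$, so your argument is no less complete than the paper's on this point.
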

\begin{proof}
     As before, $s_{un} = \mu + h_u + i_n + f_ug_n$ be the true signal for user $u$ on note $n$. Let the matrix $\hat S$ with entries $\hat s_{un}$ denote the matrix estimator for $S$ given by solving \eqref{eq:si-nuclear}. Then,
    \begin{align*}
        \hat \sigma_u^2 &= \frac{1}{N_u}\sum_{n \in \Omega_u}(r_{un} - \hat s_{un})^2 \\
        &= \frac{1}{N_u}\sum_{n \in \Omega_u}(\epsilon_{un} - (\hat s_{un} - s_{un}))^2 \\
        &= \frac{1}{N_u}\sum_{n \in \Omega_u}\epsilon_{un}^2 - 2\epsilon_{un}(\hat s_{un} - s_{un}) + (\hat s_{un} - s_{un})^2.
    \end{align*}
    Note that the first term above converges to $\sigma_u^2$ at rate $O_p(N^{-1/2})$ by law of large numbers, the MCAR sampling scheme, sub-Gaussian concentration and Slutsky's theorem. For the second term, by Cauchy-Schwarz we have
    \[
        \sum_{n \in \Omega_u} \epsilon_{un}(\hat s_{un} - s_{un}) \leq \left(\sum_{n \in \Omega_u}\epsilon_{un}^2\right)^{1/2}\left(\sum_{n \in \Omega_u}(\hat s_{un} - s_{un})^2\right)^{1/2}.
    \]
    We have
    \begin{equation}\label{eq:s-sq-bound}
        \sum_{n \in \Omega_u}(\hat s_{un} - s_{un})^2 \leq N_u \max_n (\hat s_{un} - s_{un})^2 = N_u O_p\left(\frac{\log (U \vee N)}{U \vee N}\right).
    \end{equation}
    By sub-Gaussian concentration,
    \[
        \max_{1\leq u \leq U}\left|\frac{1}{N_u}\sum_{n \in S_u} \epsilon_{un}^2 - \sigma_u^2\right| = O_p\left(\sqrt{\frac{\log U}{N}}\right).
    \]
    By \eqref{eq:s-sq-bound}, the last term is $O_p\left(\frac{\log (U \vee N)}{U \vee N}\right)$. Since all bounds are uniform bounds over $u$, 
    \begin{equation}\label{eq:sigma-bound}
        \max_u|\hat\sigma_u^2 - \sigma_u^2| = O_p\left(\sqrt{\frac{\log (U \vee N)}{U \vee N}}\right).
    \end{equation}

\end{proof}

\begin{theorem}
    Assume that $\rho \equiv 1$ and that $\E[f_u] = c$ is known. Let $\hat\mu^\ts, \hat h^\ts, \hat i^\ts, \hat f^\ts, \hat g^\ts$ be the solutions to \eqref{eq:weighted-opt} with weights $w_{u} = 1/\sigma_u^2$, where $\sigma_u^2 = \E[\epsilon_{un}^2]$. Then $\hat i^\ts$ is consistent for the canonical centered representation $i$.
    Moreover, among all solutions of \eqref{eq:weighted-opt} obtained using weights positive, finite weights $w_u \in (0, \infty)$, the estimator $\hat i^\ts$ has the lowest asymptotic variance; that is, for any other solution $\hat i$ of \eqref{eq:weighted-opt}$,$
    \begin{equation}\label{eq:efficiency}
        \mathrm{avar}(\hat i^\ts)\preceq \mathrm{avar}(\hat i).
    \end{equation}
    Furthermore, let $\tilde i$ be the solution to \eqref{eq:weighted-opt} with feasible weights $\hat w_u = 1/\hat\sigma_u^2$. Then,
    \begin{equation}\label{eq:feasible-efficiency}
        \avar(\tilde i) = \avar(\hat i^\ts).
    \end{equation}
\end{theorem}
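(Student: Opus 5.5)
Our plan follows feasible GLS in three steps: consistency under any fixed positive weights, a first-order (linearized) representation of $\hat i_n$ that exposes its asymptotic variance as a function of $w$, and a plug-in argument for the estimated weights.

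\textbf{Step 1 (consistency).} For fixed bounded weights $w_u\in[w_{\min},w_{\max}]\subset(0,\infty)$, we rerun the argument of Lemma~\ref{lem:frobenius-norm-bound} and Theorem~\ref{thm:ridge-recovery} with $\omega_{un}$ replaced by $w_u\omega_{un}$. The Gram-matrix concentration step still goes through by matrix Bernstein, now with centered summands $w_u(\omega_{un}-p)\tilde z_{un}\tilde z_{un}^\top$. The deterministic part becomes $p\sum w_u z_{un}z_{un}^\top$, which is equivalent to $D$ up to the factors $w_{\min},w_{\max}$. The noise term $\|\Omega\circ W E\|_{\op}$ is still $O_p(\sqrt{U\vee N})$, because $w_u\epsilon_{un}$ is sub-Gaussian. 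Hence the canonical weighted solution is consistent, and decentering as in the proof of Theorem 1 (using the known $c$) gives $\hat i^\ts_n\pto i_n$.

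\textbf{Step 2 (asymptotic variance and efficiency).} Fix $n$ and linearize the note-side first-order conditions for $(\hat i_n,\hat g_n)$, exactly as in the score argument of Theorem~\ref{thm:ridge-recovery}:
\[
\hat A_n^w\bigl((\hat i_n,\hat g_n)-(i_n,g_n)\bigr)=\frac1U\sum_u\omega_{un}w_u\begin{pmatrix}1\\ f_u\end{pmatrix}\epsilon_{un}+o_p(U^{-1/2}).
\]
Here
\[
\hat A_n^w=\frac1U\sum_u\omega_{un}w_u\begin{pmatrix}1&f_u\\ f_u&f_u^2\end{pmatrix}.
\]
The remainder involves the user-side errors $\hat h_u-h_u$, $\hat f_u-f_u$ and $\hat\mu-\mu$. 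Controlling it at rate $o_p(U^{-1/2})$, rather than merely $o_p(1)$, is the main obstacle. We would try to obtain it from the entrywise matrix-completion bound \eqref{eq:rate-bound}, which gives $O_p(\sqrt{\log N/N})$ uniformly. Its projection onto the score is a sum over $u$ of errors that are nearly independent of $\epsilon_{un}$ for the fixed $n$, and we would make this precise with a leave-one-note-out argument as in \cite{chen2020noisy}. Granting this, a CLT conditional on the latents gives
\[
U\,\var\approx (A^w)^{-1}B^w(A^w)^{-1},
\]
with $A^w=p\,\E[w_u\,x_ux_u^\top]$, $B^w=p\,\E[w_u^2\sigma_u^2\,x_ux_u^\top]$ and $x_u=(1,f_u)^\top$. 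This is the sandwich form of weighted least squares. The Gauss--Markov/Aitken inequality then yields
\[
(A^w)^{-1}B^w(A^w)^{-1}\succeq\bigl(p\,\E[\sigma_u^{-2}x_ux_u^\top]\bigr)^{-1},
\]
with equality when $w_u\propto\sigma_u^{-2}$. Concretely, this is the positive semidefiniteness of the covariance of $(w\sigma x, \sigma^{-1}x)$ after a Schur complement. Taking the $i_n$-coordinate, and the corresponding linear combination after decentering by $c\hat g_n$, gives \eqref{eq:efficiency}.

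\textbf{Step 3 (feasible weights).} Lemma~\ref{lem:sigma-consistency} gives $\max_u|\hat\sigma_u^2-\sigma_u^2|=O_p(\sqrt{\log N/N})$. Since the $\sigma_u^2$ are bounded away from zero, this also gives
\[
\max_u|\hat w_u-w_u|=O_p\bigl(\sqrt{\log N/N}\bigr).
\]
The weights enter the expansion of Step 2 only through $\hat A_n^w$ and the score. The change in $\hat A_n^w$ is $o_p(1)$. The change in the score is
\[
U^{-1}\sum_u\omega_{un}(\hat w_u-w_u)x_u\epsilon_{un}.
\]
Because $\hat w_u$ depends on note $n$ only through one term out of $N_u$, we would replace it by a leave-note-$n$-out variance estimate, which is independent of $\epsilon_{un}$. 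The sum then has conditional mean zero and variance $O(U^{-1}\log N/N)$, so it is $o_p(U^{-1/2})$. It follows that $\sqrt U(\tilde i_n-\hat i^\ts_n)=o_p(1)$, which proves \eqref{eq:feasible-efficiency}.
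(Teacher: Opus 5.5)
Your proposal is correct in outline, but it takes a different route from the paper.

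\textbf{Consistency.} You reweight the ridge basic-inequality argument. The paper instead passes to $T=W^{1/2}S$ and checks that $T$ keeps incoherence and a bounded condition number: it has the same right singular space as $S$, and its left singular vectors are distorted only by the bounded weights. It then applies the entrywise nuclear-norm completion bound.

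\textbf{Efficiency.} The paper never linearizes the weighted normal equations. It decodes $\hat i$ as the weighted column mean $N_w^{-1}M_N\hat S^\top W\mathbf{1}_U$ and replaces the fit by the data to first order. What remains is essentially the scalar noise average $\sum_u w_u\epsilon_{un}/\sum_u w_u$. Its variance $\sum_u w_u^2\sigma_u^2/(\sum_u w_u)^2$ is minimized at $w_u\propto\sigma_u^{-2}$ by a one-line Lagrange argument. For feasible weights, the paper only shows that this formula evaluated at $\hat w$ converges to its value at $w$.

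\textbf{What your route buys.} You treat $(\hat i_n,\hat g_n)$ as a WLS regression of column $n$ on $(1,f_u)$, and this gives more.
\begin{enumerate}
\item The $2\times 2$ sandwich accounts for the joint fit of $i_n$ and $g_n$, the sampling rate $p$, and weights correlated with $f_u$. It reduces to the paper's formula, up to the factor $1/p$, when $\mathbb{E}[w_uf_u]=0$.
\item The matrix Aitken inequality orders every linear combination, so it also covers the decentered estimator built from $\hat i_n$ and $c\hat g_n$.
\item Step 3 yields estimator-level equivalence $\sqrt U(\tilde i_n-\hat i^{\ts}_n)=o_p(1)$, which is stronger than convergence of a variance formula.
\end{enumerate}

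\textbf{The price.} You still need the $o_p(U^{-1/2})$ remainder you flag. The paper's first-order replacement is justified only by the same sup-norm rate $O_p(\sqrt{\log N/N})$, so it does not settle that point either. When you supply it, watch two things:
\begin{itemize}
\item Terms such as $U^{-1}\sum_u\omega_{un}w_u(\hat h_u-h_u)$ need cancellation across $u$, coming from the mean-zero leading parts of the user-side errors. Near-independence from $\epsilon_{un}$ is not enough, since the entrywise bound alone misses the $U^{-1/2}$ scale by a $\sqrt{\log N}$ factor.
\item $\hat w_u$ depends on column $n$ through the whole first-stage fit, not through a single term. Your leave-one-out copy must therefore drop column $n$ from the first stage altogether.
\end{itemize}
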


\begin{proof}
Let \(T = W^{1/2}S\), where \(S\) is the true signal matrix and \(W=\mathrm{diag}(w_1,\dots,w_U)\). Let \(\widehat T = W^{1/2}\widehat S\) be the fitted matrix obtained by solving \eqref{eq:weighted-opt}. Since the weights are known, they can be factored out, so it suffices to study the transformed signal \(T\).

We first show that \(T\) satisfies the conditions needed to apply \eqref{eq:rate-bound}. By standard singular value inequalities,
\[
    \sigma_{\min}(W^{1/2})\sigma_j(S)
    \le \sigma_j(W^{1/2}S)
    \le \sigma_{\max}(W^{1/2})\sigma_j(S).
\]
Since the weights are bounded away from \(0\) and \(\infty\), Lemma \ref{lem:s-incoherent} implies that
\[
    \kappa(T)=\kappa(W^{1/2}S)\le \kappa(W^{1/2})\kappa(S)=O_p(1),
\]
and that $\sigma_{\min}^+(T)=\sigma_{\min}^+(W^{1/2}S)\asymp_p \sqrt{UN}.$ Let \(T=U_T\Sigma_TV_T^\top\) and \(S=U\Sigma V^\top\) be singular value decompositions. Since $T=W^{1/2}S=W^{1/2}U\Sigma V^\top$ and \(W^{1/2}\) is invertible, \(T\) and \(S\) have the same right singular subspace. Hence there exists an orthogonal matrix \(Q\) such that \(V_T=VQ\), and therefore
\[
    \|e_n^\top V_T\|_2=\|e_n^\top VQ\|_2=\|e_n^\top V\|_2=O_p(N^{-1/2})
\]
by Lemma \ref{lem:s-incoherent}.

For the left singular subspace, note that \(\mathrm{span}(U_T)=\mathrm{span}(W^{1/2}U)\). An orthonormal basis is given by
\[
    U_T=W^{1/2}U(U^\top WU)^{-1/2}.
\]
Since the weights are bounded away from \(0\) and \(\infty\), $U^\top WU \succeq w_{\min}I$ so $(U^\top WU)^{-1}\preceq w_{\min}^{-1}I.$ Therefore
\[
    \|e_u^\top U_T\|_2
    = \|e_u^\top \sqrt{w_u}\,U(U^\top WU)^{-1/2}\|_2
    \le \frac{\sqrt{w_u}}{\sqrt{w_{\min}}}\|e_u^\top U\|_2
    = O_p(U^{-1/2}),
\]
where the last step again uses Lemma \ref{lem:s-incoherent}. Thus \(T\) is \(\nu\)-incoherent with high probability. Combining incoherence, bounded condition number, and the growth of \(\sigma_{\min}^+(T)\), we obtain
\begin{equation}\label{eq:weighted-rate-bound-proof}
    \|\widehat T-T\|_\infty
    = O_p\!\left(\sqrt{\frac{\log(U\vee N)}{U\vee N}}\right).
\end{equation}

Since \(W^{1/2}\) is invertible, $ \widehat S = W^{-1/2}\widehat T.$ Define the weighted decoded estimators
\[
    \hat\mu
    = \frac{1}{N\sum_u w_u}\one_U^\top W\widehat S\one_N,
    \qquad
    \hat i
    = \frac{1}{\sum_u w_u}\widehat S^\top W\one_U - \hat\mu\,\one_N.
\]
By the same argument as in the proof of Theorem \ref{thm:mat-completion}, using \eqref{eq:weighted-rate-bound-proof} and the fact the weights are bounded and bounded away from $0$, the decoded estimators are consistent, and asymptotically coincide with the minimizers of \eqref{eq:weighted-opt}. In particular, when \(\rho\equiv 1\), \(\hat i\) is consistent for \(i\). Just as in the proof of Theorem 1 from the main text in Section \ref{proof:thm1}, we can de-center the estimates $\hat i_n$ to get consistent estimates of the original note parameter $i_n^0$.

Next, we compare asymptotic variances across admissible weights. Let $R=S+\Xi,$ where \(\Xi=(\epsilon_{un})\), and let
\[
    N_w:=\sum_u w_u,
    \qquad
    M_N:=I_N-\frac{1}{N}\one_N\one_N^\top.
\]
Then
\begin{align*}
    \hat i
    &= \frac{1}{N_w}\widehat S^\top W\one_U - \hat\mu\,\one_N \\
    &= \frac{1}{N_w}\widehat T^\top W^{1/2}\one_U - \hat\mu\,\one_N  \\
    &= \frac{1}{N_w}M_N\widehat T^\top W^{1/2}\one_U.
\end{align*}
Using \eqref{eq:weighted-rate-bound-proof}, we may replace \(\widehat T\) by \(T=W^{1/2}R\) in a first-order expansion, yielding
\[
    \hat i
    = \frac{1}{N_w}M_N T^\top W^{1/2}\one_U + o_p(1).
\]
Substituting \(T=W^{1/2}R=W^{1/2}S+W^{1/2}\Xi\), we obtain
\begin{align*}
    \hat i
    &= \frac{1}{N_w}M_N(W^{1/2}S)^\top W^{1/2}\one_U
    + \frac{1}{N_w}M_N(W^{1/2}\Xi)^\top W^{1/2}\one_U
    + o_p(1) \\
    &= \frac{1}{N_w}M_N S^\top W\one_U
    + \frac{1}{N_w}M_N \Xi^\top W\one_U
    + o_p(1).
\end{align*}
The first term is deterministic conditional on the latent variables, so the asymptotic variance is determined by the second term. Since the noise variables are independent across \(u\) and \(n\), with \(\var(\epsilon_{un})=\sigma_u^2\), we have
\[
    \var(\Xi^\top W\one_U)
    = \sum_u w_u^2\sigma_u^2\,I_N.
\]
Therefore
\begin{align*}
    \var(\hat i)
    &= \frac{1}{N_w^2}
    M_N\left(\sum_u w_u^2\sigma_u^2\,I_N\right)M_N \\
    &= \frac{\sum_u w_u^2\sigma_u^2}{\left(\sum_u w_u\right)^2}M_N.
\end{align*}
Thus each coordinate has variance
\[
    \frac{\sum_u w_u^2\sigma_u^2}{\left(\sum_u w_u\right)^2},
\]
and any two distinct coordinates have covariance
\[
    -\frac{1}{N}\frac{\sum_u w_u^2\sigma_u^2}{\left(\sum_u w_u\right)^2}.
\]
To minimize the asymptotic variance, it suffices to minimize
\[
    \frac{\sum_u w_u^2\sigma_u^2}{\left(\sum_u w_u\right)^2}.
\]
Since the objective is scale-invariant, we may impose the normalization \(\sum_u w_u=1\), reducing the problem to
\[
    \min_{\{w_u\}_u} \sum_u w_u^2\sigma_u^2
    \qquad\text{subject to}\qquad
    \sum_u w_u=1.
\]
The first-order conditions give $ 2w_u\sigma_u^2=\lambda,$ so $w_u\propto 1/\sigma_u^2.$ Hence the variance is minimized by inverse-variance weights, proving $\avar(\hat i^\ts)\preceq \avar(\hat i)$. 

Finally, define the feasible weights $\hat w_u=1/\hat\sigma_u^2$, and let \(\tilde i\) denote the solution to \eqref{eq:weighted-opt} with weights \(\hat w_u\). By Lemma \ref{lem:sigma-consistency}, there exist constants \(c_1,c_2>0\) such that with probability tending to one, $\hat\sigma_u^2\in[c_1,c_2]$ for all $u$. Since \(x\mapsto 1/x\) is Lipschitz on \([c_1,c_2]\),
\[
    |\hat w_u-w_u|
    = \frac{|\hat\sigma_u^2-\sigma_u^2|}{\hat\sigma_u^2\sigma_u^2}
    \le \frac{1}{c_1^2}|\hat\sigma_u^2-\sigma_u^2|.
\]
Hence, by \eqref{eq:sigma-bound} and the fact that \(\sigma_u^2\) are bounded away from \(0\) and \(\infty\),
\[
    \max_u |\hat w_u-w_u| = o_p(1).
\]
Moreover, by construction,
\[
    \sum_u \hat w_u \asymp U,
    \qquad
    \sum_u \hat w_u^2\sigma_u^2 = O_p(U).
\]
It therefore suffices to analyze
\[
    \left|
    \frac{\sum_u \hat w_u^2\sigma_u^2}{\left(\sum_u \hat w_u\right)^2}
    -
    \frac{\sum_u w_u^2\sigma_u^2}{\left(\sum_u w_u\right)^2}
    \right|.
\]
Adding and subtracting an intermediate term yields
\begin{align*}
    \left|
    \frac{\sum_u \hat w_u^2\sigma_u^2}{\left(\sum_u \hat w_u\right)^2}
    -
    \frac{\sum_u w_u^2\sigma_u^2}{\left(\sum_u w_u\right)^2}
    \right|
    &\le
    \left|
    \frac{\sum_u \hat w_u^2\sigma_u^2}{\left(\sum_u \hat w_u\right)^2}
    -
    \frac{\sum_u \hat w_u^2\sigma_u^2}{\left(\sum_u w_u\right)^2}
    \right| \\
    &\qquad
    +
    \left|
    \frac{\sum_u \hat w_u^2\sigma_u^2}{\left(\sum_u w_u\right)^2}
    -
    \frac{\sum_u w_u^2\sigma_u^2}{\left(\sum_u w_u\right)^2}
    \right|.
\end{align*}
For the second term,
\begin{align*}
    \left|
    \frac{\sum_u \hat w_u^2\sigma_u^2}{\left(\sum_u w_u\right)^2}
    -
    \frac{\sum_u w_u^2\sigma_u^2}{\left(\sum_u w_u\right)^2}
    \right|
    &=
    \frac{1}{\left(\sum_u w_u\right)^2}
    \left|
    \sum_u (\hat w_u^2-w_u^2)\sigma_u^2
    \right| \\
    &\le
    \frac{1}{\left(\sum_u w_u\right)^2}
    \sum_u |\hat w_u-w_u|\,|\hat w_u+w_u|\,\sigma_u^2 \\
    &\le
    \frac{C\,\max_u|\hat w_u-w_u|}{U}
    = o_p(1),
\end{align*}
for some deterministic constant \(C<\infty\), since \(\hat w_u\), \(w_u\), and \(\sigma_u^2\) are uniformly bounded.

For the first term,
\begin{align*}
    \left|
    \frac{\sum_u \hat w_u^2\sigma_u^2}{\left(\sum_u \hat w_u\right)^2}
    -
    \frac{\sum_u \hat w_u^2\sigma_u^2}{\left(\sum_u w_u\right)^2}
    \right|
    &=
    \left(\sum_u \hat w_u^2\sigma_u^2\right)
    \left|
    \frac{1}{\left(\sum_u \hat w_u\right)^2}
    -
    \frac{1}{\left(\sum_u w_u\right)^2}
    \right| \\
    &=
    \left(\sum_u \hat w_u^2\sigma_u^2\right)
    \frac{\left|\sum_u (w_u-\hat w_u)\right|\left|\sum_u (w_u+\hat w_u)\right|}
    {\left(\sum_u \hat w_u\right)^2\left(\sum_u w_u\right)^2} \\
    &= o_p(1).
\end{align*}
Combining the two bounds yields
\[
    \frac{\sum_u \hat w_u^2\sigma_u^2}{\left(\sum_u \hat w_u\right)^2}
    =
    \frac{\sum_u w_u^2\sigma_u^2}{\left(\sum_u w_u\right)^2}
    + o_p(1).
\]
Since \(w_u=1/\sigma_u^2\) minimizes the asymptotic variance among all admissible weights, it follows that \(\tilde i\) has the same asymptotic variance as the oracle inverse-variance weighted estimator. Therefore
\[
    \avar(\tilde i)=\avar(\hat i^\ts)\preceq \avar(\hat i).
\]
\end{proof}

\subsection{Discussion of the Two-Stage Result}

The two-stage theorem is a statistical result, not a full incentive-compatibility result. It shows that under private-signal reporting, inverse residual variance weighting yields a consistent estimator of note helpfulness and improves efficiency relative to alternative bounded weight choices. 

What the theorem does \emph{not} show is that residual-variance weighting uniquely induces truthful reporting in equilibrium. The argument that such a rule may weaken conformity incentives is heuristic: unlike consensus-based auditing, it does not directly reward agreement with the eventual majority outcome. This makes it a plausible alternative auditing principle, but a full strategic analysis of that induced game lies beyond the scope of the present theorem.

For this reason, we treat the two-stage rule in the main text as an alternative estimator and auditing principle motivated by predictive stability. The empirical results show that it improves out-of-sample prediction, and the theorem shows that it is statistically well behaved under informative reports. The stronger incentive question is left open.

\end{document}